\newcommand{\Cf}{\mathbb{C}}
\theoremstyle{plain}
\newtheorem{theorem}{Theorem}[section]
\newtheorem{proposition}[theorem]{Proposition}
\newtheorem{lemma}[theorem]{Lemma}
\newtheorem{conjecture}[theorem]{Conjecture}
\theoremstyle{definition}
\newtheorem{definition}[theorem]{Definition}
\theoremstyle{remark}
\newtheorem{remark}[theorem]{Remark}
\newcommand{\Z}{\mathbb{Z}}
\newcommand{\J}{\mathbb{J}}
\newcommand{\cD}{\mathcal{D}}
\newcommand{\cS}{\mathcal{S}}
\newcommand{\cA}{\mathcal{A}}
\newcommand{\cE}{\mathcal{E}}
\newcommand{\cF}{\mathcal{F}}
\newcommand{\cW}{\mathcal{W}}
\newcommand{\cT}{\mathcal{T}}
\newcommand{\VEV}[1]{{\big\langle \! 0 \big| {#1} \big| 0 \! \big\rangle}}
\newcommand{\VEVc}[1]{{\big\langle \! 0 \big| {#1} \big| 0 \!\big\rangle^\circ}}
\newcommand{\vac}{{\big| 0\! \big\rangle}}
\newcommand{\covac}{{\big\langle \! 0 \big| }}
\newcommand{\res}{\mathop{\rm res}}
\newcommand{\coeff}[2]{\mathop{[{#1}^{#2}]}}
\newcommand{\restr}[2]{\mathop{\big\lfloor_{{#1}\to {#2}}}}
\newcommand{\set}[1]{\llbracket {#1} \rrbracket}
\def\mfS{\mathfrak{S}}
\def\Expr{\mathsf{Expr}}
\numberwithin{equation}{section}
\title[Universal formula for $x-y$ swap]{A universal formula for the $x-y$ swap in topological recursion}
\author[A.~Alexandrov]{A.~Alexandrov}
\address{A.~A.: Center for Geometry and Physics, Institute for Basic Science (IBS), Pohang 37673, Korea
}
\email{alex@ibs.re.kr}
\author[B.~Bychkov]{B.~Bychkov}
\address{B.~B.: Department of Mathematics, University of Haifa, Mount Carmel, 3498838, Haifa, Israel}
\email{bbychkov@hse.ru}
\author[P.~Dunin-Barkowski]{P.~Dunin-Barkowski}
\address{P.~D.-B.: Faculty of Mathematics, HSE University, Usacheva 6, 119048 Moscow, Russia; HSE--Skoltech International Laboratory of Representation Theory and Mathematical Physics, Skoltech, Bolshoy Boulevard 30 bld. 1, 121205 Moscow, Russia; and NRC “Kurchatov Institute” -- ITEP, 117218 Moscow, Russia}
\email{ptdunin@hse.ru}
\author[M.~Kazarian]{M.~Kazarian}
\address{M.~K.: Faculty of Mathematics, HSE University, Usacheva 6, 119048 Moscow, Russia; and Igor Krichever Center for Advanced Studies, Skoltech, Bolshoy Boulevard 30 bld. 1, 121205 Moscow, Russia}
\email{kazarian@mccme.ru}
\author[S.~Shadrin]{S.~Shadrin}
\address{S.~S.: Korteweg-de Vries Institute for Mathematics, University of Amsterdam, Postbus 94248, 1090GE Amsterdam, The Netherlands}
\email{S.Shadrin@uva.nl}	
\begin{document}
	
	\begin{abstract} We prove a recent conjecture of Borot et al.~that a particular universal closed algebraic formula recovers the correlation differentials of topological recursion after the swap of $x$ and $y$ in the input data. We also show that this universal formula can be drastically simplified (as it was already done by Hock).
		
		As an application of this general $x-y$ swap result, we prove an explicit closed formula for the topological recursion differentials for the case of any spectral curve with unramified $y$ and arbitrary rational $x$.
	\end{abstract}
	
	\maketitle
	\setcounter{tocdepth}{2}

	\tableofcontents
	
	%\bigskip
	%\bigskip
	
	\epigraph{\foreignlanguage{russian}{Егор Алексеич берёт в руки грифель и начинает решать. Он заикается, краснеет, бледнеет.
			\\		
			— Эта задача, собственно говоря, алгебраическая, — говорит он. — Её с иксом и игреком решить можно. Впрочем, можно и так решить.}}{\textit{\foreignlanguage{russian}{Репетитор}}
		
		\textsc{\foreignlanguage{russian}{А. П. Чехов}}
	}
	
	%\bigskip
	%\bigskip
	
	\epigraph{\emph{English translation of the epigraph:}
		
		\bigskip	
		
		Gregory takes the pencil and begins figuring. He hiccoughs and flushes and pales.
		
		``The fact is, this is an algebraical problem," he says. "It ought to be solved with x and y. But it can be done in this way, too.''}
	{\textit{The Tutor}
		
		\textsc{A.~P.~Chekhov}
	}

\section{Introduction}

\subsection{Topological recursion}\label{sec:toprecintro}
The Chekhov--Eynard--Orantin topological recursion~\cite{EynardOrantin-toporec} associates to an input that consists of
\begin{itemize}
	\item a compact Riemann surface $S$ with a chosen basis of $\mathfrak{A}$- and $\mathfrak{B}$-cycles (the so-called spectral curve);
	\item two meromorphic functions $x$ and $y$ on $S$ such that all critical points $p_1,\dots,p_a\in S$ of $x$ are simple, $y$ is regular at $p_i$ and $dy|_{p_i}\not=0$, $i=1,\dots,a$;
	\item a symmetric bi-differential $B$ on $S^2$ with a double pole on the diagonal with bi-residue $1$ and holomorphic at all other points, normalized in such a way that its $\mathfrak A$-periods in both variables vanish (the so-called Bergman kernel)
\end{itemize}
an output that consists of a system of symmetric meromorphic differentials $\omega^{(g)}_{m,0}$ on $S^m$, $g\geq 0$, $m\geq 1$, which are called the \emph{correlation differentials}. The recursion is given by setting $\omega^{(0)}_{1,0}(z_1)  \coloneqq -
y(z_1)dx(z_1)$, $\omega^{(0)}_{2,0}(z_1,z_2) \coloneqq B(z_1,z_2)$, and then for $2g-2+m>0$ we have
\begin{align} \label{eq:CEO-TR-original}
	\omega^{(g)}_{m,0}(z_{\llbracket m \rrbracket}) & = \frac 12 \sum_{i=1}^a \res_{z=p_i} \frac{\int_z^{\sigma_i (z)} \omega^{(0)}_{2,0}(z_1,\bullet) }{\omega^{(0)}_{1,0}({\sigma_i (z)}) - \omega^{(0)}_{1,0}(z)} \Bigg( \omega^{(g-1)}_{m+1,0} (z,\sigma_i(z),z_{\llbracket m \rrbracket \setminus \{1\}})
	\\ \notag & \qquad + \sum_{\substack{g_1+g_2=g \\ I_1\sqcup I_2 = \llbracket m \rrbracket \setminus \{1\}\\ (g_i,|I_i|)\not= (0,0)}} \omega^{(g_1)}_{|I_1|+1,0}(z,z_{I_1})\omega^{(g_2)}_{|I_2|+1,0}(\sigma_i(z),z_{I_2}) \Bigg).
\end{align}
Here we use notation $\llbracket m \rrbracket \coloneqq \{1,\dots,m\}$, $z_I = \{z_i\}_{i\in I}$ for any $I\subseteq \llbracket m \rrbracket$, and $\sigma_i$ denotes the deck transformation with respect to the function $x$ near the point $p_i$. The extra seemingly superfluous subscript $0$ in the notation $\omega^{(g)}_{m,0}$ is explained below.

\begin{remark}
	The requirements for $x$ and $y$ can be substantially relaxed, for instance, it is sufficient to assume that $ydx$ is meromorphic (and the function $x$ itself might have essential singularities), and in fact for $2g-2+m>0$ the recursion only uses the local germs of $x$ and $y$ at the points $p_1,\dots,p_a$. However, in this paper we focus on the situation with $x$ and $y$ both globally defined meromorphic functions.
\end{remark}

\begin{remark} The formula for the recursion~\eqref{eq:CEO-TR-original} is given just for historical reasons and plays by itself almost no role in this paper. It should be rather considered as a way to recursively resolve the so-called loop equations (see Section~\ref{sec:RecollectionTR}) in the special space of symmetric meromorphic differentials that have poles only at the points $p_1,\dots,p_a$ and are uniquely reconstructed from the singular parts of their expansions at these points, cf.~\cite{BEO-loop,BS-blobbed}.
\end{remark}

\begin{remark}
	It is totally beyond the scope of this paper to overview the origins and numerous applications of topological recursion. To motivate its study it is enough to mention that it provides a unifying interface between matrix models, mirror symmetry, integrable systems, enumerative geometry, enumerative combinatorics, quantum knot invariants, free probability theory, and many further areas of mathematics and mathematical physics.
\end{remark}

\subsection{The \texorpdfstring{$x-y$}{x-y} swap} The roles of functions $x$ and $y$ in topological recursion are not symmetric. Assume, however, that the function $y$ satisfies the same properties as $x$: assume that all critical points $q_1,\dots,q_b\in S$ of $y$ are simple, $x$ is regular at $q_i$ and $dx|_{q_i}\not=0$, $i=1,\dots, b$. Then we can start a new recursion with the roles of $x$ and $y$ interchanged. This way we produce a system of symmetric meromorphic differentials $\omega^{(g)}_{0,n}$ on $S^n$, $g\geq 0$, $n\geq 1$, with $\omega^{(g)}_{0,1}(z_1)\coloneqq -x(z_1)dy(z_1)$, $\omega^{(0)}_{0,2}(z_1,z_2)\coloneqq B(z_1,z_2)$, and  for $2g-2+n>0$ we have
\begin{align} \label{eq:CEO-TR-xyswap}
	\omega^{(g)}_{0,n}(z_{\llbracket m \rrbracket}) & = \frac 12 \sum_{i=1}^b \res_{z=q_i} \frac{\int_z^{\tau_i (z)} \omega^{(0)}_{0,2}(z_1,\bullet) }{\omega^{(0)}_{0,1}({\tau_i (z)}) - \omega^{(0)}_{0,1}(z)} \Bigg( \omega^{(g-1)}_{0,n+1} (z,\tau_i(z),z_{\llbracket n \rrbracket \setminus \{1\}})
	\\ \notag & \qquad + \sum_{\substack{g_1+g_2=g \\ I_1\sqcup I_2 = \llbracket n \rrbracket \setminus \{1\}\\ (g_i,|I_i|)\not= (0,0)}} \omega^{(g_1)}_{0,|I_1|+1}(z,z_{I_1})\omega^{(g_2)}_{0,|I_2|+1}(\tau_i(z),z_{I_2}) \Bigg).
\end{align}
Here $\tau_i$ denotes the deck transformation with respect to the function $y$ near the point $q_i$.

\begin{remark}\label{rem:MixedCorrDiff} Now we can explain what is the meaning of the extra subscripts $0$ in the notation for $\omega^{(g)}_{m,0}$ and $\omega^{(g)}_{0,n}$. There is a natural family of meromorphic $n+m$ differentials $\omega^{(g)}_{m,n}$ on $S^{m+n}$, defined for $g\geq 0$ and $m+n\geq 1$, symmetric under the action of $\mfS_m\times \mfS_n$ interchanging the first $m$ and the last $n$ arguments. These ``mixed'' correlation differentials are interpolating between the outputs of two topological recursions presented above, and we claim that they are very natural objects, at least from the point of view of $2$-matrix model~\cite{EynardOrantin-xysymmetry} and KP integrability, though they are so far not present in the literature other than in the context of $2$-matrix model. 
\end{remark}

\subsection{Historical perspective} Initially, the concept of $x-y$ swap was studied only in the framework of the so-called symplectic invariance of the free energies associated to the correlation differentials of topological recursion, see~\cite{EynardOrantin-toporec,EynardOrantin-xysymmetry,bouchard-sulk,eynard2013xy}.
In this framework the correlation differentials on the dual side, and more general $n+m$ differentials $\omega^{(g)}_{m,n}$ played a role of an auxiliary tool. In particular, no meaningful enumerative interpretation was known outside of the example of the $2$-matrix model that we mentioned above. 

This situation changed with the work of Borot and Garcia-Failde~\cite{ElbaPHD,BGF-fullysimple} who conjectured an instance of $x-y$ swap such that the expansions of correlation differentials on both sides were combinatorially meaningful (their conjecture is proved in~\cite{borot2021topological,BDKS-FullySimple}). To this end they introduced the concept of the so-called fully simple maps. They also suggested that their approach provided the right framework for the moment-cumulant correspondence in the free probability theory, which was later proved in~\cite{BCGFLS-Free}.

These developments led to a conjecture by Borot et al. that the universal functional relations used in~\cite{BDKS-FullySimple} in the context of fully simple maps and in~\cite{BCGFLS-Free} in the context of the moment-cumulant correspondence are universal for $x-y$ swap in topological recursion in general, see~\cite[Conjecture 3.14]{BCGFLS-Free}. This conjecture was also supported by an old idea of Eynard and Orantin that the $2$-matrix model is rich enough to provide a universal system of relations for the correlation differentials under the $x-y$ swap. 

\subsection{Conjecture of Borot et al.} The main purpose of this paper is to give a proof of a conjectural formula proposed in~\cite[Conjecture 3.14]{BCGFLS-Free} that expresses $\omega^{(g)}_{0,n}$ in terms of $\omega^{(g)}_{m,0}$ as universal algebraic expressions. Our immediate goal is to recall this conjecture.

In order to do this we have to say a few words about the background of the formulas that will emerge throughout the paper. They all come from contractions of vertex operators in the Fock space studied in~\cite{BDKS-OrlovScherbin,BDKS-toporec-KP,BDKS-FullySimple,BDKS-symplectic}, and we explain their origin in Section~\ref{sec:FormalPowerSeries} and, later on, in Section~\ref{sec:MotivationFPS}. In principle, the reader can ignore the initial source of our formulas and skip these sections, but then the reader will have to face some cumbersome expressions presented with no prior motivation or explanation. Moreover, in the proofs we have constantly to switch from dealing with globally defined meromorphic differentials to their local expressions to be able to use the techniques of vertex operators in the Fock space. So, the Fock space background of our formulas is also necessary to follow the proofs. 

\subsubsection{Universal algebraic expressions} \label{sec:FormalPowerSeries}
In order to define the universal algebraic expressions and explain their origin, we introduce some extra assumptions and notation. Assume that there exists a point $P\in S$ such that $x$ has a simple zero at $P$ and $y$ has a simple pole at $P$ and $xy|_P = 1$.

The functions $x$ and $y^{-1}$ define local coordinates near the point $P$. By the construction of topological recursion the form $\omega^{(g)}_{m,0}$ is holomorphic at $P$ with an appropriate singular correction for unstable cases and we can consider the expansions of these forms at the local coordinates $x_i=x(z_i)$
\begin{align}
	& \omega^{(g)}_{m,0}(z_{\llbracket m\rrbracket})
	-\delta_{(g,m),(0,2)}\tfrac{dx_1dx_2}{(x_1-x_2)^2}-\delta_{(g,m),(0,1)}\big(-\tfrac{dx_1}{x_1}\big)\\ \notag & \sim
	\sum_{k_1,\dots,k_m=1}^\infty C^{(g)}_{k_1,\dots,k_m} \prod_{i=1}^m x_i^{k_i} \Big(-\frac{dx_i}{x_i}\Big),
\end{align} 
where $\delta_{(\dots),(\dots)}$ is the Kronecker delta function of the multi-indexes. 

\begin{remark} The extra sign $(-1)^m$ in the expansion on the right hand side might be captured in the definition of $\omega^{(0)}_{1,0}$, which is sometimes natural to define as $ydx$ rather than $-ydx$. This would results in the extra sign $(-1)^m$ for each $ \omega^{(g)}_{m,0}$. This introduction of extra sign is natural from the point of view of $x-y$ duality, and more generally, symplectic duality, cf.~\cite{BDKS-symplectic}.
\end{remark}

The above expansion can be represented in the following equivalent way
\begin{align}
	& \sum_{g=0}^{\infty} \hbar^{2g-2+m}\omega^{(g)}_{m,0} -\delta_{m,2}\tfrac{dx_1dx_2}{(x_1-x_2)^2}-\delta_{m,1}\hbar^{-1}\Bigl(-\tfrac{dx_1}{x_1}\Bigr)
	\\ \notag & \quad = \VEVc{\prod_{i=1}^m \sum_{\ell_i=1}^\infty x_i^{\ell_i}\Bigl(-\frac{dx_i}{x_i}\Bigr) J_{\ell_i}\;Z},
	\\
	& Z\coloneqq  \exp\Big(\sum_{\substack{g\geq 0,\ m\geq 1\\ k_1,\dots,k_m\geq 1}}\tfrac{\hbar^{2g-2+m}}{m!}
	C^{(g)}_{k_1,\dots,k_m} \prod_{j=1}^m \frac{J_{-k_i}}{k_i}\Big).\label{def:Z}
\end{align}  
This formula is written as a vacuum expectation value on the Fock space 
\begin{align}
	\cF=\Cf[[p_1,p_2,\dots]].
\end{align} 
Here $\vac \coloneqq 1\in \cF$, for any $f\in \cF$ we have $\covac f \coloneqq f|_{p_1=p_2=\cdots=0}$, and for any $k\geq 1$ we have $J_{-k} \coloneqq p_k\cdot$ and $J_k \coloneqq k\partial_{p_k}$. The symbol $\VEVc{\cdots}$ denotes the so-called connected vacuum expectation that is obtained from the usual vacuum expectation $\VEV{\cdots}$ by inclusion-exclusion formulas.

Define the operator $\cD$ acting diagonally in the basis of Schur functions as
\begin{align}
	\cD \colon s_\lambda(p_1,p_2,\dots) \mapsto s_\lambda(p_1,p_2,\dots) \cdot \prod_{(i,j)\in \lambda} (1+\hbar(i-j)).	
\end{align}
We introduce a new system of differentials $\tilde\omega^{(g)}_{0,n}$ defined as the following formal expansions in the local coordinates $y_i^{-1}$
\begin{align} \label{eq:UExpr-FormalPowerSeries}
	\sum_{g=0}^{\infty} \hbar^{2g-2+n}\tilde\omega^{(g)}_{0,n} \coloneqq \VEVc{\prod_{i=1}^n \sum_{\ell_i=1}^\infty y_i^{-\ell_i} \Bigl(-\frac{dy_i}{y_i}\Bigr)  J_{\ell_i}\;\cD\;Z}.
\end{align}

It is proved in~\cite{BDKS-FullySimple}, see also~\cite{BCGFLS-Free}, that there exists a universal closed dif\-fe\-ren\-tial-algebraic expression that expresses the differentials $\tilde\omega^{(g)}_{0,n}$ in terms of the differentials $\omega^{(h)}_{|I|,0}(z_I)$, $I\subseteq \llbracket n \rrbracket$, $2h-2+|I| \leq 2g-2+n$, and their iterated $x_i\partial_{x_i}$  and $y_i\partial_{y_i}$ derivatives, and in addition to them we also use $dx_i/x_i$, $dy_i/y_i$, and $dx_idx_j/(x_i-x_j)^2$ in these expressions:
\begin{multline} \label{eq:UExpr-Algebraic}
	\tilde\omega^{(g)}_{0,n}+\delta_{(g,n),(0,2)}\tfrac{dy_1dy_2}{(y_1-y_2)^2}+\delta_{(g,n),(0,1)}\big(-\tfrac{dy_1}{y_1}\big)
	\\=
	\Expr_{g,n}\Big(\big\{\omega^{(h)}_{m,0}\big\}_{2h-2+m\leq 2g-2+n}, \big\{\tfrac{dx_i}{x_i},\tfrac{dy_i}{y_i}\big\}_{i=1,\dots,n}, \big\{\tfrac{dx_idx_j}{(x_i-x_j)^2}\big\}_{\substack {i,j=1,\dots,n\\ i\not=j}}\Big),
\end{multline}
see the explicit formula~\eqref{eq:MainFormula} below. It follows that the forms $\tilde\omega^{(g)}_{0,n}$ defined initially as formal power expansions extend as global meromorphic $n$-differentials that we denote also by~$\tilde\omega^{(g)}_{0,n}$.

\begin{remark} Note that both the formal power series~\eqref{eq:UExpr-FormalPowerSeries} and the universal algebraic expressions~\eqref{eq:UExpr-Algebraic} appear quite often in the literature. These formulas are used in enumeration of double strictly monotone Hurwitz numbers (in various special situations turning into enumeration of Gro\-then\-dieck's dessins d'enfants, higher genera Catalan numbers, ribbon graphs, lattice points in the moduli spaces of curves, hypermaps, etc.~etc.). The universal expressions $\Expr_{g,n}$ occur also as universal relations between cumulants of two matrices in the Hermitian two matrix model, and in the moment-cumulant correspondence in the free probability theory~\cite{BCGFLS-Free}.
\end{remark}

\subsubsection{The conjecture}\label{sec:theconj} The previous subsection explains the origin of some so far not explicitly presented expression $\Expr_{g,n}$ implied by the constructions given there.

\begin{conjecture}[{\cite[Conjecture 3.14]{BCGFLS-Free}}] \label{conj:Main}
	For $g\geq 0$, $n\geq 1$ we have:
	%	\begin{align} \label{eq:MainConjecture}
		%		\omega^{(g)}_{0,n} & =  -\delta_{(g,n),(0,1)} \frac{dy(z_1)}{y(z_1)}+ \delta_{(g,n),(0,2)} \frac{dy(z_1)dy(z_2)}{(y(z_1)-y(z_2))^2} +
		%		\\ \notag & \quad \Expr_{g,n}\Big(\big\{\omega^{(h)}_{m,0} - \delta_{(h,m),(0,1)} \frac{dx(z_1)}{x(z_1)}- \delta_{(h,m),(0,2)} \frac{dx(z_1)dx(z_2)}{(x(z_1)-x(z_2))^2}\big\}_{2h-2+m\leq 2g-2+n},\\ \notag
		%		& \qquad \qquad \big\{\frac{dx_i}{x_i},-\frac{dy_i}{y_i}\big\}_{i=1,\dots,n}, \big\{\frac{dx_idx_j}{(x_i-x_j)^2},\frac{dy_idy_j}{(y_i-y_j)^2}\big\}_{\substack {i,j=1,\dots,n\\ i\not=j}},\Big)
		%	\end{align}
	\begin{equation}
		\omega^{(g)}_{0,n} = \label{eq:MainConjecture}
		\Expr_{g,n}\Big(\big\{\omega^{(h)}_{m,0}\big\}_{2h-2+m\leq 2g-2+n}, \Big\{\frac{dx_i}{x_i},\frac{dy_i}{y_i}\Big\}_{i=1,\dots,n}, \Big\{\frac{dx_idx_j}{(x_i-x_j)^2}\Big\}_{\substack {i,j=1,\dots,n\\ i\not=j}}\Big).
	\end{equation}
\end{conjecture}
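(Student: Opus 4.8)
The plan is to reduce \eqref{eq:MainConjecture} to an identification of the swapped topological recursion differentials with the formal-power-series objects $\tilde\omega^{(g)}_{0,n}$ of \eqref{eq:UExpr-FormalPowerSeries}, and then to establish that identification through the loop-equation characterization of topological recursion. Since the closed algebraic identity \eqref{eq:UExpr-Algebraic} already writes $\Expr_{g,n}$ as $\tilde\omega^{(g)}_{0,n}$ plus the two explicit unstable corrections, the conjecture is equivalent to the equalities $\omega^{(g)}_{0,n}=\tilde\omega^{(g)}_{0,n}$ for $2g-2+n>0$, supplemented by the elementary checks $\tilde\omega^{(0)}_{0,1}=-x\,dy+dy_1/y_1$ and $\tilde\omega^{(0)}_{0,2}=B-dy_1dy_2/(y_1-y_2)^2$ in the two unstable ranges. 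Thus the whole content is the assertion that the operator $\cD$ applied to the series $Z$ of \eqref{def:Z}, re-expanded in the coordinates $y_i^{-1}$, reproduces the output of the recursion \eqref{eq:CEO-TR-xyswap}.

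To prove $\tilde\omega^{(g)}_{0,n}=\omega^{(g)}_{0,n}$ I would invoke the fact that the $\omega^{(g)}_{0,n}$ are the unique family of symmetric meromorphic $n$-differentials on $S^n$ that (i) are holomorphic away from the critical points $q_1,\dots,q_b$ of $y$, (ii) obey the linear and quadratic loop equations attached to the deck transformations $\tau_i$, and (iii) carry no holomorphic (``blobbed'') ambiguity, i.e.\ are reconstructed from the singular parts of their expansions at the $q_i$ (cf.\ the remark after \eqref{eq:CEO-TR-original} and \cite{BEO-loop,BS-blobbed}). The strategy is to verify (i)--(iii) for $\tilde\omega^{(g)}_{0,n}$ and conclude by uniqueness. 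Symmetry under $\mathfrak{S}_n$ is manifest from \eqref{eq:UExpr-FormalPowerSeries}, and global meromorphy --- the passage from formal series to honest differentials --- is precisely the output of the closed formula \eqref{eq:UExpr-Algebraic}, so these two inputs are already in hand.

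The decisive step is the pole structure (i). The right-hand side of \eqref{eq:UExpr-Algebraic} is a sum of products of the inputs $\omega^{(h)}_{m,0}$, their iterated $x_i\partial_{x_i}$- and $y_i\partial_{y_i}$-derivatives, and the kernels $dx_i/x_i$, $dy_i/y_i$, $dx_idx_j/(x_i-x_j)^2$; as such it a priori acquires spurious singularities at the critical points $p_i$ of $x$, at the zeros and poles of $x$ and of $y$, and along the diagonals. One must show that every one of these cancels, leaving poles only at the $q_i$. I would extract this cancellation from the linear and quadratic loop equations already satisfied by the inputs $\omega^{(h)}_{m,0}$ at the $p_i$, combined with the explicit local behaviour of the three kernels; here the drastically simplified form of $\Expr_{g,n}$ advertised in the abstract (after Hock) should be indispensable for keeping the combinatorics tractable. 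Granting (i), the linear loop equations for $\tilde\omega$ reduce to the $\tau_i$-antisymmetry of its polar part at each $q_i$, and the no-blob property (iii) follows from the normalization of the $\mathfrak{A}$-periods once the polar parts at the $q_i$ are under control.

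The main obstacle, as usual for such statements, is concentrated in the quadratic loop equations and in the spurious-pole cancellation underlying (i); both are genuinely delicate. Their common source is a mismatch of localisations: the expression defining $\tilde\omega^{(g)}_{0,n}$ is intrinsically a local expansion at the single point $P$, whereas the conditions that pin down topological recursion are imposed at the ramification points $q_i$, which are generically far from $P$. Bridging this gap is where I expect the Fock-space and vertex-operator calculus of \cite{BDKS-FullySimple,BCGFLS-Free} to carry the argument, with the interpretation of $\cD$ as a content (cut-and-join) operator organising the required identities; as a structural consistency check one can verify that applying the inverse swap returns the original $\omega^{(g)}_{m,0}$, reflecting the involutivity of the $x-y$ swap.
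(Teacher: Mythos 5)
Your skeleton agrees with the paper's: the paper also replaces the recursion by the characterization ``linear and quadratic loop equations, poles only at the ramification points of the relevant function, vanishing $\mathfrak{A}$-periods'' (its Proposition~\ref{prop:trueTRformulation}) and then verifies these properties for the right-hand side of~\eqref{eq:MainConjecture}; your reduction to $\omega^{(g)}_{0,n}=\tilde\omega^{(g)}_{0,n}$ and your unstable checks are also fine. The genuine gap is that the two steps you yourself single out as decisive --- cancellation of the spurious poles at the zeros of $dx$ and validity of the loop equations at the zeros of $dy$ --- are precisely where you defer to an unspecified ``Fock-space and vertex-operator calculus,'' and that deferral conceals a missing idea, not a routine computation. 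The loop equations satisfied by the inputs $\omega^{(h)}_{m,0}$ at the $p_i$ do not act term-by-term on $\Expr_{g,n}$, and no direct manipulation of the closed formula is known that exhibits either cancellation. What the paper actually does is introduce the full interpolating family of \emph{mixed} correlation differentials $\omega^{(g)}_{m,n}$ (Definition~\ref{def:mixedcorrdiff}), defined for an \emph{arbitrary} input system, and prove the pair of dual parametric identities~\eqref{eq:WxtoW},~\eqref{eq:WytoW}. The point of these identities is structural: the right-hand side of~\eqref{eq:WxtoW} is a sum of iterated $\partial_y$-derivatives of globally meromorphic functions, so by the very definition of $\Xi^y$ it \emph{manifestly} implies all $r$-loop equations at the zeros of $dy$; dually,~\eqref{eq:WytoW} yields the loop equations at the zeros of $dx$, but only after one knows the mixed differentials are regular there in the $y$-type variables. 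That regularity is proved by induction on $2g-2+m+n$ running through \emph{all} intermediate $(m,n)$, via a parity-of-pole-order argument: the linear and quadratic loop equations force both $\tfrac{dy}{dx}\overline{W}^{(g)}_{m,1}$ and $y\,\tfrac{dy}{dx}\overline{W}^{(g)}_{m,1}$ into $\Xi^x$, elements of $\Xi^x$ can only have odd-order poles, and multiplication by $y-y(p)$ shifts the order by one, leaving regularity as the only possibility. Working only with the two endpoints $\omega^{(g)}_{m,0}$ and $\omega^{(g)}_{0,n}$, as you propose, there is no family on which to run this induction and no analogue of the dual identities whose shape encodes the loop equations; this is where your plan cannot be completed as stated.

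A secondary gap: the objects $\tilde\omega^{(g)}_{0,n}$ and the identity~\eqref{eq:UExpr-Algebraic} exist only under the extra assumptions of Section~\ref{sec:FormalPowerSeries} (a point $P$ with a simple zero of $x$, a simple pole of $y$, $xy|_P=1$), whereas Theorem~\ref{thm:mainconjholds} assumes none of this. Moreover, the Fock-space formalism only sees formal expansions at $P$; the bridge to the ramification points $q_i$ is not more vertex-operator algebra but the observation that the identities in question are differential-algebraic relations among global meromorphic differentials, so once checked on expansions they hold globally --- and in the absence of such a point $P$ the paper still needs the rescaling, shift, and deformation/analytic-continuation arguments of Steps (c)--(e) in Section~\ref{sec:proofs1}. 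Your proposal would require the same supplement even if the main gap were filled.
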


In other words, we forget all assumptions that we made in order to define and explain the origin of $\Expr_{g,n}$ and apply these expressions directly to globally defined symmetric $m$-differentials on $S^m$. As a result, we obtain a symmetric $n$-differential on $S^n$, and the conjecture states that it is precisely %(up to some small corrections for the unstable terms)
the correlation differential obtained with interchanged roles of~$x$ and~$y$.

%\begin{remark} As we see below, with a small extra convention one can extend it to the case $(g,n)=(0,1)$ as well.
%\end{remark}

\begin{theorem}\label{thm:mainconjholds} Conjecture~\ref{conj:Main} holds. That is, assume the differentials $\omega^{(g)}_{m,0}$ satisfy the topological recursion for the input data $(S,x,y,B)$ with meromorphic $x$ and $y$, all zeros of $dy$ are simple, and $x$ is regular at the zeros of $dy$. Then the differentials $\omega^{(g)}_{0,n}$ given by~\eqref{eq:MainConjecture} satisfy the topological recursion for the input data $(S,y,x,B)$.
\end{theorem}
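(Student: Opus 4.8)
The plan is to recognize the differentials $\omega^{(g)}_{0,n}$ defined by \eqref{eq:MainConjecture} as the output of topological recursion for the swapped data $(S,y,x,B)$ through the characterization of the recursion by loop equations. As recalled after \eqref{eq:CEO-TR-original} and in Section~\ref{sec:RecollectionTR} (see \cite{BEO-loop,BS-blobbed}), a symmetric family of meromorphic differentials with the prescribed unstable terms $\omega^{(0)}_{0,1}=-x\,dy$ and $\omega^{(0)}_{0,2}=B$ equals the output of \eqref{eq:CEO-TR-xyswap} if and only if, in each of its variables, it (i) has poles only at the critical points $q_1,\dots,q_b$ of $y$ and is reconstructed from the singular parts of its expansions there, and (ii) satisfies the linear and quadratic loop equations with respect to the local deck transformations $\tau_i$. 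Thus it suffices to establish (i) and (ii) for the $n$-differentials produced by the universal expression $\Expr_{g,n}$.

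To access these properties I would use the Fock-space origin of $\Expr_{g,n}$. It is enough to treat input data for which a point $P$ as in Section~\ref{sec:FormalPowerSeries} exists (at which $x$ has a simple zero, $y$ a simple pole, and $xy|_P=1$); the general case then follows by a standard algebraicity argument, as both \eqref{eq:MainConjecture} and the recursion \eqref{eq:CEO-TR-xyswap} depend algebraically on the input data. For such data the established relation \eqref{eq:UExpr-Algebraic} identifies $\omega^{(g)}_{0,n}$ near $P$ with the vertex-operator series $\tilde\omega^{(g)}_{0,n}$ of \eqref{eq:UExpr-FormalPowerSeries}, and by analytic continuation the identification is global. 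The task then reduces to reading off (i) and (ii) for the differentials obtained from $Z$ in \eqref{def:Z} by inserting $\cD$ and expanding in the local coordinate $y^{-1}$.

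I would then verify (ii) by transferring the loop equations satisfied by the $\omega^{(h)}_{m,0}$ with respect to $x$ into loop equations for $\omega^{(g)}_{0,n}$ with respect to $y$ through the algebraic structure of $\Expr_{g,n}$, supported by local residue analysis at the $q_i$. The linear loop equation should be the milder one: the $\tau_i$-antisymmetric part of the relevant combination carries the only potential pole at $q_i$, and one checks that $\Expr_{g,n}$ produces only $\tau_i$-even singularities there. The quadratic loop equation is the crux: one must show that the quadratic combination appearing in \eqref{eq:CEO-TR-xyswap} extends holomorphically across $q_i$. In the vertex-operator model this holomorphicity reduces to an identity between the two-point contraction $\omega^{(g-1)}_{0,n+2}(z,\tau_i(z),\cdot)$ and the sum of products of one-point contractions evaluated against $\cD\,Z$, where the commutation relations $[J_k,J_{-\ell}]=k\,\delta_{k,\ell}$ and the diagonal action of $\cD$ on Schur functions do the work; the drastically simplified form of $\Expr_{g,n}$ promised in the abstract (after Hock) should make this a lengthy but mechanical manipulation.

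The main obstacle, where I expect the real difficulty to lie, concerns the global geometry rather than the formal bookkeeping, and is twofold. First, the Fock-space formula \eqref{eq:UExpr-FormalPowerSeries} is an expansion at the single point $P$, whereas (i) and (ii) are local statements at the branch points $q_i$, which are generically unrelated to $P$; bridging the two requires promoting the formal identity $\omega^{(g)}_{0,n}=\tilde\omega^{(g)}_{0,n}$ to an identity of global meromorphic differentials before any local analysis at $q_i$ is possible. Second, and most delicate, is establishing the pole structure in (i): $\Expr_{g,n}$ is assembled from the $\omega^{(h)}_{m,0}$ (with poles at the zeros of $dx$), their $x_i\partial_{x_i}$ and $y_i\partial_{y_i}$ derivatives, and the kernels $dx_i/x_i$, $dy_i/y_i$, and $dx_i\,dx_j/(x_i-x_j)^2$, so one must prove that all a priori spurious poles---at the zeros of $dx$, at the zeros and poles of $x$ and $y$, and along the diagonal---cancel, leaving singularities only at the $q_i$. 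I expect this cancellation, which encodes the compatibility of the swap with the global meromorphic structure of the curve, to be the technical heart of the proof.
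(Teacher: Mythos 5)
Your top-level skeleton --- replace the recursion for $(S,y,x,B)$ by the loop-equation/pole-structure characterization (the paper's Proposition~\ref{prop:trueTRformulation}) and then verify that characterization for the output of $\Expr_{g,n}$ --- is the same as the paper's. But the proposal is missing the one idea that makes this verification possible, and what you offer in its place does not work. The paper's mechanism is the system of \emph{mixed} correlation differentials $\omega^{(g)}_{m,n}$, defined by iterating a universal relation \eqref{eq:Wxtoomega} that trades one `$dy$-argument' for `$dx$-arguments', together with its $x$--$y$ dual \eqref{eq:Wytoomega} and the parametric extensions \eqref{eq:WxtoW}--\eqref{eq:WytoW}. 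These relations do two jobs that your plan leaves unexplained. First, the loop equations at the zeros $q_i$ of $dy$ become \emph{manifest}: relation \eqref{eq:WxtoW} writes $\cW^{y,(g)}_{m,n+1}$ as $-\sum_{r}\partial_y^r(\cdots)$ applied to functions with no poles at the $q_i$, and the space $\Xi^y$ entering the loop equations is by definition spanned by exactly such expressions, so no residue analysis or commutator computation at the $q_i$ is needed. Second, the cancellation of the spurious poles at the zeros of $dx$ (which you correctly single out as the technical heart) is \emph{derived} in Proposition~\ref{prop:loopequations} by combining the dual relation \eqref{eq:WytoW} with the linear and quadratic loop equations satisfied by the input $\omega^{(g)}_{m,0}$, plus an induction on $2g-2+m+n$; diagonal poles are handled by Proposition~\ref{prop:regular-at-diagonals}, and poles at poles of $y$ by a direct local estimate. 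Your proposal instead hopes to get the quadratic loop equation at the $q_i$ from the commutation relations $[J_k,J_{-\ell}]=k\,\delta_{k,\ell}$ acting on $\cD Z$; this cannot work as stated, because the vertex-operator expansion \eqref{eq:UExpr-FormalPowerSeries} is a power series anchored at the single point $P$, while the loop equations are local statements at the $q_i$, which are unrelated to $P$ --- the very obstacle you flag in your last paragraph and then leave unresolved.

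Two further points. The ``standard algebraicity argument'' for passing from curves admitting the special point $P$ to general input data is too quick: what extends by deformation arguments are \emph{differential-algebraic identities} (rational expressions in the input data that vanish on an open set of deformations, hence identically); ``satisfying topological recursion'' is not such an identity, and the paper never transfers it that way --- it proves the identities of Theorem~\ref{th:formalidentities} by deformation (Steps (c)--(e) of the proofs in Section~\ref{sec:proofs1}) and only then runs the pole and loop-equation analysis globally, with no further reference to the special point. Finally, your condition (i) hides the projection property; on a positive-genus $S$ this requires vanishing of the $\mathfrak{A}$-periods, which the paper obtains from a separate lemma (the sum $\omega^{(g)}_{m+1,n}+\omega^{(g)}_{m,n+1}$, viewed as a $1$-form in $z_{m+1}$, is the differential of a meromorphic function), and which your proposal does not address at all.
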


We prove Theorem~\ref{thm:mainconjholds} in Section~\ref{sec:proofofmainthms}. Note that the assumption that the differentials $\omega^{(g)}_{m,0}$ are globally defined is used there in Proposition~\ref{prop:trueTRformulation} in order to replace the topological recursion by some uniqueness property. 

\subsubsection{Explicit formula}\label{sec:ExplFormula} Let us recall the explicit formula for $\Expr_{g,n}$ given in~\cite{BCGFLS-Free,BDKS-FullySimple} specified to our situation as it is required in Equation~\eqref{eq:MainConjecture}, that is, applied to globally defined symmetric meromorphic differentials.

We use graphs with multiedges. A graph with multiedges is given by its set of vertices $V$, its set of multiedges $E$, the set of red flags $R$, and the set of blue flags $L$, equipped with maps $r\colon R\to V$, $l\colon L\to E$, and an isomorphism $\iota \colon L\to R$. 
The index of a vertex $v\in V$ is $|r^{-1}(v)|$, the index of a multiedge $e\in E$ is $|l^{-1}(e)|$. We call elements of $l^{-1}(e)$ the \emph{legs} of a multiedge $e$, and we say that a particular leg $h$ of $e$ is attached to a vertex $v$ if $v=r(\iota(h))$.

Let  $x_i = x(z_i)$,  $\tilde x_i = x(\tilde z_i)$ and $y_i=y(z_i)$. For $g\geq 0$, $n\geq 1$ we have:
\begin{align} \label{eq:MainFormula}
	& \omega_{0,n}^{(g)} (z_{\llbracket n\rrbracket})\prod_{i=1}^n \Big(-\frac{y_i}{dy_i}\Big) 
	=
	\\ \notag &
	[\hbar^{2g}] \sum_{\Gamma} \frac{
		\hbar^{2g(\Gamma)}}{|\mathrm{Aut}(\Gamma)|} \prod_{i=1}^n
	\sum_{k_i=0}^\infty ( -y_i \partial_{y_i})^{k_i} [v_i^{k_i}]
	\\ \notag & 
	\Big(-\frac{y_i}{dy_i}\frac{dx_i}{x_i} \Big)
		\restr{\theta_i}{ - \frac{x_i}{dx_i}\omega^{(0)}_{1,0} (z_i)}
	\sum_{r_i=0}^\infty \Big(\partial_{\theta_i} + \frac{v_i}{\theta_i}\Big)^{r_i}
	e^{v_i\frac{\cS(\hbar v_i \partial_{\theta_i})}{\cS(\hbar \partial_{\theta_i})} \log\theta_i - v_i\log\theta_i}
	[u_i^{r_i}]
	\\ \notag &
	\frac{1}{ u_i \cS(\hbar u_i)} e^{u_i \cS(\hbar u_i x_i \partial_{x_i}) \sum_{\tilde g=0}^\infty \hbar^{2\tilde g} \big(-\frac{x_i}{dx_i}\omega^{(\tilde g)}_{1,0} (z_i)\big)-u_i \big(-\frac{x_i}{dx_i}\omega^{(0)}_{1,0} (z_i)\big)}
	\\ \notag &
	\prod_{e\in E(\Gamma)} \prod_{j=1}^{|e|%\geq 2
	} \restr{(\tilde u_j, \tilde x_j) }{u_{e(j)},x_{e(j)}} \tilde u_j \cS(\hbar \tilde u_j \tilde x_j \partial_{\tilde x_j})  \Big(-\frac{\tilde x_j}{d\tilde x_j}\Big) \sum_{\tilde g=0}^\infty \hbar^{2\tilde g}\tilde \omega^{(\tilde g)}_{|e|,0}(\tilde z_{\llbracket |e|\rrbracket})
	\\ \notag &
	+\delta_{n,1}[\hbar^{2g}] \sum_{k=0}^\infty (-y_1\partial_{y_1})^k [v^{k+1}]
	\\ \notag & \quad 
	\restr{\theta}{ - \frac{x_1}{dx_1}\omega^{(0)}_{1,0} (z_1)} e^{v\frac{\cS(\hbar v \partial_{\theta})}{\cS(\hbar \partial_{\theta})} \log\theta - v\log\theta} (-y_1\partial_{y_1})\Big(-\frac{x_1}{dx_1}\omega^{(0)}_{1,0} (z_1)\Big)
	\\ \notag &
	+ \delta_{(g,n),(0,1)} \Big(-\frac{x_1}{dx_1}\omega^{(0)}_{1,0} (z_1)\Big).
\end{align} 
Here
\begin{itemize}

	\item The sum is taken over all connected graphs $\Gamma$ with $n$ labeled vertices (with labels from $1$ to $n$) and multiedges of index $\geq 2$. %, where the index of a multiedge is the number of its ``legs'' and we denote it by $|e|$.
	\item For convenience, for a given such graph, we also label all legs of every given multiedge $e$ from $1$ to $|e|$ in an arbitrary way.
	\item For a multiedge $e$ with index $|e|$  we control its attachment to the vertices by the associated map $e\colon \llbracket |e| \rrbracket \to %(1,\dots,1,2,\dots,2,\dots,n,\dots,n)
	\llbracket n \rrbracket
	$ that we denote also by $e$, abusing notation (so $e(j)$ is the label of the vertex to which the $j$-th leg of the multiedge $e$ is attached). Do note that this map can be an arbitrary map from $\llbracket |e| \rrbracket$ to $\llbracket n \rrbracket$; in particular, it might not be injective, i.e. we allow a given multiedge to connect to a given vertex with several of its legs. 
	\item For a given multiedge $e$ with $|e|=2$ we define $\tilde \omega^{(0)}_{2,0}(\tilde x_1,\tilde x_2) :=  \omega^{(0)}_{2,0}(\tilde x_1,\tilde x_2) - \frac{d\tilde x_1d\tilde x_2}{(\tilde x_1-\tilde x_2)^2}$ if $e(1)=e(2)$, and $\tilde \omega^{(0)}_{2,0}(\tilde x_1,\tilde x_2) :=  \omega^{(0)}_{2,0}(\tilde x_1,\tilde x_2)$ otherwise. For all $(g,n)\not=(0,2)$ we simply have $\tilde \omega^{(g)}_{n,0} :=  \omega^{(g)}_{n,0}$.
	\item By $g(\Gamma)$ we denote the first Betti number of $\Gamma$.
	\item $|\mathrm{Aut}(\Gamma)|$ stands for the number of automorphisms of $\Gamma$.
	\item By $[\hbar^{2g}]$ (respectively, $[v_i^{k_i}]$, $[u_i^{r_i}]$) we denote the operator that extracts the corresponding coefficient from the whole expression to the right of it, that is,\\ $[x^m]\sum_{i=-\infty}^\infty a_ix^i \coloneqq a_m$. 
	\item By $\restr{a}{b}$ we denote the operator of substitution $a\to b$, that is, $\restr{a}{b}f(a)=f(b)$ for any function $f$. It is a bit unusual notation (sometimes one uses $f(a)|_{a\to b}$ instead), but it is very convenient for our purpose to consider the substitution as an operator and thus apply from the left as all other operators involved in the formula. 
	\item The function $\cS(z)$ is defined as
	\begin{align}
		\cS(z)\coloneqq \frac {e^{z/2}-e^{-z/2}} z.
	\end{align}
\end{itemize}

\begin{remark} \label{rem:MainFormula-DiffOper}For each $g\geq 0$, $n\geq 1$, Equation~\eqref{eq:MainFormula} is manifestly a finite sum of finite products of differential operators applied to $\omega^{(\tilde{g})}_{m,0}$ for $2\tilde g -2+m\geq 0$.
\end{remark}

\begin{remark} \label{rem:MainFormula-shifts} Conjecture~\ref{conj:Main} implies, for instance, that the universal formula, given by Equation~\eqref{eq:MainFormula}, should not change once we shift functions $x$ and $y$ by constants. However, this property is absolutely not clear from the formula --- the differential operators there involve the external differential $d$ and multiplications / divisions by $x$, $dx$, $y$, and $dy$.
\end{remark}

\begin{remark} The last line of Equation~\eqref{eq:MainFormula} is a regularization term for the next to the last line, and the last two lines together form a regularization term to the main part of the formula.
\end{remark}

\begin{remark}
	In the special case $(g,n)=(0,1)$
	Equation~\eqref{eq:MainFormula} implies that $-\omega^{(0)}_{0,1} \frac{y_1}{dy_1} = -\omega^{(0)}_{1,0} \frac{x_1}{dx_1}$. Since $ -\omega^{(0)}_{1,0} \frac{x_1}{dx_1}= x_1y_1$, we obtain $\omega^{(0)}_{0,1}=-x_1dy_1$. It is also straightforward to see from Equation~\eqref{eq:MainFormula} that $\omega^{(0)}_{0,2} = \omega^{(0)}_{2,0}$.	
\end{remark}

\subsection{A reformulation of the conjecture} Equation~\eqref{eq:MainFormula} admits a number of equivalent reformulations. For instance, it was ob\-served in \cite[Proposition 5.3]{Hockx-x-y} (and it also follows directly from~\cite[Lemma 4.1]{BDKS-toporec-KP}) that for $g=0$ Equation~\eqref{eq:MainFormula} can be substantially simplified. In fact it is the case for any $g$.

\begin{theorem}\label{thm:newformula} Equation~\eqref{eq:MainFormula} is equivalent to the following one:
	\begin{align} \label{eq:MainFormulaSimple}
		& \omega_{0,n}^{(g)} (z_{\llbracket n\rrbracket})\prod_{i=1}^n \frac{1}{dy_i} %+\delta_{g,0}\delta_{n,2} \frac{y_1y_2}{(y_1-y_2)^2}
		=
		%\\ \notag &
		(-1)^n
		\coeff \hbar {2g} \sum_{\Gamma} \frac{\hbar^{2g(\Gamma)}}{|\mathrm{Aut}(\Gamma)|} \prod_{i=1}^n
		\sum_{k_i=0}^\infty  \partial_{y_i}^{k_i} [w_i^{k_i}] \frac{dx_i}{dy_i}
		\\ \notag &
		\frac{1}{w_i} e^{ w_i \cS(\hbar w_i \partial_{x_i}) \sum_{\tilde g=0}^\infty \hbar^{2\tilde g} \frac{1}{dx_i}\omega^{(\tilde g)}_{1,0} (z_i)-w_i \frac{1}{dx_i}\omega^{(0)}_{1,0} (z_i)}
		\\ \notag &
		\prod_{e\in E(\Gamma)} \prod_{j=1}^{|e|\geq 2}\restr{(\tilde w_j, \tilde x_j)}{ (w_{e(j)},x_{e(j)})} \tilde w_j \cS(\hbar \tilde w_j  \partial_{\tilde x_j}) \sum_{\tilde g=0}^\infty \hbar^{2\tilde g}\tilde \omega^{(\tilde g)}_{|e|,0}(\tilde z_{\llbracket |e|\rrbracket})  \prod_{j=1}^{|e|} \frac{1}{d\tilde x_j}
		\\ \notag &
		+\delta_{(g,n),(0,1)} (-x_1).
	\end{align}
\end{theorem}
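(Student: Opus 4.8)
The plan is to prove the identity graph by graph. Both \eqref{eq:MainFormula} and \eqref{eq:MainFormulaSimple} are the same sum over connected graphs $\Gamma$ with multiedges of index $\geq 2$, carrying the same weight $\hbar^{2g(\Gamma)}/|\mathrm{Aut}(\Gamma)|$ and the same edge-differentials $\tilde\omega^{(\tilde g)}_{|e|,0}$; the only coupling between a vertex $i$ and its incident edges is the substitution of the formal variable $u_i$ (resp.\ $w_i$) into the edge factors via $\restr{(\tilde u_j,\tilde x_j)}{(u_{e(j)},x_{e(j)})}$. Hence it suffices to exhibit one substitution $u_i\mapsto w_i$, applied uniformly, under which the vertex-dressing of \eqref{eq:MainFormula} turns into that of \eqref{eq:MainFormulaSimple} and, simultaneously, the edge factor $\tilde u_j\,\cS(\hbar\tilde u_j\tilde x_j\partial_{\tilde x_j})\bigl(-\tfrac{\tilde x_j}{d\tilde x_j}\bigr)$ turns into $\tilde w_j\,\cS(\hbar\tilde w_j\partial_{\tilde x_j})\tfrac{1}{d\tilde x_j}$ acting on $\tilde\omega^{(\tilde g)}_{|e|,0}$. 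Throughout I use the dictionary $\omega^{(0)}_{1,0}=-y\,dx$, so that $-\tfrac{x}{dx}\omega^{(0)}_{1,0}=xy$ is exactly the value substituted for $\theta_i$ while $\tfrac{1}{dx}\omega^{(0)}_{1,0}=-y$, together with the chain rule $\partial_{y_i}=\tfrac{dx_i}{dy_i}\partial_{x_i}$ for derivations on $S$.

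The crux is the disappearance of the auxiliary variables $v_i,\theta_i$ and the replacement $x_i\partial_{x_i}\to\partial_{x_i}$. The distinctive ingredient of \eqref{eq:MainFormula} is the $\theta$-block
\[
\restr{\theta}{xy}\sum_{r=0}^\infty\Bigl(\partial_\theta+\tfrac{v}{\theta}\Bigr)^r e^{v\frac{\cS(\hbar v\partial_\theta)}{\cS(\hbar\partial_\theta)}\log\theta-v\log\theta}[u^r],
\]
which is the closed-form realization of the content operator $\cD$ of \eqref{eq:UExpr-FormalPowerSeries}. I would first resum the $r$-series using the conjugation identity $(\partial_\theta+\tfrac{v}{\theta})^r=\theta^{-v}\partial_\theta^{\,r}\theta^{v}$, and then show that the remaining apparatus --- the extraction $[v^k]$, the operator $(-y\partial_y)^k$, and the substitution $\theta\to xy$ --- collapses the pair of formal indices $(r,k)$ onto the single index $k$ of \eqref{eq:MainFormulaSimple}, sending $[v^k]\,(-y\partial_y)^k$ to $[w^k]\,\partial_y^k$, the normalization $\tfrac{1}{u\cS(\hbar u)}$ to $\tfrac{1}{w}$, and $\cS(\hbar u\,x\partial_x)$ to $\cS(\hbar w\,\partial_x)$. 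The prefactor passage $-\tfrac{y}{dy}\tfrac{dx}{x}\to\tfrac{dx}{dy}$, the overall sign $(-1)^n$, and the conversion of the $(-x/dx)$-normalization of $\omega^{(\tilde g)}_{1,0}$ into the $(1/dx)$-normalization should then all drop out of the Jacobian of $u\mapsto w$ and the dictionary above.

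For $g=0$ every $\cS$-factor equals $1$, and the collapse described above is precisely the simplification of \cite[Proposition~5.3]{Hockx-x-y}, which in turn follows from the Lagrange-inversion identity \cite[Lemma~4.1]{BDKS-toporec-KP}. My plan for general $g$ is to run the same Lagrange inversion while retaining the full $\hbar$-dependence, so that the $\cS$-function factors are transported along the change of variables rather than trivialized. Concretely, I would fix the substitution $u\mapsto w$ by the edge-matching requirement and then verify that, under it, the vertex $u$-block of \eqref{eq:MainFormula} and the vertex $w$-block of \eqref{eq:MainFormulaSimple} coincide as formal power series whose coefficients are differential operators in $x_i$. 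Since both formulas are universal differential-algebraic expressions in the $\omega^{(\tilde g)}_{m,0}$, it is enough to check this identity for the universal (formal, free) input, where by construction both sides evaluate \eqref{eq:UExpr-FormalPowerSeries}; this turns the whole theorem into a single $\Gamma$-independent operator identity in the auxiliary variables. Separately, I would dispose of the unstable tails: the $(g,n)=(0,1)$ terms match because $\omega^{(0)}_{0,1}/dy_1=-x_1$ equals the term $\delta_{(g,n),(0,1)}(-x_1)$ of \eqref{eq:MainFormulaSimple}, and the entire $\delta_{n,1}$ block of \eqref{eq:MainFormula} must be shown to be reproduced by, and reabsorbed into, the main graph sum of \eqref{eq:MainFormulaSimple} under the same substitution.

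The step I expect to be the main obstacle is the all-$g$ compatibility of the two $\cS$-deformations: proving that the factor $\cS(\hbar v\partial_\theta)/\cS(\hbar\partial_\theta)$ carried by the content operator $\cD$ cancels, after $\theta\to xy$ and the substitution $u\mapsto w$, against the $\cS(\hbar u\,x\partial_x)$-deformation of the $u$-block, so that only $\cS(\hbar w\,\partial_x)$ survives with no residual dependence on $v$ or $\theta$. At $g=0$ this compatibility is vacuous; for general $g$ it is the genuinely new content, and I would attack it either by induction on the total $\hbar$-order of both sides, or --- more robustly --- by the reduction to the universal input of the previous paragraph, which reduces the claim to a closed operator identity that no longer refers to the spectral curve and can be checked purely formally.
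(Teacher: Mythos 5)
Your plan is correct and follows essentially the paper's own route: the paper likewise reduces the equivalence to a single vertex-local operator identity (Proposition~\ref{prop:MainSimplification}, applied with $H$ playing the role of the product of edge factors attached to a vertex), and resolves what you call the main obstacle --- the compatibility of the two $\cS$-deformations --- by precisely the $\hbar$-retaining Lagrange-inversion argument you propose (Lemmas~\ref{lem:xXyY9} and~\ref{lem:phiidentity}, via the change of variables $u(w)=\hbar^{-1}\log\tfrac{1+w\hbar/2}{1-w\hbar/2}$ and reduction to integer exponents of $\theta^v$). Your ``more robust'' fallback of checking the resulting operator identity on the universal Fock-space input is also present in the paper, as the alternative vertex-operator derivation of Section~\ref{sec:Computations-vertex-new-form}.
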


A direct corollary of Theorems~\ref{thm:mainconjholds} and~\ref{thm:newformula} is the following statement:

\begin{theorem} \label{thm:simpleformulaxyswap} Equation~\eqref{eq:MainFormulaSimple} gives a universal formula for the $x-y$ swap. That is, assume the differentials $\omega^{(g)}_{m,0}$ satisfy the topological recursion for the input data $(S,x,y,B)$ with meromorphic $x$ and $y$, all zeros of $dy$ are simple, and $x$ is regular at the zeros of $dy$. Then the differentials $\omega^{(g)}_{0,n}$ given by~\eqref{eq:MainFormulaSimple} satisfy the topological recursion for the input data $(S,y,x,B)$.
\end{theorem}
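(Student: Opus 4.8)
The plan is to obtain Theorem~\ref{thm:simpleformulaxyswap} as an immediate logical consequence of the two preceding theorems, so the ``proof'' is essentially a transitivity argument rather than a fresh calculation. First I would observe that Theorem~\ref{thm:mainconjholds} (i.e.\ Conjecture~\ref{conj:Main}) asserts that the differentials $\omega^{(g)}_{0,n}$ produced by applying the universal expression $\Expr_{g,n}$ --- whose explicit form is Equation~\eqref{eq:MainFormula} --- to topological-recursion data $(S,x,y,B)$ satisfy the topological recursion for the swapped data $(S,y,x,B)$. Then I would invoke Theorem~\ref{thm:newformula}, which states that Equation~\eqref{eq:MainFormula} and Equation~\eqref{eq:MainFormulaSimple} define one and the same family of $n$-differentials $\omega^{(g)}_{0,n}$. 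Chaining these two facts gives precisely the claim: the differentials defined by~\eqref{eq:MainFormulaSimple} coincide with those defined by~\eqref{eq:MainFormula}, and the latter satisfy the swapped topological recursion.

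Concretely, the steps in order are as follows. Fix input data $(S,x,y,B)$ satisfying the hypotheses (meromorphic $x$ and $y$, all zeros of $dy$ simple, $x$ regular there) and let $\omega^{(g)}_{m,0}$ be the associated correlation differentials. Let $\omega^{(g),\,\mathrm{I}}_{0,n}$ denote the $n$-differentials obtained by feeding these $\omega^{(g)}_{m,0}$ into Equation~\eqref{eq:MainFormula}, and let $\omega^{(g),\,\mathrm{II}}_{0,n}$ denote those obtained from Equation~\eqref{eq:MainFormulaSimple}. By Theorem~\ref{thm:newformula} the two formulas are equivalent as differential-algebraic expressions in the $\omega^{(h)}_{m,0}$ and the auxiliary data $dx_i/x_i$, $dy_i/y_i$, $dx_idx_j/(x_i-x_j)^2$, so $\omega^{(g),\,\mathrm{I}}_{0,n}=\omega^{(g),\,\mathrm{II}}_{0,n}$ for all $g,n$. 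By Theorem~\ref{thm:mainconjholds} the family $\{\omega^{(g),\,\mathrm{I}}_{0,n}\}$ satisfies the topological recursion~\eqref{eq:CEO-TR-xyswap} for $(S,y,x,B)$. Substituting the equality then shows the same holds for $\{\omega^{(g),\,\mathrm{II}}_{0,n}\}$, which is exactly the assertion of Theorem~\ref{thm:simpleformulaxyswap}.

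One subtlety worth flagging, and the only place any genuine care is needed, is that Theorem~\ref{thm:newformula} is stated as an identity of \emph{formal} or \emph{local} expressions, whereas Theorem~\ref{thm:simpleformulaxyswap} is a statement about \emph{global} meromorphic differentials on $S^n$. I would therefore check that the equivalence of~\eqref{eq:MainFormula} and~\eqref{eq:MainFormulaSimple} is robust under the passage from the local expansions (where the expressions were originally derived, under the auxiliary assumptions of Section~\ref{sec:FormalPowerSeries}) to the globally defined differentials to which both formulas are being applied in the statements of the two theorems. Since both theorems are phrased for the same class of globally defined symmetric meromorphic $\omega^{(g)}_{m,0}$, and Theorem~\ref{thm:newformula} is itself asserted at that global level, no additional analytic continuation argument is required --- the equivalence transfers verbatim. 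This is the main (and only mild) obstacle: ensuring the two theorems are compatible in their domain of applicability before combining them. Everything else is pure substitution.
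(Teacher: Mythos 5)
Your argument is logically valid \emph{as a deduction}: if Theorem~\ref{thm:mainconjholds} and Theorem~\ref{thm:newformula} were both available as independently established results, then Theorem~\ref{thm:simpleformulaxyswap} would follow by pure substitution, exactly as you say (and the paper's introduction even advertises this implication). The problem is that, relative to the paper's actual logical development, your proof is circular. In Section~\ref{sec:proofofmainthms} the paper proves the two theorems in the \emph{opposite} order: it first proves Theorem~\ref{thm:simpleformulaxyswap} directly --- i.e.\ it shows by hand that the differentials produced by~\eqref{eq:MainFormulaSimple} satisfy the topological recursion for $(S,y,x,B)$ --- and only then deduces Theorem~\ref{thm:mainconjholds} as a corollary via the equivalence of Theorem~\ref{thm:newformula}. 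There is no independent proof of Theorem~\ref{thm:mainconjholds} anywhere in the paper; its only proof passes through the very statement you are asked to prove. So your proposal, taken as the paper's proof of Theorem~\ref{thm:simpleformulaxyswap}, leaves all of the mathematical work undone.

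That missing work is the following. One forms the mixed correlation differentials $\omega^{(g)}_{m,n}$ from the $\omega^{(g)}_{m,0}$ by the recursion~\eqref{eq:Wxtoomega} (equivalently by the closed formula~\eqref{eq:omega-g-n-graphsxy}, whose $m=0$ case is~\eqref{eq:MainFormulaSimple}) and analyzes the poles of $\omega^{(g)}_{0,n}$: the poles at zeros of $dx$ cancel by the loop equations for mixed differentials (Proposition~\ref{prop:loopequations}), the diagonal poles cancel by Proposition~\ref{prop:regular-at-diagonals}, the potential poles at poles of $y$ are excluded by a local order-counting computation, so the only surviving poles are at zeros of $dy$; there the linear and quadratic (indeed all $r$-) loop equations hold, again by Proposition~\ref{prop:loopequations}; finally one checks that the $\mathfrak{A}$-periods vanish, and Proposition~\ref{prop:trueTRformulation} --- the characterization of topological recursion by pole locations, loop equations and vanishing $\mathfrak{A}$-periods --- then gives the claim. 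It is no accident that the direct proof is run on~\eqref{eq:MainFormulaSimple} rather than on~\eqref{eq:MainFormula}: the simple formula involves only $d$ and multiplications/divisions by $dx$, $dy$, which is what makes this pole and loop-equation analysis tractable. To repair your proposal you would have to replace the citation of Theorem~\ref{thm:mainconjholds} by this direct argument; at that point your ``transitivity'' step becomes the paper's deduction of Theorem~\ref{thm:mainconjholds}, not a proof of Theorem~\ref{thm:simpleformulaxyswap}.
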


A special case of Theorems~\ref{thm:mainconjholds} and~\ref{thm:simpleformulaxyswap} for $g=0$ is proved in~\cite{Hockx-x-y} under an additional assumption of the existence of the so-called loop insertion operators and their commutativity. We do not need this additional assumption. Special cases of this formula were also known for $(g,n)=(1,1)$ and $(g,n)=(1,2)$, see~ \cite{EynardOrantin-toporec,Hockx-x-y,BCGFLS-Free}.

While this paper was in preparation, Hock also independently discovered and proved Theorem~\ref{thm:newformula}, see~\cite{Hock-FullFormula}, as a generalization of his genus $0$ formula derived through the loop insertion operators. The methodological difference between our approach and the one taken by Hock is that we originally derived the equivalence of Equation~\eqref{eq:MainFormulaSimple} and~Equation~\eqref{eq:MainFormula} via a certain identity for vertex operators, see Section~\ref{sec:Computations-vertex-new-form}.

\begin{remark} For each $g\geq 0$, $n\geq 1$, Equation~\eqref{eq:MainFormulaSimple} is manifestly a finite sum of finite products of differential operators applied to $\omega^{(\tilde{g})}_{m,0}$ for $2\tilde g -2+m\geq 0$. This time, however, the differential operators involve only the external differential $d$ and multiplications / divisions by $dx$ and $dy$. Therefore, it is also straightforward to see that Equation~\eqref{eq:MainFormulaSimple} remains invariant under the shifts of $x$ and $y$ by constants, cf. Remarks~\ref{rem:MainFormula-DiffOper} and~\ref{rem:MainFormula-shifts}.
\end{remark}

\begin{remark} The last line of Equation~\eqref{eq:MainFormula} is again a regularization term for the the main part of the formula.
\end{remark}

\subsection{Mixed correlation differentials} The main tool to study the universal algebraic formulas that convert $\omega^{(g)}_{m,0}$'s into $\omega^{(g)}_{0,n}$'s is a system of mixed correlation differentials $\omega^{(g)}_{m,n}$ that we already mentioned in Remark~\ref{rem:MixedCorrDiff}. We associate such system of $\omega^{(g)}_{m,n}$'s to any input $(S,x,y,B)$, starting with the unstable cases
\begin{align} \label{eq:unstable}
	& \omega^{(0)}_{1,0} = -ydx; \qquad \omega^{(0)}_{0,1} = -xdy; \qquad \omega^{(0)}_{2,0} = -\omega^{(0)}_{1,1} = \omega^{(0)}_{0,2} = B.
\end{align}
This system of mixed correlation differentials is constructed from $\omega^{(g)}_{m,0}$'s via the formulas similar to Equations~\eqref{eq:MainFormula} and~\eqref{eq:MainFormulaSimple} (we have two equivalent formulations, see Section~\ref{sec:MixedDefinition}). We analyze the differential-algebraic relations between mixed correlation differentials, their singularities, and loop equations (see Section \ref{sec:higherloop}).
 
 Note that the explicit differential-algebraic relations and the pole analysis give us an algorithm to compute mixed correlation differentials, in particular, the original $\omega^{(g)}_{m,0}$, with\-out applying the topological recursion directly (see Section~\ref{sec:splitingofpoles} for details). This algorithm is in particular very efficient in the case of a rational spectral curve.

\subsubsection{Relation to the work of Eynard--Orantin} Conjecture~\ref{conj:Main} and Theorem~\ref{thm:mainconjholds} are not too surprising. Indeed, as it is discussed in~\cite{EynardOrantin-xysymmetry}, the formulas establishing the $x-y$ symmetry in general situation mimic in many ways the formulas for the Hermitian $2$-matrix model, cf.~\cite[Remarks~3.1 and~3.2]{EynardOrantin-xysymmetry}. On the other hand,  it is proved in~\cite{BDKS-symplectic} that Equation~\eqref{eq:MainFormula} gives the corresponding expression for the Hermitian $2$-matrix model.

In principle, this should give an alternative way to prove Theorem~\ref{thm:mainconjholds}. Indeed, in~\cite{EynardOrantin-xysymmetry} the $\omega^{( g)}_{m,0}$'s and $\omega^{(g)}_{0,n}$'s are also connected via a system of mixed correlation differentials $\omega^{(g)}_{m,n}$, and these mixed correlation differentials can most probably be identified with the mixed correlation differentials that we use in this paper (many properties are exactly the same). However, to work out the details to make this sketch of a proof into an actual proof is not so easy --- this would require an extension of the uniqueness arguments of the type given in~\cite[Appendix B]{EynardOrantin-xysymmetry} and the passage from a combination of loop equations and analytic properties as in~\cite[Section 1]{BEO-loop} and~\cite[Section 2]{BS-blobbed} to be extended to the more general situation of the mixed correlation functions.

\subsection{Review of the paper}
\newcommand{\oW}{{\overline W}}
The idea of this Section is that hopefully after reading it one will be able to use the results of the article (maybe without working out all the details of the proofs). All definitions and statements in this Section are explained in much more detail in the rest of the text, and the sole purpose of their presentation is to provide a comprehensive overview of the rest of the paper. This Section might be skipped entirely for the first reading, or, alternatively, be used as a guide that might simplify following the cross-references within the paper. 

We explain here the main ideas used in the proof of Theorems~\ref{thm:mainconjholds} and \ref{thm:simpleformulaxyswap}.  We start with the collection of universal functional identities that hold independently of the
topological recursion. Let $\{\omega^{(g)}_{m,0}\}$ be any collection of
symmetric meromorphic $m$-differentials  $\omega^{(g)}_{m,0}(z_{\set m})$ on $S^m$, $g\geq 0$, $m\geq 1$, such that $\omega^{(0)}_{1,0}=-ydx$ for some meromorphic functions $x$ and $y$ on $S$, and for $m\geq 2$ the differentials
\begin{align} \label{eq:assumption-diagonal-intro}
\omega^{(g)}_{m,0}(z_{\set{m}}) - \delta_{(g,m),(0,2)} \frac{dx(z_1)dx(z_2)}{(x(z_1)-x(z_2))^2}
\text{ have no poles on the diagonals}
\end{align}
(see Section \ref{sec:wmndef}). With these assumptions, we define the mixed $(m,n)$-differentials
\begin{equation}\label{eq:barW}
	\omega^{(g)}_{m,n}(z_1,\dots,z_m;z_{m+1},\dots,z_{m+n})=\oW^{(g)}_{m,n}(z_1,\dots,z_m;z_{m+1},\dots,z_{m+n})\prod_{i=1}^m dx_i\prod_{i=m+1}^{m+n}dy_i
\end{equation}
on $S^{m+n}$ for $n\ge1$ by the following recursive procedure. Denote $x=x_{m+1}=x(z_{m+1})$, $y=y_{m+1}=y(z_{m+1})$, $\partial_x=\frac{d}{dx}$, $\partial_y=\frac{d}{dy}$,
$M=\{1,\dots,m\}$, $N=\{m+2,\dots,m+n+1\}$.

Then, we define $\cT^x_{m+1,n}(u;z_{M};z;z_{N})$ and $\cW^{x,(g)}_{m+1,n}(u;z_{M};z;z_{N})$ (see Equations \eqref{eq:cT} and \eqref{Wgg})
\begin{multline}\label{eqIntro:cTx}
	\cT^x_{m+1,n}(u;z_{M};z;z_{N}):=
	\sum_{k=1}^\infty\frac{\hbar^{2(k-1)}u^k}{k!}\left(\prod_{i=1}^k
	\big\lfloor_{z_{\bar i}\to z}\cS(u\hbar \partial_{x_{\bar i}})\right)\\
	\sum_{g=0}^\infty\hbar^{2g}\bigl(
	\oW^{(g)}_{m+k,n}(z_{M},z_{\{\bar1,\dots,\bar k\}};z_{N})-
	\delta_{(g,m,n,k),(0,0,0,2)}\tfrac{1}{(x_{\bar1}-x_{\bar2})^2}\bigr),
\end{multline}
\begin{equation}\label{eqIntro:cWx}
	\cW^{x,(g)}_{m+1,n}(u;z):=
	\coeff \hbar {2g} \frac{e^{\cT^x_{1,0}(u;z)}}{u}\!\!\!\!\!\!
	\sum_{\substack{M\sqcup N=\sqcup_\alpha K_\alpha,~
			K_\alpha\ne\varnothing\\~I_\alpha=K_\alpha\cap M,~J_\alpha=K_\alpha\cap N}}
	\prod_{\alpha}\cT^x_{|I_\alpha|+1,|J_\alpha|}(u;z_{I_\alpha};z;z_{J_\alpha}),
\end{equation}
where the summation carries over the set of all partitions of the set $M\cup N$ into unordered collection of disjoint nonempty subsets $K_\alpha$, and where we denote
$I_\alpha=K_\alpha\cap M$, $J_\alpha=K_\alpha\cap N$. This sum has finitely many terms. With this notation, we set $\oW^{(0)}_{0,1}=-x$ and for all other triples $(g,m,n)$ we
define (see Equation \eqref{eq:Wxtoomega})
\begin{equation}\label{eqintro:Wxtoomega}
	\oW^{(g)}_{m,n+1}(z_{M};z,z_{N})=-\sum_{r\ge0} \partial_y^r\bigl(\tfrac{dx}{dy}[u^r]e^{u\,y}\cW^{x,(g)}_{m+1,n}(u,z)\bigr).
\end{equation}
Observe that $\cT^x_{1,0}=u\,\oW^{(0)}_{1,0}+O(\hbar^2)=-u\, y+O(\hbar^2)$ implying that the function $e^{u\,y}\cW^{x,(g)}_{m+1,n}$ is polynomial in $u$ for any triple
$(g,m,n)$ (with the exception $e^{u\,y}\cW^{x,(0)}_{1,0}(u;z)=\frac{1}{u}$). It follows that the sum on the right hand side of~\eqref{eqintro:Wxtoomega} has finitely many
nonzero terms and the functions $\oW^{(g)}_{m,n}$ defined by this iterative procedure are global meromorphic functions on $S^{m+n}$. The initial terms
$\oW^{(g)}_{m,0}$ of this recursion are given by~\eqref{eq:barW}.

Introduce also the dual functions obtained by the exchange of the role of~$x$ and~$y$,
\begin{multline}\label{eqIntro:cTy}
	\cT^y_{m,n+1}(u;z_{M};z;z_{N}):=
	\sum_{k=1}^\infty\frac{\hbar^{2(k-1)}u^k}{k!}\left(\prod_{i=1}^k
	\big\lfloor_{z_{\bar i}\to z}\cS(u\hbar \partial_{y_{\bar i}})\right)\\
	\sum_{g=0}^\infty\hbar^{2g}\bigl(
	\oW^{(g)}_{m,k+n}(z_{M};z_{\{\bar1,\dots,\bar k\}},z_{N})-
	\delta_{(g,m,n,k),(0,0,0,2)}\tfrac{1}{(y_{\bar1}-y_{\bar2})^2}\bigr),
\end{multline}
\begin{equation}\label{eqIntro:cWy}
\cW^{y,(g)}_{m,n+1}(u;z):=
	\coeff \hbar {2g} \frac{e^{\cT^y_{0,1}(u;z)}}{u}\!\!\!\!\!\!
	\sum_{\substack{M\cup N=\sqcup_\alpha K_\alpha,~
			K_\alpha\ne\varnothing\\~I_\alpha=K_\alpha\cap M,~J_\alpha=K_\alpha\cap N}}
	\prod_{\alpha}\cT^y_{|I_\alpha|,1+|J_\alpha|}(u;z_{I_\alpha};z;z_{J_\alpha}).
\end{equation}

\begin{theorem}\label{th:formalidentities} (see Propositions~\ref{prop:omega-g-n-graphsxy},~\ref{prop:regular-at-diagonals},~\ref{prop:xy-relations})

	1. The introduced functions $\oW^{(g)}_{m,n}$ and the corresponding differentials
	\begin{align}
		\omega^{(g)}_{m,n}=\oW^{(g)}_{m,n}\prod_{i=1}^m dx_i\prod_{i=m+1}^{m+n}dy_i
	\end{align}
are
	meromorphic, symmetric with respect to permutations of the arguments $z_1,\dots,z_m$, and, separately, with respect to permutations of the arguments $z_{m+1},\dots,z_{m+n}$ (and we use $x_i$ for $x(z_i)$ and $y_i$ for $y(z_i)$, as above). 
	Moreover, they have no poles on the diagonals $z_i=z_j$ such that either both~$i$ and~$j$ are smaller or equal to $m$ or both~$i$ and~$j$ are bigger than~$m$, with the exception
	\begin{equation}
		\omega^{(0)}_{2,0}(z_1,z_2)=-\omega^{(0)}_{1,1}(z_1;z_2)=\omega^{(0)}_{0,2}(z_1,z_2)
	\end{equation} 
	which has a pole (of order two) on the diagonal $z_1=z_2$. It follows that Equations~\eqref{eqIntro:cTy}, \eqref{eqIntro:cWy} make sense and the
	functions~$\cT^y_{m,n+1}$,~$\cW^{y,(g)}_{m,n+1}$ are well defined.
	
	2. In the case $m=0$ the forms $\omega^{(g)}_{0,n}$ are expressed in terms of the initial forms $\omega^{(g)}_{m,0}$ exactly by the graph summation formula~\eqref{eq:MainFormulaSimple}.
	
	3.  For $(m,n,g)\neq (0,0,0)$ the following identity %expressing the reverse recursion 
	holds true (see Equation \eqref{eq:Wytoomega})
	\begin{equation}\label{eqintro:Wytoomega}
		\oW^{(g)}_{m+1,n}(z_{M},z;z_{N})=-\sum_{r\ge0} \partial_x^r\bigl(\tfrac{dy}{dx}[u^r]e^{u\,x}\cW^{y,(g)}_{m,n+1}(u;z)\bigr).
	\end{equation}
	
	4. For $(m,n,g)\neq (0,0,0)$ the following parametric extensions of the equalities~\eqref{eqintro:Wxtoomega} and~\eqref{eqintro:Wytoomega} also hold true (see Equations \eqref{eq:WxtoW}, \eqref{eq:WytoW})
	\begin{align}\label{eqintro:WxtoW}
		\cW^{y,(g)}_{m,n+1}(\tilde u;z)&=
		-\sum_{r\ge0} \partial_y^r\bigl(e^{-\tilde u\,x}\tfrac{dx}{dy}[u^r]e^{u\,y}\cW^{x,(g)}_{m+1,n}(u;z)\Bigr),\\
		\label{eqintro:WytoW}
		\cW^{x,(g)}_{m+1,n}(u;z)&=
		-\sum_{r\ge0} \partial_x^r\bigl(e^{-u\,y}\tfrac{dy}{dx}[{\tilde u}^r]e^{\tilde u\,x}\cW^{y,(g)}_{m,n+1}(\tilde u;z)\Bigr).
	\end{align}
\end{theorem}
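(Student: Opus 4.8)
The plan is to route all four assertions through the Fock space $\cF$ and to view them as consequences of vertex-operator identities encoding the action of the operator $\cD$. The key structural observation is that the defining relations of the theorem are \emph{differential-algebraic}: in each of \eqref{eqintro:Wxtoomega}, \eqref{eqintro:Wytoomega}, \eqref{eqintro:WxtoW}, \eqref{eqintro:WytoW} both sides are finite $\Cf$-linear combinations of iterated $\partial_x$- and $\partial_y$-derivatives of the $\oW^{(g)}_{m,n}$'s with coefficients rational in $x$, $y$ and $dx/dy$. Consequently it suffices to verify each relation at the level of germs at a single point. In the normalization where a point $P$ with a simple zero of $x$, a simple pole of $y$, and $xy|_P=1$ is available, these germs are \emph{literally} the connected vacuum expectations of the form \eqref{eq:UExpr-FormalPowerSeries}: the generating series of the Taylor coefficients is the state $Z$ of \eqref{def:Z}, the $x$- and $y$-insertions are the vertex operators $\sum_\ell x^\ell(-\tfrac{dx}{x})J_\ell$ and $\sum_\ell y^{-\ell}(-\tfrac{dy}{y})J_\ell$, and the $x\leftrightarrow y$ swap is the diagonal operator $\cD$. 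So the first step is to set up this dictionary, matching the $\cS$-dressings and the $e^{u\,y}$/$\partial_y$ bookkeeping of $\cT^x$, $\cW^x$ to the commutation of the currents $J_\ell$ through $\cD$ and $Z$; the resulting identities among the coefficients $C^{(g)}$ hold for arbitrary $C^{(g)}$ and therefore cover the general case.

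With this dictionary, Part 2 is the claim that the closed-form solution of the $x$-recursion \eqref{eqintro:Wxtoomega} is the graph sum \eqref{eq:MainFormulaSimple}. I would prove it by unwinding the recursion: each of the $n$ outer arguments yields one labelled vertex, each nonempty block $K_\alpha$ of the partition in \eqref{eqIntro:cWx} yields a multiedge carrying the packaged input $\cT^x$ (a product of $\omega^{(\tilde g)}_{|e|,0}$'s dressed by $\cS(u\hbar\partial_x)$-operators), and the prefactor $\tfrac{1}{u}e^{\cT^x_{1,0}}$ together with the transform $\partial_y^r[u^r]e^{u\,y}$ produce precisely the per-vertex weight $\tfrac{1}{w_i}\exp(w_i\cS(\hbar w_i\partial_{x_i})\cdots)$ and the extraction $\partial_{y_i}^{k_i}[w_i^{k_i}]$ of \eqref{eq:MainFormulaSimple}. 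As this is the same combinatorial expansion already obtained in \cite{BDKS-FullySimple,BCGFLS-Free}, the only real work is to reconcile the two normalizations; this is Proposition~\ref{prop:omega-g-n-graphsxy}.

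Part 1 then follows with comparatively little extra work. Meromorphy and the finiteness of every sum come from the observation, recorded in the text, that $e^{u\,y}\cW^{x,(g)}_{m+1,n}$ is polynomial in $u$, so \eqref{eqintro:Wxtoomega} has finitely many nonzero terms. Symmetry in the first $m$ arguments is inherited from the symmetry of the inputs $\omega^{(g)}_{m,0}$ and the symmetric dependence of $\cT^x$ on $z_M$, while symmetry in the last $n$ arguments --- not visible from the iterative definition --- becomes manifest once Part 2 writes the $(0,n)$-case as a sum over $n$ unordered labelled vertices (and the analogous graph expression handles general $m$). For the diagonal regularity (Proposition~\ref{prop:regular-at-diagonals}), collisions of two $x$-type arguments are controlled by the input hypothesis \eqref{eq:assumption-diagonal-intro}, the subtraction of the $\tfrac{1}{(x_{\bar1}-x_{\bar2})^2}$ term in \eqref{eqIntro:cTx}, and the smoothing by the operators $\cS(u\hbar\partial_{x_{\bar i}})$; the absence of poles on the $y$-type diagonals is the harder half and I would obtain it from the dual $y$-representation supplied by Part 3, so these two parts must be organized together.

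The heart of the matter, and what I expect to be the main obstacle, is the $x\leftrightarrow y$ duality of Parts 3 and 4. The construction is deliberately one-sided --- everything is generated by the $x$-transform --- so the content of \eqref{eqintro:Wytoomega}, \eqref{eqintro:WxtoW}, \eqref{eqintro:WytoW} is exactly that the same mixed functions also satisfy the mirror relations built from the $y$-transform. My strategy is to prove a single master identity in the Fock space describing how $\cD$ intertwines the $x$-insertion operator with the $y$-insertion operator; carrying the auxiliary variable $u$ through this commutation produces precisely the parametric kernels $e^{-\tilde u\,x}$, $e^{u\,y}$ and the transforms $\partial_x^r$, $\partial_y^r$ of \eqref{eqintro:WxtoW}--\eqref{eqintro:WytoW}, and specializing the parameter to zero recovers \eqref{eqintro:Wytoomega}. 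The delicate points are that one must establish the full parametric (Part 4) statement, which is strictly stronger than Part 3, and that the identity is first proved for formal local expansions and then promoted to an equality of global meromorphic functions --- the latter being legitimate because, by Part 1, both sides are already global and meromorphic and they agree on a germ. This is Proposition~\ref{prop:xy-relations}.
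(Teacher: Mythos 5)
Your plan follows the same route as the paper: define the mixed differentials by the $x$-recursion, obtain Part 2 by unwinding it into the graph sum of Proposition~\ref{prop:omega-g-n-graphsxy}, read off the symmetry claim of Part 1 from that closed formula, and prove the regularity and duality statements (Parts 1, 3, 4) by passing to connected vacuum expectation values on the Fock space, where the swap is governed by conjugation with $\cD$; this is exactly how Propositions~\ref{prop:omega-g-n-graphsxy},~\ref{prop:regular-at-diagonals},~\ref{prop:xy-relations} and~\ref{prop:XY-relations} are organized. However, there is a genuine gap at the pivot of your reduction, namely in the sentence claiming that because the coefficient identities hold ``for arbitrary $C^{(g)}$'' they ``therefore cover the general case''. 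The Fock-space dictionary is available only at a point $P\in S$ where $x$ has a simple zero, $y$ has a simple pole and $xy|_P=1$ (after shift and rescaling: a point where $y$ has a simple pole and $x$ is regular with $dx\neq0$). A spectral curve satisfying the hypotheses of the theorem need not possess any such point: all poles of $y$ may have order at least two, or $x$ may have poles or critical points at every simple pole of $y$. For such a curve your argument verifies the identities at no germ whatsoever; universality in the coefficients $C^{(g)}$ only shows that the proof works for every curve which \emph{does} admit the distinguished point, not for every curve.

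The paper closes exactly this gap in Steps (c)--(e) of the proofs in Section~\ref{sec:proofs1}: rescaling to handle $xy\to c\neq1$, shifting of $x$ (which, in the paper's setup where the defining recursion is Equation~\eqref{eq:WXtoomega}, requires the equivalence with the manifestly shift-invariant form~\eqref{eq:Wxtoomega}, i.e.\ Propositions~\ref{prop:simple-rec-mixed} and~\ref{prop:equiv}), and, decisively, a deformation argument: near a pole of $y$ write $y=z^{-\ell}$ and deform it to $y(z,\epsilon)=(z^\ell-\epsilon)^{-1}$; for small $\epsilon\neq0$ the deformed data has simple poles of $y$ at which the VEV case applies, so the difference of the two sides of each identity --- a fixed rational differential-algebraic expression in $x$, $y(\cdot,\epsilon)$ and the $\omega^{(g)}_{m,0}$'s --- vanishes for an open set of $\epsilon$ and hence identically, in particular at $\epsilon=0$. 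A continuation argument of this kind (it is also what makes the diagonal-regularity claim of Proposition~\ref{prop:regular-at-diagonals} global) is indispensable and is absent from your proposal. A second, smaller point: you propose to deduce regularity on the $y$-type diagonals from the dual representation of Part 3, but the statement of Part 3 only makes sense once that regularity is known (it enters the very definition of $\cT^y$ and $\cW^{y,(g)}$), so regularity must be established first --- in the paper it is read off directly from the shape of the connected VEV --- rather than ``organized together'' with the duality.
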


Identities~\eqref{eqintro:WxtoW} and~\eqref{eqintro:WytoW} are understood as the equalities of the coefficients of any power of the parameter $\tilde u$ and $u$,
respectively, on both sides. In particular, taking the coefficients of $\tilde u^0$ and $u^0$, we obtain~\eqref{eqintro:Wxtoomega}
and~\eqref{eqintro:Wytoomega}, respectively.

We first prove a version of the identities of Theorem \ref{th:formalidentities} for the case when all functions involved are replaced by formal Laurent expansions by applying the well developed
formalism of vertex operators acting on the bosonic Fock space (see Proposition \ref{prop:XY-relations}). This covers a special case of Theorem  \ref{th:formalidentities} when there exists a point on the spectral curve where the function $y$
has a simple pole and $dx\neq 0$ (or, conversely, $x$~has a simple pole and $dy\ne0$ at this point). Since this restriction is an open condition, a general case can be
reduced to this one by applying suitable deformation arguments.

We expect that the identities of Theorem  \ref{th:formalidentities} admit a purely algebraic proof based on the combinatorics of their structure. In particular, it can be readily checked for
small $(g,m,n)$ using computer by expanding explicitly all the terms on both sides. However, such a proof would hardly ever explain the origin of these identities. In
opposite, they appear quite naturally in the context of vertex operators. Thus, we use the operator formalism both as a motivation to introduce these identities and also as a
tool in their proof in the context of global meromorphic functions.

\medskip
In the statement of Theorem \ref{th:formalidentities} we make no additional assumptions on the initial forms $\omega^{(g)}_{m,0}$. Assume, however, that they are chosen by applying the procedure of
topological recursion with the initial data $(S,x,y,B)$. Then $\omega^{(g)}_{0,n}$ solve the topological recursion with the spectral curve data $(S,y,x,B)$ with the functions~$x$ and~$y$ swapped (see Section \ref{sec:proofofmainthms}). In fact, this theorem has a stronger
form that we formulate now. It is well known that the topological recursion for the forms $\omega^{(g)}_{m,0}$ is essentially equivalent to certain relations called the
linear and quadratic loop equations (they are stated explicitly in Section~\ref{sec:RecollectionTR}). It is proved in \cite{DKPS-rspin} that along with the linear and quadratic loop equations the forms of topological recursion satisfy also a family of
the higher loop equations. We extend the notion of the higher loop equations to the case of mixed differentials~$\omega^{(g)}_{m,n}$ (see  Section \ref{sec:higherloop}).

\begin{definition} \label{def:FirstDefXi-space} We denote by $\Xi^x$ the space of meromorphic functions defined in a neighborhood of the zero locus of $dx$ on $S$ and spanned by the functions of the form
	$\partial_x^kf$ where $k=0,1,2,\dots$ and $f$ is holomorphic.  Similarly, we denote by $\Xi^y$ the space of meromorphic functions defined in a neighborhood of the zero locus
	of $dy$ and spanned by the functions of the form $\partial_y^kf$ where $k=0,1,2,\dots$ and $f$ is holomorphic.
\end{definition}

\begin{definition}	
	We say that the forms $\omega^{(g)}_{m,n}$ satisfy the \emph{$r$-loop equation} with respect to the variable~$z_i$ if the coefficient of $u^{r-1}$ in $\cW^{x,(g)}_{m,n}(u)$
	regarded as a meromorphic function in the $i$th argument $z_i$ belongs to $\Xi^x$ if $i\le m$ or to the space $\Xi^y$ if $i>m$, respectively. The loop equations impose
	certain restrictions on the principal parts of the poles of $\omega^{(g)}_{m,n}$. 
\end{definition}

This definition generalizes in a natural way the well-known linear and quadratic loop equations that we recall in Section~\ref{sec:RecollectionTR}. To our best knowledge, the way we define the $r$-loop equations here is not present in the literature, and we discuss this new concept and explain its origin in Section~\ref{sec:higherloop}.

\begin{theorem}\label{th:loopeq} (see Proposition \ref{prop:loopequations})
	
	Assume that the differentials $\omega^{(g)}_{m,n}$ of Theorem~\ref{th:formalidentities} are obtained from the initial forms $\omega^{(g)}_{m,0}$ satisfying the topological recursion with
	the spectral curve data $(S,x,y,B)$. Then
	\begin{enumerate}
		\item $\omega^{(g)}_{m,n}$ is holomorphic in $z_i$ at zeros of $dy$ for $i=1,\dots,m$ and is holomorphic at zeros of $dx$ for $i=m+1,\dots,m+n$;
		\item $\omega^{(g)}_{m,n}$ satisfy the $r$-loop equations with respect to all its arguments and for all $r\ge1$.
	\end{enumerate}
\end{theorem}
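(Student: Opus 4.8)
The plan is to prove statements (1) and (2) together by induction on the number $n$ of $y$-type arguments, simultaneously for all $g$ and $m$, using the transformation identities of Theorem~\ref{th:formalidentities} to pass from level $n$ to level $n+1$. The base case $n=0$ is exactly the topological recursion input: since $\omega^{(g)}_{m,0}$ is produced by the recursion for $(S,x,y,B)$ it has poles only at the zeros of $dx$, and as $x$ is regular with $dx\ne0$ at the zeros of $dy$ this gives~(1); statement~(2) is the assertion that the topological recursion differentials satisfy the linear, quadratic and all higher loop equations, the first two being recalled in Section~\ref{sec:RecollectionTR} and the rest supplied by~\cite{DKPS-rspin}. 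The only thing to verify in the base case is that these are equivalent to the membership $[u^{r-1}]\cW^{x,(g)}_{m,0}(u;z_i)\in\Xi^x$; this is precisely the repackaging of the loop equations by the generating series $\cW^{x}$ for which the spaces $\Xi^x,\Xi^y$ were introduced in Section~\ref{sec:higherloop}.

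Two local facts are used repeatedly. Near a simple zero of $dx$ the space $\Xi^x$ consists exactly of those meromorphic germs whose polar part is odd under the local deck involution $\sigma$ of $x$ (equivalently $\phi+\sigma^*\phi$ is holomorphic), and $\Xi^x$ is stable under $\partial_x$; the same holds for $\Xi^y$, the involution $\tau$, and $\partial_y$ at a simple zero of $dy$. Moreover $\tfrac{dx}{dy}$ has at a simple zero of $dy$ a single simple pole with odd polar part, so that $\tfrac{dx}{dy}$ times any germ holomorphic there again has odd (simple) polar part and therefore lies in $\Xi^y$.

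For the inductive step $\omega^{(g)}_{m,n+1}$ is assembled from the level-$n$ data through $\cW^{x,(g)}_{m+1,n}$ as in~\eqref{eqintro:Wxtoomega}. In every old argument the properties are inherited, because the operators $\partial_y$ and the factor $\tfrac{dx}{dy}$ act only in the new argument, so as a function of an old argument the right-hand side is built from level-$n$ differentials already satisfying~(1) and~(2). For the new, $y$-type, argument the $y$-loop equation (membership of the appropriate coefficient of $\cW^{y,(g)}_{m,n+1}$ in $\Xi^y$) is the clean direction: by~\eqref{eqintro:WxtoW} one has $\cW^{y,(g)}_{m,n+1}(\tilde u;z)=-\sum_{r\ge0}\partial_y^r\bigl(e^{-\tilde u x}\tfrac{dx}{dy}[u^r]e^{uy}\cW^{x,(g)}_{m+1,n}(u;z)\bigr)$, and by the inductive hypothesis~(1) the germ $[u^r]e^{uy}\cW^{x,(g)}_{m+1,n}(u;z)$ is holomorphic in $z$ at the zeros of $dy$ (where $dx\ne0$, so $\partial_x$ and the operators $\cS(u\hbar\partial_x)$ entering $\cW^{x}$ are regular). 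Extracting $[\tilde u^{r'-1}]$ turns $e^{-\tilde u x}$ into the holomorphic germ $(-x)^{r'-1}/(r'-1)!$, so each summand is $\partial_y^r$ of $\tfrac{dx}{dy}$ times a holomorphic germ; by the previous paragraph this product lies in $\Xi^y$, $\partial_y$ preserves $\Xi^y$, and the sum is finite because $e^{uy}\cW^{x,(g)}_{m+1,n}$ is polynomial in $u$, giving the $r'$-loop equation in the new argument for all $r'$.

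The main obstacle is statement~(1) in the new argument, i.e.\ holomorphicity of $\omega^{(g)}_{m,n+1}$ at the zeros of $dx$ (where $dy\ne0$). This does not follow from pole-counting in~\eqref{eqintro:Wxtoomega}: at a zero $a$ of $dx$ the inductive hypothesis~(2) only says that the coefficients $[u^r]\cW^{x,(g)}_{m+1,n}(u;z)$ have odd polar part at $a$, possibly of high order, and applying $\partial_y^r$ to $\tfrac{dx}{dy}$ times such a germ term by term would leave poles. Holomorphicity must instead result from a cancellation across the entire resummation $\sum_{r}\partial_y^r\bigl(\tfrac{dx}{dy}[u^r]e^{uy}(\,\cdot\,)\bigr)$, encoding the full hierarchy of $x$-loop equations together with the $\cS$-structure of $\cW^{x}$. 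My plan is to reduce this to a purely local formal identity at $a$: in the coordinate where $\sigma$ is the sign involution it becomes a finite algebraic relation among the Taylor coefficients of the input, of exactly the kind the vertex-operator computation of Section~\ref{sec:Computations-vertex-new-form} supplies, the global meromorphic statement then following by the same openness/deformation argument used for Theorem~\ref{th:formalidentities}. Once~(1) holds in the new argument, the $x$-loop equations~(2) in the old $x$-arguments at level $n+1$ follow from the dual identity~\eqref{eqintro:WytoW} combined with this holomorphicity, by the argument symmetric to the one above, which closes the induction.
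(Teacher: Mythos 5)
Your outline gets the frame right: the base case, the derivation of the $y$-side loop equations from \eqref{eqintro:WxtoW}, and the closing step deducing the $x$-loop equations from \eqref{eqintro:WytoW} once regularity at zeros of $dx$ is known, all match the paper's proof of Proposition~\ref{prop:loopequations}. But at the point you yourself call ``the main obstacle'' --- holomorphy of the mixed differentials, in their $y$-type arguments, at the zeros of $dx$ --- you give only a plan, and the plan points in a direction that cannot work. You propose to reduce the cancellation to a purely local formal identity ``of exactly the kind the vertex-operator computation of Section~\ref{sec:Computations-vertex-new-form} supplies,'' transferred to the global setting by the openness/deformation argument. Those vertex-operator identities, however, are \emph{universal}: they hold for an arbitrary input $\{\omega^{(g)}_{m,0}\}$ subject only to the diagonal condition \eqref{eq:assumption-diagonal-intro}, with no loop equations assumed. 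The statement you need is \emph{not} universal: for generic input the mixed differentials genuinely do have poles at the zeros of $dx$ in the $y$-arguments, and regularity holds only because the initial forms satisfy the linear and quadratic loop equations. Hence no input-independent formal identity (and no deformation argument, which likewise only transports universal identities) can yield it; the proof must exhibit exactly where the loop equations enter, and your proposal never does.

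The paper's mechanism for this step is the missing idea. Take the dual identity \eqref{eqintro:WytoW} at $n=0$ and extract the coefficients of $w^0$ and $w^1$. The left-hand sides are precisely the linear and quadratic loop combinations \eqref{eq:LLE-W} and \eqref{eq:QLE-W} for the initial family, hence lie in $\Xi^x$ by hypothesis. The $r\ge 1$ terms on the right-hand side involve only $\overline{W}^{(g')}_{m',n'}$ with $2g'-2+m'+n'\le 2g-2+m$, so by induction on the negative Euler characteristic (note: your induction on $n$ alone is too coarse here, since these terms include differentials with $n'\ge 1$, i.e.\ at the same or higher $n$-level as the quantity $\overline{W}^{(g)}_{m,1}$ being controlled) their $y$-arguments are already regular at zeros of $dx$, so each such term is $\partial_x^r$ of a germ with at most a simple pole and lies in $\Xi^x$. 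This gives the two memberships $\tfrac{dy}{dx}\overline{W}^{(g)}_{m,1}\in\Xi^x$ and $y\,\tfrac{dy}{dx}\overline{W}^{(g)}_{m,1}\in\Xi^x$, and a parity argument finishes: an element of $\Xi^x$ has odd polar part under the deck involution, so its pole order at a zero $p$ of $dx$ is odd; multiplying by $y-y(p)$ stays in $\Xi^x$ (a linear combination of the two memberships) while lowering the pole order by exactly one, flipping its parity, which is consistent only if $\tfrac{dy}{dx}\overline{W}^{(g)}_{m,1}$ has at most a simple pole, i.e.\ $\overline{W}^{(g)}_{m,1}$ is regular at $p$. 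Regularity then propagates to all $\omega^{(g)}_{m,n}$ via the recursion \eqref{eqintro:Wxtoomega} and the symmetry in the $y$-arguments, after which all $x$-loop equations follow from the shape of \eqref{eqintro:WytoW}, exactly as in your final sentence.
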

We derive the statement of Theorem \ref{th:loopeq} from identities~\eqref{eqintro:WxtoW} and~\eqref{eqintro:WytoW} and the linear and quadratic loop equations for the original differentials $\omega^{(g)}_{m,0}$  (the loop equations are recalled in Section~\ref{sec:RecollectionTR}). This gives a new independent proof of \cite[Proposition~3.3]{DKPS-rspin}.

We finish the proof of Theorem~\ref{thm:simpleformulaxyswap} by the following arguments. Consider the forms $\omega^{(g)}_{0,n}$ defined by the formula of summation over graphs. All possible poles of
the terms entering this formula are either at zeros of $dx_i$ (they are present for the initial forms $\omega^{(g)}_{m,0}$), or at the diagonals $z_i=z_j$ (due to presence of
the singular form $\omega^{(0)}_{2,0}$), or at zeros of $dy_i$ (appearing after application of the operator $\partial_{y_i}$). The poles on the diagonals cancel out by
Theorem~\ref{th:formalidentities}. The poles at zeros of $dx_i$ cancel out by Theorem~\ref{th:loopeq}. The only poles that survive are those at zeros of $dy_i$. Moreover,
the principal parts of the poles at zeros of $dy_i$ satisfy all loop equations, including the linear and quadratic ones. This implies the topological recursion for the forms
$\omega^{(g)}_{0,n}$.

\begin{remark} It is important to warn the reader that the mixed $(m,n)$-point differentials $\omega_{m,n}^{(g)}$ and mixed $(m,n)$-point functions $W_{m,n}^{(g)}$ of the present paper are quite different from the ones discussed in~\cite{BDKS-FullySimple} and \cite{BDKS-symplectic}. In this paper we are beyond formal Laurent expansions of rational functions and discuss relations between globally defined meromorphic differentials. However, in some steps of the proofs we do get down to Laurent series expansions (without any assumption that these Laurent series are expansions of rational functions) and there we do use techniques from \cite[Sections 4.7, 4.8]{BDKS-FullySimple}.
\end{remark}

\subsection{Organization of the paper}

Let us note that in this paper we work in three different setups simultaneously. Firstly, to discuss topological recursion we need to assume some conditions on functions $x$ and $y$, see Section~\ref{sec:toprecintro}. Secondly, while talking about VEV's, formal power series and Theorem~\ref{thm:mainconjholds} we need some additional algebraic assumptions which are given in Section~\ref{sec:FormalPowerSeries}. And thirdly, new recursions on mixed $(m,n)$-point differentials and their other properties do not need any assumptions, except for the ones given in Section~\ref{sec:wmndef}.

The proof of Conjecture~\ref{conj:Main} is generally based on the equivalence of two recursions for the differentials $\omega_{m,n}^{(g)}$, that is Theorem~\ref{thm:newformula}. So we have two types of expressions for the $\omega_{m,n}^{(g)}$. The first type %which we should call  ``traditional'' (but do not call at all) 
comes from our previous papers, see equations \eqref{eq:MainFormula}, \eqref{eq:mainrecchange}. The second one is in fact the far generalization of Hock's formula for the genus $0$ case, and  we call them alternative, see Equations \eqref{eq:MainFormulaSimple}, \eqref{eq:Wxtoomega} (A.~Hock has independently proved the equivalence of these two types of expressions for general $g$ in~\cite{Hock-FullFormula} while the present paper was in preparation).

%In Section~\ref{sec:preintro} we give very brief and general overview of our previous papers and its connection with the present one.

In Section~\ref{sec:MixedDefinition} we introduce the key objects of the paper: mixed disconnected and connected $(m,n)$-point differentials in the VEV formalism in Section~\ref{sec:MotivationFPS}, and in the general setting in Section~\ref{sec:wmndef}. We state here Proposition~\ref{prop:simple-rec-mixed} which claims that two types of %traditional and alternative
expressions for the differentials $\omega_{m,n}^{(g)}$ coincide and which implies Theorem~\ref{thm:newformula}.

Section~\ref{sec:proof-of-algebraic-equivalence-of-relations} is totally devoted to the proving of Proposition~\ref{prop:MainSimplification}, which implies Proposition~\ref{prop:simple-rec-mixed}.

In Section~\ref{sec:proofs} we study mixed correlation differentials in the most general case. In Section~\ref{sec:diagonalsproofs} we provide the formulas for them using the summation over certain sets of graphs which use later in the proof of topological recursion for dual differentials $\omega_{0,n}^{(g)}$.

In Section~\ref{sec:proofs1} we reduce the proofs of Propositions~\ref{prop:regular-at-diagonals},~\ref{prop:XY-relations} and~\ref{prop:xy-relations} to the case of formal power series and VEV's setup, which we are dealing with in Section~\ref{sec:vertex-operators-standard-E}.

In Section~\ref{sec:TopologicalRecursion} we prove that if the differentials $\omega^{(g)}_{m,0}$ satisfy topological recursion then the differentials $\omega^{(g)}_{0,n}$ do so as well. As a consequence, we obtain proofs of Theorems~\ref{thm:mainconjholds} and~\ref{thm:simpleformulaxyswap}.

In Section~\ref{sec:examples} we provide explicit formulas for $\omega^{(g)}_{m,n}$ for a few small $g$, $m$, and $n$.

In Section~\ref{sec:applications} we discuss certain applications of Theorem~\ref{thm:mainconjholds}. In particular, in Theorem~\ref{cor:y=z} we give an explicit closed expression for the differentials $\omega^{(g)}_{m,0}$ produced by the topological recursion on the spectral curve $x=x(z),\; y=z$ for an arbitrary rational function $x(z)$.

%The navigation through the paper is arranged as follows. Proposition \ref{prop:MainSimplification} (proved in Section \ref{sec:proof-of-algebraic-equivalence-of-relations}) $\rightarrow$ Proposition \ref{prop:simple-rec-mixed} $\rightarrow$ Theorem \ref{thm:newformula}.

%Sections \ref{sec:proofs} and \ref{sec:TopologicalRecursion} $\rightarrow$ Theorems \ref{thm:mainconjholds} and \ref{thm:simpleformulaxyswap}.

\subsection{Acknowledgments}
P.~D.-B. is grateful to the University of Haifa for hospitality.

A.~A. was supported by the Institute for Basic Science (IBS-R003-D1).
Research of B.~B. was supported by the ISF grant 876/20. M.K. was supported by the International Laboratory of Cluster Geometry NRU HSE, RF Government grant, ag. № 075-15-2021-608 dated 08.06.2021. S.~S. was supported by the Netherlands Organization for Scientific Research.

We would like to thank the referees for useful remarks and suggestions.

%%%%%%%%%%%
%%%%%%%%%%%
%%%%%%%%%%%
%%%%%%%%%%%
%%%%%%%%%%%

\section{Mixed correlation differentials}

\label{sec:MixedDefinition}

\subsection{Definition} We define the mixed correlation differentials $\omega^{(g)}_{m,n}$ for $n\geq 1$ in terms of $\omega^{(g)}_{m,0}$'s. The definition is obtained in exactly the same way as Equation~\eqref{eq:MainFormula}: we consider some formal power series computation (under exactly the same assumptions as in Paragraph~\ref{sec:FormalPowerSeries}) and we extract from it a universal differential-algebraic formula that serves as a recursive definition of $\omega^{(g)}_{m,n}$'s, $n\geq 1$. There is also an equivalent formula of the same type as the one given in Equation~\eqref{eq:MainFormulaSimple} and it can be used as an alternative definition of $\omega^{(g)}_{m,n}$'s, $n\geq 1$.

We start with a motivation, with the actual definition following in Section~\ref{sec:wmndef}.

\subsubsection{Motivation from the formal power series} \label{sec:MotivationFPS} Recall the setup of  Paragraph~\ref{sec:FormalPowerSeries}. Consider a system of functions $W^{(g)}_{m,n}$ defined as
\begin{align} \label{eq:MixedFirstDef}
	W_{m,n}\coloneqq \sum_{g=0}^{\infty} \hbar^{2g} W^{(g)}_{m,n} \coloneqq \hbar^{2-m-n} \VEVc{\prod_{i=1}^m \sum_{k_i\in\Z} x_i^{k_i} J_{k_i}\;\cD^{-1}\;\prod_{i=m+1}^{m+n} \sum_{\ell_i\in\Z} y_i^{-\ell_i}   J_{\ell_i}\;\cD\;Z},
\end{align}
where $Z$ is given by~\eqref{def:Z}
(cf.~Equation~\eqref{eq:UExpr-FormalPowerSeries}; note however that now we take the summation over $k_i,\ell_i\in\Z$).

\begin{remark}
	It is important to comment here what we mean by the connected vacuum expectation values. Let
	\begin{align}\label{JJdef}
		J(z)& \coloneqq \sum_{\ell\in\Z} z^{\ell} J_{\ell}; \qquad \text{and} \qquad \J(\hbar,z)\coloneqq \cD^{-1} J(z)\cD.
	\end{align}
	Then the right hand side of Equation~\eqref{eq:MixedFirstDef} can be rewritten as
	\begin{align} %\label{eq:MixedFirstDef}
		\hbar^{2-m-n}\prod_{i=1}^m [u_i^0] \prod_{i=m+1}^{m+n} \restr {u_i} 1 \VEVc{\prod_{i=1}^m
			\J(u_i\hbar ,x_i) \prod_{i=m+1}^{m+n}  \J(u_i\hbar,y_i^{-1}) \;Z},
	\end{align}
	where, as we have already defined above, $\restr u 1$ denotes the operator of a formal substitution $u\to 1$ and $\coeff u a$ denotes the operator of extracting the coefficient of $u^a$ in a Laurent series in $u$ next to it.
	
 	Now, the connected vacuum expectation value 
 	\begin{align}
 		\VEVc{\prod_{i=1}^m
 			\J(u_i\hbar ,x_i) \prod_{i=m+1}^{m+n}  \J(u_i \hbar,y_i^{-1}) \;Z}
 	\end{align} 
 is defined by the standard  inclusion-exclusion formula (see e.~g.~\cite[Equation~(71)]{BDKS-toporec-KP}):
	\begin{align}
		\VEVc{\prod_{i=1}^{m+n} \J_i Z} \coloneqq
		\sum_{\ell=1}^{m+n} \frac{(-1)^{\ell-1}}{\ell}
		\sum_{\substack{I_1\sqcup\cdots\sqcup I_\ell = \set{n+m} \\ \forall j \; I_j\not=\emptyset}} \prod_{j=1}^\ell \VEV{ \J_{I_j} Z},
	\end{align}
	or, equivalently
	\begin{align}
		\VEV{\prod_{i=1}^{m+n} \J_i Z} =
		\sum_{\ell=1}^{m+n} \frac{1}{\ell!}
		\sum_{\substack{I_1\sqcup\cdots\sqcup I_\ell = \set{n+m} \\ \forall j \; I_j\not=\emptyset}} \prod_{j=1}^\ell \VEVc{ \J_{I_j} Z},
	\end{align}
	where $\J_I$ for $I=\{i_1<\cdots < i_p\}\subseteq \set{m+n}$ denotes $\J_{i_1}\cdots \J_{i_p}$, and we use $\J_i = \J(u_i\hbar,x_i)$ for $i=1,\dots,m$ and $\J_i = \J(u_i\hbar,y_i^{-1})$ for $i=m+1,\dots,m+n$.
\end{remark}

There is a differential-algebraic expression that expresses $W^{(g)}_{m,n+1}$ in terms of $W^{(g)}_{m+1,n}$ and $W^{(g')}_{m',n'}$ with $g'\leq g$, $n'\leq n$, and $2g'+n'+m'\leq 2g+n+m$~\cite{BDKS-symplectic}. Let us recall it.

Let $M=\set{m}$ and $N=\set{m+n+1}\setminus \set{m+1}$. It is convenient to denote $x=x_{m+1}$ and $y=y_{m+1}$. Let
\begin{align} \label{eq:cTX}
	\cT^X_{m+1,n}(u;x_M;x;y_N) &\coloneqq
	\sum_{k=1}^\infty\frac{\hbar^{2(k-1)}}{k!}\left(\prod_{i=1}^k
	\big\lfloor_{x_{\bar i}\to x}u\,\cS(u\hbar x_{\bar i}\partial_{x_{\bar i}})\right)
	\\ \notag & \quad
	\left(
	W_{m+k,n}(x_{M},x_{\{\bar1,\dots,\bar k\}};y_{N})-
	\delta_{(m,k,n),(0,2,0)}\frac{x_{\bar1}x_{\bar2}}{(x_{\bar1}-x_{\bar2})^2}\right),
	\\ \label{eq:cWX}
	\cW^X_{m+1,n}(u;x) & \coloneqq \frac{e^{\cT^X_{1,0}(u;x)}}{u\cS(u\,\hbar)}
	\sum_{\substack{\sqcup_\alpha K_\alpha=M\cup N,~
			K_\alpha\ne\varnothing,\\~I_\alpha=K_\alpha\cap M,~J_\alpha=K_\alpha\cap N}}
	\prod_{\alpha}\cT^X_{|I_\alpha|+1,|J_\alpha|}(u;x_{I_\alpha};x;y_{J_\alpha}).
\end{align}
In order to shorten the notation, we omit dependencies on $x_{\set m},y_{\set{m+n+1}\setminus\set{m+1}}$, and $\hbar$ for $\cW^X_{m+1,n}$ and on $\hbar$ for $\cT^X_{m+1,n}$, respectively, in the lists of their arguments. With the same convention applied to $W_{m,n+1}$, we have the following proposition:

\begin{proposition} \label{prop:FirstFormOfrecursion} We have:
	\begin{align}\label{eq:mainrecchange}
		W_{m,n+1}(y) & =-\sum_{r,j=0}^\infty(-y\partial_y)^j\Bigl(
		[v^j]\Big\lfloor_{\theta \to xy}\frac{\partial_\theta^r e^{v\frac{\cS(v\hbar\partial_\theta)}{\cS(\hbar\partial_\theta)}\log\theta}}{\theta^v}
		\frac{dx}{x}\frac{y}{dy} \\ \notag & \qquad \qquad \qquad \qquad 
		\cdot [u^r]e^{-u\,W^{(0)}_{1,0}(x)}\cW^X_{m+1,n}(u;x)\Bigr)\\ \notag & \quad
		{}+\delta_{m+n,0}\biggl(W^{(0)}_{1,0}(x)+ \sum_{j=0}^\infty (-y\partial_{y})^j [v^{j+1}] \Big\lfloor_{\theta \to x\,y} \frac{ e^{v\frac{\cS(v\hbar \partial_{\theta})}{\cS(\hbar \partial_{\theta})} \log\theta}}{\theta^v}(-y\partial_{y}(xy))\biggr),
	\end{align}	 
	where $x$ and $y$ are related by
	\begin{equation}\label{eq:xychange}
		y=x^{-1}(1+W^{(0)}_{1,0}(x))
	\end{equation}
	(at $x=0$, $y=\infty$).
\end{proposition}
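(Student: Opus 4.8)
The plan is to establish Proposition~\ref{prop:FirstFormOfrecursion} by a direct computation in the bosonic Fock space, in the spirit of~\cite{BDKS-symplectic}. The object $\cW^X_{m+1,n}(u;x)$ of~\eqref{eq:cWX} is the generating series for a single distinguished, principally-specialized $x$-type current --- the specialization carried by the parameter $u$ through the factors $\cS(u\hbar\,x\partial_x)$ in~\eqref{eq:cTX} --- correlated with the $m$ remaining $x$-currents, the $n$ genuine $y$-currents, and $Z$; its partition sum over blocks $K_\alpha$ is precisely the expansion of this one distinguished current over the connected correlators it feeds into. The target $W_{m,n+1}$ is, by~\eqref{eq:MixedFirstDef} and the dressed-current form in the Remark, the corresponding connected correlator in which the distinguished argument $z_{m+1}$ is instead encoded by a $y$-type current $\J(\hbar,y^{-1})$ (specialized at $u=1$). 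Thus the proposition is the precise dictionary converting the $x$-encoding of the distinguished variable into its $y$-encoding, all other arguments passing through untouched.

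First I would pass to the dressed-current form of~\eqref{eq:MixedFirstDef} recalled in the Remark. The self-contractions of the distinguished current with $Z$ and its contractions with the $m+n$ spectator currents, organized by the inclusion--exclusion already stated in the Remark, assemble by definition into $\cW^X_{m+1,n}(u;x)$ of~\eqref{eq:cWX}, including the prefactor $e^{\cT^X_{1,0}}/(u\cS(u\hbar))$; at this stage the distinguished current is still expanded in the $x$-variable. Since $\cT^X_{1,0}=uW^{(0)}_{1,0}(x)+O(\hbar^2)$, multiplying by $e^{-uW^{(0)}_{1,0}(x)}$ as in~\eqref{eq:mainrecchange} strips the classical one-point term from the exponent, isolating the genuinely quantum contributions on which the swap kernel acts.

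The heart of the argument is the single-variable swap: re-expressing the distinguished current, whose natural variable on the $y$-side is $y^{-1}$, through the genus-$0$ spectral-curve relation $\theta=xy=1+W^{(0)}_{1,0}(x)$, which is exactly~\eqref{eq:xychange}. Conjugation by $\cD$ is the only structural difference between the $x$-encoding and the $y$-encoding, so computing the $\cD$-conjugation of a principally-specialized current is what generates the operator $e^{v\frac{\cS(v\hbar\partial_\theta)}{\cS(\hbar\partial_\theta)}\log\theta}/\theta^v$, the derivatives $\partial_\theta^r$, and the change-of-normalization factor $\frac{dx}{x}\frac{y}{dy}$ relating the $dx$- and $dy$-conventions for this variable. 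I expect this to be the main obstacle: here the genuinely quantum correction, the ratio $\cS(v\hbar\partial_\theta)/\cS(\hbar\partial_\theta)$, is produced, and it requires tracking the Heisenberg commutators against the content-product weights defining $\cD$ and matching them to the $\theta=xy$ substitution. The coefficient extractions $[u^r]$, $[v^j]$ and the summed operators $(-y\partial_y)^j$ then package the conversion, with $u$ the $x$-side specialization parameter traded for $\partial_\theta^r$ and $v$ the parameter introduced by the $\cD$-conjugation.

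Finally I would treat the unstable exceptions separately. For $(m,n)=(0,0)$ the empty partition sum gives $\cW^X_{1,0}(u;x)=e^{\cT^X_{1,0}}/(u\cS(u\hbar))$, whence $e^{-uW^{(0)}_{1,0}(x)}\cW^{X,(0)}_{1,0}(u;x)=1/u$, the analogue of the observation made after~\eqref{eqintro:Wxtoomega}; this degeneracy forces the separate $\delta_{m+n,0}$ summand in~\eqref{eq:mainrecchange}, which a short direct expansion identifies with the one-point regularization coming from $\omega^{(0)}_{0,1}=-x\,dy$, i.e. $W^{(0)}_{0,1}=-x$. Assembling the spectator partition sum, the single-variable swap kernel, and these unstable corrections then yields~\eqref{eq:mainrecchange} exactly.
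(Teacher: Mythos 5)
Your proposal matches the paper's own proof essentially step for step: the paper likewise works in the bosonic Fock space, substitutes for the distinguished dressed current $\J(\hbar,y^{-1})=\cD^{-1}J(y^{-1})\cD$ its expansion in the vertex operator $\cE(u\hbar,x)$ (your ``$\cD$-conjugation of a principally-specialized current'', which is precisely where the kernel $e^{v\frac{\cS(v\hbar\partial_\theta)}{\cS(\hbar\partial_\theta)}\log\theta}$ and the derivatives $\partial_\theta^r$ arise, quoted there from earlier work), identifies the resulting connected vacuum expectation value with $\cW^X_{m+1,n}(u;x)$, converts the mode sums into the $\theta\to xy$ substitution with the factor $\frac{dx}{x}\frac{y}{dy}$ by the Lagrange-inversion ``principal identity'' under the relation $y=x^{-1}(1+W^{(0)}_{1,0}(x))$, and produces the $\delta_{m+n,0}$ summand from regularizing the $1/u$ pole of $e^{-u\,W^{(0)}_{1,0}}\cW^{X,(0)}_{1,0}$, exactly as you describe. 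The only flaw is a notational slip at the end: in the $W$-convention of this proposition $W^{(0)}_{0,1}=xy-1$, whereas $-x$ is the value of $\overline{W}^{(0)}_{0,1}=\omega^{(0)}_{0,1}/dy$.
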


This proposition is a straightforward corollary of the technique developed in~\cite{BDKS-OrlovScherbin,BDKS-toporec-KP,BDKS-FullySimple,BDKS-symplectic}. For completeness, we recall the key steps of the proof and give precise references in Section~\ref{sec:vertex-operators-standard-E}.

\begin{remark}
	It is important to notice that on the right hand side of~\eqref{eq:mainrecchange} we have a formal power series in $\hbar^2$, whose coefficients are differential-algebraic expressions in $W^{(g')}_{m',n'}$ with finitely many nonzero terms. 	
\end{remark}

\begin{remark} \label{rem:mainrecexpressionomega} Let
	\begin{align}\label{eq:omega-W}
		\omega^{(g)}_{m,n}\coloneqq \Bigl(W^{(g)}_{m,n}+\delta_{(g,m+n),(0,1)}\Bigr) 	\prod_{i=1}^m \Bigl(-\frac{dx_i}{x_i}\Bigr)
		\prod_{i=m+1}^{m+n}  \Bigl(-\frac{dy_i}{y_i}\Bigr).
	\end{align}
	Then Equation~\eqref{eq:mainrecchange} can be rewritten as a differential-algebraic expression that expresses $\omega^{(g)}_{m,n+1}$ in terms of $\omega^{(g)}_{m+1,n}$ and $\omega^{(g')}_{m',n'}$ with $g'\leq g$, $n'\leq n$, and $2g'+n'+m'\leq 2g+n+m$ (see the next subsection).
\end{remark}

\begin{remark} Perhaps, it would be more natural to connect $W$'s and $\omega$'s by
	\begin{align}\label{eq:MoreNaturalConventionW-omega}
		\omega^{(g)}_{m,n}\coloneqq W^{(g)}_{m,n}	\prod_{i=1}^m \Bigl(-\frac{dx_i}{x_i}\Bigr)
		\prod_{i=m+1}^{m+n}  \Bigl(-\frac{dy_i}{y_i}\Bigr).
	\end{align}
	Since the functions $W^{(0)}_{1,0}(x)$ and $W^{(0)}_{0,1}(y)$ are identified through Equations~\eqref{eq:mainrecchange} and~\eqref{eq:xychange} as
	\begin{equation}\label{eq:W01=xy-1}
		W^{(0)}_{0,1}(y)=W^{(0)}_{1,0}(x)=x\,y-1,
	\end{equation}
	Convention~\eqref{eq:MoreNaturalConventionW-omega} would imply
	\begin{equation}
		\omega^{(0)}_{1,0}=-y\,dx+\frac{dx}{x},\qquad \omega^{(0)}_{0,1}=-x\,dy+\frac{dy}{y}.
	\end{equation}
	Thus this choice disagrees with the convention on the forms $\omega^{(0)}_{1,0}$ and $\omega^{(0)}_{0,1}$ chosen in the Introduction. Note, however, that the difference is not essential. Adding $dx/x$ to $\omega^{(0)}_{1,0}$, that is, adding a constant to  $W^{(0)}_{1,0}$ does not affect the derivatives of $W^{(0)}_{1,0}$. Hence, it does not change the term $e^{-u\,W^{(0)}_{1,0}(x)}\cW_{m+1,n}(u,x)$ in~\eqref{eq:mainrecchange}, and the whole Equation~\eqref{eq:mainrecchange} is preserved with this modification of $\omega^{(0)}_{1,0}$. Do note, however, that in what follows we do not use $W$'s anywhere except for Sections~\ref{sec:vertex-operators-standard-E} and~\ref{sec:Computations-vertex-new-form}, and there it is important that we use relation~\ref{eq:omega-W} rather than~\ref{eq:MoreNaturalConventionW-omega}, otherwise the expressions in these sections would be slightly different. 
\end{remark}

\begin{remark} Note that we don't need any correction for the terms $(g,m+n)=(0,2)$ (cf.~Equation~\eqref{eq:UExpr-Algebraic} where this correction was needed). The  double poles are added automatically by extending the summation over $k_i,\ell_i$ in Equation~\eqref{eq:MixedFirstDef} to $k_i,\ell_i\in\Z$.
\end{remark}

\subsubsection{Definition of \texorpdfstring{$\omega^{(g)}_{m,n}$}{mixed correlation differentials}}\label{sec:wmndef} In this section we want to take a look at Equation~\eqref{eq:mainrecchange} in an abstract way, using it as a basis for a new construction and completely ignoring its origin. So, from now on we consider Equation~\eqref{eq:mainrecchange} as differential-algebraic expression that expresses some abstract $\omega^{(g)}_{m,n+1}$ in terms of given $\omega^{(g)}_{m+1,n}$ and $\omega^{(g')}_{m',n'}$ with $g'\leq g$, $n'\leq n$, and $2g'+n'+m'\leq 2g+n+m$ (according to Remark~\ref{rem:mainrecexpressionomega}, Equation~\eqref{eq:omega-W}), with no further assumptions on the origin of $\omega^{(g)}_{m,n}$'s.

In other words, we forget all assumptions that we made in order to define and explain the origin of Equation~\eqref{eq:mainrecchange} and apply this expression directly to
\emph{an arbitrary system of globally defined symmetric meromorphic $m$-differentials $\omega^{(g)}_{m,0}(z_{\set m})$ on $S^m$, $g\geq 0$, $m\geq 1$, such that $\omega^{(0)}_{1,0}=-ydx$ for some meromorphic functions $x$ and $y$ on $S$, and for $m\geq 2$ the differentials}
\begin{align} \label{eq:assumption-diagonal}
	\omega^{(g)}_{m,0}(z_{\set{m}}) - \delta_{(g,m),(0,2)} \frac{dx(z_1)dx(z_2)}{(x(z_1)-x(z_2))^2}
	\text{ have no poles on the diagonals.}
\end{align}
As a result, we obtain a family of  $(m+n)$-differentials $\omega^{(g)}_{m,n}(z_{\set {m+n}})$ on $S^{m+n}$, $g\geq 0$, $m+n\geq 1$.

Let us write this recursive definition of $\omega^{(g)}_{m,n}$ explicitly. Let $dX = -dx/x$, $dY=-dy/y$, $\partial_X=-x\partial_x$, $\partial_Y=-y\partial_y$. Define
\begin{align}\label{eq:cTXom}
	\cT^X_{m+1,n}(u;z_{M};z;z_{N})& \coloneqq
	\sum_{k=1}^\infty\frac{\hbar^{2(k-1)}u^k}{k!}\left(\prod_{i=1}^k
	\big\lfloor_{z_{\bar i}\to z}\cS(u\hbar \partial_{X_{\bar i}})\right)\\ \notag & \quad
	\sum_{g=0}^\infty\hbar^{2g}\frac{
		\omega^{(g)}_{m+k,n}(z_{M},z_{\{\bar1,\dots,\bar k\}};z_{N})-
		\delta_{(g,m,k,n),(0,0,2,0)}\frac{dx_{\bar1}dx_{\bar2}}{(x_{\bar1}-x_{\bar2})^2}}
	{\prod_{i=1}^m dX_i \prod_{i=1}^k dX_{\bar i}\prod_{i=m+2}^{m+n+1} dY_i},
	\\
	\label{eq:cWexpXom}
	\cW^X_{m+1,n}(u;z)& \coloneqq \frac{~e^{\cT^X_{1,0}(u;z)}}{u\,\cS(u\,\hbar)}
	\sum_{\substack{M\cup N=\sqcup_\alpha K_\alpha~
			K_\alpha\ne\varnothing\\~I_\alpha=K_\alpha\cap M,~J_\alpha=K_\alpha\cap N}}
	\prod_{\alpha}\cT^X_{|I_\alpha|+1,|J_\alpha|}(u;z_{I_\alpha};z;z_{J_\alpha})
\end{align}
(these are the same expressions as given by Equations~\eqref{eq:cTX}  and~\eqref{eq:cWX} but now used in this new, more general situation). In order to shorten the notation, we omit dependencies on $x_{\set m},y_{\set{m+n+1}\setminus\set{m+1}}$, and $\hbar$ for $\cW^X_{m+1,n}$ and on $\hbar$ for $\cT^X_{m+1,n}$, respectively, in the lists of their arguments.
Denote
\begin{align} \label{eq:Theta}
	\Theta(z)&=\frac{\omega^{(0)}_{1,0}}{dX}=\frac{\omega^{(0)}_{0,1}}{dY}=x(z)\,y(z)
	,\\ \label{eq:L0}
	L_0(v,\theta)&=e^{v\bigl(\frac{\cS(v\hbar\partial_\theta)}{\cS(\hbar\partial_\theta)}-1\bigr)\log\theta},\\ \label{eq:Lr}
	L_r(v,\theta)&=\theta^{-v}\partial_\theta^r\theta^{v}L_0(v,\theta)=(\partial_\theta+\tfrac{v}{\theta})^rL_0(v,\theta).
\end{align}
%\begin{remark}
	%With this notation, Equation~\eqref{eq:mainrecchange} (which we use as the definition of mixed correlation differentials according to Definition~\ref{def:mixedcorrdiff}) can be rewritten  as follows:
	Then we have:
	\begin{align}\label{eq:WXtoomega}
		& \frac{\omega^{(g)}_{m,n+1}}{\prod_{i\in M} dX_i \cdot dY \cdot \prod_{i\in N} dY_i} =
		\\ \notag 
		& 
		[\hbar^{2g}]
		\sum_{j\ge0} \partial_Y^j
		[v^j]  \Bigl(-\sum_{r\ge0}L_r(v,\Theta)[u^r]\frac{dX}{dY} e^{-u\,\Theta}\cW^{X}_{m+1,n}(u;z)%\\
		\\ \notag
		& \qquad \qquad \qquad \qquad
		+\delta_{m+n,0}\frac{L_0(v,\Theta)}{v}\frac{d\Theta}{dY}
		+\delta_{m+n,0} \Theta \Bigr),
	\end{align} 
	where $\Theta=\Theta(z)$.% (we keep this convention below as well).
%\end{remark}
\begin{remark}
	Formulas~\eqref{eq:cTXom}, \eqref{eq:cWexpXom}, and~\eqref{eq:WXtoomega} coincide with~\eqref{eq:cTX}, \eqref{eq:cWX}, and~\eqref{eq:mainrecchange}, respectively, if one expresses $\omega^{(g)}_{m,n}$'s via $W^{(g)}_{m,n}$'s according to~\eqref{eq:omega-W}.
\end{remark}

 \begin{definition} \label{def:mixedcorrdiff} Define $\omega^{(g)}_{m,n}$ to be the 
differentials which are recursively obtained via \eqref{eq:WXtoomega} from an \emph{arbitrary} set of globally defined symmetric meromorphic $m$-dif\-fe\-ren\-tials $\omega^{(g)}_{m,0}(z_{\set m})$ on $S^m$, $g\geq 0$, $m\geq 1$, and two meromorphic functions $x$ and $y$, such that $\omega^{(0)}_{1,0}=-ydx$ and such that~\eqref{eq:assumption-diagonal} holds.
\end{definition} 

Let us stress that
% this way we assign a set of differentials $\{\omega^{(g)}_{m,n}\}$  to an arbitrary set of differentials  $\{\omega^{(g)}_{m,0}\}$ with the only assumption given by~\eqref{eq:assumption-diagonal}.
%in particular, 
so far we do not connect Definition~\ref{def:mixedcorrdiff} to the theory of topological recursion.

\subsection{Alternative expression}\label{sec:altexpr}
Let $M=\set{m}$ and $N=\set{m+n+1}\setminus \set{m+1}$. We consider variables $z_{\set{m+n+1}}$ on $S^{m+n+1}$, and we denote $x_i\coloneqq x(z_i)$ and $y_i\coloneqq y(z_i)$. It is convenient to denote $z=z_{m+1}$, $x=x_{m+1}$ and $y=y_{m+1}$. Define
\begin{align}\label{eq:cT}
	\cT^x_{m+1,n}(w;z_{M};z;z_{N}) & \coloneqq
	\sum_{k=1}^\infty\frac{\hbar^{2(k-1)}w^k}{k!}\left(\prod_{i=1}^k
	\big\lfloor_{z_{\bar i}\to z}\cS(w\hbar \partial_{x_{\bar i}})\right)
	\\ \notag & \quad
	\sum_{g=0}^\infty\hbar^{2g}\frac{
		\omega^{(g)}_{m+k,n}(z_{M},z_{\{\bar1,\dots,\bar k\}};z_{N})-
		\delta_{(g,m,k,n),(0,0,2,0)}\frac{dx_{\bar1}dx_{\bar2}}{(x_{\bar1}-x_{\bar2})^2}}
	{\prod_{i=1}^m dx_i\prod_{i=1}^k dx_{\bar i}\prod_{i=m+2}^{m+n+1}dy_i},
	\\
	\label{eq:cW}
	\cW^x_{m+1,n}(w;z) & \coloneqq %=\sum_{g=0}^\infty\hbar^{2g}\cW^{x,(g)}_{m+1,n}:=
	\frac{1}{w}
	\sum_{k=0}^\infty\frac{1}{k!}
	\sum_{\substack{M=\sqcup_{i=1}^kI_i\\N=\sqcup_{i=1}^kJ_i}}
	\prod_{i=1}^k\cT^x_{|I_i|+1,|J_i|}(w;z_{I_i};z;z_{J_i})
	\\ \notag %\label{eq:cWexp}
	& = \frac{1}{w}~e^{\cT^x_{1,0}(w;z)}\!\!\!\!\!\!
	\sum_{\substack{M\cup N=\sqcup_\alpha K_\alpha~
			K_\alpha\ne\varnothing\\~I_\alpha=K_\alpha\cap M,~J_\alpha=K_\alpha\cap N}}
	\prod_{\alpha}\cT^x_{|I_\alpha|+1,|J_\alpha|}(w;z_{I_\alpha};z;z_{J_\alpha}).
\end{align}
In order to shorten the notation, we omit dependencies on $x_M ,y_N$, and $\hbar$ for $\cW^x_{m+1,n}$ and on $\hbar$ for $\cT^x_{m+1,n}$, respectively, in the lists of their arguments.

Several remarks are in order.

\begin{remark} The regularizing correction in~\eqref{eq:cT} is used for the restriction to the diagonal $z_{\bar1}=z_{\bar2}=z$ being correctly defined: while $\omega^{(0)}_{2,0}$ is singular, its regularization
	\begin{equation}
		\omega^{(0)}_{2,0}(z_{\bar1},z_{\bar2})-\frac{dx_{\bar1}dx_{\bar2}}{(x_{\bar1}-x_{\bar2})^2}
	\end{equation}
	extends to the diagonal holomorphically.
\end{remark}

\begin{remark} By construction, 
\begin{equation}\label{Wgg}
\cW^{x,(g)}_{m+1,n}\coloneqq \coeff \hbar {2g} \cW^{x}_{m+1,n}
\end{equation}
 is a meromorphic function in $z$ on $S$. Since $\cT^x_{1,0}=w\frac{\omega^{(0)}_{1,0}}{dx}+O(\hbar^2)=-w y+O(\hbar^2)$, we conclude that the combination $e^{w\,y}\cW^{x,(g)}_{m+1,n}$ is polynomial in $w$ for any triple $(g,m,n)$ with the exception of the term $e^{w\,y}\cW^{x,(0)}_{1,0}(w;z)=\frac{1}{w}$.
\end{remark}

\begin{proposition}\label{prop:simple-rec-mixed} For any triple $(g,m,n)\ne(0,0,0)$, the mixed correlation differentials $\omega^{(g)}_{m,n}$ given by Definition~\ref{def:mixedcorrdiff} satisfy
	\begin{equation}\label{eq:Wxtoomega}
		\frac{\omega^{(g)}_{m,n+1}(z_M;z,z_N)}{\prod_{i\in M} dx_i \cdot dy\cdot \prod_{i\in N} dy_i} =-\sum_{r\ge0} \partial_y^r [w^r] \frac{dx}{dy} e^{w\,y}\cW^{x,(g)}_{m+1,n}
	\end{equation}
	(recall that functions $x$, $y$, and the differential $\omega^{(0)}_{1,0}$ are related by the condition $\omega^{(0)}_{1,0}=-ydx$).
\end{proposition}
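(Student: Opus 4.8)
The plan is to prove the proposition by establishing, independently of topological recursion, the purely algebraic identity that the right-hand side of~\eqref{eq:WXtoomega} and the right-hand side of~\eqref{eq:Wxtoomega} coincide as differential-algebraic operators applied to the same input data. Indeed, Definition~\ref{def:mixedcorrdiff} fixes $\omega^{(g)}_{m,n+1}$ through~\eqref{eq:WXtoomega}, so once the two operators are shown to agree on every admissible family $\{\omega^{(g')}_{m+k,n}\}_{k\ge0}$ (with the $(g,k)=(0,2)$ diagonal regularization built in), the alternative recursion~\eqref{eq:Wxtoomega} follows for the very same differentials. Thus the whole content is the identity of two operators, which I would package as the general algebraic statement referenced as Proposition~\ref{prop:MainSimplification}.

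Since this is a universal identity in the germs (and their derivatives) of the inputs, it suffices to verify it on one sufficiently large family. First I would realize the inputs through the Fock-space construction of Section~\ref{sec:MotivationFPS}, where the mixed functions are given by~\eqref{eq:MixedFirstDef} as connected vacuum expectations $\VEVc{\cdots \cD Z}$ with the coefficients $C^{(g)}_{k_1,\dots,k_m}$ of $Z$ left as free formal parameters. Because these coefficients freely prescribe the local jets of the $\omega^{(g)}_{m,0}$, and both~\eqref{eq:WXtoomega} and~\eqref{eq:Wxtoomega} depend only on such jets, agreement on this family forces the differential-algebraic identity in general; the passage back to an arbitrary $(S,x,y,B)$, where a point with $xy=1$ need not exist, is handled by the same open-condition/deformation argument used for Theorem~\ref{th:formalidentities}.

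In the vertex-operator picture both sides are contractions of the current $J(z)=\sum_\ell z^\ell J_\ell$ and its conjugate $\J(\hbar,z)=\cD^{-1}J(z)\cD$. The $X$-formalism of~\eqref{eq:cTXom}--\eqref{eq:WXtoomega} organizes the insertions in the multiplicative (logarithmic) variable, using the derivations $\partial_X=-x\partial_x$ and the prefactor $\tfrac{1}{u\cS(u\hbar)}$, whereas the $x$-formalism of~\eqref{eq:cT}--\eqref{eq:Wxtoomega} uses the plain derivations $\partial_x$ and the prefactor $\tfrac1w$; the closing operators $\sum_r L_r(v,\Theta)[u^r]$ versus $\sum_r\partial_y^r[w^r]$ record exactly this difference. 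The key is a single conjugation identity describing how the $\cS$-deformed operators transform under $\cD^{\pm1}$ and under the passage from $X$ to $x$, together with the defining relation $\Theta=xy$, $y=x^{-1}(1+W^{(0)}_{1,0}(x))$; this is precisely the vertex-operator identity alluded to in Section~\ref{sec:Computations-vertex-new-form}.

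The main obstacle is showing that the $v$-summation $\sum_j\partial_Y^j[v^j]\sum_r L_r(v,\Theta)[u^r](\cdots)$ collapses to the clean form $-\sum_r\partial_y^r[w^r]\tfrac{dx}{dy}e^{wy}(\cdots)$. The factor $L_0(v,\theta)=e^{v(\cS(v\hbar\partial_\theta)/\cS(\hbar\partial_\theta)-1)\log\theta}$ is the kernel implementing the change of variable between $\theta=\Theta$ and the additive variable, and one must check that after substituting $\theta\to\Theta=xy$ these corrections telescope so as to both cancel the $1/\cS(u\hbar)$ prefactor and turn the logarithmic derivations $\partial_Y,\partial_X$ into the plain $\partial_y,\partial_x$, producing the single Jacobian factor $dx/dy$ and the exponential $e^{wy}$ in place of $e^{-u\Theta}$ under the matching $u\leftrightarrow w$. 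Once this collapse is isolated as the required operator identity, the remaining comparison --- the partition sums in $\cW^X_{m+1,n}$ versus $\cW^x_{m+1,n}$ and the diagonal restrictions in $\cT^X_{m+1,n}$ versus $\cT^x_{m+1,n}$ --- is a term-by-term match of identical combinatorial structures.
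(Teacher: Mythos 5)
Your opening reduction --- treating the claim as the identity of two universal differential-algebraic operators applied to the same inputs, packaged as Proposition~\ref{prop:MainSimplification} --- is exactly the paper's first move, and your observation that vertex operators settle the formal-series case matches the paper's Section~\ref{sec:Computations-vertex-new-form}. But your strategy for the general case fails at the globalization step. You verify the identity on the Fock-space family and then invoke ``the same open-condition/deformation argument used for Theorem~\ref{th:formalidentities}'' to pass to arbitrary $(S,x,y,B)$. That argument is circular here: its shift step (Step~(d) in the proofs of Propositions~\ref{prop:xy-relations} and~\ref{prop:XY-relations}) needs to know that the defining recursion~\eqref{eq:WXtoomega} is insensitive to $x\mapsto x-d$, and the paper obtains this precisely from Proposition~\ref{prop:simple-rec-mixed}, i.e.\ from the statement you are trying to prove (cf.\ Remark~\ref{rem:MainFormula-shifts}: shift-invariance is ``absolutely not clear'' from~\eqref{eq:WXtoomega}, while~\eqref{eq:Wxtoomega} is manifestly shift-invariant). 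Without shift-invariance, the set of parameters where your Case-1 verification applies is not open: shifting alone gives the identity only at the finitely many values $d=x(p)$ with $p$ a suitable simple pole of $y$, and combining a shift with a deformation of $y$ gives it only along a curve in a two-parameter family, which does not force a function depending rationally on the parameters to vanish identically. Relatedly, your claim that the Fock coefficients ``freely prescribe the local jets'' is true only for the $\omega^{(g)}_{m,0}$'s: in that family $x$ has a simple zero, $y$ a simple pole with $xy\to 1$, and the $\omega$'s are holomorphic at the expansion point, so the family is not dense in the space of input jets --- removing exactly these constraints is what the (circular) deformation step was supposed to accomplish.

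The second gap is that the analytic heart of the proof is named but not carried out. You correctly isolate ``the main obstacle'': the collapse of $\sum_j\partial_Y^j[v^j]\sum_r L_r(v,\Theta)[u^r](\cdots)$ into $-\sum_r\partial_y^r[w^r]\tfrac{dx}{dy}e^{wy}(\cdots)$. This is not a formal telescoping; in the paper it occupies all of Section~\ref{sec:ProofPropositionSimplificationformal}: the conjugation~\eqref{eq:xXyY6} relating the $\partial_x$- and $x\partial_x$-type $\cS$-operators through $u(w)$ and $\gamma(w)$, Lemma~\ref{lem:xXyY9} (whose proof is deliberately recast via the bracket identity~\eqref{eq:xXyY10} so that it applies to arbitrary meromorphic functions, with no Laurent expansion and no special point on the curve), and the key Lemma~\ref{lem:phiidentity}, proved by reducing to nonnegative integer $v$ and applying Lagrange inversion. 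Note that even the paper's vertex-operator proof in the formal setting rests on Lemma~\ref{lem:phiidentity} via~\eqref{eq:bJfromEnew}, so working in the Fock space does not let you bypass this identity. The way the paper escapes your circularity is worth internalizing: Proposition~\ref{prop:MainSimplification} is proved purely algebraically, for arbitrary meromorphic $x$, $y$, and $H$ with no relation imposed between them; Proposition~\ref{prop:simple-rec-mixed} then supplies the shift-invariance that powers the deformation arguments for the later propositions --- not the other way around.
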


\begin{remark} Equation~\eqref{eq:Wxtoomega} expresses $\omega^{(g)}_{m,n+1}$ in terms of $\omega^{(g)}_{m+1,n}$ and $\omega^{(g')}_{m',n'}$ with $g'\leq g$, $n'\leq n$, and $2g'+n'+m'\leq 2g+n+m$. A direct implication of Proposition~\ref{prop:simple-rec-mixed} is that  Equation~\eqref{eq:Wxtoomega}  can be used as an equivalent alternative to Definition~\ref{def:mixedcorrdiff}.
\end{remark}

We prove Proposition~\ref{prop:simple-rec-mixed} in Section~\ref{sec:proof-of-algebraic-equivalence-of-relations}.

\subsection{Proof of Theorem~\ref{thm:newformula}}
It is a direct corollary of Proposition~\ref{prop:simple-rec-mixed}. Indeed, Equation~\eqref{eq:MainFormula} (respectively, Equation~\eqref{eq:MainFormulaSimple}) that expresses $\omega^{(g)}_{0,n}$'s in terms of $\omega^{(g)}_{m,0}$'s is obtained by iterative application of Equation~\eqref{eq:mainrecchange} (respectively, Equation~\eqref{eq:Wxtoomega}). The iteration procedure works as follows (we describe it below for Equation~\eqref{eq:MainFormulaSimple}, the case of Equation~\eqref{eq:MainFormula} is fully analogous):

Abusing the language a little bit, we can say that Equation~\eqref{eq:MainFormulaSimple} expresses a  correlator with $n$ `$dy$-arguments' in terms of the correlators with only `$dx$-arguments'. On the other hand, Equation~\eqref{eq:Wxtoomega} expresses the correlators with $m$ `$dx$-arguments' and $(n+1)$ `$dy$-arguments' in terms of the correlators with 
\begin{itemize}
	\item either the only term with same Euler characteristic $2-2g-m-n-1$ but then with $(m+1)$ `$dx$-arguments' and $n$ `$dy$-arguments' (thus one special `$dx$-argument' on the right hand side is matched to the distinguished `$dy$-argument' specified on the left hand side of the formula),
	\item or a sum of product of correlators with strictly bigger Euler characteristic, with $(m+k)$ `$dx$-arguments', $k\geq 1$, and $n$ `$dy$-arguments' . Note also that the structure of Equation~\eqref{eq:Wxtoomega}  assumes that the first $m$ `$dx$-arguments' and all $n$ `$dy$-arguments' match the same arguments on the left hand side, and to the extra $k$ `$dx$-arguments' we first apply some differential operators and then specialize them to the diagonal, where they are matched to the the distinguished `$dy$-argument' specified on the left hand side of the formula. 
\end{itemize} 
From this description it is clear that iteratively applying Equation~\eqref{eq:Wxtoomega} to the left hand side of Equation~\eqref{eq:MainFormulaSimple}, we convert all `$dy$-arguments' into `$dx$-arguments' in a finite number of steps. Note also that once a`$dy$-argument' is replaces by a number of `$dx$-arguments' (restricted to the diagonal, with some differential operators applied), it remains intact at all the subsequent iterative applications of Equation~\eqref{eq:Wxtoomega}. The result is assembled into a closed formula on the right hand side of  Equation~\eqref{eq:omega-g-n-graphsXY}.

Now, since  Equations~\eqref{eq:mainrecchange} and~\eqref{eq:Wxtoomega} are equivalent (by Proposition~\ref{prop:simple-rec-mixed}), Theorem \ref{thm:newformula} follows.
 
\begin{remark} Note that the iterative application of Equation~\eqref{eq:mainrecchange}, or, alternatively, Equation~\eqref{eq:Wxtoomega}, gives also two alternative but equivalent expressions for all $\omega^{(g)}_{m,n}$, $g,m,n\geq 0$, $m+n\geq 1$, in terms of $\omega^{(g)}_{m,0}$'s. In particular, the second expression (the one that is based on the iterative application of Equation~\eqref{eq:Wxtoomega}) specializes for $g=0$ to the formula of Hock in~\cite[Theorem~4.9]{Hockx-x-y}.
\end{remark}

%%%%%%%%%%%%
%%%%%%%%%%%%
%%%%%%%%%%%%

\section{Equivalence of differential-algebraic relations} \label{sec:proof-of-algebraic-equivalence-of-relations}

The main goal of this Section is to give a proof of Proposition~\ref{prop:simple-rec-mixed} claiming that the inductive definition of the mixed differentials $\omega^{(g)}_{m,n}$ given by~\eqref{eq:cTX}--\eqref{eq:omega-W} is equivalent to that one given by~\eqref{eq:cT}--\eqref{eq:Wxtoomega}. As a corollary, we obtain Theorem~\ref{thm:newformula} claiming that the two expressions~\eqref{eq:MainFormula} and~\eqref{eq:MainFormulaSimple} for the differentials $\omega^{(g)}_{0,n}$ of the $x-y$-dual topological recursion are equivalent.

It is convenient to represent both recursions in the following equivalent form. Define the \emph{$k$-connected, $(m,n)$-disconnected $(m+k,n)$-point functions} $W_{m,k,n}$ on $S^{m+k+n}$ as follows. Let $M,K,N$ be three nonintersecting sets of indices of cardinalities $|M|=m,|K|=k,|N|=n$. We set
\begin{equation}
W_{m,k,n}(z_M;z_K;z_N)=\sum_{\ell=1}^k\frac{1}{\ell!}
\!\!\!\sum_{\substack{\sqcup_{i=1}^\ell M_i=M,\\\sqcup_{i=1}^\ell N_i=N,\\
\sqcup_{i=1}^\ell K_i=K,\\
K_i\ne\emptyset,~(|M_i|,|K_i|,|N_i|)\ne (0,1,0)}}\!\!\!
\prod_{i=1}^\ell \tilde W_{|M_i|+|K_i|,|N_i|}(z_{M_i},z_{K_i};z_{N_i}),
\end{equation}
where for $(m,n)\ne(2,0)$ we set $\tilde W_{m,n}=W_{m,n}$. In the exceptional case $(m,n)=(2,0)$ we use the following convention: the meaning of $\tilde W_{2,0}$ depends on the range of indices of its arguments, namely, we set
\begin{align}
\tilde W_{2,0}(z_i,z_j)&=W_{2,0}(z_i,z_j),&\text{if~}&i\in M,~j\in K
\\\tilde W_{2,0}(z_i,z_j)&=W_{0,2}(z_i,z_j)-\frac{x_ix_j}{(x_i-x_j)^2},&\text{if~}&i\in K,~j\in K.
\end{align}
The condition $(|M_i|,|K_i|,|N_i|)\ne (0,1,0)$ means that we do not include the contribution of $W_{1,0}$ in $W_{m,k,n}$.
With this notation, the inductive definition~\eqref{eq:mainrecchange} suggests the following expression for $W_{m,n+1}$:
\begin{multline}\label{eq:equivproof2}
-\sum_{r,j=0}^\infty(-y\partial_y)^j\Bigl(
		[v^j]L_r(v,x\,y)
		\frac{dx}{x}\frac{y}{dy}
		[u^r]\frac{e^{u(\cS(u\,\hbar\,x\partial_x)-1)\,(x\,y)}}{u\,\cS(u\,\hbar)}\times\\
\sum_{k=1}^\infty\frac{\hbar^{2(k-1)}}{k!}\prod_{i=1}^k\bigl(\big\lfloor_{z_{\bar i}\to z}u\,\cS(u\hbar x_{\bar i}\partial_{x_{\bar i}})\bigr)
\;W_{m,k,n}(z_M;z_{\{\bar 1,\dots,\bar k\}};z_N)\Bigr),
\end{multline}
where
\begin{equation}\label{eq:equivproof1}
	L_r(v,\theta)\coloneqq\theta^{-v}\partial_\theta^re^{v\frac{\cS(v\,\hbar\,\partial_\theta)}{\cS(\hbar\,\partial_\theta)}\log\theta}.
\end{equation}
was earlier introduced in \eqref{eq:Lr}.
For simplicity, we restrict ourselves here to the case $m+n>0$; in the exceptional case $m=n=0$ the arguments are adjusted accordingly.

On the other hand, the recursion~\eqref{eq:Wxtoomega} reformulated similarly suggests the following alternative expression for the same function $W_{m,n+1}$:
\begin{multline}\label{eq:equivproof3}
-(-y)\sum_{r=0}^\infty \partial_y^r\Bigl(
		\frac{dx}{dy}
		[w^r]\frac{e^{w\,(\cS(w\hbar\partial_x)-1)(-y)}}{w}\times\\
\sum_{k=1}^\infty\frac{\hbar^{2(k-1)}}{k!}\prod_{i=1}^k\bigl(\big\lfloor_{z_{\bar i}\to z}w\,\cS(w\hbar \partial_{x_{\bar i}})\bigr)
\;\frac{W_{m,k,n}(z_M;z_{\{\bar 1,\dots,\bar k\}};z_N)}{\prod_{i=1}^k(-x_{\bar i})}\Bigr).
\end{multline}
We claim that the order of summation in~\eqref{eq:equivproof2} and~\eqref{eq:equivproof3} can be changed, so that for the resulting sums the terms with the same values of the index~$k$ coincide with each other. Actually, we have the following general relationship.

\begin{proposition}\label{prop:MainSimplification}
For any $k\ge1$ and any meromorphic function $H(z_1,\dots,z_k)$ on $S^k$ having no poles on the diagonals $z_i=z_j$ the following identity holds:
\begin{align}\label{eq:xXyY}
&
\sum_{r,j=0}^\infty(-y\partial_y)^j\Big(
		[v^j]L_r(v,x\,y)
		\frac{dx}{x}\frac{y}{dy}
\\ \notag & \quad
		[u^r]\frac{e^{u(\cS(u\,\hbar\,x\partial_x)-1)\,(x\,y)}}{u\,\cS(u\,\hbar)}%\times\\
\prod_{i=1}^k\bigl(\big\lfloor_{z_{i}\to z}u\,\cS(u\hbar x_{i}\partial_{x_{i}})\bigr)
\;H(z_{\set{k}})\Big)
\\ \notag &
=
(-y)\sum_{r=0}^\infty \partial_y^r\left(
		\frac{dx}{dy}
		[w^r]\frac{e^{w\,(\cS(w\hbar\partial_x)-1)(-y)}}{w}%\times\\
\prod_{i=1}^k\bigl( \big\lfloor_{z_{i}\to z}w\,\cS(w\hbar \partial_{x_{i}})\bigr)
\;\frac{H(z_{\set{k}})}{\prod_{i=1}^k(-x_{i})}\right).
\end{align} 
\end{proposition}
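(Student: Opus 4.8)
The plan is to prove \eqref{eq:xXyY} as a purely formal operator identity and to reduce it to a one–variable statement of the same shape as the genus~$0$ simplification of Hock \cite[Proposition~5.3]{Hockx-x-y} and \cite[Lemma~4.1]{BDKS-toporec-KP}, which I would then upgrade to all orders in $\hbar$. Both sides are $\Cf((\hbar))$–linear in $H$, and on each leg the operator $\restr{z_i}{z}\cS(u\hbar x_i\partial_{x_i})$ (respectively $\restr{z_i}{z}\cS(w\hbar\partial_{x_i})$) acts only in the variable $x_i=x(z_i)$. Since $H$ has no pole on the diagonals $z_i=z_j$, it admits a Taylor expansion along the total diagonal, and modulo $\hbar^{2N}$ each $\cS$ is a finite–order differential operator, so at every fixed order in $\hbar$ only finitely many Taylor coefficients of $H$ contribute. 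Hence it suffices to verify \eqref{eq:xXyY} on monomials $H=\prod_{i=1}^k x_i^{a_i}$, i.e.\ to check that the two sides agree as differential operators on the diagonal Taylor data of $H$.

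First I would evaluate the leg operators on such a monomial. On the left–hand side each $x_i\partial_{x_i}$ is the Euler operator, which is diagonal on monomials, so
\[
\prod_{i=1}^k\Big(\restr{z_i}{z}\,u\,\cS(u\hbar x_i\partial_{x_i})\Big)H
=u^k\Big(\prod_{i=1}^k\cS(u\hbar a_i)\Big)x^{A},\qquad A:=\textstyle\sum_i a_i,
\]
a single monomial in $x$ times a scalar series in $u,\hbar$. On the right–hand side $\cS(w\hbar\partial_{x_i})$ lowers degrees, so after the division by $\prod_i(-x_i)$ one gets $(-w)^k\prod_i\big(\restr{z_i}{z}\cS(w\hbar\partial_{x_i})\,x_i^{a_i-1}\big)$, a genuine polynomial in $x$ (a combination of powers $x^{A-k},x^{A-k-2},\dots$). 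This turns \eqref{eq:xXyY} into a one–variable identity in $x$, with $y=y(x)$ read off along the curve $S$, in which the left side is organised by the auxiliary pair $(u,v)$ and the right side by $w$.

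The heart of the argument — and the step I expect to be the main obstacle — is a transfer identity relating the \emph{multiplicative} picture on the left (Euler operators $x\partial_x$, base point $\Theta=xy$, the dressing $L_r(v,\Theta)$ and the bookkeeping $\sum_{j}(-y\partial_y)^j\coeff v j$) to the \emph{additive} picture on the right (plain $\partial_x$, base point $-y$, and $\sum_r\partial_y^r\coeff w r$). I would prove it by presenting both sides as two expansions of one and the same ``principal specialisation'' operator under the change of variable $\Theta=xy\leftrightarrow y$, exactly the change governing the $g=0$ case but now keeping the full $\cS$–dressing. Concretely, using $L_0(v,\theta)=e^{v(\cS(v\hbar\partial_\theta)/\cS(\hbar\partial_\theta)-1)\log\theta}$ from \eqref{eq:L0} and $L_r=\theta^{-v}\partial_\theta^r(\theta^vL_0)$ from \eqref{eq:Lr}, I would show that $\sum_{j\ge0}(-y\partial_y)^j\coeff v j L_r(v,xy)(\,\cdot\,)$ reorganises the $\Theta$–Taylor data into $y$–Taylor data, collapsing the double sum $\sum_{r,j}(-y\partial_y)^j\coeff v j L_r\coeff u r$ into the single sum $\sum_r\partial_y^r\coeff w r$. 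The remaining bookkeeping is forced: the Jacobian prefactors $\tfrac{dx}{x}\,\tfrac{y}{dy}$ versus $-y\,\tfrac{dx}{dy}$ and the scalar factors $\tfrac{1}{u\cS(u\hbar)}$ versus $\tfrac1w$ account precisely for the passage from $x\partial_x$ to $\partial_x$ (equivalently the coordinate change $\log x\leftrightarrow x$) and for the $\cS$–encoded relation between the extraction variables $u$ and $w$.

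The genuine difficulty is that $x\partial_x$, $\partial_y$ and the restriction to the diagonal do not commute, and $x,y$ are tied together along $S$, so the change of variable $\Theta=xy\leftrightarrow y$ must be performed at the level of non–commuting differential operators rather than of functions, keeping careful track of the $\hbar$–grading. I would control this by working order by order in $\hbar$: at each order both sides are honest finite–order differential operators, and the finiteness statements already recorded in the text — in particular that $e^{wy}\cW^{x,(g)}_{m+1,n}$ is polynomial in $w$, so that all the $u$–, $v$– and $w$–sums truncate — make the coefficient comparison rigorous. Once the monomial case is established, $\Cf((\hbar))$–linearity together with the diagonal regularity of $H$ recovers the general statement, proving Proposition~\ref{prop:MainSimplification}; specialising $H=W_{m,k,n}$ and summing over $k$ then identifies \eqref{eq:equivproof2} with \eqref{eq:equivproof3}, which is Proposition~\ref{prop:simple-rec-mixed}.
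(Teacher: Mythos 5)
Your preliminary reduction is sound as far as it goes: both sides of \eqref{eq:xXyY} are linear in $H$, at each fixed order of $\hbar$ only finitely many terms of the $(r,j,u,v,w)$-sums survive and each of them is a finite-order differential operator, so it suffices to check the identity on jets, i.e.\ on monomials $H=\prod_i x_i^{a_i}$, over the open set where $dx\neq0$, and then conclude by meromorphic continuation. This is in fact parallel to what the paper itself does when it converts the leg operators to Euler form through the monomial computation \eqref{eq:SpxSxpx}--\eqref{eq:xXyY6} (together with \eqref{eq:xXyY6y} for the $\partial_y^r$ part) and reduces everything to the one-variable statement of Lemma~\ref{lem:xXyY8}. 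But this reduction only disposes of the legs; the entire difficulty sits in the one-variable ``root'' identity, and there your proposal has a genuine gap: the step you yourself call the heart of the argument is announced (``I would show that \dots the remaining bookkeeping is forced'') but never carried out, and it is not bookkeeping.

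Concretely, two independent facts are needed, and neither appears in your plan. First, a commutation/insertion lemma (Lemma~\ref{lem:xXyY9}, Equation~\eqref{eq:xXyY9}): under the extraction operator $\sum_{r,j\ge0}(-y\partial_y)^j[w^rv^j]\,\frac{\partial_\theta^r\theta^v}{\theta^v}\big|_{\theta=x y}\frac{dx}{x}\frac{y}{dy}e^{-w\,x\,y}$, inserting the operator $x\partial_x$ has the same effect as inserting the scalar $v$; this is precisely what allows one to replace the non-commuting operator $\gamma(w)^{x\partial_x}$ by $\gamma(w)^{v}$. The paper proves it from the explicit algebraic identity \eqref{eq:xXyY10}; your proposed control of non-commutativity, ``working order by order in $\hbar$'', does not produce this --- the $\hbar$-filtration bounds the orders of the operators involved but says nothing about exchanging $x\partial_x$ for $v$. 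Second, and most importantly, the scalar identity of Lemma~\ref{lem:phiidentity},
\begin{equation*}
 e^{v\frac{\cS(v\,\hbar\,\partial_\theta)}{\cS(\hbar\,\partial_\theta)}\log\theta}
 =\sum_{k=0}^\infty\bigl(\partial_\theta^k\theta^v\bigr)\,[w^k]\,(1-w^2\hbar^2/4)^{\frac{v-1}2}\,\eta(w)^\theta,
\end{equation*}
is exactly the statement that your double sum $\sum_{r,j}(-y\partial_y)^j[v^j]L_r[u^r]$ collapses to the single sum $\sum_r\partial_y^r[w^r]$. This is a nontrivial combinatorial identity, proved in the paper by observing that both sides are polynomial in $v$ at each order in $\hbar$, specializing to $v\in\Z_{\ge0}$, and applying the Lagrange inversion formula; it is where the function $\eta(w)=e^{u(w)-w}$ and the factor $\frac{u(w)\cS(u(w)\hbar)}{w}=\gamma(w)^{-1}$ actually get matched against $L_r(v,\theta)$. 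Your phrase ``two expansions of one and the same principal specialisation operator under the change of variable $\Theta=xy\leftrightarrow y$'' is a restatement of this claim, not a mechanism for proving it. Until these two lemmas (or equivalents) are formulated and proved, the proposal stops exactly where the proof has to begin.
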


Equivalence of~\eqref{eq:equivproof2} and \eqref{eq:equivproof3} and thus Proposition~\ref{prop:simple-rec-mixed} is a direct corollary of this proposition.

\subsection{Proof of Proposition~\ref{prop:MainSimplification}}
\label{sec:ProofPropositionSimplificationformal}

We divide the computation leading to the proof of Proposition~\ref{prop:MainSimplification} into several steps. %In each step formulated as a separate lemma, we reduce the required identity to a simpler one.
The structure of expressions on both sides of Equation~\eqref{eq:xXyY} implies that all sums involved contribute finitely many nonzero terms to the coefficient of any particular monomial in~$\hbar$ expansion. The proof is organized in such a way that this finiteness property is preserved in all intermediate steps of computations.

First of all, let us express the right hand side of~\eqref{eq:xXyY} in terms of the operators $x\partial_x$ and $y\partial_y$ instead of $\partial_x$ and $\partial_y$, respectively. Denote
\begin{align}\label{eq:uw}
u(w)&=\hbar^{-1}\log \left(\tfrac{1+w \,\hbar/2}{1-w\,\hbar/2}\right)=w+\frac{w^3 \hbar ^2}{12}+\frac{w^5 \hbar ^4}{80}+\dots,\\
\gamma(w)&=(1-\tfrac14w^2\hbar^2)^{1/2}.
\end{align}
For any monomial $x^d$, $d\in\Z$, we have
\begin{equation}
\begin{aligned}\label{eq:SpxSxpx}
w\,\cS(w\hbar\partial_x)\frac{x^d}{(-x)}
&=-\sum_{j=0}^\infty\frac{w^{2j+1}\hbar^{2j}x^{d-2j-1}}{2^{2j}(2j+1)!}\prod_{i=1}^{2j}(d-i)
\\&=- \frac{(1+\frac {\hbar\,w}{2\,x})^d-(1-\frac {\hbar\,w}{2\,x})^d}{\hbar\,d}x^d
\\&=(1-\tfrac{\hbar^2\,w^2}{4x^2})^{d/2}
\frac{\left(\frac{1-\frac {\hbar\,w}{2\,x}}{1+\frac {\hbar\,w}{2\,x}}\right)^{d/2}-
\left(\frac{1-\frac {\hbar\,w}{2\,x}}{1+\frac {\hbar\,w}{2\,x}}\right)^{-d/2}}{\hbar \,d}x^d
\\&=\gamma(\tfrac{w}{-x})^d\;u(\tfrac{w}{-x})\,\cS(u(\tfrac{w}{-x})\,\hbar\,d)\;x^d.
\end{aligned}
\end{equation}
This proves the following equality for operators acting on meromorphic functions on the spectral curve
\begin{equation}\label{eq:xXyY6}
w\,\cS(w\hbar\partial_x)\circ \frac{1}{-x}=\big\lfloor_{w\to\frac w{-x}}\gamma(w)^{x\partial_x}\;u(w)\,\cS(u(w)\,\hbar\,x\partial_x).
\end{equation}

Next, for any monomial $y^{d}$, $d\in\Z$, we compute similarly
\begin{equation}
y\,\partial^r_yy^{d+r-1}=\prod_{i=1}^r(d-i+r)\;y^{d}
\end{equation}
which proves the following equality for operators acting on meromorphic functions on the spectral curve
\begin{equation}\label{eq:xXyY6y}
(-y)\circ\partial^r_y\circ (-y)^{r-1}=\prod_{i=1}^r(-y\partial_y +i-r).
\end{equation}
Substituting~\eqref{eq:xXyY6} and~\eqref{eq:xXyY6y} to the right hand side of~\eqref{eq:xXyY} we bring it to the following form
\begin{equation}\label{eq:xXyY7}
\begin{aligned}
\sum_{r=0}^\infty \prod_{i=1}^r
&(-y\partial_y +i-r)\Bigl(\frac1{(-y)^{r-1}}\frac1{(-x)^{r+1}}
		\frac{dx}{dy}
		[w^r]\frac{e^{-w\,x\,y}}{w}\times
\\&e^{\gamma(w)^{x\partial_x}u(w)\,\cS(u(w)\hbar x\partial_x)(x\,y)}%\times\\
\prod_{i=1}^k\bigl(\big\lfloor_{z_{i}\to z}\gamma(w)^{x_i\partial_{x_i}}u(w)\,\cS(u(w)\hbar x_i \partial_{x_{i}})\bigr)
\;H(z_{\set{k}})\Bigr)
\\=
\sum_{r,j\ge0}&(-y\partial_y)^j\Bigl([v^j]\frac{\partial_\theta^r\theta^v}{\theta^v}\Bigm|_{\theta=x\,y}
	\frac{dx}{x}\frac{y}{dy}
		[w^r]\frac{e^{-w\,x\,y}}{w}\times
\\&\gamma(w)^{x\partial_x}e^{u(w)\,\cS(u(w)\hbar x\partial_x)(x\,y)}%\times\\
\prod_{i=1}^k\bigl(\big\lfloor_{z_{i}\to z}u(w)\,\cS(u(w)\hbar x_i \partial_{x_{i}})\bigr)
\;H(z_{\set{k}})\Bigr).
\end{aligned}
\end{equation}

Comparing the obtained expression with the left hand side of~\eqref{eq:xXyY} we reduce thereby the proof of Proposition~\ref{prop:MainSimplification}, in turn, to the following identity holding in  the space of meromorphic functions in just one variable.

\begin{lemma} \label{lem:xXyY8}
Let $G(u,z)$ be a polynomial in~$u$ whose coefficients are meromorphic functions on~$S$ (or a power series in~$\hbar$ whose coefficients are functions on~$S$ depending on $u$ in a polynomial way). Then, the following identity holds:
\begin{align}\label{eq:xXyY8}
&
\sum_{r,j\ge 0}(-y\partial_y)^j\Bigl(
		[v^ju^r]L_r(v,x\,y)
		\frac{dx}{x}\frac{y}{dy}G(u,z)\Bigr) 
\\ \notag &
=\sum_{r,j\ge0}(-y\partial_y)^j\Bigl([v^jw^r]\frac{\partial_\theta^r\theta^v}{\theta^v}\Bigm|_{\theta=xy}
	\frac{dx}{x}\frac{y}{dy} \tfrac{u(w)\cS(u(w)\,\hbar)}{w} \times
\\ \notag & \qquad \qquad \qquad \quad 
		e^{-w\,x\,y}\gamma(w)^{x\partial_x}e^{u(w)\,x\,y}G(u(w),z)\Bigr).
\end{align} 
\end{lemma}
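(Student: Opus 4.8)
The plan is to prove \eqref{eq:xXyY8} by eliminating the two coefficient extractions $[u^r]$ and $[w^r]$ in favour of operator calculus in the auxiliary variable $\theta$, and then matching the two resulting expressions through the change of variables $u=u(w)$ and the scalar relations between $u$, $w$, $\gamma$ and $\cS$.

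First I would remove the $r$-sums. On the left, \eqref{eq:Lr} gives $L_r(v,\theta)=(\partial_\theta+\tfrac v\theta)^rL_0(v,\theta)$, so $\sum_rL_r(v,\theta)[u^r]G(u,z)=G(\partial_\theta+\tfrac v\theta,z)L_0(v,\theta)$, the substitution $u\mapsto\partial_\theta+\tfrac v\theta$ in $G$. On the right, $\tfrac{\partial_\theta^r\theta^v}{\theta^v}$ is the classical falling factorial, so $\sum_r\tfrac{\partial_\theta^r\theta^v}{\theta^v}[w^r]\Phi(w)=\theta^{-v}\Phi(\partial_\theta)\theta^v$ with $\Phi(w)=\tfrac{u(w)\cS(u(w)\hbar)}{w}e^{-wxy}\gamma(w)^{x\partial_x}e^{u(w)xy}G(u(w),z)$ and $w\mapsto\partial_\theta$. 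Thus, writing $I_L=\big\lfloor_{\theta\to xy}G(\partial_\theta+\tfrac v\theta,z)L_0(v,\theta)$ and $I_R=\big\lfloor_{\theta\to xy}\theta^{-v}\Phi(\partial_\theta)\theta^v$, identity \eqref{eq:xXyY8} becomes $\sum_j(-y\partial_y)^j[v^j]\big(\tfrac{dx}{x}\tfrac{y}{dy}\,I_L\big)=\sum_j(-y\partial_y)^j[v^j]\big(\tfrac{dx}{x}\tfrac{y}{dy}\,I_R\big)$. It is essential to keep the $(-y\partial_y)^j[v^j]$ wrapping: the unwrapped functions $I_L$ and $I_R$ do not coincide.

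Next I would simplify $I_R$. Passing to $u$ through $w=\tfrac2\hbar\tanh(\tfrac{u\hbar}2)$, the scalar identities $\gamma(w)=\cosh(\tfrac{u\hbar}2)^{-1}$ and $\tfrac{u(w)\cS(u(w)\hbar)}{w}=\cosh(\tfrac{u\hbar}2)=\gamma(w)^{-1}$ collapse the prefactor, giving $I_R=\big\lfloor_{\theta\to xy}\theta^{-v}\gamma(\partial_\theta)^{-1}e^{-xy\partial_\theta}\gamma(\partial_\theta)^{x\partial_x}e^{u(\partial_\theta)xy}G(u(\partial_\theta),z)\,\theta^v$. Here $e^{-xy\partial_\theta}$ shifts $\theta$ by $-xy$ and $\gamma(\partial_\theta)^{x\partial_x}$ is the scaling $x\mapsto\gamma(\partial_\theta)x$ on the $z$-dependence, so $e^{-xy\partial_\theta}\gamma(\partial_\theta)^{x\partial_x}e^{u(\partial_\theta)xy}$ is a ``quantized scaling'' operator entangling the auxiliary variable $\theta$ with the derivation $x\partial_x$ on $S$. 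All the $\hbar$-deformation on the right is now carried by this operator, whereas on the left it sits entirely in the $\theta$-variable through the factor $\tfrac{\cS(v\hbar\partial_\theta)}{\cS(\hbar\partial_\theta)}$ inside $L_0$.

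The hard part will be reconciling these two deformations. They are genuinely different and agree only after the substitution $\theta=xy$ and the $\sum_j(-y\partial_y)^j[v^j]$ wrapping, because that wrapping is what links the auxiliary $\theta$-calculus to the honest derivations $x\partial_x$ and $y\partial_y$ on $S$, which are themselves tied together by the single variable $z$. Concretely, a naive comparison of $I_L$ and $I_R$ produces spurious terms involving $x\partial_x(xy)$ that are absent from $I_L$; these are removed only by the curve relations $x\partial_x(xy)=xy\,(1+g^{-1})$ and $-y\partial_y(xy)=-xy\,(1+g)$ with $g=\tfrac{y\,dx}{x\,dy}$, together with their iterates under $-y\partial_y$. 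I expect to close the argument by treating the identity, for each fixed power of $\hbar$, as an equality of differential-polynomial expressions in the jets of $x$ and $y$: by $\mathbb{C}[[\hbar]]$-linearity in $G$ and meromorphic continuation it suffices to verify it on a locus where $x$ is a local coordinate and for $G(u,z)=u^ax^b$ a single monomial, where both sides reduce to explicit finite expressions in $v$, $w$, $\hbar$, $g$ and $\Theta=xy$. The finiteness of the number of contributions to each power of $\hbar$, preserved throughout the computation, guarantees that this order-by-order verification is well defined, and the curve relations close each order.
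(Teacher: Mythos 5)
Your preparatory reductions are correct and essentially coincide with the paper's: eliminating the $r$-sums in favour of $\theta$-operator calculus is legitimate, and the scalar collapse $\tfrac{u(w)\cS(u(w)\hbar)}{w}=\gamma(w)^{-1}$ is exactly the simplification the paper uses. You also correctly locate where the difficulty lives: the deformation on the right sits in $\gamma(w)^{x\partial_x}$, the one on the left sits in the factor $\cS(v\hbar\partial_\theta)/\cS(\hbar\partial_\theta)$ inside $L_0$, and they can only be reconciled under the wrapping $\sum_j(-y\partial_y)^j[v^j](\cdots)\big|_{\theta=xy}$. But at precisely this point your argument stops and is replaced by the announcement that ``the curve relations close each order.'' The paper supplies two concrete mechanisms here, and neither appears in your proposal. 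First, Lemma~\ref{lem:xXyY9}: under the wrapping, insertion of the operator $x\partial_x$ is equivalent to insertion of the scalar $v$. Its proof is not an order-by-order check but the single exact identity~\eqref{eq:xXyY10}, whose right-hand side is annihilated by the wrapping because inserting $v$ amounts to shifting $j$ by one (an application of $-y\partial_y$), and the extra term where $-y\partial_y$ hits $\partial_\theta^r\theta^v|_{\theta=xy}$ is absorbed by an insertion of $w$ (a shift of $r$). Iterating this lemma is what legitimately replaces $\gamma(w)^{x\partial_x}$ by $\gamma(w)^{v}$ — i.e.\ it proves the cancellation of your ``spurious'' terms uniformly in $\hbar$. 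Second, and this is the heart of the lemma and is entirely absent from your plan: after that replacement one still has to prove the purely scalar, all-orders-in-$\hbar$ identity~\eqref{eq:xXyY4}, equivalently Lemma~\ref{lem:phiidentity}, namely $\partial_\theta^r e^{v\frac{\cS(v\hbar\partial_\theta)}{\cS(\hbar\partial_\theta)}\log\theta}=\sum_{k\ge0}\partial_\theta^k\theta^v\,[w^k]e^{(u(w)-w)\theta}\gamma(w)^{v-1}u(w)^r$. This is a genuinely nontrivial statement; the paper proves it by observing both sides are polynomial in $v$ for each $\hbar$-order, specializing to $v\in\Z_{\ge0}$, and then using Taylor expansion and Lagrange inversion. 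Nothing in your proposal produces or even identifies this residual identity.

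The fallback you offer — verify the identity on monomials $G=u^ax^b$, order by order in $\hbar$, as an equality of jet polynomials — cannot close this gap. Each individual check (fixed $a$, $b$, and $\hbar$-order) is finite, but the lemma asserts equality at \emph{all} orders of $\hbar$ and for all monomials, so a uniform structural argument is required; that is exactly what the two missing lemmas provide. Your description of the reduced identity is also not accurate: repeated application of $-y\partial_y$ (and of the coefficients of $\gamma(w)^{x\partial_x}$) generates arbitrarily high jets of $x$ as a function of $y$, not only $g=\tfrac{y\,dx}{x\,dy}$ and $\Theta=xy$, so the claim that both sides become ``explicit finite expressions in $v$, $w$, $\hbar$, $g$ and $\Theta$'' fails already at $j=2$. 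Finally, reducing an arbitrary meromorphic coefficient of $G$ to powers of $x$ forces a passage to Laurent expansions, i.e.\ to the formal-series setting that the paper deliberately avoids in the second half of the proof of Lemma~\ref{lem:xXyY9} so that the statement holds for arbitrary meromorphic functions. As written, your proposal is a correct reformulation of the problem followed by an unproven claim, not a proof.
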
 

Proposition~\ref{prop:MainSimplification} follows from this lemma, since the equality of the left hand side of~\eqref{eq:xXyY} and the obtained expression~\eqref{eq:xXyY7} for the right hand side is just the equality of the last Lemma for the case
\begin{equation}
G(u,z)=\frac{e^{u(\cS(u\,\hbar\,x\partial_x)-1)\,(x\,y)}}{u\,\cS(u\,\hbar)}
\prod_{i=1}^k\bigl(\big\lfloor_{z_{i}\to z}u\,\cS(u\hbar x_{i}\partial_{x_{i}})\bigr)
\;H(z_{\set{k}}).
\end{equation}

The next step of our computations is to simplify the right hand side of~\eqref{eq:xXyY8} using the following identity

\begin{lemma}\label{lem:xXyY9}
	For any function $F(w,v,z)$ on the spectral curve depending on parameters~$w$ and~$v$ such that the dependence of $e^{w\,x\,y}F(w,v,z)$ in~$u$ and~$v$ is polynomial we have
	\begin{multline}\label{eq:xXyY9}
		\sum_{r,j\ge0}(-y\partial_y)^j\Bigl([w^rv^j]
		\frac{\partial_\theta^r\theta^v}{\theta^v}\Bigm|_{\theta=x\,y}
		\frac{dx}{x}\frac{y}{dy}e^{-w\,x\,y}\;x\partial_x\; F(w,v,z)\Bigr)\\
		= \sum_{r,j\ge0}(-y\partial_y)^j\Bigl([w^rv^j]
		\frac{\partial_\theta^r\theta^v}{\theta^v}\Bigm|_{\theta=x\,y}
		\frac{dx}{x}\frac{y}{dy}e^{-w\,x\,y}\;v\; F(w,v,z)\Bigr).
	\end{multline}
\end{lemma}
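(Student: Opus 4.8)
The plan is to verify~\eqref{eq:xXyY9} coefficient by coefficient in the formal variables $w$ and $v$, reducing it to a single combinatorial recursion. Throughout write $\theta=x\,y$, let $(v)_r:=v(v-1)\cdots(v-r+1)$ be the falling factorial, and record that $\frac{\partial_\theta^r\theta^v}{\theta^v}\Bigm|_{\theta=x\,y}=(v)_r\,\theta^{-r}$ acts simply as multiplication by a function that is polynomial in $v$. Denote by $\sigma_{r,j}$ the signed Stirling numbers of the first kind, defined by $(v)_r=\sum_j\sigma_{r,j}v^j$, and abbreviate the Jacobian $J:=\frac{dx}{x}\frac{y}{dy}$.

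First I would eliminate the operator $x\partial_x$ from the left-hand side. Since $x$ and $y$ are two coordinates on the one-dimensional curve $S$, the chain rule gives $x\partial_x=\frac{x\,dy}{y\,dx}\,y\partial_y=J^{-1}\,y\partial_y$, so that $J\cdot x\partial_x=y\partial_y$ as operators on meromorphic functions. Because $J$ and $e^{-w\,x\,y}$ act by multiplication, this collapses the left-hand integrand:
\[
\frac{\partial_\theta^r\theta^v}{\theta^v}\Bigm|_{\theta=xy}\frac{dx}{x}\frac{y}{dy}\,e^{-w\,x\,y}\,x\partial_x F=(v)_r\,\theta^{-r}e^{-w\theta}\,y\partial_y F .
\]
Thus the Jacobian disappears on the left, while it still sits in front of $v\,F$ on the right; reconciling this asymmetry is the crux of the argument.

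Next I would integrate by parts in $y\partial_y$. Setting $P_r:=(v)_r\,\theta^{-r}e^{-w\theta}$ and using $y\partial_y\log\theta=y\partial_y\log x+1=J+1$, a direct computation gives $(-y\partial_y)P_r=(J+1)(r+w\theta)P_r$. Writing $P_r\,y\partial_y F=-(-y\partial_y)(P_rF)+\big((-y\partial_y)P_r\big)F$ and absorbing the first term into the outer $(-y\partial_y)^j$ (which reindexes $j\mapsto j-1$ and converts $(v)_r$ into its coefficients $\sigma_{r,j}$), the left-hand side becomes a boundary contribution $-\sum_{r,j}\sigma_{r,j-1}(-y\partial_y)^j[w^r]\big(\theta^{-r}e^{-w\theta}F\big)$ plus a bulk term. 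On the right-hand side, multiplication by $v$ simply shifts $[v^j]\mapsto[v^{j-1}]$, i.e.\ turns $\sigma_{r,j}$ into $\sigma_{r,j-1}$. After these reductions the whole statement is equivalent to the purely combinatorial identity
\[
\sum_{r,j}\sigma_{r,j}\,(-y\partial_y)^j[w^r]\big((r+w\theta)E_r\big)=\sum_{r,j}\sigma_{r,j-1}\,(-y\partial_y)^j[w^r]E_r,\qquad E_r:=\theta^{-r}e^{-w\theta}(J+1)F .
\]

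Finally I would prove this identity in two elementary moves. The shift $[w^r](w\theta\,E_r)=[w^{r-1}]E_{r-1}$ rewrites the $w\theta$-contribution, so that the combined coefficient in front of $(-y\partial_y)^j[w^r]E_r$ on the left becomes $r\,\sigma_{r,j}+\sigma_{r+1,j}$; the Stirling recursion $\sigma_{r+1,j}=\sigma_{r,j-1}-r\,\sigma_{r,j}$ then collapses this to $\sigma_{r,j-1}$, which is exactly the right-hand coefficient. Feeding the combinatorial identity back, the factor $J+1$ in $E_r$ splits: the $J$-part reproduces the Jacobian surviving on the right-hand side of~\eqref{eq:xXyY9}, and the $1$-part precisely cancels the boundary contribution isolated above. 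The case in which $F$ genuinely depends on $v$ then follows by linearity, expanding $F$ in powers of $v$ and noting that every operation above acts only on the explicit factor $(v)_r$. The main obstacle I anticipate is not any single identity but the simultaneous bookkeeping of the shifts in $r$, $j$ and in the $w$- and $v$-gradings — keeping the integration-by-parts boundary term, the $r+w\theta$ factor produced by differentiating $\theta^{-r}e^{-w\theta}$, and the $v$-multiplication shift aligned so that the Stirling recursion closes; the finiteness of each $\hbar$-coefficient, flagged at the start of Section~\ref{sec:ProofPropositionSimplificationformal}, must be monitored to license all reindexings.
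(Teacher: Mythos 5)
Your proof is correct. Every ingredient checks out: the chain-rule identity $J\cdot x\partial_x=y\partial_y$ for $J=\frac{dx}{x}\frac{y}{dy}$; the formula $(-y\partial_y)\bigl(\theta^{-r}e^{-w\theta}\bigr)=(J+1)(r+w\theta)\,\theta^{-r}e^{-w\theta}$, which follows from $y\partial_y\log\theta=J+1$; the integration by parts against the outer power of $-y\partial_y$; the shift $[w^r](w\theta E_r)=[w^{r-1}]E_{r-1}$; the Stirling recursion $\sigma_{r+1,j}=\sigma_{r,j-1}-r\,\sigma_{r,j}$ coming from $(v)_{r+1}=(v-r)(v)_r$; and the reduction to $v$-independent $F$, since replacing $F$ by $v^sF_s$ turns both sides of \eqref{eq:xXyY9} into $(-y\partial_y)^s$ applied to the corresponding sides for $F_s$.

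The organization, however, is genuinely different from the paper's. The paper proves the lemma in two passes: first for formal Laurent series, assuming a point where $x$ is a local coordinate and $y$ has a simple pole, by collapsing both sides to the same expression via Lagrange inversion and the Taylor expansion formula; and then in full generality via the single identity \eqref{eq:xXyY10}, which fuses your chain-rule and Leibniz steps into one line, combined with the observation that the operator $\sum_{r,j}(-y\partial_y)^j\circ[w^rv^j]\,\partial_\theta^r\theta^v\big|_{\theta=xy}$ annihilates every expression of the form $\bigl(v+y\partial_y+(y\partial_y(xy))\,w\bigr)\frac{e^{-wxy}}{(xy)^v}F$: inserting $v$ shifts $j$ by one, i.e.\ applies $-y\partial_y$, and the part of that derivative hitting $\partial_\theta^r\theta^v\big|_{\theta=xy}$ produces precisely the $r$-shift accounted for by inserting $w$. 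Your Stirling-number recursion is the coefficient-in-$v$ shadow of this annihilation statement, and your explicit boundary term and combined coefficient $r\sigma_{r,j}+\sigma_{r+1,j}$ spell out what the paper compresses into that one remark. What each route buys: yours is self-contained and elementary, never invoking Laurent expansions or Lagrange inversion, and applies verbatim to arbitrary meromorphic $F$ (exactly the generality the paper's second pass is designed to reach); the paper's is shorter, and \eqref{eq:xXyY10} isolates a reusable structural identity. One point to make explicit in a final write-up: all your reindexings (discarding the $r=-1$ term via $[w^{-1}]E_{-1}=0$, and the finiteness of the $r$-sums at each fixed order in $\hbar$) are licensed by requiring that $e^{-wxy}F$ be, order by order in $\hbar$, polynomial in $w$ and $v$; this is the intended reading of the lemma's hypothesis (the statement's ``$e^{wxy}F$'' and ``$u$'' are evident slips for this combination and for ``$w$'', as one sees from the application to Lemma~\ref{lem:xXyY8}), and it is exactly the finiteness property maintained throughout Section~\ref{sec:ProofPropositionSimplificationformal}, so your closing caveat is the right instinct rather than a gap.
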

\begin{proof}
First, assume that there is a point $p$ on the spectral curve such that $x$ is a local coordinate at this point, and~$y$ has a simple pole, $y=\frac{a}{x}+O(x^0)$, $a\ne0$. Then, expanding $F$ as a Laurent series, we compute the left hand side of~\eqref{eq:xXyY9} using Lagrange inversion and Taylor expansion formulas (cf.~\cite[Sect. 4.3]{BDKS-OrlovScherbin})
\begin{equation}
\begin{aligned}
		\sum_{r,j\ge0}&(-y\partial_y)^j\Bigl([w^rv^j]
		\frac{\partial_\theta^r\theta^v}{\theta^v}\Bigm|_{\theta=x\,y}
		\frac{dx}{x}\frac{y}{dy}e^{-w\,x\,y}\;x\partial_x\; F(w,v,z)\Bigr)
%\\=&-		\sum_{\ell\in\Z}\sum_{r,j\ge0}(-y\partial_y)^jy^{-\ell}\Bigl([w^rv^jx^\ell]
%		\partial_\theta^r\theta^v\Bigm|_{\theta=x\,y}
%		e^{-w\,x\,y}\;x\partial_x\; e^{w\,x\,y}F(w,v,z)\Bigr)
\\=&-		\sum_{\ell\in\Z}\sum_{r\ge0}y^{-\ell}\Bigl([w^rx^\ell]
		\partial_\theta^r\theta^\ell\Bigm|_{\theta=x\,y}
		e^{-w\,x\,y}\;x\partial_x\; F(w,\ell,z)\Bigr)
\\=&-		\sum_{\ell\in\Z}\sum_{r\ge0}y^{-\ell}\Bigl([w^rx^\ell]
		\partial_\theta^r\theta^\ell\Bigm|_{\theta=a}e^{-a\,w}
		\;x\partial_x\; F(w,\ell,z)\Bigr)
\\=&-		\sum_{\ell\in\Z}\sum_{r\ge0}y^{-\ell}\Bigl([w^rx^\ell]
		\partial_\theta^r\theta^\ell\Bigm|_{\theta=a}e^{-a\,w}
		\;\ell\;F(w,\ell,z)\Bigr).
\end{aligned}
\end{equation}
The last equality holds because $\partial_\theta^r\theta^\ell\Bigm|_{\theta=a}e^{-a\,w}$ is a constant. A similar computation of the right hand side of~\eqref{eq:xXyY9} leads to the same answer.  This proves the Lemma for the case of formal Laurent series. We would like to reproduce this computation in a form that does not require Laurent expansions and is applicable to arbitrary meromorphic functions.

The following equality is obtained by a straightforward expanding of all brackets
\begin{equation}\label{eq:xXyY10}
\frac{y}{dy}\frac{dx}{x}\frac{e^{-w\,x\,y}}{(x\,y)^v}(x\partial_x-v)F(w,v,z)
=(v+y\partial_y+ (y\partial_y(x\,y))\,w)\frac{e^{-w\,x\,y}}{(x\,y)^v}F(w,v,z).
\end{equation}
Applying the operator $\sum_{r,j\ge0}(-y\partial_y)^j\circ\bigl([w^rv^j]\partial_\theta^r\theta^v\big|_{\theta=x\,y}\bigr)$ to the left hand side of~\eqref{eq:xXyY10} we obtain the difference of the two sides of~\eqref{eq:xXyY9}. On the other hand, application of the same operator to the right hand side of~\eqref{eq:xXyY10} gives zero. Indeed, the insertion of the factor $v$ is equivalent to the shift of the index~$j$ by one, that is, to the application of $-y\partial_y$. While computing this derivative we should take into account the contribution of $-y\partial_y(\partial_\theta^r\theta^v|_{\theta=x\,y})=(-y\partial_y(x\,y))(\partial_\theta^{r+1}\theta^v|_{\theta=x\,y})$. This shift of the index~$r$ by one is accounted by insertion of the factor~$w$. This proves Lemma~\ref{lem:xXyY9} in full generality.
\end{proof}

Let us return to the proof of Lemma~\ref{lem:xXyY8}. The operator $\gamma(w)^{x\partial_x}$ entering the right hand side of~\eqref{eq:xXyY8} is a series in $\hbar^2$ whose coefficients are polynomials in $x\partial_x$. Applying repeatedly the equality of Lemma~\ref{lem:xXyY9} we conclude that the right hand side of~\eqref{eq:xXyY8} is equal to
\begin{equation}
\sum_{r,j\ge0}(-y\partial_y)^j\Bigl([v^jw^r]\frac{\partial_\theta^r\theta^v}{\theta^v}\Bigm|_{\theta=x\,y}
	\frac{dx}{x}\frac{y}{dy}
		\tfrac{u(w)\cS(u(w)\,\hbar)}{w}e^{-w\,x\,y}\gamma(w)^{v}e^{u(w)\,x\,y}G(u(w),z)\Bigr).
\end{equation}
Comparing this with the left hand side of~\eqref{eq:xXyY8} and using that
\begin{equation}
\tfrac{u(w)\,\cS(u(w)\,\hbar)}{w}=(1-w^2\hbar^2/4)^{-1/2}=\gamma(w)^{-1}.
\end{equation}
we reduce the proof of Lemma~\ref{lem:xXyY8} to the following identity
\begin{equation}
L_r(v,x\,y)=
\sum_{k\ge0}\frac{\partial_\theta^k\theta^v}{\theta^v}\bigm|_{\theta=x\,y}[w^k]
		e^{(u(w)-w)\,x\,y}\gamma(w)^{v-1}u(w)^r
\end{equation}
that holds for all $r\ge0$ or, more explicitly, to the identity
\begin{equation}\label{eq:xXyY4}
\partial_\theta^r e^{v\frac{\cS(v\,\hbar\,\partial_\theta)}{\cS(\hbar\,\partial_\theta)}\log\theta}=
\sum_{k\ge0}\partial_\theta^k\theta^v[w^k]
		e^{(u(w)-w)\,\theta}\gamma(w)^{v-1}u(w)^{r}
\end{equation}
which is understood as an equality of functions in $v$ and $\theta$. The application of $\partial_\theta$ to the left hand side results in the multiplication of the expression to the right of $[w^k]$ by $w+(u(w)-w)=u(w)$, that is, to the increment of the index $r$ by one. Therefore, Equation~\eqref{eq:xXyY4} for any $r\ge 0$ follows from its special case for $r=0$. Let us write~\eqref{eq:xXyY4} for $r=0$ more explicitly. Denote
\begin{equation}\label{eq:etadef}
\eta(w)=e^{u(w)-w}=e^{\hbar^{-1}\log\left(\frac{1+w\hbar/2}{1-w\hbar/2}\right)-w}.
\end{equation}

\begin{lemma} \label{lem:phiidentity} The following identity holds true
	\begin{equation} \label{eq:phiidentity1}
		e^{v\frac{\cS(v\,\hbar\,\partial_\theta)}{\cS(\hbar\,\partial_\theta)}\log\theta}
		=\sum_{k=0}^\infty\bigl(\partial_\theta^k\theta^v\bigr)
		[w^k] (1-w^2\hbar^2/4)^{\frac{v-1}2}\eta(w)^\theta.
	\end{equation}
\end{lemma}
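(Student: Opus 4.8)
The plan is to prove \eqref{eq:phiidentity1} as an identity of formal power series in $\hbar$ whose coefficients are polynomials in $v$ and Laurent polynomials in $\theta^{-1}$ times $\theta^v$. This is legitimate because, as noted right after \eqref{eq:etadef}, at each order in $\hbar$ the coefficient extraction $[w^k]$ on the right-hand side contributes only finitely many $k$; moreover both sides reduce to $\theta^v$ at $\hbar=0$ (for the left-hand side $\tfrac{\cS(v\hbar\partial_\theta)}{\cS(\hbar\partial_\theta)}\to 1$, and on the right only $k=0$ survives since $\eta(w)\to 1$ and $\gamma(w)\to 1$). Since the general-$r$ identity \eqref{eq:xXyY4} has already been reduced in the text to its $r=0$ instance \eqref{eq:phiidentity1}, it remains only to establish this single base identity, which I would do by resumming the right-hand side and matching exponents.

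First I would resum. Writing $\partial_\theta^k\theta^v=v(v-1)\cdots(v-k+1)\,\theta^{v-k}$ and $[w^k]=\tfrac1{k!}\partial_w^k|_{w=0}$, and using that $\gamma(w)^{v-1}\eta(w)^\theta=\gamma(w)^{v-1}e^{\theta(u(w)-w)}$ depends on $\theta$ only through the explicit scalar $\theta$ in the exponent, the sum $\sum_{k}(\partial_\theta^k\theta^v)[w^k]\gamma(w)^{v-1}\eta(w)^\theta$ equals the diagonal value
\[
\left.e^{\partial_w\partial_{\theta'}}\bigl(\gamma(w)^{v-1}\eta(w)^{\theta}\,\theta'^{\,v}\bigr)\right|_{w=0,\ \theta'=\theta},
\]
because the two factors depend on the disjoint variables $w$ and $\theta'$. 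The key move is then the Lagrange--B\"urmann change of variables $w\mapsto u$, where $u=u(w)$ is given by \eqref{eq:uw} and its compositional inverse is $w(u)=\tfrac2\hbar\tanh(\tfrac{\hbar u}2)$. Under this substitution I would invoke the identities already recorded in the text, namely $\gamma(w)=\cosh(\tfrac{\hbar u}2)^{-1}$ and $\tfrac{u(w)\,\cS(\hbar u(w))}{w}=\gamma(w)^{-1}$ (displayed just before \eqref{eq:xXyY4}), together with $\eta(w)=e^{u(w)-w}$; these are arranged precisely so that the factor $\eta(w)^\theta$ supplies the exponential needed to convert the $w$-weighted coefficient extraction into a $u$-weighted one in which the weights are governed by $\cS(\hbar u)$, and $\gamma(w)^{v-1}$ supplies the $\cS(v\hbar\,\cdot\,)$ at argument proportional to $v$.

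After the change of variables the exponent should organize into $v\,\tfrac{\cS(v\hbar\partial_\theta)}{\cS(\hbar\partial_\theta)}\log\theta$, which is exactly the claim. I expect the \emph{main obstacle} to be this final identification of the exponent: because there is \emph{no} $1/k!$ accompanying $[w^k]$, the resummation is a genuine Lagrange inversion rather than an ordinary Taylor shift, and one must track how the simultaneous $w$- and $\theta$-dependence recombines into the operator ratio $\cS(v\hbar\partial_\theta)/\cS(\hbar\partial_\theta)$. To finish cleanly I would reduce the operator identity to the scalar generating-function identity obtained by expanding $\cS$ according to $\cS(z)=\sum_{n\ge0}\tfrac{z^{2n}}{4^n(2n+1)!}$, and verify that one directly. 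As a rigorous safeguard, the whole statement can alternatively be proved by induction on the order of $\hbar^2$: the defining expansions of $\cS$, of $u(w)$ in \eqref{eq:uw}, and of $\gamma(w)$ yield an explicit recursion expressing the coefficient of $\hbar^{2n}$ on each side in terms of lower-order coefficients, and a term-by-term comparison (of which the $\hbar^2$ check, giving $-\tfrac{v(v^2-1)}{24}\,\theta^{v-2}$ on both sides, is the first case) completes the argument; this route is less transparent but avoids the delicate resummation bookkeeping.
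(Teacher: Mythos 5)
Your proposal does not contain a complete proof; it is a plan whose decisive step is left as an acknowledged ``main obstacle.'' The preliminary observations are fine: the resummation of the right-hand side as a diagonal value, the identities $\gamma(w)=\cosh(\hbar u/2)^{-1}$, $u(w)\cS(\hbar u(w))/w=\gamma(w)^{-1}$, $\eta(w)=e^{u(w)-w}$, and your $\hbar^2$ check (both sides give $-\tfrac{v(v^2-1)}{24}\theta^{v-2}$) are all correct. But the Lagrange--B\"urmann step is precisely where the difficulty lives and you do not carry it out: because the weights $\partial_\theta^k\theta^v$ depend on $k$ in a non-monomial way, converting each $[w^k]$ into a $u$-coefficient via Lagrange inversion introduces $k$-dependent Jacobian factors that must be re-absorbed into the operator ratio $\cS(v\hbar\partial_\theta)/\cS(\hbar\partial_\theta)$, and no mechanism for doing this is given. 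The two fallbacks are also not proofs: ``reduce to a scalar generating-function identity and verify it directly'' is a restatement of the problem, and the ``induction on the order of $\hbar^2$'' asserts the existence of a recursion relating the $\hbar^{2n}$ coefficients of the two sides without exhibiting one --- verifying the first case is a sanity check, not an inductive step.

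The idea that actually closes the argument in the paper, and which is absent from your proposal, is a \emph{polynomiality-in-$v$ reduction}. After applying the shift $e^{-\hbar\frac{v-1}{2}\partial_\theta}$ to both sides, one notes that each $\hbar$-coefficient of either side, divided by $\theta^v$, is a polynomial in $v$ (and Laurent polynomial in $\theta$); hence it suffices to prove the identity for $v\in\Z_{\ge0}$. For integer $v$ everything collapses: the left-hand side becomes the finite product $\theta(\theta-\hbar)\cdots(\theta-(v-1)\hbar)$, while on the right-hand side the Taylor expansion formula moves the evaluation of $\partial_\theta^r\theta^v$ to $\theta=0$, so only the single term $r=v$ survives and the sum becomes $v!\,[w^v](1-\hbar w/2)^{v-1}\bigl(\tfrac{1+\hbar w/2}{1-\hbar w/2}\bigr)^{\theta/\hbar}$. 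Only at this point --- with a single coefficient extraction and no $k$-dependent weights --- does Lagrange inversion (with $s(w)=w/(1-w\hbar/2)$) apply cleanly, yielding $v!\,\hbar^v\binom{\theta/\hbar}{v}$ and matching the left-hand side. Without some such device that collapses the infinite $k$-sum (or an explicit handle on the $k$-dependent Lagrange factors), your plan stalls exactly where you predicted it would.
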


\begin{proof}
It would be more convenient to apply $e^{-\hbar\frac{v-1}{2}\partial_\theta}$ to both sides of~\eqref{eq:phiidentity1} and to represent the identity to be proved in the following equivalent form
\begin{equation}\label{eq:phishifted}
\begin{aligned}
e^{-\hbar\frac{v-1}{2}\partial_\theta}e^{v\frac{\cS(v \hbar\partial_\theta)}{\cS(\hbar\partial_\theta)}\log(\theta)}
&=\sum_{r\ge0}\bigl(\partial_\theta^r\theta^v\bigr)[w^r]e^{-\hbar\frac{v-1}{2}w}(1-w^2\hbar^2/4)^{\frac{v-1}{2}}
\eta(w)^{\theta-\hbar\frac{v-1}{2}}
\\&=\sum_{r\ge0}\bigl(\partial_\theta^r\theta^v\bigr)[w^r](1-\hbar w/2)^{v-1}\bigl(\tfrac{1+\hbar w/2}{1-\hbar w/2}\bigr)^{\theta/\hbar}e^{-w\theta}.
\end{aligned}
\end{equation}

Both sides of~\eqref{eq:phishifted} divided by $\theta^v$ are infinite power series in $\hbar^2$ whose coefficients are polynomials in $v$ and $\theta^{\pm1}$. Every polynomial of degree $d$ is uniquely determined by its values at any non-coincident $d+1$ points. It follows that it is sufficient to prove~\eqref{eq:phishifted} in the case when~$v$ is a nonnegative integer. So we assume that $v\in\Z_{\ge0}$. In that case the left hand side specializes to
\begin{equation}\label{eq:phishiftedlhs}
\exp\Bigl(\tfrac{1-e^{-v\hbar\partial\theta}}
{1-e^{-\hbar\partial\theta}}
\log(\theta)\Bigr)=
\theta(\theta-\hbar)(\theta-2\hbar)\dots(\theta-(v-1)\hbar).
\end{equation}
Now we compute the right hand side. Applying the Taylor expansion formula we transform the right hand side to
\begin{equation}\label{eq:phishiftedrhs}
\sum_{r\ge0}\bigl(\partial_\theta^r\theta^v\bigm|_{\theta=0}\bigr)
[w^r](1-\hbar w/2)^{v-1}\bigl(\tfrac{1+\hbar w/2}{1-\hbar w/2}\bigr)^{\theta/\hbar}
=v![w^v](1-\hbar w/2)^{v-1}\bigl(\tfrac{1+\hbar w/2}{1-\hbar w/2}\bigr)^{\theta/\hbar}.
\end{equation}
Remark that all these manipulations make sense if $v$ is a nonnegative integer only.

The obtained expression can be further simplified by means of Lagrange inversion formula. Introduce a change of the $w$ parameter defined by
\begin{equation}
s(w)=\frac{w}{1-w\hbar/2}.
\end{equation}
Then, for this change we have
\begin{equation}
\begin{gathered}
\frac1{1-w\hbar/2}=\frac{w}{dw}\frac{ds(w)}{s(w)}=\frac{s(w)}{w},
\\\frac{1+w\hbar/2}{1-w\hbar/2}=1+s(w)\hbar.
\end{gathered}
\end{equation}
Applying Lagrange inversion formula, we represent~\eqref{eq:phishiftedrhs} as
\begin{equation}
v![w^v]\Bigl(\frac{w}{s(w)}\Bigr)^v\frac{w}{dw}\frac{ds(w)}{s(w)}(1+s(w)\,\hbar)^{\theta/\hbar}
=v![s^v](1+s\,\hbar)^{\theta/\hbar}=v!\hbar^v\binom{\theta/\hbar}{v}
%=\theta(\theta-\hbar)\dots(\theta-(v-1)\hbar)
\end{equation}
which agrees with~\eqref{eq:phishiftedlhs}. This completes the proof of Lemma~\ref{lem:phiidentity}, Lemma~\ref{lem:xXyY8}, Proposition~\ref{prop:MainSimplification}, Proposition~\ref{prop:simple-rec-mixed}, and Theorem~\ref{thm:newformula}.
\end{proof}

%%%%%%%%%%%%
%%%%%%%%%%%%
%%%%%%%%%%%%

\section{Algebraic properties of mixed correlation differentials}
\label{sec:proofs}

In this section we consider the set of mixed correlation differentials $\omega^{(g)}_{m,n}$ associated by Definition~\ref{def:mixedcorrdiff} to an arbitrary system of globally defined symmetric meromorphic $m$-dif\-ferentials $\omega^{(g)}_{m,0}$ satisfying Assumption~\eqref{eq:assumption-diagonal}. The goal of this section is to analyze some formal consequences of this definition (which are then used in the proof of the main theorem in Section~\ref{sec:TopologicalRecursion}). It is important to stress again that we have so far no further assumptions on the initial $m$-differentials $\omega^{(g)}_{m,0}$, in particular, we don't relate them to the topological recursion.

\subsection{Closed  formulas and diagonals}
\label{sec:diagonalsproofs}

 We provide two closed differential-algebraic formulas for all $\omega^{(g)}_{m,n}$.

\begin{proposition}\label{prop:omega-g-n-graphsXY} We have:
	\begin{align} \label{eq:omega-g-n-graphsXY}
		&\omega^{(g)}_{m,n}(z_{\set{m+n}}) \prod_{j=1}^m\Big(-\frac{x_{j}}{dx_{j}}\Big) \prod_{i=m+1}^{m+n}\Big(-\frac{y_i}{dy_i}\Big)
		%(z_{\llbracket n\rrbracket})\prod_{i=1}^n \Big(-\frac{y_i}{dy_i}\Big) %+\delta_{g,0}\delta_{n,2} \frac{y_1y_2}{(y_1-y_2)^2} 			
		%\\\nonumber &
		=
		\\ \notag &
		[\hbar^{2g}] \sum_{\Gamma} \frac{
			\hbar^{2g(\Gamma)}}{|\mathrm{Aut}(\Gamma)|} \prod_{i=m+1}^{m+n}
		\sum_{k_i=0}^\infty ( -y_i \partial_{y_i})^{k_i} [v_i^{k_i}]
		\\ \notag &
		\prod_{i=m+1}^{m+n}
		 \Big(-\frac{y_i}{dy_i}\frac{dx_i}{x_i} \Big)
		\sum_{r_i=0}^\infty \Big(\partial_{\theta_i} + \frac{v_i}{\theta_i}\Big)^{r_i}
		e^{v_i\frac{\cS(\hbar v_i \partial_{\theta_i})}{\cS(\hbar \partial_{\theta_i})} \log\theta_i - v_i\log\theta_i}\Big|_{\theta_i = - \frac{x_i}{dx_i}\omega^{(0)}_{1,0} (z_i)}[u_i^{r_i}]
		\\ \notag &
		\prod_{i=m+1}^{m+n}
		\frac{1}{ u_i \cS(\hbar u_i)} e^{u_i \cS(\hbar u_i x_i \partial_{x_i}) \sum_{\tilde g=0}^\infty \hbar^{2\tilde g} \big(-\frac{x_i}{dx_i}\omega^{(\tilde g)}_{1,0} (z_i)\big)-u_i \big(-\frac{x_i}{dx_i}\omega^{(0)}_{1,0} (z_i)\big)}
		\\ \notag &
		\prod_{e\in E(\Gamma)} \prod_{\substack{j\in\llbracket |e| \rrbracket\\ e(j)\notin \llbracket m \rrbracket}}  \tilde u_j \cS(\hbar \tilde u_j \tilde x_j \partial_{\tilde x_j}) \sum_{\tilde g=0}^\infty \hbar^{2\tilde g}\tilde \omega^{(\tilde g)}_{|e|,0}(\tilde z_{\llbracket |e|\rrbracket})  \prod_{j=1}^{|e|} \Big(-\frac{\tilde x_j}{d\tilde x_j}\Big)\Big|_{(\tilde u_j, \tilde x_j) \to (u_{e(j)},x_{e(j)})}
		\\ \notag &
		+\delta_{(m,n),(0,1)}[\hbar^{2g}] \sum_{k=0}^\infty (-y_1\partial_{y_1})^k \Big( [v^{k+1}] e^{v\frac{\cS(\hbar v \partial_{\theta})}{\cS(\hbar \partial_{\theta})} \log\theta - v\log\theta}\Big|_{\theta= - \frac{x_1}{dx_1}\omega^{(0)}_{1,0} (z_1)} 
		\\ \notag & \qquad \qquad \qquad \qquad \qquad \qquad \qquad 
		(-y_1\partial_{y_1})\big(-\frac{x_1}{dx_1}\omega^{(0)}_{1,0} (z_1)\big) \Big)
		\\ \notag &
		+ \delta_{(g,m,n),(0,0,1)} \Big(-\frac{x_1}{dx_1}\omega^{(0)}_{1,0} (z_1)\Big),
	\end{align}  
	Here
	\begin{itemize}
		
		\item The sum is taken over all connected graphs $\Gamma$ with $n$ labeled vertices (with labels from $m+1$ to $m+n$), $m$ labeled leaves (i.e. special vertices of valence strictly equal to one; their labels are from $1$ to $m$) and multiedges of index $\geq 2$; cf. Section~\ref{sec:ExplFormula}. %, where the index of a multiedge is the number of its ``legs'' and we denote it by $|e|$.
		\item For convenience, for a given such graph, we also label all legs of every given multiedge $e$ from $1$ to $|e|$ in an arbitrary way.
		\item For a multiedge $e$ with index $|e|$  we control its attachment to the vertices by the associated map $e\colon \llbracket |e| \rrbracket \to %(1,\dots,1,2,\dots,2,\dots,n,\dots,n)
		\llbracket n \rrbracket
		$ that we denote also by $e$, abusing notation (so $e(j)$ is the label of the vertex to which the $j$-th leg of the multiedge $e$ is attached). Unlike the case of Equation~\eqref{eq:MainFormula}, these maps are no longer arbitrary here, due to the condition on the valence of leaves; but a given regular vertex can still be attached to multiple legs of a given multiedge. % Do note that this map can be an arbitrary map from $\llbracket |e| \rrbracket$ to $\llbracket m+ n \rrbracket$; in particular, it might not be injective, i.e. we allow a given multiedge to connect to a given vertex with several of its legs. 
		\item For a given multiedge $e$ with $|e|=2$ we define $\tilde \omega^{(0)}_{2,0}(\tilde x_1,\tilde x_2) :=  \omega^{(0)}_{2,0}(\tilde x_1,\tilde x_2) - \frac{d\tilde x_1d\tilde x_2}{(\tilde x_1-\tilde x_2)^2}$ if $e(1)=e(2)$, and $\tilde \omega^{(0)}_{2,0}(\tilde x_1,\tilde x_2) :=  \omega^{(0)}_{2,0}(\tilde x_1,\tilde x_2)$ otherwise. For all $(g,n)\not=(0,2)$ we simply have $\tilde \omega^{(g)}_{n,0} :=  \omega^{(g)}_{n,0}$.
		\item By $g(\Gamma)$ we denote the first Betti number of $\Gamma$.
		\item $|\mathrm{Aut}(\Gamma)|$ stands for the number of automorphisms of $\Gamma$.
		%\item By $[\hbar^{2g}]$ (respectively, $[v_i^{k_i}]$, $[u_i^{r_i}]$) we denote the operator that extracts the corresponding coefficient from the whole expression to the right of it, that is, $[x^m]\sum_{i=-\infty}^\infty a_ix^i \coloneqq a_m$. 
		%\item By $\restr{a}{b}$ we denote the operator of substitution $a\to b$, that is, $\restr{a}{b}f(a)=f(b)$ for any function $f$. It is a bit unusual notation (sometimes one uses $f(a)|_{a\to b}$ instead), but it is very convenient for our purpose to consider the substitution as an operator and thus apply from the left as all other operators involved in the formula. 
		%\item The function $\cS(z)$ is defined as
		%\begin{align}
		%	\cS(z)\coloneqq \frac {e^{z/2}-e^{-z/2}} z.
		%\end{align}
	\end{itemize}
	%where $\tilde \omega^{(0)}_{2,0}(\tilde x_1,\tilde x_2) =  \omega^{(0)}_{2,0}(\tilde x_1,\tilde x_2) - \frac{d\tilde x_1d\tilde x_2}{(\tilde x_1-\tilde x_2)^2}$ if $e(1)=e(2)$, and $\tilde \omega^{(0)}_{2,0}(\tilde x_1,\tilde x_2) =  \omega^{(0)}_{2,0}(\tilde x_1,\tilde x_2)$ otherwise. For all $(g,n)\not=(0,2)$ we simply have $\tilde \omega^{(g)}_{n,0} =  \omega^{(g)}_{n,0}$.
	%The sum is taken over all connected graphs $\Gamma$ with $n$ vertices (labeled by $m+1,\dots,m+n$), $m$ leaves (i.e. special vertices of valence strictly equal to one; they are labeled by $1,\dots,m$), and multiedges of index $\geq 2$ , where for a multiedge $e$ we control its attachment to the vertices and leaves by the associated map $e\colon \llbracket |e| \rrbracket \to \llbracket m+n \rrbracket$ that we denote also by $e$, abusing notation (unlike Equation~\eqref{eq:MainFormula}, these maps are no longer arbitrary here, due to the condition on leaves; but a given regular vertex can still be attached to multiple ``legs'' of a given multiedge). By $g(\Gamma)$ we denote the first Betti number of $\Gamma$.
\end{proposition}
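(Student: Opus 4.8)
The plan is to prove~\eqref{eq:omega-g-n-graphsXY} by induction on the number $n$ of ``$y$-arguments'', using the recursive Definition~\ref{def:mixedcorrdiff} --- equivalently the relation~\eqref{eq:WXtoomega}, or its simplified form~\eqref{eq:Wxtoomega} granted by Proposition~\ref{prop:simple-rec-mixed} --- as the single inductive step. Conceptually, the asserted formula is the $m$-leaf refinement of~\eqref{eq:MainFormula}: the $n$ labelled vertices are the $y$-arguments $z_{m+1},\dots,z_{m+n}$, the $m$ labelled valence-one leaves are the external $x$-arguments $z_1,\dots,z_m$, and the latter are mere spectators riding along untouched through the whole recursion. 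Setting $m=0$ removes the leaves and recovers~\eqref{eq:MainFormula} verbatim, so the argument is a direct enhancement of the iteration that produces~\eqref{eq:MainFormula} from~\eqref{eq:mainrecchange}. The base case $n=0$ reproduces the given inputs $\omega^{(g)}_{m,0}$: there are no vertex operators, and the sole connected graph is the single index-$m$ multiedge spanning the $m$ leaves and carrying $\omega^{(\tilde g)}_{m,0}$, with trivial automorphisms and vanishing first Betti number.

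For the inductive step I would read~\eqref{eq:WXtoomega} as the operation that adjoins one new vertex, the distinguished $y$-argument $z=z_{m+1}$, to the graphs already produced for the lower differentials. The key observation is that the ``outer'' operators in~\eqref{eq:WXtoomega} reproduce, factor by factor, the per-vertex block attached to that vertex in~\eqref{eq:omega-g-n-graphsXY}. Indeed, $\sum_{j\ge0}\partial_Y^j[v^j]$ is the factor $\sum_{k\ge0}(-y\partial_y)^{k}[v^{k}]$ since $\partial_Y=-y\partial_y$; the sum $\sum_{r\ge0}L_r(v,\Theta)[u^r]$ coincides with $\sum_{r\ge0}(\partial_\theta+v/\theta)^r e^{v(\frac{\cS(\hbar v\partial_\theta)}{\cS(\hbar\partial_\theta)}-1)\log\theta}\big|_{\theta=\Theta}[u^r]$, because $L_r=(\partial_\theta+v/\theta)^rL_0$ and $L_0$ carries precisely the exponent of~\eqref{eq:L0}; and the prefactor $\tfrac{1}{u\,\cS(u\hbar)}e^{-u\Theta}\cW^X_{m+1,n}$ yields the exponential block $\tfrac{1}{u\,\cS(u\hbar)}e^{\cT^X_{1,0}-u\Theta}$ of the vertex, the subtraction $-u\Theta$ cancelling the leading term $u\Theta$ of $\cT^X_{1,0}=u\Theta+O(\hbar^2)$.

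The genuinely combinatorial content sits in the partition sum defining $\cW^X_{m+1,n}$ in~\eqref{eq:cWexpXom}. Each factor $\cT^X_{|I_\alpha|+1,|J_\alpha|}(u;z_{I_\alpha};z;z_{J_\alpha})$ of~\eqref{eq:cTXom} attaches a connected piece $\omega^{(g')}_{|I_\alpha|+k,|J_\alpha|}$ --- already a graph sum with $|J_\alpha|$ vertices and $|I_\alpha|+k$ leaves by the inductive hypothesis --- onto the new vertex by collapsing its $k$ surplus $x$-legs to $z$ through $\restr{z_{\bar i}}{z}\cS(u\hbar\partial_{X_{\bar i}})$. I would show that this collapse converts every multiedge incident to a collapsed leaf into a multiedge incident to the new vertex, and that the dressing $\tilde u_j\cS(\hbar\tilde u_j\tilde x_j\partial_{\tilde x_j})$ applied in~\eqref{eq:omega-g-n-graphsXY} precisely to legs with $e(j)\notin\llbracket m\rrbracket$ is exactly the one supplied by $\cT^X$, whereas a leg landing on a spectator leaf ($e(j)\in\llbracket m\rrbracket$) is never collapsed and so receives no such operator; unwinding the induction to the base case leaves all multiedges carrying the inputs $\tilde\omega^{(\tilde g)}_{|e|,0}$, as required. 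The unordered partition together with the exponential prefactor $e^{\cT^X_{1,0}}$ is the standard connected-from-disconnected (inclusion--exclusion) device that both enforces connectedness of $\Gamma$ and generates the weight $|\mathrm{Aut}(\Gamma)|^{-1}$, while the $\tilde\omega^{(0)}_{2,0}$ versus $\omega^{(0)}_{2,0}$ prescription on same-vertex as opposed to different-vertex multiedges is inherited verbatim from the $\delta_{(g,m,k,n),(0,0,2,0)}$ correction in~\eqref{eq:cTXom}.

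The main obstacle I anticipate is not analytic but purely bookkeeping: reconciling the \emph{ordered} way the recursion introduces vertices (one at a time, with a distinguished $z=z_{m+1}$) with the \emph{unordered}, manifestly symmetric graph sum, and in particular pinning down the automorphism weights when a multiedge attaches several of its legs to one and the same new vertex, so that the arbitrary auxiliary labelling of legs, divided by the leg-relabelling symmetries, reconstitutes $|\mathrm{Aut}(\Gamma)|$. I would dispatch this exactly as in the derivation of~\eqref{eq:MainFormula}, to which the statement specialises at $m=0$; the only genuinely new feature is the family of $m$ valence-one leaves, and since these are never acted upon by the recursion's operators they contribute only the stated restriction that the attachment maps $e$ send exactly one leg to each leaf. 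Finally, the $\delta$-terms of~\eqref{eq:omega-g-n-graphsXY} are accounted for by tracking the unstable $(g,m,n)=(0,0,1)$ contribution of the recursion base separately, just as the corresponding terms appear on the right-hand side of~\eqref{eq:WXtoomega}.
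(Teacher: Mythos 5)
Your proposal is correct and takes essentially the same route as the paper: the paper proves Proposition~\ref{prop:omega-g-n-graphsXY} precisely by iterative application of the recursion~\eqref{eq:mainrecchange} (equivalently~\eqref{eq:WXtoomega}), converting one `$dy$-argument' into `$dx$-arguments' per step while the remaining arguments ride along untouched, which is exactly your induction on $n$ with the distinguished new vertex. The matching of per-vertex operator blocks, the connectedness/automorphism bookkeeping from the partition sum, and the separate treatment of the unstable $\delta$-terms that you outline are the same assembly the paper invokes (by reference to its proof of Theorem~\ref{thm:newformula}).
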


\begin{proposition}\label{prop:omega-g-n-graphsxy} We have:
	\begin{align} \label{eq:omega-g-n-graphsxy}
		&\omega^{(g)}_{m,n}(z_{\set{m+n}}) \prod_{j=1}^m \frac{1}{dx_{j}} \prod_{i=m+1}^{m+n} \frac{1}{dy_i} %+\delta_{g,0}\delta_{n,2} \frac{y_1y_2}{(y_1-y_2)^2}
		%\\ \nonumber &
		=
		%\\ \notag &
		(-1)^n
		\coeff \hbar {2g} \sum_{\Gamma} \frac{\hbar^{2g(\Gamma)}}{|\mathrm{Aut}(\Gamma)|} \prod_{i=m+1}^{m+n}
		\sum_{k_i=0}^\infty  \partial_{y_i}^{k_i} [w_i^{k_i}] \frac{dx_i}{dy_i}
		\\ \notag &
		\frac{1}{w_i} e^{ w_i \cS(\hbar w_i \partial_{x_i}) \sum_{\tilde g=0}^\infty \hbar^{2\tilde g} \frac{1}{dx_i}\omega^{(\tilde g)}_{1,0} (z_i)-w_i \frac{1}{dx_i}\omega^{(0)}_{1,0} (z_i)}
		\\ \notag &
		\prod_{e\in E(\Gamma)} \prod_{\substack{j\in\llbracket |e| \rrbracket\\ e(j)\notin \llbracket m \rrbracket}} \tilde w_j \cS(\hbar \tilde w_j  \partial_{\tilde x_j}) \sum_{\tilde g=0}^\infty \hbar^{2\tilde g}\tilde \omega^{(\tilde g)}_{|e|,0}(\tilde z_{\llbracket |e|\rrbracket})  \prod_{j=1}^{|e|} \frac{1}{d\tilde x_j}\Big|_{(\tilde w_j, \tilde x_j) \to (w_{e(j)},x_{e(j)})}
		\\ \notag &
		+\delta_{(g,m,n),(0,0,1)} (-x_1).		
	\end{align}
\end{proposition}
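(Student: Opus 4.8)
The plan is to establish \eqref{eq:omega-g-n-graphsxy} by induction on the number $n$ of $dy$-arguments, using the alternative recursion \eqref{eq:Wxtoomega} (which holds for the differentials of Definition~\ref{def:mixedcorrdiff} by Proposition~\ref{prop:simple-rec-mixed}) as the inductive engine. The induction runs over $n\ge1$: the $n=0$ data $\omega^{(g)}_{m,0}$ are the given inputs and enter the formula only as the multiedge decorations $\tilde\omega^{(\tilde g)}_{|e|,0}$, so no separate graph identity at $n=0$ is needed. The base case $n=1$ is the single application of \eqref{eq:Wxtoomega} with $N=\varnothing$ to the inputs: there one reads off graphs with a single $dy$-vertex $z=z_{m+1}$, the $m$ leaves $z_{\set m}$, and multiedges each carrying some input $\omega^{(\tilde g)}_{k,0}$ and having at least one leg at $z$. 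The exceptional low cases are matched against the $\delta$-terms of \eqref{eq:omega-g-n-graphsxy} by direct inspection, using $\omega^{(0)}_{1,0}=-y\,dx$ and $\omega^{(0)}_{0,1}=-x\,dy$ (so that $\omega^{(0)}_{0,1}/dy_1=-x_1$, matching $\delta_{(g,m,n),(0,0,1)}(-x_1)$).

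For the inductive step I would single out the new $dy$-vertex $z=z_{m+1}$ on the left of \eqref{eq:Wxtoomega} and substitute the inductive form \eqref{eq:omega-g-n-graphsxy} for every mixed differential $\omega^{(g')}_{m+k,n}$ occurring inside $\cW^{x,(g)}_{m+1,n}$ via the functions $\cT^x$ of \eqref{eq:cT}. The decisive point is the following dictionary between the analytic ingredients of \eqref{eq:Wxtoomega}--\eqref{eq:cW} and the combinatorics of the target graph $\Gamma$ (with vertices $m+1,\dots,m+n+1$ and leaves $1,\dots,m$): (i) the prefactor $-\sum_r\partial_y^r[w^r]\tfrac{dx}{dy}\,e^{wy}\,\tfrac1w\,e^{\cT^x_{1,0}(w;z)}$ is the local operator attached to $z$ --- the $k=1$ part of $e^{\cT^x_{1,0}}$ together with $e^{wy}=e^{-w\,\omega^{(0)}_{1,0}/dx}$ reproduces \emph{verbatim} the vertex exponential $\exp\!\big(w\cS(\hbar w\partial_x)\sum_{\tilde g}\hbar^{2\tilde g}\omega^{(\tilde g)}_{1,0}/dx-w\,\omega^{(0)}_{1,0}/dx\big)$ of \eqref{eq:omega-g-n-graphsxy}, while $\tfrac1w$, $\tfrac{dx}{dy}$ and $\sum_r\partial_y^r[w^r]$ match the remaining vertex factors; (ii) the unordered partition $M\cup N=\sqcup_\alpha K_\alpha$ of \eqref{eq:cW} corresponds to the decomposition of $\Gamma\setminus\{z\}$ into connected components; and (iii) each factor $\cT^x_{|I_\alpha|+1,|J_\alpha|}$, through its internal sum over the number $k_\alpha\ge1$ of copies pushed to the diagonal $z_{\bar i}\to z$, produces the multiedges joining that component to $z$, with the $k\ge2$ part of $e^{\cT^x_{1,0}}$ supplying the self-multiedges based at $z$.

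The heart of the argument is step (iii). In the inductive graph for $\omega^{(g')}_{m+k,n}$ each collapsed argument $z_{\bar i}$ is a leaf, hence a single leg $j$ of a multiedge with $e(j)=\bar i\in\set{m+k}$, contributing only the bare factor $(d\tilde x_j)^{-1}$ and no $\cS$-operator. Applying $\restr{z_{\bar i}}{z}w\,\cS(w\hbar\partial_{x_{\bar i}})$ converts this ``leaf leg'' into a ``$dy$-vertex leg'': it is exactly the operator $\tilde w_j\cS(\hbar\tilde w_j\partial_{\tilde x_j})$ followed by the substitution $(\tilde w_j,\tilde x_j)\to(w,x)$ with $x=x_{m+1}$, i.e.\ precisely the factor that \eqref{eq:omega-g-n-graphsxy} attaches to legs with $e(j)\notin\set m$. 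Thus collapsing a leaf onto $z$ reattaches its multiedge-leg to the new vertex, leaving the remaining $|I_\alpha|$ leaves and $|J_\alpha|$ vertices of the component intact; joining all components through $z$ produces a connected graph with exactly one more $dy$-vertex. The $\hbar$-grading then matches by a Betti-number count: every leg reattached to $z$ that lands on an already-connected component creates one independent cycle, contributing the factor $\hbar^{2}$ carried by the $\hbar^{2(k-1)}$ in $\cT^x$, which together with the $\hbar^{2\tilde g}$ of the edge decorations reproduces $\hbar^{2g(\Gamma)}$ after the extraction $[\hbar^{2g}]$.

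The main obstacle I anticipate is the bookkeeping of symmetry factors: I must verify that the local factorials $\tfrac1{k_\alpha!}$ inside each $\cT^x$, the unordered character of the partition $\sqcup_\alpha K_\alpha$ (equivalently the exponential reorganization in \eqref{eq:cW}), and the arbitrary leg-labeling convention combine to give exactly $1/|\mathrm{Aut}(\Gamma)|$, uniformly over all graphs --- including degenerate ones with repeated identical multiedges or a multiedge meeting a single vertex in several legs. This is the standard orbit-counting identity behind the passage from a vertex-by-vertex recursion to a symmetric graph sum, and I would settle it by showing that the stabilizer of a fixed leg-labeling inside $\mathrm{Aut}(\Gamma)$ is precisely the product of the local symmetry groups being divided out; the recursion's asymmetry (it singles out one $dy$-vertex) is harmless because summing over which block each structure belongs to restores full symmetry in $z_{m+1},\dots,z_{m+n+1}$. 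As a complementary and essentially equivalent route I would assume the already-stated Proposition~\ref{prop:omega-g-n-graphsXY} and derive \eqref{eq:omega-g-n-graphsxy} from \eqref{eq:omega-g-n-graphsXY}: the two graph sums range over the \emph{same} graphs with the \emph{same} edge decorations and automorphism weights, differing only in the $X$-flavored versus $x$-flavored \emph{vertex} operators, and the passage between them is exactly the per-vertex identity proved in Section~\ref{sec:proof-of-algebraic-equivalence-of-relations} (Proposition~\ref{prop:MainSimplification}, resting on Lemmas~\ref{lem:xXyY8} and~\ref{lem:phiidentity}) applied independently at each vertex with the edge structure untouched. This second route would serve as a cross-check that the two closed formulas are consistent.
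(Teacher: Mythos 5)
Your proposal is correct and follows essentially the same route as the paper: the paper's proof of Proposition~\ref{prop:omega-g-n-graphsxy} is precisely the iterative application of Equation~\eqref{eq:Wxtoomega}, converting one `$dy$-argument' into `$dx$-arguments' at a time until the result assembles into the graph sum, which is your induction on $n$ (your combinatorial dictionary, Betti-number count, and automorphism bookkeeping spell out details the paper leaves implicit). Your complementary route --- deriving \eqref{eq:omega-g-n-graphsxy} from \eqref{eq:omega-g-n-graphsXY} via Proposition~\ref{prop:simple-rec-mixed} --- is likewise the content of Remark~\ref{rem:equiv-mixed} in the paper.
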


\begin{proof}[Proof of Propositions~\ref{prop:omega-g-n-graphsXY} and~\ref{prop:omega-g-n-graphsxy}]
	They both follow directly via iterative application of Equations~\eqref{eq:mainrecchange} and~\eqref{eq:Wxtoomega}. Let us explain the way it works for Proposition~\ref{prop:omega-g-n-graphsXY}. The explanation is essentially the same we give in the proof of Theorem~\ref{thm:newformula}, but we repeat it here for completeness.  
	
	Abusing the language a little bit, we can say that Equation~\eqref{eq:omega-g-n-graphsXY} expresses a mixed correlator with $m$ `$dx$-arguments' and $n$ `$dy$-arguments' in terms of the correlators with only `$dx$-arguments'. On the other hand, Equation~\eqref{eq:mainrecchange} expresses the correlators with $m$ `$dx$-arguments' and $(n+1)$ `$dy$-arguments' in terms of the correlators with 
	\begin{itemize}
		\item either the only term with same Euler characteristic $2-2g-m-n-1$ but then with $(m+1)$ `$dx$-arguments' and $n$ `$dy$-arguments' (thus one special `$dx$-argument' on the right hand side is matched to the distinguished `$dy$-argument' specified on the left hand side of the formula),
		\item or a sum of product of correlators with strictly bigger Euler characteristic, with $(m+k)$ `$dx$-arguments', $k\geq 1$, and $n$ `$dy$-arguments' . Note also that the structure of Equation~\eqref{eq:mainrecchange} assumes that the first $m$ `$dx$-arguments' and all $n$ `$dy$-arguments' match the same arguments on the left hand side, and to the extra $k$ `$dx$-arguments' we first apply some differential operators and then specialize them to the diagonal, where they are matched to the the distinguished `$dy$-argument' specified on the left hand side of the formula. 
	\end{itemize} 
	From this description it is clear that iteratively applying Equation~\eqref{eq:mainrecchange} to the left hand side of Equation~\eqref{eq:omega-g-n-graphsXY}, we convert all `$dy$-arguments' into `$dx$-arguments' in a finite number of steps. Note also that once a`$dy$-argument' is replaces by a number of `$dx$-arguments' (restricted to the diagonal, with some differential operators applied), it remains intact at all the subsequent iterative applications of Equation~\eqref{eq:mainrecchange}. The result is assembled into a closed formula on the right hand side of  Equation~\eqref{eq:omega-g-n-graphsXY}.
\end{proof}

\begin{remark} \label{rem:equiv-mixed} Note that this also implies that Equations~\eqref{eq:omega-g-n-graphsXY} and~\eqref{eq:omega-g-n-graphsxy} are equivalent by Proposition~\ref{prop:simple-rec-mixed}.
\end{remark}

\begin{remark}
	Below, in Sections~\ref{sec:vertex-operators-standard-E} and~\ref{sec:Computations-vertex-new-form}, we clarify the source of Equations~\eqref{eq:omega-g-n-graphsXY} and~\eqref{eq:omega-g-n-graphsxy}, respectively, in terms of some computations with vacuum expectation values, which are quite similar to the presentation in Section~\ref{sec:MotivationFPS} of the motivation of Definition~\ref{def:mixedcorrdiff}.
\end{remark}

\begin{remark} \label{rem:Symmetry}
	Note that one of the direct corollaries of Propositions~\ref{prop:omega-g-n-graphsXY} and~\ref{prop:omega-g-n-graphsxy} is that $\omega^{(g)}_{m,n}$ is symmetric under the action of $\mathfrak{S}_m\times \mathfrak{S}_n$ permuting the first $m$ and the last $n$ variables, respectively.
\end{remark}

Let us consider one more property of $\omega^{(g)}_{m,n}$ that in full generality is not obvious neither from Definition~\ref{def:mixedcorrdiff} nor from its simplified version given by Equation~\eqref{eq:Wxtoomega}, nor from the explicit formulas~\eqref{eq:omega-g-n-graphsXY},~\eqref{eq:omega-g-n-graphsxy}.

\begin{proposition} \label{prop:regular-at-diagonals} The differentials $\omega^{(g)}_{m,n}(z_{\set{m+n}})$ are regular at the diagonals $z_i=z_j$ for $i,j\in \set{m}$ and for $i,j\in \set{m+n}\setminus \set{m}$ for $(g,m,n)\not=(0,2,0)$ and $(g,m,n)\not=(0,0,2)$. In the latter two cases
	\begin{align}
		\omega^{(0)}_{2,0}(z_{\set{2}})-\frac{dx(z_1)dx(z_2)}{(x(z_1)-x(z_2))^2} \qquad \text{and} \qquad \omega^{(0)}_{0,2}(z_{\set{2}})-\frac{dy(z_1)dy(z_2)}{(y(z_1)-y(z_2))^2}
	\end{align}
	are regular at the diagonal.
\end{proposition}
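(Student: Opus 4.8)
The plan is to treat the two kinds of diagonals separately, since the closed formulas \eqref{eq:omega-g-n-graphsXY} and \eqref{eq:omega-g-n-graphsxy} are not symmetric in the two groups of arguments: both realize the last $n$ variables as \emph{vertices} of the summation graphs and the first $m$ variables as univalent \emph{leaves}. Consequently, regularity along the diagonals of the first group ($i,j\le m$) is essentially manifest, while regularity along the diagonals of the second group ($i,j>m$) is a genuine cancellation that I would establish through the Fock-space description. Throughout, the claim is about poles of the meromorphic differential $\omega^{(g)}_{m,n}$ along the divisor $\{z_i=z_j\}$; multiplying the bracketed function in \eqref{eq:omega-g-n-graphsxy} by the holomorphic factors $\prod dx_i\prod dy_i$ cannot create poles, and the generic point of such a diagonal lies away from the zeros of $dx$ and $dy$, so it is equivalent to control the poles of the right-hand side of \eqref{eq:omega-g-n-graphsxy}.

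\emph{Diagonals among the first $m$ arguments.} Here I would argue directly from \eqref{eq:omega-g-n-graphsxy}. The dependence of the right-hand side on the coordinate $x_j$ of a leaf $j\le m$ enters only through the unique multiedge $e\ni j$, via the factor $\tilde\omega^{(\tilde g)}_{|e|,0}$ with one of its arguments specialized to $z_j$ (a leaf carries no differential operator, since the operator $\tilde w\,\cS(\hbar\tilde w\partial_{\tilde x})$ is applied only to legs with $e(j)\notin\llbracket m\rrbracket$). Hence a pole at $x_i=x_j$ for two leaves $i,j$ can appear only when $i$ and $j$ are legs of one and the same multiedge $e$. If $|e|\ge 3$, or $|e|=2$ with $(\tilde g,|e|)\neq(0,2)$, Assumption~\eqref{eq:assumption-diagonal} on the input forms gives regularity at the corresponding diagonal. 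The only remaining possibility, $|e|=2$ with both legs attached to leaves, would make $\{i,j,e\}$ a connected component containing no vertex, which is impossible for a connected graph whenever $n\ge 1$. For $n=0$ the differential $\omega^{(g)}_{m,0}$ is the input itself, and the claim, including the exceptional case $(g,m,n)=(0,2,0)$, is exactly Assumption~\eqref{eq:assumption-diagonal}. This disposes of the first group.

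\emph{Diagonals among the last $n$ arguments.} This is the substantive case. The graph formula here does display, for instance, a double pole from a multiedge of index two joining two \emph{distinct} vertices $i,j>m$ and carrying $\omega^{(0)}_{2,0}$ (no Bergman subtraction is made in that case, by the convention following \eqref{eq:omega-g-n-graphsxy}), together with further, non-manifest poles produced by the operators $\partial_{y_i},\partial_{y_j}$; these must combine into a regular function for all $(g,m,n)\neq(0,0,2)$. I would prove this by reducing to the formal-power-series / VEV setting of Section~\ref{sec:MotivationFPS}, as in Section~\ref{sec:proofs1}: the existence of a point $P\in S$ at which $y$ has a simple pole and $x$ is a local coordinate is an open condition, so by the deformation argument it suffices to establish regularity when $\omega^{(g)}_{m,n}$ is given by the connected vacuum expectation \eqref{eq:MixedFirstDef}. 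In that model the entire $y_i$-dependence sits in the single current $\J(u_i\hbar,y_i^{-1})=\cD^{-1}J(y_i^{-1})\cD$, and a singularity of the connected correlator at $y_i=y_j$ can only be produced by the pairwise contraction of $\J(u_i\hbar,y_i^{-1})$ with $\J(u_j\hbar,y_j^{-1})$. The required input is the explicit two-point function of these $\cD$-dressed currents computed in Section~\ref{sec:vertex-operators-standard-E} by the methods of~\cite{BDKS-OrlovScherbin,BDKS-FullySimple}: its only pole at $y_i=y_j$ is of second order, with principal part the Bergman kernel $\tfrac{dy_idy_j}{(y_i-y_j)^2}$ written in the $y$-coordinate. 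Finally, connectedness of $\VEVc{\cdots}$ forces any term other than the bare contraction to route the $y_i$- and $y_j$-dependence through distinct contractions with $Z$, which is regular at $y_i=y_j$; thus the bare contraction survives only for $(g,m,n)=(0,0,2)$, reproducing exactly the asserted regularity of $\omega^{(0)}_{0,2}-\tfrac{dy_1dy_2}{(y_1-y_2)^2}$, while all other $(g,m,n)$ give no pole. Transporting this back through the deformation argument yields the statement for arbitrary meromorphic data. I note that the same contraction argument, applied to the $x$-currents to the left of $\cD^{-1}$, reproves the first case as well, the combinatorial argument above being its manifest shadow.

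\emph{Main obstacle.} The hard point is the cancellation along the $y$-diagonals: in the graph formula the double pole from an $\omega^{(0)}_{2,0}$-edge between two vertices is accompanied by hidden poles from the $\partial_y$-dressing, and showing that these assemble into a holomorphic function (away from $(g,m,n)=(0,0,2)$) is precisely what the Fock-space computation of the dressed two-point function, together with the inclusion--exclusion defining $\VEVc{\cdots}$, provides. A secondary technical point is the deformation step: one must verify that the universal differential-algebraic nature of \eqref{eq:Wxtoomega} makes the principal part along the diagonal itself a differential-algebraic expression in the input, so that its vanishing on the open locus where the local model at $P$ exists forces its vanishing for all admissible $(S,x,y)$.
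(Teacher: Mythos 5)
Your proposal is correct and is essentially the paper's own proof: regularity along the diagonals among the first $m$ variables is read off directly from the graph formulas exactly as you do, while for the diagonals among the last $n$ variables the paper likewise reduces, by the open-condition/deformation argument (Steps (a)--(e), with Step (e) handled analytically via persistence of poles in a family holomorphic in the deformation parameter), to the Fock-space model, where connectedness of $\VEVc{\cdots}$ kills every term containing the bare current--current contraction except in the case $(g,m,n)=(0,0,2)$. The one place where the paper is cleaner is your appeal to ``the explicit two-point function of the $\cD$-dressed currents'' as required input (its diagonal behavior is essentially the $(0,0,2)$ case you are proving, so as stated this is mildly circular): the paper instead uses $\covac\cD=\covac$ to rewrite $W_{m,n}=\hbar^{2-m-n}\VEVc{\big(\prod_{i=1}^m\cD J(x_i)\cD^{-1}\big)\prod_{i=m+1}^{m+n}J(y_i^{-1})\,(\cD Z)}$, so that the $y$-currents are undressed, linear in the bosons, and the only possible diagonal singularity is the c-number contraction $\VEVc{J(y_1^{-1})J(y_2^{-1})}=y_1^{-1}y_2^{-1}/(y_1^{-1}-y_2^{-1})^2$, which survives only at $[\hbar^0]$ in $W_{0,2}$.
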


See Section \ref{S.Proofs} for the proof.

\begin{remark} The regularity at the diagonals $z_i=z_j$ for $i,j\in\set{m}$ is a part of the assumptions for $\omega^{(g)}_{m,0}$ in the case $n=0$, and for general $n$ it follows directly from Equations~\eqref{eq:omega-g-n-graphsXY}, \eqref{eq:omega-g-n-graphsxy}. Then the non-trivial part of this Proposition is that there are no poles at the diagonals $z_i=z_j$  for $i,j\in \set{m+n}\setminus \set{m}$.
\end{remark} 

%%%%%%%%%%%%
%%%%%%%%%%%%
%%%%%%%%%%%%

\subsection{Differential-algebraic identities for mixed correlation differentials}
Here we formulate a number of formal consequences of the defining relations for mixed correlation differentials. We use them subsequently for the analysis of the mixed correlation differentials.

We use the same conventions / notation as in Section~\ref{sec:MixedDefinition}: $M=\set{m}$ and $N=\set{m+n+1}\setminus \set{m+1}$; $z_{\set{m+n+1}}$ are variables on $S^{m+n+1}$, $x_i\coloneqq x(z_i)$, $y_i\coloneqq y(z_i)$, and for $i=m+1$ we denote $z=z_{m+1}$, $x=x_{m+1}$ and $y=y_{m+1}$. Recall~\eqref{eq:cT}--\eqref{eq:cW} for comparison. We use the symmetry property explained in Remark~\ref{rem:Symmetry} and regularity at the diagonals stated in Proposition~\ref{prop:regular-at-diagonals} to define
\begin{align}\label{eq:cTy}
	\cT^y_{m,n+1}(w;z_{M};z;z_{N}) & \coloneqq
	\sum_{k=1}^\infty\frac{\hbar^{2(k-1)}w^k}{k!}\left(\prod_{i=1}^k
	\big\lfloor_{z_{\bar i}\to z}\cS(w\hbar \partial_{y_{\bar i}})\right)\\ \notag & \quad
	\sum_{g=0}^\infty\hbar^{2g}\frac{
		\omega^{(g)}_{m,k+n}(z_{M};z_{\{\bar1,\dots,\bar k\}},z_{N})-
		\delta_{(g,m,k,n),(0,0,2,0)}\frac{dy_{\bar1}dy_{\bar2}}{(y_{\bar1}-y_{\bar2})^2}}
	{\prod_{i\in M} dx_i \prod_{i=1}^k dy_{\bar i}\prod_{i\in N} dy_i},
	\\ \label{eq:cWy}
	\cW^y_{m,n+1}(w;z)&=\sum_{g=0}^\infty\hbar^{2g}\cW^{y,(g)}_{m,n+1}(w,z)\coloneqq\\ \nonumber
	 &  \frac{1}{w}~e^{\cT^y_{0,1}(w;z)}\!\!\!\!\!\!
	\sum_{\substack{M\cup N=\sqcup_\alpha K_\alpha~
			K_\alpha\ne\varnothing\\~I_\alpha=K_\alpha\cap M,~J_\alpha=K_\alpha\cap N}}
	\prod_{\alpha}\cT^y_{|I_\alpha|,|J_\alpha|+1}(w;z_{I_\alpha};z;z_{J_\alpha}).
\end{align}

\begin{proposition} \label{prop:xy-relations} For $(g,m,n)\neq(0,0,0)$ we have:
	\begin{equation}\label{eq:Wytoomega}
		\frac{\omega^{(g)}_{m+1,n}}{\prod_{i \in M} dx_i \cdot dx \cdot \prod_{i\in N} dy_i}=-\sum_{r\ge0} \partial_x^r
		%\bigl(
		[w^r] \frac{dy}{dx} e^{w\,x}\cW^{y,(g)}_{m,n+1}
		%\Bigr)
		.
	\end{equation}
	Moreover, relations~\eqref{eq:Wxtoomega} and~\eqref{eq:Wytoomega} possess the following ``parametric'' generalization:
	\begin{align}\label{eq:WxtoW}
		\cW^{y,(g)}_{m,n+1}(\tilde w;z)&=
		-\sum_{r\ge0} \partial_y^r\Bigl(e^{-\tilde w\,x}[w^r]\frac{dx}{dy} e^{w\,y}\cW^{x,(g)}_{m+1,n}(w;z)\Bigr);\\
		\label{eq:WytoW}
		\cW^{x,(g)}_{m+1,n}(w;z)&=
		-\sum_{r\ge0} \partial_x^r\Bigl(e^{-w\,y}[{\tilde w}^r]\frac{dy}{dx} e^{\tilde w\,x}\cW^{y,(g)}_{m,n+1}(\tilde w;z)\Bigr).
	\end{align}
\end{proposition}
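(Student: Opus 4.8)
The plan is to deduce Proposition~\ref{prop:xy-relations} from its formal-power-series counterpart, Proposition~\ref{prop:XY-relations}, which I would establish directly in the vertex-operator formalism, and then to strip away the auxiliary assumptions of that formalism by a locality-plus-deformation argument. Before anything else I note three preliminary reductions. First, the definitions~\eqref{eq:cTy} and~\eqref{eq:cWy} of $\cT^y$ and $\cW^{y,(g)}_{m,n+1}$ already presuppose that the relevant differentials extend holomorphically across the diagonals $z_i=z_j$ with both indices exceeding $m$; hence Proposition~\ref{prop:regular-at-diagonals} must be in hand before the statement is even meaningful. Second, the three displayed identities are not independent: the parametric relations~\eqref{eq:WxtoW} and~\eqref{eq:WytoW} contain~\eqref{eq:Wxtoomega} and~\eqref{eq:Wytoomega} as their constant terms in the spectral parameter. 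Indeed, one checks from~\eqref{eq:cW} that $[w^0]\cW^{x,(g)}_{m+1,n}(w;z)$ reproduces $\omega^{(g)}_{m+1,n}/\bigl(\prod_{i\in M}dx_i\cdot dx\cdot\prod_{i\in N}dy_i\bigr)$, so extracting $[\tilde w^0]$ from~\eqref{eq:WxtoW} returns~\eqref{eq:Wxtoomega} and extracting $[w^0]$ from~\eqref{eq:WytoW} returns~\eqref{eq:Wytoomega}. Since~\eqref{eq:Wxtoomega} is already available through Proposition~\ref{prop:simple-rec-mixed}, it provides a consistency check, and the only genuinely new consequence, \eqref{eq:Wytoomega}, drops out of~\eqref{eq:WytoW}. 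Thus it suffices to prove the two parametric identities.

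The key reduction is that every term in~\eqref{eq:WxtoW}, \eqref{eq:WytoW} is a differential-algebraic expression in the input differentials $\omega^{(h)}_{m,0}$, the functions $x,y$, and their derivatives; for a fixed power of $\hbar$ and of the parameters $w,\tilde w$ these are finite sums, so each identity is an equality of globally defined meromorphic differentials on $S^{m+n+1}$. Two meromorphic differentials that agree on a nonempty open set agree everywhere, so it is enough to verify the identities in a formal neighbourhood of a generic point, i.e.\ as identities of Laurent series. This is exactly the setting of Section~\ref{sec:MotivationFPS}: under the normalization that there is a point $P\in S$ at which $x$ has a simple zero and $y$ a simple pole with $xy|_P=1$ (equivalently, a point where $y$ has a simple pole and $dx\neq0$), the mixed differentials of Definition~\ref{def:mixedcorrdiff} coincide with the vacuum-expectation expressions~\eqref{eq:MixedFirstDef}, and the relations to be proved become precisely the content of Proposition~\ref{prop:XY-relations}.

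I would then prove Proposition~\ref{prop:XY-relations} inside the bosonic Fock space. The mechanism is the conjugation $\J(\hbar,z)=\cD^{-1}J(z)\cD$ from~\eqref{JJdef}: in~\eqref{eq:MixedFirstDef} the $x$-currents are dressed by $\cD^{-1}$ and the $y$-currents by $\cD$, so the whole expectation value is symmetric in $x$ and $y$ up to the placement of these operators. The transforms on the right-hand sides of~\eqref{eq:WxtoW}, \eqref{eq:WytoW}, namely $-\sum_{r\ge0}\partial_y^r\bigl(e^{-\tilde w\,x}\tfrac{dx}{dy}[w^r]e^{w\,y}(\,\cdot\,)\bigr)$ and its $x\leftrightarrow y$ mirror, are exactly the Laplace-type kernels that convert an insertion of a dressed $x$-current into an insertion of a dressed $y$-current and back; they are the kernels already seen in~\eqref{eq:mainrecchange} and in Proposition~\ref{prop:MainSimplification}, now carried with a free spectral parameter. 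Establishing the two relations amounts to the contraction/commutation identities for the currents $\J$ together with the change-of-variable computations of the type performed in Lemmas~\ref{lem:xXyY9} and~\ref{lem:phiidentity}, applied with $w,\tilde w$ retained; this is where the techniques of \cite[Sections~4.7,~4.8]{BDKS-FullySimple} enter.

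Finally, I would remove the normalization assumption by deformation. The condition that $S$ carry a point where $y$ has a simple pole and $dx\neq0$ is open, and for curves satisfying it the identities are now known; moreover both sides of each identity are meromorphic differentials depending analytically on the spectral-curve data within an analytic family, so an identity valid on an open dense subset of the parameter space persists on the whole family by analytic continuation. Embedding an arbitrary admissible $(S,x,y,B)$ into such a family then yields the identities in full generality. The main obstacle is the Fock-space computation of Proposition~\ref{prop:XY-relations}: unlike the single-step simplification of Proposition~\ref{prop:MainSimplification}, one must here control the mutual transforms between the $x$- and $y$-generating functions while keeping $w$ and $\tilde w$ free, and verify order by order in $\hbar$ that the two composite transforms are genuinely inverse to each other. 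The deformation step, by contrast, is routine once the analytic dependence of all ingredients in families is recorded.
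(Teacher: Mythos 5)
Your overall strategy coincides with the paper's: reduce to the two parametric identities (the non-parametric ones being their $\tilde w^0$ and $w^0$ coefficients, as you correctly note), verify them as formal expansions near a normalized point where the Fock-space formalism applies, and then remove the normalization by an openness/deformation argument. This is exactly the paper's Steps (a)--(e), with the operator computations carried out in Sections~\ref{sec:vertex-operators-standard-E} and~\ref{sec:Computations-vertex-new-form}; your observation that Proposition~\ref{prop:regular-at-diagonals} must be in hand first also matches the paper.

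However, two steps are genuinely missing. First, Proposition~\ref{prop:XY-relations} is \emph{not} the ``formal-power-series counterpart'' of Proposition~\ref{prop:xy-relations}: both are statements about global meromorphic differentials, and they involve different kernels (the capital-letter relations are written in terms of $\partial_X=-x\partial_x$, $\partial_Y=-y\partial_y$ and $L_r(v,\Theta)$, the lowercase ones in terms of $\partial_x$, $\partial_y$, $e^{w\,y}$, $e^{\tilde w\,x}$). Proving~\eqref{eq:WXtoW}--\eqref{eq:WYtoW} in the Fock space does not by itself yield~\eqref{eq:WxtoW}--\eqref{eq:WytoW}; one needs either the equivalence of the two families of relations (the paper's Proposition~\ref{prop:equiv}, proved via Lemma~\ref{lem:Wx-WX} and the computation behind Proposition~\ref{prop:MainSimplification}), or a separate vertex-operator computation with the modified operator $\tilde\cE$ (the paper's Section~\ref{sec:Computations-vertex-new-form}). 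Your paragraph on ``Laplace-type kernels'' gestures at the latter but never commits to either route, so the bridge from what you prove to what is claimed remains open. Second, your parenthetical ``(equivalently, a point where $y$ has a simple pole and $dx\neq0$)'' conceals the paper's Steps (c) and (d): the VEV encoding requires exactly $x(P)=0$ and $xy|_P=1$, and reducing a general simple pole of $y$ to this normal form requires checking that the defining recursion for the mixed differentials and the identities themselves are covariant under the rescaling $y\mapsto c^{-1}y$, $\omega^{(g)}_{m,0}\mapsto c^{2-2g-m}\omega^{(g)}_{m,0}$, and invariant under the shift $x\mapsto x-d$; the shift-invariance is not visible from the defining Equation~\eqref{eq:WXtoomega} and itself rests on Proposition~\ref{prop:simple-rec-mixed}. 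Without these covariance checks your deformation argument only reduces the general case to the ``simple pole, $dx\neq0$'' case, not to the case you actually proved. (A smaller point: your global ``embed into an analytic family'' step needs an existence argument for global meromorphic deformations of $y$; the paper's local deformation $y(z,\epsilon)=(z^\ell-\epsilon)^{-1}$, phrased as vanishing of a rational expression on a polydisk, is designed precisely to avoid this.)
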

See Section \ref{S.Proofs} for the proof.
A few remarks are in order.

\begin{remark}
	Recall that for $(g,m,n)=(0,0,0)$ we, \emph{by definition}, have $\omega^{(0)}_{1,0}=-ydx$ (as well as $\omega^{(0)}_{0,1}=-xdy$).
\end{remark}

\begin{remark}
	Equation~\eqref{eq:Wytoomega} expresses $\omega^{(g)}_{m+1,n}$ in terms of $\omega^{(g')}_{m',n'}$ with $m'<m+1$. This expression is $x-y$ swap dual to the one given by Equation~\eqref{eq:Wxtoomega}, which is a manifestation of the symmetry between $x$ and $y$ that we are chasing in this paper.
\end{remark}

\begin{remark} Equations~\eqref{eq:WxtoW} and~\eqref{eq:WytoW} are also $x-y$ swap dual to each other. They will be used later in this paper to produce the so-called loop equations (see Section~\ref{sec:RecollectionTR}).
\end{remark}

\begin{remark} Equations~\eqref{eq:WxtoW} and~\eqref{eq:WytoW} are understood as equalities of the coefficients of each particular power of the parameters $\tilde w$ and $w$, respectively. In particular,~\eqref{eq:Wxtoomega} and~\eqref{eq:Wytoomega} can be recovered as the coefficients of $\tilde w^0$ and $w^0$ in~\eqref{eq:WxtoW} and~\eqref{eq:WytoW}, respectively.
	
	Note that~\eqref{eq:WxtoW} and~\eqref{eq:WytoW} can be represented in the following more symmetric form:
	\begin{align}
		e^{\tilde w\,x}\cW^{y,(g)}_{m,n+1}(\tilde w;z)&
		=-\sum_{r\ge0} \bigl(\partial_y-\tilde w\tfrac{dx}{dy}\bigr)^r\Bigl([w^r]\frac{dx}{dy}e^{w\,y}\cW^{x,(g)}_{m+1,n}(w;z)\Bigr);\\
		e^{w\,y}\cW^{x,(g)}_{m+1,n}(w;z)&
		=-\sum_{r\ge0} \bigl(\partial_x-w\tfrac{dy}{dx}\bigr)^r\Bigl([{\tilde w}^r]\frac{dy}{dx}e^{\tilde w\,x}\cW^{y,(g)}_{m,n+1}(\tilde w;z)\Bigr).
	\end{align}
	The advantage of this form of equations is that both sides of the equality are now polynomial in the parameters $w$ and $\tilde w$, respectively. However, for applications to loop equations discussed below (see Section~\ref{sec:TopologicalRecursion}) the original  form~\eqref{eq:WxtoW} and~\eqref{eq:WytoW} is preferable.
\end{remark}

Recall the notation $dX = -dx/x$, $dY=-dy/y$, $\partial_X=-x\partial_x$, $\partial_Y=-y\partial_y$. Recall~\eqref{eq:cTXom}--\eqref{eq:cWexpXom} %~\eqref{eq:cTX}--\eqref{eq:cWX}
and let
\begin{align}\label{eq:cTY}
	\cT^Y_{m,n+1}(u;z_{M};z;z_{N})& \coloneqq
	\sum_{k=1}^\infty\frac{\hbar^{2(k-1)}u^k}{k!}\left(\prod_{i=1}^k
	\big\lfloor_{z_{\bar i}\to z}\cS(u\hbar \partial_{Y_{\bar i}})\right)\\ \notag & \quad
	\sum_{g=0}^\infty\hbar^{2g}\frac{
		\omega^{(g)}_{m,k+n}(z_{M};z_{\{\bar1,\dots,\bar k\}},z_{N})-
		\delta_{(g,m,k,n),(0,0,2,0)}\frac{dy_{\bar1}dy_{\bar2}}{(y_{\bar1}-y_{\bar2})^2}}
	{\prod_{i=1}^m dX_i \prod_{i=1}^k dY_{\bar i}\prod_{i=m+2}^{m+n+1} dY_i},
	\\
	\label{eq:cWexpY}
	\cW^Y_{m,n+1}(u;z)& \coloneqq \frac{~e^{\cT^Y_{0,1}(u;z)}}{u\,\cS(u\,\hbar)}
	\sum_{\substack{M\cup N=\sqcup_\alpha K_\alpha~
			K_\alpha\ne\varnothing\\~I_\alpha=K_\alpha\cap M,~J_\alpha=K_\alpha\cap N}}
	\prod_{\alpha}\cT^Y_{|I_\alpha|,|J_\alpha|+1}(u;z_{I_\alpha};z;z_{J_\alpha}).
\end{align}
Recall also~\eqref{eq:Theta}--\eqref{eq:Lr}.
%\begin{align}
%	\Theta(z)&=\frac{\omega^{(0)}_{1,0}}{dX}=\frac{\omega^{(0)}_{0,1}}{dY}=x(z)\,y(z),\\
%	L_0(v,\theta)&=e^{v\bigl(\frac{\cS(v\hbar\partial_\theta)}{\cS(\hbar\partial_\theta)}-1\bigr)\log\theta},\\
%	L_r(v,\theta)&=v^{-\theta}\partial_\theta^rv^{\theta}L_0(v,\theta)=(\partial_\theta+\tfrac{v}{\theta})^rL_0(v,\theta).
%\end{align}
Those definitions imply that the coefficient of $\hbar^{2g}$ in $L_r$ is a polynomial in~$v$ divided by $\theta^{2g+r}$. We also need a parametric generalization of the latter function:
\begin{equation}
	\tilde L_r(v,\theta,u)=(\partial_\theta+\tfrac{v}{\theta})^re^{u\,\theta}L_0(v,\theta).
\end{equation}

\begin{proposition} \label{prop:XY-relations} We have:
	\begin{align}\label{eq:WYtoomega}
		& \frac{\omega^{(g)}_{m+1,n}}{\prod_{i\in M} dX_i \cdot dX \cdot \prod_{i\in N} dY_i} =
		\\ \notag & [\hbar^{2g}]
		\sum_{j\ge0} \partial_X^j[v^j]\Bigl(-\sum_{r\ge0}L_r(v,\Theta)[u^r]\frac{dY}{dX}e^{-u\,\Theta}\cW^{Y}_{m,n+1}(u;z)
		\\ \notag & \qquad \qquad \qquad \qquad
		+\delta_{m+n,0}\frac{L_0(v,\Theta)}{v}\frac{d\Theta}{dX}
		+\delta_{m+n,0} \Theta \Bigr).
	\end{align} 
	Moreover, relations~\eqref{eq:WXtoomega} and~\eqref{eq:WYtoomega} possess the following  ``parametric'' generalization:
	\begin{align}\label{eq:WXtoW}
		\cW^{Y}_{m,n+1}(\tilde u;z)& =
		\sum_{j\ge0} \partial_Y^j [v^j]\Bigl(-\sum_{r\ge0}\tilde L_r(v,\Theta,\tilde u)[u^r]\frac{dX}{dY}e^{-u\,\Theta}\cW^{X}_{m+1,n}(u;z)
		\\ \notag & \qquad \qquad \qquad
		+\delta_{m+n,0}\frac{\tilde L_0(v,\Theta,\tilde u)}{v}\frac{d\Theta}{dY}+\delta_{m+n,0}\frac{e^{\tilde u \Theta}}{\tilde u}\Bigr),
		\\ \label{eq:WYtoW}
		\cW^{X}_{m+1,n}(u;z)& =
		\sum_{j\ge0} \partial_X^j[v^j]\Bigl(-\sum_{r\ge0}\tilde L_r(v,\Theta,u)[\tilde u^r]\frac{dY}{dX}e^{-\tilde u\,\Theta}\cW^{Y}_{m,n+1}(\tilde u;z)
		\\ \notag & \qquad \qquad \qquad
		+\delta_{m+n,0}\frac{\tilde L_0(v,\Theta,u)}{v}\frac{d\Theta}{dX}+\delta_{m+n,0}\frac{e^{u \Theta}}{u}\Bigr).
	\end{align}
\end{proposition}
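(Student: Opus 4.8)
The plan is to reduce Proposition~\ref{prop:XY-relations} to a vertex-operator computation in the formal setting and then globalize. First I would note that, just as in the remark following Proposition~\ref{prop:xy-relations}, the non-parametric identity~\eqref{eq:WYtoomega} is the coefficient of $u^0$ in the parametric identity~\eqref{eq:WYtoW}, while the defining relation~\eqref{eq:WXtoomega} of Definition~\ref{def:mixedcorrdiff} is the coefficient of $\tilde u^0$ in~\eqref{eq:WXtoW}; so it suffices to establish the two parametric identities~\eqref{eq:WXtoW} and~\eqref{eq:WYtoW}. For each fixed $(g,m,n)$ both sides of these are universal differential-algebraic expressions in the initial differentials $\omega^{(\tilde g)}_{\tilde m,0}$, built using only $x$, $y$, $dx$, $dy$ and the operators $\partial_x$, $\partial_y$. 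It is therefore enough to verify them after replacing every $\omega^{(\tilde g)}_{\tilde m,0}$ by its formal Laurent expansion at a point $P\in S$ at which $x$ has a simple zero, $y$ a simple pole, and $xy|_P=1$, whenever such a $P$ exists; the general meromorphic case then follows by deformation, since the existence of such a point is an open condition on the spectral-curve data and both sides of~\eqref{eq:WXtoW},~\eqref{eq:WYtoW} depend analytically on it.

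In the formal setting I would realize all mixed differentials $\omega^{(g)}_{m,n}$ as connected vacuum expectation values as in~\eqref{eq:MixedFirstDef}, in which the $dx$-arguments enter through currents $\J(\hbar,x_i)=\cD^{-1}J(x_i)\cD$ and the $dy$-arguments through currents $\J(\hbar,y_i^{-1})$, the operator $\cD$ being precisely what implements the $x-y$ swap. In this language the generating series $\cW^X_{m+1,n}(u)$ and $\cW^Y_{m,n+1}(u)$ correspond to the insertion of one additional $u$-graded current, attached to the distinguished argument $z=z_{m+1}$, on the $x$-side and on the $y$-side respectively. The parametric identities~\eqref{eq:WXtoW} and~\eqref{eq:WYtoW} then express the compatibility of this single insertion with the conjugation by $\cD$: they are the two ways of reading off the same contraction after commuting the extra current across $\cD$. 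This contraction is governed by the vertex-operator formulas of~\cite{BDKS-symplectic}, the same machinery from which the recursion of Definition~\ref{def:mixedcorrdiff} originates; the factors $L_r$, $\tilde L_r$ and the prefactors $e^{-u\,\Theta}$, $e^{u\,\Theta}$ arise directly from the $\cD$-conjugation of a single current, whose $\theta$-expansion is exactly the content of Lemma~\ref{lem:phiidentity}.

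The main obstacle I expect is twofold. On the computational side, one must carry out the contraction \emph{parametrically}, that is, keep the full dependence on the auxiliary variables $u$ and $\tilde u$, and match the normalizations $1/(u\,\cS(u\hbar))$ against $1/w$, the operators $\cS(u\hbar\partial_X)$ against $\cS(w\hbar\partial_x)$, and the $L_r$-towers against the $[w^r]e^{w\,y}$-structures; this bookkeeping is precisely the parametric refinement of Lemmas~\ref{lem:xXyY8}--\ref{lem:phiidentity} already used for Proposition~\ref{prop:MainSimplification}. On the analytic side, the delicate point is the globalization: one has to propagate the formal identity from the dense open locus to arbitrary meromorphic input, checking that the deformation introduces no spurious poles and that the singular correction to $\omega^{(0)}_{2,0}$ on the diagonal is handled consistently, for which Proposition~\ref{prop:regular-at-diagonals} is needed so that the series defining $\cW^Y_{m,n+1}$ in~\eqref{eq:cTy}--\eqref{eq:cWy} make sense. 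An alternative, perhaps more economical route is to deduce Proposition~\ref{prop:XY-relations} directly from the already available Proposition~\ref{prop:xy-relations} by transporting the $x,y$-coordinate identities~\eqref{eq:WxtoW},~\eqref{eq:WytoW} into $X,Y$-coordinates through the reparametrization $u=u(w)$ and the operator identities~\eqref{eq:xXyY6},~\eqref{eq:xXyY6y} established in the proof of Proposition~\ref{prop:MainSimplification}, applied simultaneously to the $x/X$ and $y/Y$ building blocks.
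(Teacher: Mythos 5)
Your proposal follows essentially the same path as the paper's proof: reduce the parametric identities~\eqref{eq:WXtoW},~\eqref{eq:WYtoW} to formal Laurent expansions at a normalized point ($x$ with a simple zero, $y$ with a simple pole, $xy|_P=1$), prove them there by a Fock-space contraction, and then globalize by deformation. In the paper the formal-setting computation is exactly your ``commuting the extra current across $\cD$'': the identities come from the conjugation formulas for $\cD^{-1}\cE(\tilde u\hbar,y^{-1})\cD$ and $\cD\,\cE(u\hbar,x)\,\cD^{-1}$ (special cases of \cite[Equation~(24)]{BDKS-OrlovScherbin}), which produce the $\tilde L_r$-towers and the $e^{\pm u\Theta}$ prefactors.

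Two caveats. First, a small slip: in~\eqref{eq:MixedFirstDef} the $dx$-arguments enter through the \emph{plain} currents $J(x_i)$ and only the $dy$-arguments through the conjugated currents $\J(\hbar,y_i^{-1})=\cD^{-1}J(y_i^{-1})\cD$; if both groups were conjugated, the two copies of $\cD$ would cancel and no swap structure would remain. Second, and more substantively, your globalization step --- ``the existence of such a $P$ is an open condition'' --- is too quick: a given spectral curve generically has \emph{no} such normalized point. The paper's Steps (c)--(e) manufacture one by first using invariance of the identities under rescaling and shifting of $x$ and $y$, and then deforming $y$ locally to $(z^\ell-\epsilon)^{-1}$ so that simple poles appear. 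The catch is that the capital-letter identities are \emph{not} manifestly shift-invariant (they involve $x\partial_x$, $dx/x$, $\Theta=xy$), so the shift step must route through the lowercase identities via Proposition~\ref{prop:equiv}. In other words, what you present as an ``alternative, more economical route'' (transport via $u=u(w)$ and the operator identities~\eqref{eq:xXyY6},~\eqref{eq:xXyY6y}) is not an alternative but a necessary ingredient of the deformation argument; this is precisely how the paper proceeds, and your proposal is complete only once the two routes are combined in that way.
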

See Section \ref{S.Proofs} for the proof.
The same remarks as the ones listed after Proposition~\ref{prop:xy-relations} are applied here as well.
%%%%%%%%%%%%
%%%%%%%%%%%%
%%%%%%%%%%%%

\subsection{Equivalences and proofs}\label{sec:proofs1} The general strategy of all proofs of Propositions~\ref{prop:regular-at-diagonals},~\ref{prop:xy-relations}, and~\ref{prop:XY-relations} is to reduce them via a sequence of steps to some identities for formal power series that can be explained in terms of the vertex operator formalism and certain vacuum expectation values.

\subsubsection{Equivalences} Before we start explaining the proof, we note that a straightforward generalization of Proposition~\ref{prop:simple-rec-mixed} is the following statement:

\begin{proposition} \label{prop:equiv} Equations~\eqref{eq:Wytoomega}, \eqref{eq:WxtoW}, \eqref{eq:WytoW} are equivalent to Equations~\eqref{eq:WYtoomega}, \eqref{eq:WXtoW}, \eqref{eq:WYtoW}, respectively.
\end{proposition}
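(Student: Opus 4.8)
The plan is to recycle wholesale the argument of Proposition~\ref{prop:simple-rec-mixed}, observing that — after unfolding the definitions of the building blocks $\cT$ and $\cW$ exactly as in Section~\ref{sec:proof-of-algebraic-equivalence-of-relations} — each of the three asserted equivalences is an instance of a single ``change-of-convention'' master identity of the type~\eqref{eq:xXyY}. Recall that the content of Proposition~\ref{prop:MainSimplification} is precisely that, for an arbitrary meromorphic $H(z_{\set k})$ regular on the diagonals, the uppercase presentation (built from $\partial_X=-x\partial_x$, $\partial_Y=-y\partial_y$, the $\cS$-dressings, and the kernel $L_r$) and the lowercase presentation (built from the plain $\partial_x$, $\partial_y$) of the same object coincide. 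The three equivalences of Proposition~\ref{prop:equiv} will follow from the $x\leftrightarrow y$ image of this identity together with its one-parameter enhancement, both established by the very same chain of lemmas.

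First I would dispatch \eqref{eq:Wytoomega}${}\equiv{}$\eqref{eq:WYtoomega}. Comparing \eqref{eq:Wytoomega} with \eqref{eq:Wxtoomega}, and \eqref{eq:WYtoomega} with \eqref{eq:WXtoomega}, one sees that the former pair is obtained from the latter by the relabelling $x\leftrightarrow y$, $(dX,\partial_X)\leftrightarrow(dY,\partial_Y)$, $\cW^x\leftrightarrow\cW^y$, $\cW^X\leftrightarrow\cW^Y$, the distinguished argument $z=z_{m+1}$ now being adjoined as an $x$-argument. Since the objects $\cT^y,\cW^y$ of \eqref{eq:cTy}, \eqref{eq:cWy} and $\cT^Y,\cW^Y$ of \eqref{eq:cTY}, \eqref{eq:cWexpY} are defined fully symmetrically to their $x$-counterparts — the symmetrisation being legitimate precisely by the diagonal regularity of Proposition~\ref{prop:regular-at-diagonals} — and since the underlying calculus identities \eqref{eq:xXyY6}, \eqref{eq:xXyY6y} are symmetric in form, the swapped master identity holds verbatim. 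Thus \eqref{eq:Wytoomega}${}\equiv{}$\eqref{eq:WYtoomega} is literally the image of Proposition~\ref{prop:MainSimplification} under $x\leftrightarrow y$, applied to the relevant $\cW^y$-input, and no new computation is required.

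The two remaining, parametric, equivalences \eqref{eq:WxtoW}${}\equiv{}$\eqref{eq:WXtoW} and \eqref{eq:WytoW}${}\equiv{}$\eqref{eq:WYtoW} demand a genuine one-parameter strengthening of Proposition~\ref{prop:MainSimplification}: here the conversion acts on \emph{both} sides of the relation, since $\cW^y(\tilde w)$ and $\cW^Y(\tilde u)$ are two convention-dependent normalisations of the same $y$-side generating function. The extra parameter is the generating variable of the distinguished point: it appears as the weight $e^{-\tilde w\,x}$ on the lowercase side of \eqref{eq:WxtoW}, and, after the corresponding change of parameter, as the dressed kernel $\tilde L_r(v,\Theta,\tilde u)=(\partial_\theta+\tfrac v\theta)^r e^{\tilde u\theta}L_0(v,\theta)$ — equivalently an inserted factor $e^{\tilde u\,\Theta}$ — on the uppercase side of \eqref{eq:WXtoW}. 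I would therefore prove a version of \eqref{eq:xXyY} carrying this factor through, running the three steps of Section~\ref{sec:ProofPropositionSimplificationformal}: rewrite the lowercase $x$- and $y$-operators in terms of $x\partial_x$, $y\partial_y$ via \eqref{eq:xXyY6}, \eqref{eq:xXyY6y} and the substitution $u=u(w)$ of \eqref{eq:uw}; invoke Lemma~\ref{lem:xXyY9} to trade each $x\partial_x$ for a factor $v$ (this lemma already holds for an arbitrary $F$, so the extra exponential may simply be folded into $F$ and needs no modification); and reduce to a single-variable kernel identity. Along the way the unstable $\delta_{m+n,0}$ corrections of \eqref{eq:WXtoW}, \eqref{eq:WYtoW} must be matched against the exceptional term $e^{\tilde w\,x}\cW^{y,(0)}_{0,1}=\tfrac1{\tilde w}$ (and its $x\leftrightarrow y$ mirror), which is routine unstable-case bookkeeping.

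The main obstacle is the parametric analogue of Lemma~\ref{lem:phiidentity}, namely the one-variable identity that \eqref{eq:xXyY4} becomes once $e^{\tilde u\theta}$ is inserted on the left. The insertion shifts the exponent of the Lagrange-inversion kernel $\eta(w)^\theta=e^{(u(w)-w)\theta}$, so the clean reduction ``$\partial_\theta$ on the left ${}={}$ increment of $r$ on the right'' must be re-examined; the point is that $\partial_\theta$ now produces the shifted factor $u(w)+\tilde u$ in place of $u(w)$, so the $r$-graded identity still collapses to its $r=0$ instance and the Lagrange-inversion/Taylor computation goes through after the corresponding shift. Once this enhanced kernel identity is in place, the two parametric equivalences follow exactly as \eqref{eq:Wxtoomega}${}\equiv{}$\eqref{eq:WXtoomega} did, and taking the $\tilde w^0$- and $\tilde u^0$-coefficients recovers the non-parametric relations as a consistency check.
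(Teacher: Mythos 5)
Your proposal is correct and takes essentially the same route as the paper: the paper's own proof of Proposition~\ref{prop:equiv} simply says that the arguments of Propositions~\ref{prop:simple-rec-mixed} and~\ref{prop:MainSimplification} work \emph{mutatis mutandis}, with Lemma~\ref{lem:Wx-WX} (built on~\eqref{eq:xXyY6} and the substitution $u=u(w)$) providing the precise dictionary between the lowercase and uppercase generating functions. Your unpacking of what ``mutatis mutandis'' entails --- the $x\leftrightarrow y$ mirror of the master identity~\eqref{eq:xXyY} for the first equivalence, the observation that Lemma~\ref{lem:xXyY9} needs no change, and the parametric enhancement of the kernel identity of Lemma~\ref{lem:phiidentity} (with $\partial_\theta$ producing $u(w)+\tilde u$ so that the $r$-graded identity still collapses to $r=0$) as the only genuinely new ingredient --- is exactly the intended argument.
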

The precise statement on the equivalence of~\eqref{eq:WxtoW} and~\eqref{eq:WytoW} to \eqref{eq:WXtoW} and \eqref{eq:WYtoW}, respectively, uses the following Lemma:%~\ref{lem:Wx-WX}.

\begin{lemma} \label{lem:Wx-WX}
	The functions $\cW^{x}_{m+1,n}$ and $\cW^{X}_{m+1,n}$ are related to one another by the following relation
	\begin{align}\label{eq:WxWX}
		& \prod_{i\in M}dx_i \cdot \prod_{i\in N} dy_i\cdot ({-xw})\,\cW^{x}_{m+1,n}(-x\,w;z)
		\\ \notag & =\prod_{i\in M}dX_i \cdot \prod_{i\in N} dY_i \cdot e^{-\frac12\log(1-\frac14w^2\hbar^2)\partial_X}{u(w)\,\cS(u(w)\,\hbar)}\cW^{X}_{m+1,n}(u(w);z),
	\end{align}
	where $u(w)$ is given by~\eqref{eq:uw}.
\end{lemma}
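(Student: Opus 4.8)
The plan is to prove \eqref{eq:WxWX} by a direct termwise comparison of the two defining expressions \eqref{eq:cW} for $\cW^{x}_{m+1,n}$ and \eqref{eq:cWexpXom} for $\cW^{X}_{m+1,n}$, the only nontrivial input being the single-variable operator identity \eqref{eq:xXyY6}, already established through the monomial computation \eqref{eq:SpxSxpx}. First I would clear the scalar prefactors. Using $u(w)\,\cS(u(w)\,\hbar)/w=\gamma(w)^{-1}$ (from \eqref{eq:uw}), multiplication of $\cW^{X}_{m+1,n}(u(w);z)$ by $u(w)\,\cS(u(w)\,\hbar)$ cancels its prefactor $\bigl(u(w)\,\cS(u(w)\,\hbar)\bigr)^{-1}$, while multiplication of $\cW^{x}_{m+1,n}(-x\,w;z)$ by $-x\,w$ cancels its prefactor $(-x\,w)^{-1}$. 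Since $\log\gamma(w)=\tfrac12\log(1-\tfrac14 w^2\hbar^2)$ and $\partial_X=-x\partial_x$, the operator on the right equals $\gamma(w)^{x\partial_x}=e^{-\frac12\log(1-\frac14w^2\hbar^2)\partial_X}$, so \eqref{eq:WxWX} reduces to
\begin{multline*}
\prod_{i\in M}dx_i\prod_{i\in N}dy_i\; e^{\cT^x_{1,0}(-x\,w;z)}
\sum_{\{K_\alpha\}}\prod_\alpha\cT^x_{|I_\alpha|+1,|J_\alpha|}(-x\,w)\\
=\prod_{i\in M}dX_i\prod_{i\in N}dY_i\;\gamma(w)^{x\partial_x}\,
e^{\cT^X_{1,0}(u(w);z)}\sum_{\{K_\alpha\}}\prod_\alpha\cT^X_{|I_\alpha|+1,|J_\alpha|}(u(w)),
\end{multline*}
where $\sum_{\{K_\alpha\}}$ is the common partition sum of \eqref{eq:cW} and \eqref{eq:cWexpXom}.

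Because the two partition sums run over the same index set, I would match them block by block and, within each block, compare the factors $\cT^x$ with $\cT^X$ (and the exponents $\cT^x_{1,0}$ with $\cT^X_{1,0}$). In every such factor the parameter $-x\,w$ enters only through internal-leg operators of the form $(-x\,w)\,\cS(-x\,w\,\hbar\,\partial_{x_{\bar i}})$ acting on a function carrying the factor $1/dx_{\bar i}=\tfrac{1}{-x_{\bar i}}\cdot 1/dX_{\bar i}$. Pulling out $1/(-x_{\bar i})$ and invoking \eqref{eq:xXyY6} (with differentiated variable $x_{\bar i}$, the identity's parameter set to $-x\,w$, and the convention that $x_{\bar i}\partial_{x_{\bar i}}$ acts before the intrinsic reparametrization $w\mapsto w/(-x_{\bar i})$) rewrites this operator so that, once the restriction $z_{\bar i}\to z$ sets $x_{\bar i}=x$ and collapses the reparametrized argument $xw/x_{\bar i}$ to $w$, it becomes exactly the $\cT^X$ internal-leg operator $u(w)\,\cS(u(w)\,\hbar\,\partial_{X_{\bar i}})$ at parameter $u(w)$ (recall $\partial_{X_{\bar i}}=-x_{\bar i}\partial_{x_{\bar i}}$ and that $\cS$ is even), accompanied by one additional dilation factor $\gamma(w)^{x_{\bar i}\partial_{x_{\bar i}}}$. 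Meanwhile the external differentials are traded $dx_i\to dX_i$ and $dy_i\to dY_i$ by the scalars $-x_i$ and $-y_i$, which are precisely absorbed by the compensating products $\prod_M dx_i\prod_N dy_i$ and $\prod_M dX_i\prod_N dY_i$ in front of the two sides; the regularizing subtraction $\tfrac{dx_{\bar1}dx_{\bar2}}{(x_{\bar1}-x_{\bar2})^2}$ is common to \eqref{eq:cT} and \eqref{eq:cTXom} and transforms identically, so it introduces no discrepancy.

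It then remains to consolidate the per-leg dilations into the single operator $\gamma(w)^{x\partial_x}$ on the right. Here I would use that all internal legs — in the exponent $\cT^x_{1,0}$ as well as in every factor $\cT^x_{|I_\alpha|+1,|J_\alpha|}$ — are restricted to one and the same central point $z$. Since $\gamma(w)^{x\partial_x}$ is the dilation $x\mapsto\gamma(w)\,x$, which acts multiplicatively and satisfies $\gamma(w)^{x\partial_x}e^{F}=e^{\gamma(w)^{x\partial_x}F}$, the elementary identity $\big\lfloor_{z_{\bar1},\dots\to z}\prod_i\gamma(w)^{x_{\bar i}\partial_{x_{\bar i}}}\,G=\gamma(w)^{x\partial_x}\,\big\lfloor_{z_{\bar1},\dots\to z}G$ (both sides send $G$ to its value at $x_{\bar i}=\gamma(w)\,x$, there being no central slot separate from the collapsed legs) lets me pull every dilation factor out to the front. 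This turns the converted right side into $\gamma(w)^{x\partial_x}$ applied to $e^{\cT^X_{1,0}(u(w);z)}\sum_{\{K_\alpha\}}\prod_\alpha\cT^X_{|I_\alpha|+1,|J_\alpha|}(u(w))$, which is exactly the right-hand side of the reduced identity, completing the proof.

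The hard part will be the joint bookkeeping of the reparametrization built into \eqref{eq:xXyY6} together with the restriction to the common central point, and the justification of the dilation-consolidation at the level of $\hbar$-graded, operator-valued series. I would make the latter rigorous exactly as in \eqref{eq:SpxSxpx}: reduce to the action on monomials $x_{\bar i}^{d}$, where $\gamma(w)^{x_{\bar i}\partial_{x_{\bar i}}}$, $\cS(\,\cdot\,x_{\bar i}\partial_{x_{\bar i}})$ and the restriction are all diagonal eigenvalue operations and the reparametrization $w\mapsto xw/x_{\bar i}$ acts by a clean rescaling. Linearity, together with the finiteness of the $\hbar$-expansion at each order (guaranteed, as in the remark after \eqref{eq:cW}, by $e^{wy}\cW^{x,(g)}_{m+1,n}$ being polynomial in $w$), then upgrades the monomial identity to the claimed equality of meromorphic objects.
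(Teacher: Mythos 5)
Your proof is correct and takes essentially the same route as the paper's: both reduce the claim to the observation that $(-xw)\,\cW^{x}_{m+1,n}(-xw;z)$ and $u(w)\cS(u(w)\hbar)\,\cW^{X}_{m+1,n}(u(w);z)$ are the same linear combination of per-leg operator products applied to the same multidifferentials, and then convert leg by leg using the operator identity \eqref{eq:xXyY6} (equivalently \eqref{eq:SwxSuX}). The only difference is one of detail: you make explicit the prefactor cancellation, the external-differential bookkeeping, and the consolidation of the per-leg dilation factors $\gamma(w)^{x_{\bar i}\partial_{x_{\bar i}}}$ into the single operator $e^{-\frac12\log(1-\frac14w^2\hbar^2)\partial_X}$ after restriction to the common point $z$ --- steps the paper compresses into ``which then implies \eqref{eq:WxWX}''.
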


\begin{proof}
	By definition, $u\cS(u\,\hbar)\cW^{X}_{m+1,n}(u;z)$ is a linear combination of the terms of the form
	\begin{equation}\label{eq:WXS}
		\prod_{i=1}^k\Bigr(\big\lfloor_{z_i\to z}u\cS(u\hbar\partial_{X_{\bar i}})\frac{1}{dX_{\bar i}}\Bigr)\omega(z_{\bar1},\dots,z_{\bar k})
	\end{equation}
	for different $k$ and different multidifferentials $\omega$. On the other hand, ${w}\, \cW^{x}_{m+1,n}(w;z)$ is a similar linear combination of the terms of the form
	\begin{equation}\label{eq:WxS}
		\prod_{i=1}^k\Bigr(\big\lfloor_{z_i\to z}w\cS(w\hbar\partial_{x_{\bar i}})\frac{1}{dx_{\bar i}}\Bigr)\omega(z_{\bar1},\dots,z_{\bar k})
	\end{equation}
	with the same forms~$\omega$. %Therefore, in order to establish~\eqref{eq:WxWX} it is sufficient to compare the corresponding factors in~\eqref{eq:WXS} and~\eqref{eq:WxS} and to prove the following equality for operators
	Note that from~\eqref{eq:xXyY6} we have
	\begin{equation}\label{eq:SwxSuX}
		\lfloor_{w\to -x\,w}w\cS(w\hbar\partial_{x})\circ\frac{1}{dx}=
		e^{-\frac12\log(1-\frac14w^2\hbar^2)\partial_X}u(w)\cS(u(w)\hbar\partial_{X})\circ\frac{1}{dX},
	\end{equation}
which then implies~\eqref{eq:WxWX}.
\end{proof}

\begin{proof}[Proof of Proposition~\ref{prop:equiv}] Recall the proofs of Proposition~\ref{prop:simple-rec-mixed} and Proposition~\ref{prop:MainSimplification}. The same argument works here \emph{mutatis mutandis}.
\end{proof}

So, it is sufficient to prove one of the two Propositions~\ref{prop:xy-relations} and~\ref{prop:XY-relations},
%for instance, Proposition~\ref{prop:XY-relations},
and then according to Proposition~\ref{prop:equiv} the other one is a direct corollary. However, we would like to give two parallel arguments for both Propositions~\ref{prop:xy-relations} and~\ref{prop:XY-relations}. The idea of the proof is the same for both propositions: we reduce the proof of a general case to the case of a particular formal power series expansion. The arguments for the identities for formal power series expansions are completely independent, even though the reduction to the formal power series case can be directly performed only for Proposition~\ref{prop:xy-relations}, and would use Proposition~\ref{prop:equiv} in the proof of Proposition~\ref{prop:XY-relations}.

 %which are totally independent under certain condition which allows us to pass to formal power series, but we still need to use Proposition~\ref{prop:equiv} in the proof of Proposition~\ref{prop:XY-relations} in the general case. %But in fact we can also use Proposition~\ref{prop:equiv} in order to give two parallel arguments for both Propositions~\ref{prop:xy-relations} and~\ref{prop:XY-relations} (then it is natural to start with the second one).

\subsubsection{Proofs}\label{S.Proofs}

\begin{proof}[Proof of Proposition~\ref{prop:XY-relations}] Here we present the logical structure of the proof of Proposition~\ref{prop:XY-relations}, leaving the technical details of computations with vertex operators for Section~\ref{sec:vertex-operators-standard-E}.

	\begin{enumerate}
		
		\item[Step (a)]
		Recall the origin of the defining Equation~\eqref{eq:WXtoomega}, which is explained in Section~\ref{sec:MotivationFPS}. It implies that if there exists a point $p\in S$ such that $x|_p=0$ and $x$ gives a local coordinate near $p$, and $y$ has a simple pole at $p$ such that $xy = 1 + O(x)$, then the local expansions of the $\omega^{(g)}_{m,n}$ near $(p,\dots,p)\in S^{m+n}$ in the coordinates $(x_{\set{m}},y^{-1}_{\set{m+n}\setminus \set{m}})$ are given by some connected vacuum expectation values $\covac\cdots Z\vac^\circ$, where $Z$ comes from the expansions of $\omega^{(g)}_{m,0}$.
		
		\item[Step (b)]
		Since all identities that we want to prove are identities for the meromorphic functions on $S^{m+n+1}$ (defined as explicit finite differential-algebraic expressions in the initial meromorphic differentials  $\{\omega^{(g)}_{m,0}\}$), it is sufficient to prove them for the formal powers series expansions near $\mathbf{p}\coloneqq (p,\dots,p)$ (that define analytic functions on some multidisk $D^{m+n+1}\subset S^{m+n+1}$, $D\ni p$). To this end, we observe that the formal power series expansion near $\mathbf{p}$ of both sides of each of the equations~\eqref{eq:WYtoomega},~\eqref{eq:WXtoW}, and~\eqref{eq:WYtoW} have an interpretation as connected vacuum expectation values $\covac\cdots Z\vac^\circ$, and the subsequent identification of the two sides of each of three equations reduces to a computation of the involved operators in the Fock space formalism.
		
	\end{enumerate}
	Now we discuss several variations of this argument. Basically, the idea is that we can still reduce the statement of Proposition~\ref{prop:XY-relations} to a comparison of formal power series expansions given in terms of the Fock space formalism.
	\begin{enumerate}
		
		\item[Step (c)] Assume we have a point $p\in S$ such that $x|_p=0$ and $x$ gives a local coordinate near $p$, and $y$ has a simple pole at $p$, but $xy = c + o(1)$ for some non-vanishing $c\not=1$. In this case we can rescale $y\to c^{-1}y$ and $\omega^{(g)}_{m,0} \to c^{2-2g-m}\omega^{(g)}_{m,0}$. 
		Analyzing the structure of Equation~\eqref{eq:WXtoomega} we see that this rescaling implies the rescaling $\omega^{(g)}_{m,n} \to c^{2-2g-m-n}\omega^{(g)}_{m,n}$ for all $\omega^{(g)}_{m,n}$ and this rescaling preserves Equations~\eqref{eq:WYtoomega}, \eqref{eq:WXtoW}, \eqref{eq:WYtoW}. Thus we reduce the argument in this case to the one discussed in Step~(a).
		
		\item[Step (d)] Assume we have a point $p\in S$ such that $x|_p=d$ and $x-d$ gives a local coordinate near $p$, and $y$ has a simple pole at $p$ such that $(x-d)y = c + o(1)$, $c\not=0$. In fact, the shift $x\to x-d$ affects only $\omega^{(0)}_{0,1}/dy$ shifting it by a constant. It is not visible directly from Equation~\eqref{eq:WXtoomega}, but we have already established in Proposition~\ref{prop:simple-rec-mixed} that Equation~\eqref{eq:WXtoomega} is equivalent to Equation~\eqref{eq:Wxtoomega}, which is manifestly invariant under the shifts of $x$. Note that Equations~\eqref{eq:Wytoomega},~\eqref{eq:WxtoW}, and~\eqref{eq:WytoW} are also invariant under the shifts of $x$, and they are equivalent to Equations~\eqref{eq:WYtoomega},~\eqref{eq:WXtoW},~\eqref{eq:WYtoW} by Proposition~\ref{prop:equiv}. Thus we reduce the argument in this case to the one discussed in Step~(c).
		
		\item[Step (e)] Take a point $p\in S$ such that $y$ has a not necessarily simple pole at $p$, and we make no assumptions on $x$ at $p$. Let $y=z^{-\ell}$ in some local coordinate $z$ at $p$, $z|_p=0$. Restrict $\omega^{(g)}_{m,0}$'s to $D^m$, $p\in D\subset S$ for some disk $D$ and deform $y$ by changing it to $y(z,\epsilon) = (z^\ell-\epsilon)^{-1}$, $\epsilon\in U$, where $U$ is a small disk around zero. We get a new system of differentials $\omega^{(g)}_{m,n}(\epsilon)$ constructed using Equation~\eqref{eq:WXtoomega} from $\omega^{(g)}_{m,0}$'s that all are constant in $\epsilon$ except for $\omega^{(0)}_{1,0} = -y(\epsilon)dx$.
		
		For each of Equations~\eqref{eq:WYtoomega},~\eqref{eq:WXtoW},~\eqref{eq:WYtoW} the difference of the left hand side and the right hand side is a function on $D^{m+n+1}\times U$. It depends on the equation and in the case of~\eqref{eq:WXtoW}, respectively,~\eqref{eq:WYtoW} one has to consider the coefficients of the expansion of the corresponding equation in $\hbar, \tilde u$, respectively, in $\hbar, u$. Denote this function by $F=F(z_{\set{m+n+1}},\epsilon)$. It is given as a rational expression in terms of  $\omega^{(g)}_{m,0}$, $2g-2+m\geq 0$, $x(z)$, $y(z,\epsilon) = (z^\ell-\epsilon)^{-1}$, regularizing terms for $\omega^{(0)}_{(2,0)}$ and $\omega^{(0)}_{(0,2)}$, and the $\partial_z$-derivatives of all listed functions and differentials. By Step (d) $F$ is constant zero in $z_{\set{m+n+1}}$ for all values of $\epsilon$ such that $x$ and $y$ satisfy the conditions of Step (d) at $z=\epsilon^{1/\ell}$ for at least one choice of the $\ell$-th root of $\epsilon$. Since it is an open condition, we obtain that $F$ vanishes as a rational expression. Hence Equations~\eqref{eq:WYtoomega},~\eqref{eq:WXtoW},~\eqref{eq:WYtoW} are satisfied.
	\end{enumerate}
	The last argument completes the proof of Proposition~\ref{prop:XY-relations} in the most general case.
\end{proof}

Let us also present the proof (or rather a guide to the proof) of Proposition~\ref{prop:xy-relations}.

\begin{proof}[Proof of Proposition~\ref{prop:xy-relations}] The proof follows the same scheme as the proof of Proposition~\ref{prop:XY-relations}, so we mainly stress the differences. The technical details of computations with vertex operators for this case are given in Section~\ref{sec:Computations-vertex-new-form}.
	
	\begin{enumerate}
		
		\item[Step (a)]
		In fact, Equation~\eqref{eq:Wxtoomega} also has an origin in some computation of connected vacuum expectation values $\covac\cdots Z\vac^\circ$ in the bosonic Fock space (it involves a principally new form for the operator $\J$ defined in Section~\ref{sec:MotivationFPS} --- we present this interpretation below). Exactly as in the proof of Proposition~\ref{prop:XY-relations}, this implies that if there exists a point $p\in S$ such that $x|_p=0$ and $x$ gives a local coordinate near $p$, and $y$ has a simple pole at $p$ such that $xy = 1 + o(1)$, then the local expansions of the $\omega^{(g)}_{m,n}$ near $(p,\dots,p)\in S^{m+n}$ in the coordinates $(x_{\set{m}},y^{-1}_{\set{m+n}\setminus \set{m}})$ are given by some connected vacuum expectation values $\covac\cdots Z\vac^\circ$, where $Z$ comes from the expansions of $\omega^{(g)}_{m,0}$.
		
		\item[Step (b)]  Just repeats the corresponding step in the proof of Proposition~\ref{prop:XY-relations}.
		
		\item[Step (c)] Just repeats the corresponding step in the proof of Proposition~\ref{prop:XY-relations}. Note that the rescaling $y\to c^{-1}y$ and $\omega^{(g)}_{m,0} \to c^{2-2g-m}\omega^{(g)}_{m,0}$ combined with the structure of Equation~\eqref{eq:Wxtoomega} also implies the rescaling $\omega^{(g)}_{m,n} \to c^{2-2g-m-n}\omega^{(g)}_{m,n}$ for all $\omega^{(g)}_{m,n}$ and this rescaling preserves Equations~\eqref{eq:Wytoomega},~\eqref{eq:WxtoW},~\eqref{eq:WytoW}.
		
		\item[Step (d)] Just repeats the corresponding step in the proof of Proposition~\ref{prop:XY-relations}. Note that since Equations,~\eqref{eq:Wytoomega},~\eqref{eq:WxtoW}, and~\eqref{eq:WytoW} are invariant under the shifts of $x$, we don't need
		Propositions~\ref{prop:equiv} for this argument.
		
		\item[Step (e)] Just repeats the corresponding step in the proof of Proposition~\ref{prop:XY-relations}.
	\end{enumerate}
\end{proof}

\begin{proof}[Proof of Proposition~\ref{prop:regular-at-diagonals}] Here we present the logical structure of the proof of Proposition~\ref{prop:regular-at-diagonals}, leaving the technical details of computations with vertex operators for Sections~\ref{sec:vertex-operators-standard-E} and~\ref{sec:Computations-vertex-new-form}. There are two parallel arguments since we can base ourselves on one of the two equivalent formulas, Equations~\eqref{eq:omega-g-n-graphsXY} and/or~\eqref{eq:omega-g-n-graphsxy} (cf. the proofs of Propositions~\ref{prop:XY-relations} and~\ref{prop:xy-relations} above).
	
	\begin{enumerate}
		\item[Step (a)] If there exists a point $p\in S$ such that $x|_p=0$ and $x$ gives a local coordinate near $p$, and $y$ has a simple pole at $p$ such that $xy = 1 + o(1)$, then the local expansions of the $\omega^{(g)}_{m,n}$ near $(p,\dots,p)\in S^{m+n}$ in the coordinates $(x_{\set{m}},y^{-1}_{\set{m+n}\setminus \set{m}})$ are given by some connected vacuum expectation values $\covac\cdots Z\vac^\circ$, where $Z$ comes from the expansions of $\omega^{(g)}_{m,0}$. This vacuum expectation values manifestly have no poles at the respective diagonals, except for the special cases $(g,m,n)=(0,2,0)$ and $(g,m,n)=(0,0,2)$.
		There are two different formulas for the vertex operators entering $Z$ that give Equations~\eqref{eq:omega-g-n-graphsXY} and~\eqref{eq:omega-g-n-graphsxy}, respectively, but as we see below absence of the diagonal forms is a consequence of the general shape of the corresponding vacuum expectation values.
		
		\item[Step (b)]  Just repeats the corresponding step in the proof of Proposition~\ref{prop:XY-relations}.
		
		\item[Step (c)] Just repeats the corresponding step in the proof of Proposition~\ref{prop:XY-relations}. Note that the rescaling $y\to c^{-1}y$ and $\omega^{(g)}_{m,0} \to c^{2-2g-m}\omega^{(g)}_{m,0}$ combined with the structure of Equation~\eqref{eq:Wxtoomega} also implies the rescaling $\omega^{(g)}_{m,n} \to c^{2-2g-m-n}\omega^{(g)}_{m,n}$ for all $\omega^{(g)}_{m,n}$ and this rescaling preserves Equations~\eqref{eq:Wytoomega},~\eqref{eq:WxtoW},~\eqref{eq:WytoW}.
		
		\item[Step (d)] Just repeats the corresponding step in the proof of Proposition~\ref{prop:XY-relations}. If we base our proof on Equation~\eqref{eq:omega-g-n-graphsXY}, we additionally need at this point the equivalence with Equation~\eqref{eq:omega-g-n-graphsxy} (see Remark~\ref{rem:equiv-mixed}).
		
		\item[Step (e)] This step is similar to the corresponding step in the proof of Proposition~\ref{prop:XY-relations}, though we would like to expand it here, since it is not a question about a formal algebraic relation. We apply the same idea as before, that is,
		we take a point $p\in S$ such that $y$ has a not necessarily simple pole at $p$, let $y=z^{-\ell}$ in some local coordinate $z$ at $p$, $z|_p=0$, restrict $\omega^{(g)}_{m,n}$'s to $D^{m+n}$, $p\in D\subset S$ for some disk $D$ and deform $y$ by changing it to $y(z,\epsilon) = (z^\ell-\epsilon)^{-1}$, $\epsilon\in U$, where $U$ is a small disk around zero.
		
		The new system of differentials $\omega^{(g)}_{m,n}(\epsilon)$ are still meromorphic differentials with coefficients holomorphic in $\epsilon$. If $\omega^{(g)}_{m,n}(\epsilon)|_{\epsilon=0}$ has a pole at the diagonal $z_i=z_j$ for $i,j\in\set{m+n}\setminus\set{m}$, then it has it along the same diagonal in $(D')^{m+n}\subset D^{m+n}$, where $D'\subset D$ is an open disk far from the origin (further then $\epsilon^{1/\ell}$ for all $\epsilon\in U$). Then for a small enough $\epsilon\not=0$ $\omega^{(g)}_{m,n}(\epsilon)$ must retain a pole in $(D')^{m+n}\subset D^{m+n}$, which is impossible since it is an analytic continuation of the restrictions of $\omega^{(g)}_{m,n}(\epsilon)$ to a neighborhood of a simple pole of $y(z,\epsilon)$ that we treat by Steps (a)-(d) of the proof.
	\end{enumerate}

\end{proof}

All these proofs reduce the corresponding propositions to the special case when the corresponding meromorphic differentials can be expanded near a point $(p,\dots,p)$ and these expansions can be assembled into vacuum expectation values. So, the missing part consists of technical computations with the vertex operators in these vacuum expectations. The necessary computations are presented in the next two subsections.

\subsection{Computations with vertex operators: the standard form}  \label{sec:vertex-operators-standard-E}

In order to complete the proof of Proposition~\ref{prop:XY-relations}, we have to prove Proposition~\ref{prop:FirstFormOfrecursion} (which is the key statement to switch from Equation~\eqref{eq:WXtoomega} considered as the definition of the mixed correlation differentials to vacuum expectation values, once formal power series expansion of this type is possible).  Then we have to show that Equations~\eqref{eq:WYtoomega},~\eqref{eq:WXtoW}, and~\eqref{eq:WYtoW} also follow from the defining Equation~\eqref{eq:MixedFirstDef}.

\begin{proof}[Proof of Proposition~\ref{prop:FirstFormOfrecursion}. ]
	The operator  $\mathbb{J}(\hbar,z)$, defined by \eqref{JJdef}, can be presented as
	\begin{align}\label{eq:bJfromE}		
		\mathbb{J}(\hbar,z)&=
		\sum_{\ell =-\infty}^\infty z^\ell  \sum_{r=0}^\infty\partial_\theta^r e^{\ell\frac{\cS(\ell \hbar\partial_\theta)}{\cS(\hbar\partial_\theta)}\log\theta}|_{\theta=1}
		[u^r x^{\ell }]\cE(u\hbar,x),
	\end{align}
	where
	\begin{align}
		\cE(u,x)
		=\frac{1}{u\,\cS(u)}\Bigl(e^{\sum_{i=1}^\infty u\,\cS(i\,u){x^{-i}J_{-i}}}e^{\sum_{i=1}^\infty u\,\cS(i\,u){x^{i}J_{i}}}-1\Bigr) \label{eq:Ebosonic}
	\end{align}
	(see e.g.~\cite[Equation~(24)]{BDKS-OrlovScherbin}).
	Recall Equation~\eqref{eq:MixedFirstDef}
	\begin{align} \label{eq:MixedFirstDef-1}
		W_{m,n+1}=\hbar^{1-m-n} \VEVc{\prod_{i=1}^m
			J(x_i) \prod_{i=m+1}^{m+n+1}  \J(\hbar,y_i^{-1}) \;Z}
	\end{align}
	and substitute $\J(\hbar,y_{m+1}^{-1})$ by the expression given by Equation~\eqref{eq:bJfromE}.
	To describe the result of this substitution, denote by $U$ the transformation taking a Laurent series $f$ in~$u$ and~$x$ to the following series in $y^{-1}$:
	\begin{equation}\label{eq:Uoper}
		Uf=\sum_{\ell=-\infty}^\infty y^{-\ell}\sum_{r=0}^\infty
		\partial_\theta^r e^{\ell \frac{\cS(\ell \hbar\partial_\theta)}{\cS(\hbar\partial_\theta)}\log\theta}|_{\theta=1}
		[u^rx^{\ell }]f(u,x).
	\end{equation}
	Let $x=x_{m+1}$, $y=y_{m+1}$. We have:
	\begin{equation}
		 W_{m,n+1} =
		\hbar^{1-m-n} U
		\covac \prod_{i=1}^{m}
		J(x_i)%
		\cE(u\hbar ,x)
		\prod_{i=m+2}^{m+n+1} \J(\hbar,y_i^{-1})
		\;Z\vac^\circ.
	\end{equation}
	
	It is proved in~\cite[Section 2.4]{BDKS-toporec-KP} and~\cite[Section 2.4]{BDKS-symplectic}  that $\cW^X_{m+1,n}(u;x)$ defined by Equation~\eqref{eq:cWX} is given by
	\begin{equation}
		\cW^X_{m+1,n}(u;x) = \hbar^{1-m-n}
		\covac \prod_{i=1}^{m}
		J(x_i)%
		\cE(u\hbar ,x)
		\prod_{i=m+2}^{m+n+1} \J(\hbar,y_i^{-1})
		\;Z\vac^\circ
	\end{equation}
	(as usual, we omit the arguments $\hbar, x_{\set{m}}, y_{\set{m+n+1}\setminus \set{m+1}}$ in the notation).
	Note that we thus have
	\begin{equation}\label{eq:WUW}
		W_{m,n+1} = U\, \cW^X_{m+1,n}(u;x).
	\end{equation}
	Now, using the \emph{principal identity} from~\cite[Section 4]{BDKS-OrlovScherbin}, see also~\cite[Section 3]{BDKS-symplectic}, one can rewrite Equation~\eqref{eq:Uoper} as 
	%
	 %the operator $U$ is converted to Equation~\eqref{eq:mainrecchange} with the change of the arguments given by~\eqref{eq:xychange} by the principal identity in~\cite[Section 4]{BDKS-OrlovScherbin}, see also~\cite[Section 3]{BDKS-symplectic}, that transforms the operator $U$ as follows
	\begin{align}
			Uf=-\sum_{r,j\geq 0}(-y\partial_y)^j\Bigl(
		[v^j]\frac{\partial_\theta^r e^{v \frac{\cS(v\hbar\partial_\theta)}{\cS(\hbar\partial_\theta)}\log\theta}}{\theta^v}|_{\theta=x\,y}
		\frac{dx}{x}\frac{y}{dy}
		[u^r]e^{-u\,W^{(0)}_{1,0}(x)}f(u;x)%\cW^X_{m+1,n}(u;x)
		\Bigr)
	\end{align}
	in the case $m+n\not=0$, with $y$ and $x$ related by $y=x^{-1}(1+W^{(0)}_{1,0}(x))$ near $y=\infty$, $x=0$. Taking into account~\eqref{eq:WUW}, we obtain the first line of Equation~\eqref{eq:mainrecchange}. If $m+n=0$, then we also get the second line of Equation~\eqref{eq:mainrecchange} that comes from regularization of the negative powers of $u$ as in~\cite[Section 6]{BDKS-OrlovScherbin}, see also~\cite[Section 3]{BDKS-symplectic}.
\end{proof}

 Now, to complete the proof of Proposition~\ref{prop:XY-relations}, we have to derive  Equations~\eqref{eq:WYtoomega}, \eqref{eq:WXtoW}, and~\eqref{eq:WYtoW}. Let us start with
Equation~\eqref{eq:WYtoomega}. From the definition $\J(z) \coloneqq \cD^{-1} J(z) \cD$ we have $J(z)= \cD \J(z) \cD^{-1}$, and under substitution of~\eqref{eq:bJfromE} one immediately concludes that
\begin{equation}\label{eq:DbJDi-fromE}		
	{J}(z)=
	\sum_{\ell =-\infty}^\infty z^\ell  \sum_{r=0}^\infty\partial_\theta^r e^{-\ell \frac{\cS(\ell \hbar\partial_\theta)}{\cS(\hbar\partial_\theta)}\log\theta}|_{\theta=1}
	[u^r y^{-\ell }] \cD^{-1} \cE(u\hbar ,y^{-1}) \cD.
\end{equation}

We substitute expression~\eqref{eq:DbJDi-fromE} into
\begin{align} \label{eq:MixedFirstDef-2}
	W_{m+1,n}=\hbar^{1-m-n} \VEVc{\prod_{i=1}^{m+1}
		J(x_i) \prod_{i=m+2}^{m+n+1}  \J(\hbar,y_i^{-1}) \;Z}
\end{align}
and describe the result of this substitution using the transformation $V$ that takes a Laurent series $f$ in~$u$ and~$y^{-1}$ to the following series in $x$:
\begin{equation}
	Vf=\sum_{\ell=-\infty}^\infty x^{\ell}\sum_{r=0}^\infty
	\partial_\theta^r e^{-\ell \frac{\cS(\ell \hbar\partial_\theta)}{\cS(\hbar\partial_\theta)}\log\theta}|_{\theta=1}
	[u^ry^{-\ell }]f(u,y^{-1}).
\end{equation}
We have:
\begin{align}
	W_{m+1,n} & =
	V \hbar^{1-m-n}
	\covac \prod_{i=1}^{m}
	J(x_i) 
	\cD^{-1}
        \cE(u\hbar ,y^{-1})
	\cD
	\prod_{i=m+2}^{m+n+1} \J(\hbar,y_i^{-1})
	\;Z\vac^\circ
	\\ \notag
	& = V \cW^{Y}_{m,n+1}(u; y^{-1}),
\end{align}
and proceeding the same way as in the proof of Proposition~\ref{prop:FirstFormOfrecursion} we convert the latter equation with operator $V$ to Equation~\eqref{eq:WYtoomega}. Note two changes of sign (in the degree of the resulting variable, which is $x$ as opposed to $y^{-1}$, and $\ell \to -\ell$ in the exponent in the definition of $V$ as opposed to the definition of $U$) --- these two adjustments of signs compensate each other, so the resulting Equation~\eqref{eq:WYtoomega} looks very similar to Equation~\eqref{eq:WXtoomega}.

The proof of Equations~\eqref{eq:WXtoW} and~\eqref{eq:WYtoW} goes along exactly the same lines, and they are based on the following two formulas (they are again special cases of~\cite[Equation~(24)]{BDKS-OrlovScherbin}):
\begin{align}
	\cD^{-1}\cE(\tilde u \hbar;y^{-1})\cD
	&=\sum_{\ell =-\infty}^\infty y^{-\ell}  \sum_{r=0}^\infty\partial_\theta^r e^{\tilde u \theta+\ell \frac{\cS(\ell \hbar\partial_\theta)}{\cS(\hbar\partial_\theta)}\log\theta}|_{\theta=1}
	[u^r x^{\ell }]\cE(\hbar u,x); \\
	\cD\cE( u \hbar;x)\cD^{-1}
	&=\sum_{\ell =-\infty}^\infty x^{\ell}  \sum_{r=0}^\infty\partial_\theta^r e^{ u \theta-\ell \frac{\cS(\ell \hbar\partial_\theta)}{\cS(\hbar\partial_\theta)}\log\theta}|_{\theta=1}
	[\tilde u^r y^{-\ell }]\cE(\hbar \tilde u,y^{-1}).
\end{align}

In order to complete the proof of Proposition~\ref{prop:regular-at-diagonals}, we rewrite Equation~\eqref{eq:MixedFirstDef-1} as
\begin{equation}
	W_{m,n}=\hbar^{2-m-n} \VEVc{\Big(\prod_{i=1}^{m}
		\cD J(x_i)\cD^{-1} \Big) \prod_{i=m+1}^{m+n}  J(y_i^{-1})\; \Big(\cD Z\Big)}
\end{equation}
and note that by the inclusion-exclusion formula the only pole at the diagonal $z_i=z_j$ for $i,j\in\set{m+n}\setminus \set{m}$ is possible at $[\hbar^0]$ in $W_{0,2}$, where it is coming from the term $\VEVc{J(y_1^{-1})J(y_2^{-1})} = y_1^{-1} y_2^{-1}/(y_1^{-1}-y_2^{-1})^2$, cf.~\cite[Corollary 4.10]{BDKS-OrlovScherbin}.

This completes the proofs of Proposition~\ref{prop:regular-at-diagonals} and thus Proposition~\ref{prop:XY-relations}, and hence, by Proposition~\ref{prop:equiv}, the proof of Proposition~\ref{prop:xy-relations}.
\begin{remark} In the spirit of the above computations, Equation~\eqref{eq:omega-g-n-graphsXY} gets the following explanation: we just use the graphical formula given in~\cite[Theorem 4.14]{BDKS-FullySimple} to compute
\begin{align}
	\hbar^{2-m-n} \VEVc{\prod_{i=1}^m
		\J(u_i\hbar ,x_i) \prod_{i=m+1}^{m+n}  \J(u_i\hbar,y_i^{-1}) \;Z},
\end{align}
and then further restrict it with $\prod_{i=1}^m [u_i^0] \prod_{i=m+1}^{m+n} \restr {u_i} 1$ to obtain $W_{m,n}$.
\end{remark}

\subsection{Computations with vertex operators: a new form}

\label{sec:Computations-vertex-new-form}
While we already have a proof of Propositions~\ref{prop:xy-relations} and~\ref{prop:simple-rec-mixed}, we present here an alternative proof (in the spirit of Section~\ref{sec:vertex-operators-standard-E}) for the case of formal series (in the setup of Section~\ref{sec:FormalPowerSeries}).

Let us make a change of arguments in the operator $\cE(u,z)$ implied by the equalities $z e^{u/2}=x(1+w/2)$, $z e^{-u/2}=x(1-w/2)$, that is $z=(1-w^2/4)^{1/2}x$, $u=\log\bigl(\frac{1+w/2}{1-w/2}\bigr)$ (cf. Section~\ref{sec:ProofPropositionSimplificationformal}). Having this substitution in mind, we define
\begin{align}
	%\tilde\zeta(w,\delta)&=(1+w/2)^{\delta}-(1-w/2)^{\delta},
	%\\
	\tilde\cS(w,\delta)&=\frac{(1+w/2)^{\delta}-(1-w/2)^{\delta}}{\delta},
	%=\sum_{d=0}^\infty\frac{w^{2d}}{2^{2d}(2d+1)!}(\delta-1)(\delta-2)\dots(\delta-2d),
	\\
	\tilde\cE(w,x)&=(1-w^2/4)^{-1/2}\cE\bigl(\log\bigl(\tfrac{1+w/2}{1-w/2}\bigr),(1-w^2/4)^{1/2}x\bigr)\label{eq:tcS}
	%=e^{\frac12\log(1-w^2/4)\,x\partial_x}\cE\bigl(\log\bigl(\tfrac{1+w/2}{1-w/2}\bigr),x\bigr)
	%\\&=\sum_{\ell \in\Z}\sum_{k\in\Z+\frac12}x^\ell (1+w/2)^{k-1/2}(1-w/2)^{\ell-k-1/2}E_{k-\ell ,k}
	%\notag\\&
	\\ \notag &
	=\frac{1}{w}\Bigl(e^{\sum_{i=1}^\infty-\tilde\cS(w,-i){x^{-i}J_{-i}}}e^{\sum_{i=1}^\infty\tilde\cS(w,i){x^{i}J_{i}}}-1\Bigr),
	\\\eta(w)&%\ser{=e^{w(\tilde\cS(w\hbar,x\partial_x)-1)1}}
	=e^{(\hbar^{-1}\tilde\cS(w\hbar,x\partial_x)-w)1}
	%=e^{\sum_{d=1}^\infty\frac{w^{2d+1}\hbar^{2d}}{2^{2d}(2d+1)}}
	=e^{\hbar^{-1}\log\left(\frac{1+w\hbar/2}{1-w\hbar/2}\right)-w}.
\end{align}
Here $\eta(w)$ is the same as in \eqref{eq:etadef}.

%The reason to introduce an extra factor $(1-w^2/4)^{-1/2}$ in~\eqref{eq:tcS} will be clear below.
Recall Lemma~\ref{lem:phiidentity}. It claims that
\begin{equation}\label{eq:phiidentity}
	e^{v\frac{\cS(v \hbar\partial_\theta)}{\cS(\hbar\partial_\theta)}\log\theta}
	=\sum_{r\ge0}\Bigl(\partial_\theta^r\theta^v\Bigr)[w^r](1-w^2\hbar^2/4)^{\frac{v-1}{2}}
	\eta(w)^\theta.
\end{equation}
Applying the Taylor expansion formula (see~\cite[Lemma~4.5]{BDKS-OrlovScherbin}) to this (and replacing the argument $\theta$ with $\sigma$), we get 
\begin{align}%\label{eq:phiidentity}
	e^{v\frac{\cS(v \hbar\partial_\sigma)}{\cS(\hbar\partial_\sigma)}\log\sigma}
	&=\sum_{r\ge0}\Bigl(\partial_\theta^r\theta^v\Bigr)\Big|_{\theta=1}[w^r]e^{w(\sigma-1)}(1-w^2\hbar^2/4)^{\frac{v-1}{2}}
	\eta(w)^\sigma \\ \nonumber
	&=\sum_{r\ge0}\Bigl(\partial_\theta^r\theta^v\Bigr)\Big|_{\theta=1}[w^r]e^{-w}(1-w^2\hbar^2/4)^{\frac{v-1}{2}}
	\left(\frac{1+w\hbar/2}{1-w\hbar/2}\right)^{\sigma/\hbar}.
\end{align}
%
%This lemma is proved below. Remark that if $v$ is an integer then the left hand side is $\phi_v(\theta)$. In particular, applying $e^{\hbar(k-v/2)\partial_\theta}$ to both sides and setting $v=\ell,\theta=1$ we get as a corollary
%\begin{equation}
%	\phi_\ell(1+\hbar(k-\ell/2))
%	=\sum_{r\ge0}\Bigl(\partial_\theta^r\theta^\ell\bigm|_{\theta=1}\Bigr)[w^r]
%	\eta(w)(1+w\hbar/2)^{k-1/2}(1-w\hbar/2)^{\ell-k-1/2}
%\end{equation}
%Comparing with~\eqref{eq:bJ} we conclude
%Combining this with the definition of $\tilde\cE(w,x)$, we obtain a new expression for $\J(\hbar,z)$:
Substituting this into~\eqref{eq:bJfromE}, we get
\begin{align}\label{eq:bJfromEnew}
	&\mathbb{J}(\hbar,z)=\sum_{\ell =-\infty}^\infty z^\ell  \sum_{j=0}^\infty\left.\left(\partial_{\sigma}^j \sum_{r\ge0}\partial_\theta^r\theta^\ell\bigm|_{\theta=1}[w^r]e^{-w}(1-w^2\hbar^2/4)^{\frac{\ell-1}{2}}
	\left(\frac{1+w\hbar/2}{1-w\hbar/2}\right)^{\sigma/\hbar}\right)\right|_{\sigma=1}\\ \nonumber
	&\phantom{=}\times[u^j x^{\ell }]\cE(u\hbar,x)\\ \nonumber
	&=\sum_{\ell =-\infty}^\infty z^\ell  \sum_{r=0}^\infty\Bigl(\partial_\theta^r\theta^\ell\bigm|_{\theta=1}\Bigr)
	[w^r x^{\ell }]e^{-w}(1-w^2\hbar^2/4)^{\frac{\ell-1}{2}} \sum_{j=0}^\infty \left.\left(\partial_{\sigma}^j 
	\left(\frac{1+w\hbar/2}{1-w\hbar/2}\right)^{\sigma/\hbar}\right)\right|_{\sigma=1}\\ \nonumber
	&\phantom{=}\times[u^j]\cE(u\hbar,x)\\ \nonumber
	&=\sum_{\ell =-\infty}^\infty z^\ell  \sum_{r=0}^\infty\Bigl(\partial_\theta^r\theta^\ell\bigm|_{\theta=1}\Bigr)
	[w^r x^{\ell }]\eta(w)(1-w^2\hbar^2/4)^{\frac{\ell-1}{2}} \sum_{j=0}^\infty \left(\hbar^{-1}\log \left(
	\frac{1+w\hbar/2}{1-w\hbar/2}\right)\right)^j\\ \nonumber
	&\phantom{=}\times[u^j]\cE(u\hbar,x)\\ \nonumber
	&=\sum_{\ell =-\infty}^\infty z^\ell  \sum_{r=0}^\infty\Bigl(\partial_\theta^r\theta^\ell\bigm|_{\theta=1}\Bigr)
	[w^r x^{\ell }]\eta(w)(1-w^2\hbar^2/4)^{-\frac{1}{2}}\cE\bigl(\log\bigl(\tfrac{1+w\hbar/2}{1-w\hbar/2}\bigr),(1-w^2\hbar^2/4)^{1/2}x\bigr)\\ \nonumber
	&=\sum_{\ell =-\infty}^\infty z^\ell  \sum_{r=0}^\infty\Bigl(\partial_\theta^r\theta^\ell\bigm|_{\theta=1}\Bigr)
	[w^r x^{\ell }]\eta(w)\tilde\cE(\hbar w,x).
\end{align}

%\ser{What I want here further: to derive with the new formula for $\J$ directly Equations~\ref{eq:Wxtoomega} and~\ref{eq:omega-g-n-graphsxy}. The first computation will explain in terms of the vertex operators the background of Section 3, and the second computation will illustrate how to connect it to graphical formulas. Then we can say that Equations~\eqref{eq:Wytoomega},~\eqref{eq:WxtoW}, and~\eqref{eq:WytoW} are derived analogously}

First, we give a proof of Proposition~\ref{prop:simple-rec-mixed} for the case of formal power series (in the setup of Section~\ref{sec:FormalPowerSeries}) similar to the proof of Proposition~\ref{prop:FirstFormOfrecursion} given in Section~\ref{sec:vertex-operators-standard-E}. The computation provided below actually served as a motivation for the proof of Proposition~\ref{prop:simple-rec-mixed} for the general case given in Section~\ref{sec:proof-of-algebraic-equivalence-of-relations}.
\begin{proof}[Proof of Proposition~\ref{prop:simple-rec-mixed} in the setup of Section~\ref{sec:FormalPowerSeries}.]
	Recall Equation~\eqref{eq:MixedFirstDef-1}
	\begin{align} \label{eq:MixedFirstDef-3}
		W_{m,n+1}=\hbar^{1-m-n} \VEVc{\prod_{i=1}^m
			J(x_i) \prod_{i=m+1}^{m+n+1}  \J(\hbar,y_i^{-1}) \;Z}
	\end{align}
	and substitute $\J(\hbar,y_{m+1}^{-1})$ by the expression given by Equation~\eqref{eq:bJfromEnew}. Let $x=x_{m+1}$, $y=y_{m+1}$.
	We get
	\begin{align}\label{eq:Wnewcor}
		 W_{m,n+1} &=
		\hbar^{1-m-n} \sum_{\ell =-\infty}^\infty y^{-\ell } \sum_{r=0}^\infty\Bigl(\partial_\theta^r\theta^\ell\bigm|_{\theta=1}\Bigr)
		[w^r x^{\ell }]\eta(w)
		\\ \notag &
		\times \covac \prod_{i=1}^{m}
		J(x_i)%
		\tilde\cE(\hbar w,x)
		\prod_{i=m+2}^{m+n+1} \J(\hbar,y_i^{-1})
		\;Z\vac^\circ\\ \notag
		&=
		\hbar^{1-m-n} \sum_{\ell =-\infty}^\infty y^{-\ell } \sum_{r=0}^\infty\Bigl(\partial_\theta^r\theta^\ell\bigm|_{\theta=1}\Bigr)
		[w^r x^{\ell }](-x)^{r} \eta(-w/x)
		\\ \notag &
		\times \covac \prod_{i=1}^{m}
		J(x_i)%
		\tilde\cE(-\hbar w/x,x)
		\prod_{i=m+2}^{m+n+1} \J(\hbar,y_i^{-1})
		\;Z\vac^\circ\\ \notag
		&=
		\hbar^{1-m-n} \sum_{\ell =-\infty}^\infty y^{-\ell } \sum_{r=0}^\infty\Bigl(\partial_\theta^r\theta^\ell\bigm|_{\theta=1}\Bigr)
		[w^r x^{\ell }](-x)^{r} e^{w/x} e^{\hbar^{-1}\log\left(\frac{1-w\hbar/2x}{1+w\hbar/2x}\right)}
		\\ \notag &
		\times \covac \prod_{i=1}^{m}
		J(x_i)%
		\tilde\cE(-\hbar w/x,x)
		\prod_{i=m+2}^{m+n+1} \J(\hbar,y_i^{-1})
		\;Z\vac^\circ.
	\end{align}
	Recall from \eqref{eq:omega-W} that
	\begin{equation}\label{eq:omeganewW}
		\dfrac{\omega^{(g)}_{m,n+1}}{\prod_{j=1}^m dx_j\prod_{i=m+1}^{m+n+1} dy_i} = \left(W^{(g)}_{m,n+1}+\delta_{(g,m,n),(0,0,0)}\right)\prod_{j=1}^m \left(-\dfrac{1}{x_i}\right) \prod_{i=m+1}^{m+n+1} \left(-\dfrac{1}{y_i}\right).
	\end{equation}
Denote
%denote by $\tilde U$ the transformation taking a Laurent series $f$ in~$w$ and~$x$ to the following series in $y^{-1}$:
\begin{equation}\label{eq:Utilde}
	\tilde Uf:=-\dfrac{1}{y}\sum_{\ell =-\infty}^\infty y^{-\ell}  \sum_{r=0}^\infty\Bigl(\partial_\theta^r\theta^\ell\bigm|_{\theta=1}\Bigr)
	[w^r x^{\ell }](-x)^{r+1} e^{w/x} f(w,x).
\end{equation}
	Combining \eqref{eq:Wnewcor},  \eqref{eq:omeganewW}, and \eqref{eq:Utilde}, we have, for $(g,m,n)\neq (0,0,0)$:
	\begin{align}\label{eq:omegaUcor}
		& \dfrac{\omega^{(g)}_{m,n+1}}{\prod_{j=1}^m dx_j\prod_{i=m+1}^{m+n+1} dy_i} =\tilde U\;[\hbar^{2g}]\,e^{\hbar^{-1}\log\left(\frac{1-w\hbar/2x}{1+w\hbar/2x}\right)} \prod_{j=1}^m \left(-\dfrac{1}{x_i}\right) \prod_{i=m+2}^{m+n+1} \left(-\dfrac{1}{y_i}\right)
		\\ \notag &
		\times\left(-\frac{1}{x}\right) \hbar^{1-m-n}
		\covac \prod_{i=1}^{m}
		J(x_i)%
		\tilde\cE(-\hbar w/x,x)
		\prod_{i=m+2}^{m+n+1} \J(\hbar,y_i^{-1})
		\;Z\vac^\circ.
	\end{align}
	
	Analogously to~\cite[Section 2.4]{BDKS-toporec-KP} and~\cite[Section 2.4]{BDKS-symplectic}, one can show that $\cW^x_{m+1,n}(w;x)$ defined by Equation~\eqref{eq:cW} is given by
	\begin{align}\label{eq:cWcor}
		&\cW^x_{m+1,n}(w;x) = \left(e^{-\mathcal{S}(w\hbar \partial_x)\frac{w}{x}}\right)  \prod_{j=1}^m \left(-\dfrac{1}{x_i}\right) \prod_{i=m+2}^{m+n+1} \left(-\dfrac{1}{y_i}\right)
		\\ \notag & \times \left(-\frac{1}{x}\right) \hbar^{1-m-n}
		\covac \prod_{i=1}^{m}
		J(x_i)%
		\tilde\cE(-\hbar w/x,x)
		\prod_{i=m+2}^{m+n+1} \J(\hbar,y_i^{-1})
		\;Z\vac^\circ.
	\end{align}
	%(as usual, we omit the arguments $\hbar, x_{\set{m}}, y_{\set{m+n+1}\setminus \set{m+1}}$ in the notation). 
	This follows from the relation	
	\begin{align}
		\sum_{i=1}^\infty \tilde S(-\hbar w/x,i) x^i J_i =  w\hbar S(w\hbar\partial_x)\; \left(-\dfrac{1}{x}\right) \sum_{i=1}^\infty x^i J_i. 
	\end{align} 
	The factor  $e^{-\mathcal{S}(w\hbar \partial_x)\frac{w}{x}}$ comes from the term $\delta_{(g,m,n),(0,0,0)}$ in \eqref{eq:omeganewW}. Indeed, Equation \eqref{eq:cW} is written purely in terms of $\omega$, while the argument from~\cite[Section 2]{BDKS-toporec-KP} produces a very similar formula, but in terms of $W$, and we need to account for the $\delta_{(g,m,n),(0,0,0)}$ term, which arises in the $e^{\cT^x_{1,0}(w;z)}$ factor, and the difference is precisely $e^{-\mathcal{S}(w\hbar \partial_x)\frac{w}{x}}$. Now, using~\eqref{eq:SpxSxpx} for $d=0$, one can write
	\begin{equation}
		e^{-\mathcal{S}(w\hbar \partial_x)\frac{w}{x}} = e^{u(-w/x)} = e^{\hbar^{-1}\log \left(\tfrac{1-w \,\hbar/2x}{1+w\,\hbar/2x}\right)}
	\end{equation}
(here $u(-w/x)$ is the function $u$ defined in~\eqref{eq:uw} with the argument $-w/x$).

	Thus, we get
	\begin{equation}\label{eq:omuw}
		\dfrac{\omega^{(g)}_{m,n+1}}{\prod_{j=1}^m dx_j\prod_{i=m+1}^{m+n+1} dy_i} = \tilde U\, \cW^{x,(g)}_{m+1,n}(w;x).
	\end{equation}
	
	Now let us prove that the operator $\tilde U$ can be rewritten as follows, for $f(w,x)$ being polynomial in $w$ (this condition always holds in \eqref{eq:omuw} for $(g,m,n)\neq(0,0,0)$):
	%the operator $\tilde U$ is converted to the form of Equation~\eqref{eq:Wxtoomega} with the change of the arguments given by~\eqref{eq:xychange} by the principal identity in~\cite[Section 4]{BDKS-OrlovScherbin}, see also~\cite[Section 3]{BDKS-symplectic}, that transforms the operator $\tilde U$ to the operator
	\begin{align}
		\tilde Uf=-\sum_{r\ge0} \partial_y^r
		%\bigl(
		[w^r] \frac{dx}{dy} e^{w\,y}f(w,x),
	\end{align}
	%in the case $m+n\not=0$,
	for dependent variables $y$ and $x$ near $y=\infty$, $x=0$. 	
	% If $m+n=0$, then we also get the second line of Equation~\eqref{eq:mainrecchange} that comes from regularization of the negative powers of $u$ as in~\cite[Section 6]{BDKS-OrlovScherbin}, see also~\cite[Section 3]{BDKS-symplectic}.
	First, we have
	\begin{align}%\label{eq:Utilde}
		&\tilde Uf=-\dfrac{1}{y}\sum_{\ell =-\infty}^\infty y^{-\ell}  \sum_{r=0}^\infty\Bigl(\partial_\theta^r\theta^\ell\bigm|_{\theta=1}\Bigr)
		[w^r x^{\ell }](-x)^{r+1}e^{w/x} f(w,x)\\ \nonumber
		%&=\dfrac{1}{y}\,\sum_{\ell =-\infty}^\infty y^{-\ell}[x^\ell]  \sum_{r=0}^\infty\Bigl(\partial_\theta^r\theta^\ell\bigm|_{\theta=1}\Bigr)[w^r] (-x) e^{-w} f(-wx,x)\\ \nonumber
		%&=-\dfrac{1}{y}\,\sum_{\ell =-\infty}^\infty y^{-\ell}[x^\ell]  \sum_{r=0}^\infty\Bigl(\partial_\sigma^r(1+\sigma)^\ell\bigm|_{\sigma=0}\Bigr)		[w^r] (-x)^{r+1}e^{w/x} f(w,x)\\ \nonumber
		&=-\dfrac{1}{y}\,\sum_{\ell =-\infty}^\infty y^{-\ell}[x^\ell]  \sum_{r=0}^\infty\Bigl(\partial_\theta^r\theta^\ell\bigm|_{\theta=1}\Bigr)		
		[w^r]e^{w (xy-1)}e^{-w (xy-1)} (-x) e^{-w} f(-wx,x)\\ \nonumber
		&=-\dfrac{1}{y}\,\sum_{\ell =-\infty}^\infty y^{-\ell}[x^\ell]  \sum_{r=0}^\infty\Bigl(\partial_\theta^r\theta^\ell\bigm|_{\theta=xy}\Bigr)		
		[w^r]e^{-w xy} (-x) f(-wx,x),%\\ \nonumber
	\end{align}
	where in the last line we have used the Taylor expansion formula (see~\cite[Lemma~4.5]{BDKS-OrlovScherbin}). Then we take the $\partial_\theta$ derivatives and rewrite this as
	\begin{align}
		&-\dfrac{1}{y}\,\sum_{\ell =-\infty}^\infty y^{-\ell}[x^\ell]  \sum_{r=0}^\infty\left(\prod_{i=0}^{r-1}(\ell-i)\right) (xy)^{\ell-r}
		[w^r]e^{-w xy} (-x) f(-wx,x)\\ \nonumber		
	%	&=(-1)^{r+1}\,\sum_{\ell =-\infty}^\infty \partial_y^r y^{-\ell+r-1} [x^\ell]  \sum_{r=0}^\infty (1+W^{(0)}_{1,0}(x))^{\ell-r}
	%	[w^r]e^{-w xy} (-x) f(-wx,x)\\ \nonumber		
		&=\sum_{r=0}^\infty(-1)^{r+1}\,\sum_{\ell =-\infty}^\infty \partial_y^r y^{-\ell+r-1} [x^\ell]   (xy)^{\ell-r}
		[w^r]e^{w y} (-x)^{r+1} f(w,x)\\ \nonumber		
		&=\sum_{r=0}^\infty\sum_{\ell =-\infty}^\infty \partial_y^r y^{-\ell+r-1} [x^\ell]   y^{\ell-r}
		[w^r]e^{w y} x^{\ell+1} f(w,x).%\\ \nonumber		
	\end{align}
%Here in the third line we have used that $1+W^{(0)}_{1,0}(x)=xy$.
Shifting the summation index $\ell$ by $r-1$, we rewrite this as
\begin{align}
		& \sum_{r=0}^\infty \sum_{\ell =-\infty}^\infty \partial_y^r y^{-\ell} [x^{\ell+r-1}] y^{\ell-1}
		[w^r]e^{w y} x^{\ell+r} f(w,x)\\ \nonumber		
		%&=\sum_{r=0}^\infty[w^r]\partial_y^r\sum_{\ell =-\infty}^\infty  y^{-\ell} [x^{\ell}]   y^{\ell-1}
		%e^{w y} x^{\ell+1} f(w,x)\\ \nonumber		
		&=\sum_{r=0}^\infty[w^r]\partial_y^r\sum_{\ell =-\infty}^\infty  y^{-\ell} [x^{\ell}]   (xy)^{\ell}
		\,\dfrac{x}{y}\, e^{w y}  f(w,x)\\ \nonumber		
		&=-\sum_{r=0}^\infty[w^r]\partial_y^r \,\dfrac{y}{x}\, \dfrac{dx}{dy}
		\,\dfrac{x}{y}\, e^{w y}  f(w,x)\\ \nonumber		
		&=-\sum_{r=0}^\infty[w^r]\partial_y^r \, \dfrac{dx}{dy}
		\, e^{w y}  f(w,x),%\\ \nonumber
		%&=-\dfrac{1}{y}\,\sum_{j=0}^\infty(-y\partial_y)^j\sum_{\ell =-\infty}^\infty y^{-\ell}[x^\ell](1-W^{(0)}_{1,0}(x)/x)^{\ell} \\ \nonumber &\phantom{=} \times[v^j]\sum_{r=0}^\infty \left(\prod_{i=0}^{r-1}(v-i)\right) (1-W^{(0)}_{1,0}(x)/x)^{-r}
		%[w^r]e^{wy} (-x)^{r+1} f(w,x)\\ \nonumber
		%&=-\dfrac{1}{y}\,\sum_{j=0}^\infty(-y\partial_y)^j\left(-\dfrac{y}{x}\right)\dfrac{dx}{dy}\\ \nonumber
		%&\phantom{=} \times[v^j]\sum_{r=0}^\infty \left(\prod_{i=0}^{r-1}(v-i)\right)(1-W^{(0)}_{1,0}(x)/x)^{-r}
		%[w^r]e^{w y} (-x)^{r+1} f(w,x),
	\end{align}
where we have used~\cite[Lemma~4.7]{BDKS-OrlovScherbin}, a variant of the Lagrange--B\"urmann formula, in the third line.
%			where in the first equality we have used the $\sum_{\ell =-\infty}^\infty y^{-\ell} f(\ell)=\sum_{j=0}^\infty(-y\partial_y)^j \sum_{\ell =-\infty}^\infty y^{-\ell}[v^j]f(v)$ trick, and in the second equality we have used~\cite[Lemma~4.7]{BDKS-OrlovScherbin}. Then, taking the sum over $j$, %and denoting $\partial_Y = -y\partial_y$ for brevity
%			we rewrite this as:
%		\begin{align}		
%		&-\dfrac{1}{y}\,\sum_{r=0}^\infty \left(\prod_{i=0}^{r-1}(-y\partial_y-i)\right)
%		\left(-\dfrac{y}{x}\right)\dfrac{dx}{dy} (1+W^{(0)}_{1,0}(x))^{-r} [w^r]e^{-w W^{(0)}_{1,0}(x)} (-x)^{r+1}e^{w/x} f(w,x)\\ \nonumber
%		&\mathop{=}^{\eqref{eq:xXyY6y}}\,\sum_{r=0}^\infty \partial_y^k
%		\dfrac{1}{x}\dfrac{dx}{dy}\; (-y)^{r} (1+W^{(0)}_{1,0}(x))^{-r} [w^r]e^{w W^{(0)}_{1,0}(x)/x} (-x)^{r+1}e^{w/x} f(w,x)\\ \nonumber
%		%&=\,\sum_{r=0}^\infty \partial_y^k		\dfrac{1}{x}\dfrac{dx}{dy}\; (-y)^{r} (xy)^{-r} (-x)^r[w^r]e^{w W^{(0)}_{1,0}(x)/x} (-x) e^{w/x} f(w,x)\\ \nonumber
		%&=\,\sum_{r=0}^\infty \partial_y^k 		\dfrac{1}{x}\dfrac{dx}{dy}\; [w^r]e^{w y} e^{-w/x}e^{w/x}  (-x)  f(w,x)\\ \nonumber
%		&=-\,\sum_{r=0}^\infty \partial_y^k
%		\dfrac{dx}{dy}\; [w^r]e^{w y} f(w,x),
%	\end{align}
%where in the last line we have recalled that $1+W^{(0)}_{1,0}(x)=xy$.
%Here the transition $\displaystyle \mathop{=}^{'}$ is due to \cite[Lemma~4.5]{BDKS-OrlovScherbin}, the transition $\displaystyle \mathop{=}^{''}$ is due to \cite[Lemma~4.7]{BDKS-OrlovScherbin}, and
%\begin{equation}
%	\partial_Y = -y\partial_y.
%\end{equation}
	
	This concludes the proof of Equation~\eqref{eq:Wxtoomega}.
\end{proof}

%Now, to give an independent proof of Proposition~\ref{prop:xy-relations} in this setup, we follow essentially the same scheme of computations as in Section~\ref{sec:vertex-operators-standard-E}, but then using Equation~\eqref{eq:bJfromEnew} instead of $\eqref{eq:bJfromE}$, in order to derive Equations~\eqref{eq:Wytoomega},~\eqref{eq:WxtoW}, and~\eqref{eq:WytoW}.

%we follow exactly the same line of argument in order to derive Equations~\eqref{eq:Wytoomega},~\eqref{eq:WxtoW}, and~\eqref{eq:WytoW}.  Essentially we follow the same scheme of computations as in Section~\ref{sec:vertex-operators-standard-E}, but then using Equation~\eqref{eq:bJfromEnew} instead of $\eqref{eq:bJfromE}$.
Now, to complete the alternative proof of Proposition~\ref{prop:xy-relations} in this setup, we have to derive  Equations~\eqref{eq:Wytoomega},~\eqref{eq:WxtoW}, and~\eqref{eq:WytoW}. Let us start with proving
Equation~\eqref{eq:Wytoomega}.

Note that
\begin{align}\label{eq:DbJDi-fromEtilde}	
	{J}(z)&=\sum_{\ell =-\infty}^\infty z^\ell  \sum_{r=0}^\infty\Bigl(\partial_\theta^r\theta^{-\ell}\bigm|_{\theta=1}\Bigr)
	[w^r y^{-\ell }]\eta(w)\cD^{-1}  \tilde\cE(\hbar w,y^{-1}) \cD,	
\end{align}
which can be obtained similarly to~\eqref{eq:bJfromEnew}, but starting with~\eqref{eq:DbJDi-fromE} instead of~\eqref{eq:bJfromE}.

We substitute~\eqref{eq:DbJDi-fromEtilde} into
\begin{align} \label{eq:MixedFirstDef-5}
	W_{m+1,n}=\hbar^{1-m-n} \VEVc{\prod_{i=1}^{m+1}
		J(x_i) \prod_{i=m+2}^{m+n}  \J(\hbar,y_i^{-1}) \;Z}
\end{align}
in place of $J(x_{m+1})$ 
and describe the result of this substitution using the transformation $\tilde V$ that takes a Laurent series $f$ in~$w$ and~$y^{-1}$ to the following series in $x$:
\begin{equation}\label{eq:tildeV}
	\tilde Vf=-\dfrac{1}{x}\sum_{\ell =-\infty}^\infty x^{\ell}  \sum_{r=0}^\infty\Bigl(\partial_\theta^r\theta^{-\ell}\bigm|_{\theta=1}\Bigr)
	[w^r y^{-\ell }](-y)^{r+1} e^{w/y} f(w,y^{-1}).
\end{equation}
Similarly to~\eqref{eq:omegaUcor}--\eqref{eq:cWcor}, one can conclude that
\begin{align}\label{eq:omegatVcW}
	 \dfrac{\omega^{(g)}_{m+1,n}}{\prod_{j=1}^{m+1} dx_j\prod_{i=m+2}^{m+n+1} dy_i} = \tilde V \cW^{y,(g)}_{m,n+1}(w; y^{-1}),
%	W_{m+1,n} & =
%	\tilde V \hbar^{1-m-n}
%	\covac \prod_{i=1}^{m}
%	J(x_i) 
%	\cD^{-1}
%	\eta(w) \tilde\cE(w,x)
%	\cD
%	\prod_{i=m+2}^{m+n+1} \J(\hbar,y_i^{-1})
%	\;Z\vac^\circ
%	\\ \notag
%	& = \tilde V \cW^{y}(w; y^{-1}),
\end{align}
where
\begin{align}\label{eq:cWycor}
		&\cW^y_{m,n+1}(w;y^{-1}) = \left(e^{-\mathcal{S}(w\hbar \partial_y)\frac{w}{y}}\right)  \prod_{j=1}^m \left(-\dfrac{1}{x_i}\right) \prod_{i=m+2}^{m+n+1} \left(-\dfrac{1}{y_i}\right)
		\\ \notag & \times \left(-\frac{1}{y}\right) \hbar^{1-m-n}
		\covac \prod_{i=1}^{m}
		J(x_i)%
		\cD^{-1}\tilde\cE(-\hbar w/y,y^{-1})\cD
		\prod_{i=m+2}^{m+n+1} \J(\hbar,y_i^{-1})
		\;Z\vac^\circ.
\end{align}
The proof that the RHS of~\eqref{eq:cWycor} coincides with $\cW^y_{m,n+1}(w;y^{-1})$ as defined in~\eqref{eq:cTy}--\eqref{eq:cWy} is analogous to the proof of~\eqref{eq:cWcor} .

Proceeding the same way as in the proof of Proposition~\ref{prop:simple-rec-mixed} in the present section, one can convert Equation~\eqref{eq:omegatVcW} with operator $\tilde V$ given by~\eqref{eq:tildeV} to Equation~\eqref{eq:Wytoomega}. %Note two changes of sign (in the degree of the resulting variable, which is $x$ as opposed to $y^{-1}$, and $\ell \to -\ell$ in the exponent in the definition of $V$ as opposed to the definition of $U$) --- these two adjustments of signs compensate each other, so the resulting Equation~\eqref{eq:WYtoomega} looks very similar to Equation~\eqref{eq:WXtoomega}.

The proofs of Equations~\eqref{eq:WxtoW} and~\eqref{eq:WytoW} go along similar lines. %exactly the same lines, and they are based on the following two formulas:
% \begin{align}
	% 	\cD^{-1}\tilde\cE(\tilde w \hbar;y^{-1})\cD
	% 	&=\sum_{\ell =-\infty}^\infty y^{-\ell}  \sum_{r=0}^\infty\partial_\theta^r e^{\tilde u \theta+\ell \frac{\cS(\ell \hbar\partial_\theta)}{\cS(\hbar\partial_\theta)}\log\theta}|_{\theta=1}
	% 	[u^r x^{\ell }]\tilde\cE(\hbar w,x); \\
	% 	\cD\tilde\cE( w \hbar;x)\cD^{-1}
	% 	&=\sum_{\ell =-\infty}^\infty x^{\ell}  \sum_{r=0}^\infty\partial_\theta^r e^{ u \theta-\ell \frac{\cS(\ell \hbar\partial_\theta)}{\cS(\hbar\partial_\theta)}\log\theta}|_{\theta=1}
	% 	[\tilde u^r y^{-\ell }]\tilde\cE(\hbar \tilde w,y^{-1}).
	% \end{align}

 %%%%%%%%%%%
%%%%%%%%%%%
%%%%%%%%%%%

\section{Topological recursion}

\label{sec:TopologicalRecursion}

The goal of this Section is to prove Theorems~\ref{thm:mainconjholds} and~\ref{thm:simpleformulaxyswap}. To this end we extensively discuss the structure of poles of the differentials $\omega^{(g)}_{m,n}$ and the loop equations that they satisfy.

\subsection{Recollections on topological recursion} \label{sec:RecollectionTR}

Let $p_1,\dots,p_a\in S$ be the zeros of $dx$. We assume that all zeros of $dx$ are simple. Let $\sigma_i$ be the deck transformations of $x$ near $p_i$. It is proved in~\cite{BEO-loop,BS-blobbed} that the differentials $\omega^{(g)}_{m,0}$ satisfy the topological recursion for the input data $(S,x,y,B)$ if and only if they satisfy the following properties:

\begin{itemize}
	\item The \emph{linear loop equations}: for any $g,m\geq 0$, $i=1,\dots,a$
	\begin{equation} \label{eq:LLE-original}
		\omega^{(g)}_{m+1,0}(z_{\llbracket m \rrbracket},z) + \omega^{(g)}_{m+1,0}(z_{\llbracket m \rrbracket},\sigma_i(z))
	\end{equation}
	is holomorphic at $z\to p_i$ and has at least a simple zero in $z$ at $z=p_i$.
	\item The \emph{quadratic loop equations}: for any $g,m\geq 0$, $i=1,\dots,a$, the quadratic differential in $z$
	\begin{equation} \label{eq:QLE-original}
		\omega^{(g-1)}_{m+2,0}(z_{\llbracket m \rrbracket},z,\sigma_i(z)) + \sum_{\substack{g_1+g_2=g\\ I_1\sqcup I_2 = \llbracket m \rrbracket
		}} \omega^{(g_1)}_{|I_1|+1,0} (z_{I_1},z)\omega^{(g_2)}_{|I_2|+1,0} (z_{I_2},\sigma_i(z))
	\end{equation}
	is holomorphic at $z\to p_i$ and has at least a double zero at $z=p_i$.
	f
	\item  The \emph{projection property}: for any $g\geq 0$, $m\geq 1$, $2g-2+m>0$
	\begin{equation}
		\omega^{(g)}_{m,0}(z_{\set{m}}) = \sum_{i_1,\dots,i_m=1}^a \Bigg(\prod_{j=1}^m \res_{z_j'= p_{i_j}} \int^{z_j'}_{p_{i_j}} B(z_j',z_j)\Bigg) \omega^{(g)}_{m,0}(z'_{\set{m}}). 
	\end{equation}
	(by product in the parenthesis we mean the product of operators applied to $\omega^{(g)}_{m,0}(z'_{\set{m}})$).
\end{itemize}

We would like to make yet another convenient reformulation of this criterion.

\begin{definition} (cf.~Definition~\ref{def:FirstDefXi-space}) We denote by $\Xi^x$ the space of functions defined in a neighborhood of the zero locus of $dx$ on $S$ and having odd polar part with respect to the involution~$\sigma$,
\begin{equation}
f\in\Xi^x\quad\Leftrightarrow\quad f(z)+f(\sigma_i(z))\text{ is holomorphic at $z\to p_i$.}
\end{equation}
\end{definition}

\begin{remark} \label{rem:Xi-explained} Any function with poles of order at most one belongs to $\Xi^x$. Besides, $\Xi^x$ is preserved by the operator $\partial_x$. It follows that any function of the form $\partial_x^kf$, $k\ge0$, belongs to $\Xi^x$ if $f$ has at most simple poles at $p_i$. Moreover, its is easy to see by local computations that $\Xi^x$ is actually spanned by the functions of this form. 
	
Note also that any function defined in a neighborhood of the zero locus of $dx$ with at most simple poles at the zero locus of $dx$ can be represented as a linear combination of functions of the form $\partial_x^kf$, $k=0,1$, where $f$ is holomorphic. This observation allows to defined $\Xi^x$ as it was done in Definition~\ref{def:FirstDefXi-space}. 
\end{remark}

Then the linear and quadratic loop equations on the differentials $\omega^{(g)}_{m,0}=\oW^{(g)}_{m,0}\prod_{i=1}^mdx_i$ can be represented in the following equivalent form:
\begin{gather}\label{eq:LLE-W}
		\oW^{(g)}_{m+1,0}(z_{\llbracket m \rrbracket},z)\in\Xi^x\\
\label{eq:QLE-W}
		\oW^{(g-1)}_{m+2,0}(z_{\llbracket m \rrbracket},z,z) + \sum_{\substack{g_1+g_2=g\\ I_1\sqcup I_2 = \llbracket m \rrbracket
		}} \oW^{(g_1)}_{|I_1|+1,0} (z_{I_1},z)\oW^{(g_2)}_{|I_2|+1,0} (z_{I_2},z)\in\Xi^x.
\end{gather}

The first equation is a reformulation of the linear loop equation. Indeed, the form~\eqref{eq:LLE-original} being symmetric with respect to~$\sigma_i$ has zero of order at least one at~$p_i$ so that after division by~$dx$ it is still holomorphic and this is what~\eqref{eq:LLE-W} expresses. The second equation is not exactly~\eqref{eq:QLE-original} but is equivalent to it modulo the linear loop equation. In order to see that, define
	\begin{equation}
		Q(z',z'')\coloneqq \oW^{(g-1)}_{m+2,0}(z_{\llbracket m \rrbracket},z',z'') + \sum_{\substack{g_1+g_2=g\\ I_1\sqcup I_2 = \llbracket m \rrbracket
		}} \oW^{(g_1)}_{|I_1|+1,0} (z_{I_1},z')\oW^{(g_2)}_{|I_2|+1,0} (z_{I_2},z'')	
	\end{equation}
and denote by $\sigma'$ and $\sigma''$ the action of the involution $\sigma=\sigma_i$ on the argument $z'$ and~$z''$, respectively. Then, we have
\begin{multline}
Q(z,z)+Q(\sigma(z),\sigma(z))=\big\lfloor_{z'=z''=z}(1+\sigma'\sigma'')Q(z',z'')\\
=\big\lfloor_{z'=z''=z}(1+\sigma')(1+\sigma'')Q(z',z'')-\big\lfloor_{z'=z''=z}(\sigma'+\sigma'')Q(z',z'').
\end{multline}
The first summand is holomorphic by the linear loop equation, and the second summand coincides with~\eqref{eq:QLE-original}, up to the factor $-\frac2{dx^2\prod_{i=1}^m dx_i}$. Thus the quadratic loop equation is equivalent to the condition~\eqref{eq:QLE-W} if the linear loop equation is satisfied.

\medskip
Recall that the form~$B$ entering the spectral curve data is defined uniquely if~$S=\Cf P^1$. If the genus of~$S$ is greater than zero then the form $B$ is defined up to a biholomorphic summand that can be fixed uniquely by choosing a symplectic basis in $H_1(S)$ formed of a system of $(\mathfrak{A},\mathfrak{B})$-cycles and by imposing an additional requirement that all $\mathfrak{A}$-periods of~$B$ are vanishing with respect to both its arguments.

\begin{proposition}\label{prop:trueTRformulation} The collection of symmetric meromorphic differentials $\omega^{(g)}_{m,0}$ satisfies the topological recursion for the spectral curve data $(S,x,y,B)$ if and only if these differentials satisfy the linear and quadratic loop equations~\eqref{eq:LLE-W},~\eqref{eq:QLE-W}, and every differential $\omega^{(g)}_{m,0}$ regarded as a meromorphic $1$-form in its last argument has no poles other than at zeros of $dx$ and all $\mathfrak{A}$-periods of this form are vanishing.
\end{proposition}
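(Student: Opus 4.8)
The plan is to start from the characterization recalled in this subsection. By~\cite{BEO-loop,BS-blobbed} the differentials $\omega^{(g)}_{m,0}$ satisfy the topological recursion for $(S,x,y,B)$ if and only if they satisfy the linear and quadratic loop equations together with the projection property. Since~\eqref{eq:LLE-W} and~\eqref{eq:QLE-W} were just shown to be equivalent reformulations of the linear and quadratic loop equations, it suffices to prove that, \emph{in the presence of the loop equations}, the projection property is equivalent to the two analytic conditions in the statement: that each $\omega^{(g)}_{m,0}$, viewed as a meromorphic $1$-form in its last argument, has no poles outside the zeros of $dx$ and has vanishing $\mathfrak{A}$-periods. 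The quadratic loop equation plays no role in this last equivalence; it is merely carried along on both sides, and only the linear loop equation will actually be used.

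The core of the argument is the analysis of the projection operator $\mathcal{P}_z\colon\eta\mapsto\sum_i\res_{z'=p_i}\bigl(\int_{p_i}^{z'}B(\cdot,z)\bigr)\eta(z')$ acting on a meromorphic $1$-form $\eta$ in one variable. First I would establish its standard properties by expanding $B(z',z)$ at $z'=p_i$ and computing the residue: $\mathcal{P}_z\eta$ is a meromorphic $1$-form in $z$ which (i) reproduces the principal part of $\eta$ of pole order at least $2$ at each $p_i$, (ii) has no poles away from the $p_i$, and (iii) has vanishing $\mathfrak{A}$-periods, this last property coming precisely from the fact that $B(\cdot,z)$ has vanishing $\mathfrak{A}$-periods in the variable $z$. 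Consequently $\mathcal{P}_z$ is idempotent, with image equal to the space of $1$-forms whose only poles are at the $p_i$, of order at least $2$, and with vanishing $\mathfrak{A}$-periods.

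Next I would feed in the linear loop equation. Writing $x-x(p_i)=\zeta^2$ so that $\sigma_i$ acts by $\zeta\mapsto-\zeta$, the condition~\eqref{eq:LLE-W} forces the polar part of $\omega^{(g)}_{m,0}$ at $p_i$ (in its last variable) to involve only even negative powers of $\zeta$; in particular the residue at $p_i$ vanishes. Hence for a form satisfying the linear loop equation, $\mathcal{P}_z$ reproduces its \emph{entire} principal part at the $p_i$. Combining this with properties (i)--(iii), one obtains that $\mathcal{P}_z\omega=\omega$ in the last variable if and only if $\omega$ has no poles outside the zeros of $dx$ and has vanishing $\mathfrak{A}$-periods: indeed, under these conditions $\omega-\mathcal{P}_z\omega$ is a holomorphic form with vanishing $\mathfrak{A}$-periods, hence zero, while conversely any form in the image of $\mathcal{P}_z$ manifestly satisfies the two analytic conditions. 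Finally I would reconcile this single-variable statement with the projection property, which a priori applies the projection in all $m$ arguments at once: since $\omega^{(g)}_{m,0}$ is symmetric and the operators $\mathcal{P}_{z_j}$ commute and are idempotent, the identity $\prod_j\mathcal{P}_{z_j}\omega=\omega$ is equivalent to $\mathcal{P}_{z_m}\omega=\omega$ (if the full product fixes $\omega$, then $\omega$ lies in the image of $\mathcal{P}_{z_m}$ and is fixed by it; conversely symmetry propagates $\mathcal{P}_{z_m}\omega=\omega$ to every variable).

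The step I expect to be the main obstacle is the careful verification of property (iii), namely that $\mathcal{P}_z\eta$ has vanishing $\mathfrak{A}$-periods, since this is exactly the point where the global higher-genus geometry and the chosen normalization of the Bergman kernel enter the argument; by contrast, the reproduction of the principal parts in (i)--(ii) is a purely local residue computation. For $S=\Cf P^1$ this obstacle disappears entirely, as there are no $\mathfrak{A}$-cycles and the projection reduces to reproducing principal parts, so the delicate case is genuinely that of positive genus.
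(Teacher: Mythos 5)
Your proposal is correct, but it proves the key equivalence by a different mechanism than the paper. Both arguments start from the characterization of \cite{BEO-loop,BS-blobbed} (topological recursion $\Leftrightarrow$ linear and quadratic loop equations plus the projection property), and in both the forward direction is the same observation that the projection property forces poles only at the zeros of $dx$ with vanishing $\mathfrak{A}$-periods. The difference is in the converse. The paper argues by uniqueness: it invokes the fact that the linear and quadratic loop equations determine the principal parts of the poles at the $p_i$, and then uses that a meromorphic differential on $S$ is fixed by its principal parts together with its $\mathfrak{A}$-periods, so a collection satisfying the loop equations with the stated pole and period conditions must coincide with the topological recursion differentials. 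You instead analyze the Bergman projection $\mathcal{P}_z$ directly: its image is exactly the space of forms with poles only at the $p_i$, of order at least two, with vanishing $\mathfrak{A}$-periods, and the linear loop equation~\eqref{eq:LLE-W} (odd polar parts vanish in the local coordinate $\zeta$ with $x-x(p_i)=\zeta^2$, so in particular no residues) guarantees that $\mathcal{P}_z$ reproduces the full principal part, whence the projection property is equivalent to the two analytic conditions; the multi-variable projection is then reduced to the single-variable one by symmetry, commutativity and idempotency. Your route is more self-contained --- it avoids the nontrivial input that the loop equations determine the polar parts (which the paper takes from the abstract loop-equation framework), it isolates the fact that only the \emph{linear} loop equation is needed for this equivalence, and it makes the role of the normalization of $B$ explicit --- at the cost of being longer; the paper's route is terser and sits more naturally within the recursive viewpoint where the loop equations are known to compute the polar parts. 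One small remark: property (iii) of $\mathcal{P}_z$, which you flag as the expected main obstacle, is in fact immediate from the vanishing of the $\mathfrak{A}$-periods of $B$ in each variable, so no genuine difficulty hides there.
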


\begin{proof} The projection property implies that the only poles of $\omega^{(g)}_{m,0}(z_{\set{m}})$, $2g-2+m>0$, are at the points $z_i=p_j$, $i\in \set{m}$, $j=1,\dots,a$, that is, at the zero locus of $dx$. The linear and quadratic loop equations determine uniquely the principal parts of these poles. A meromorphic differential on a compact Riemann surface~$S$ is uniquely determined by its principal parts near its poles up to a holomorphic add-on that is uniquely fixed by its $\mathfrak{A}$-periods.
%The projection property just means that $\omega^{(g)}_{m,0}$ regarded as a $1$-form in its last argument is the unique meromorphic differential that is reconstructed from its principal parts at the zero locus of $dx$ with the extra condition that its $\mathfrak{A}$-periods vanish.
\end{proof}

\subsection{Higher loop equations}\label{sec:higherloop} Consider the function $\cW^{x,(g)}_{m+1,0}(w;z)$ defined by~\eqref{Wgg} for the case $n=0$, where $z=z_{m+1}$. It is an infinite series in $w$ whose coefficients are expressed as polynomial combinations of the functions $W^{(g')}_{m',0}$ and their derivatives.

\begin{definition} \label{def:higherloop}
	The $r$-loop equation is the requirement that (for a given $r\geq 0$ and for given $g,m\geq 0$)
%	the coefficient of $w^{r-1}$ in $\cW^{x,(g)}_{m+1,0}(w;z)$ belongs to $\Xi^x$.
\begin{equation}
	[w^{r-1}]\cW^{x,(g)}_{m+1,0}(w;z) \in \Xi^x.
\end{equation}
	
\end{definition}

It is straightforward to see from~\eqref{eq:cW} that the $r$-loop equation for $r=1$ and~$2$ are exactly~\eqref{eq:LLE-W} and~\eqref{eq:QLE-W}, respectively. The higher loop equations involve not only the functions $\oW^{(g)}_{m,0}$ themselves but also their derivatives.

\begin{proposition}\label{lem:higherloop}
If the differentials $\{\omega^{(g)}_{m,0}\}$ satisfy the first $2$ loop equations, then they satisfy all loop equations for arbitrary $(r,g,m)$. As a corollary, the differentials obtained by topological recursion satisfy all higher $r$-loop equations.
\end{proposition}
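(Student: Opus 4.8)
The plan is to deduce all $r$-loop equations from the two given ones by transporting the statement to the $x$--$y$-dual side via the parametric relation~\eqref{eq:WytoW} of Proposition~\ref{prop:xy-relations}. Throughout I would work locally near a fixed zero $p_i$ of $dx$, with deck transformation $\sigma_i$, order by order in $\hbar$ and in $w$. By Remark~\ref{rem:Xi-explained}, $f\in\Xi^x$ is equivalent to $f+\sigma_i^\ast f$ being holomorphic at $p_i$; every function with at most a simple pole at $p_i$ lies in $\Xi^x$; the product of a holomorphic function with an at-most-simple-pole function is again in $\Xi^x$; and $\partial_x$ maps $\Xi^x$ to itself (hence so does $\partial_X=-x\partial_x$, since $x$ is $\sigma_i$-invariant, which together with Lemma~\ref{lem:Wx-WX} lets me pass freely between $\cW^x$ and $\cW^X$). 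In this language Definition~\ref{def:higherloop} asks exactly that $[w^{r-1}]\cW^{x,(g)}_{m+1,0}(w;z)\in\Xi^x$, and the cases $r=1,2$ are the given equations~\eqref{eq:LLE-W} and~\eqref{eq:QLE-W}.

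The first step is to extract from the quadratic loop equation the one structural fact that tames the exponential prefactors. Applied in the bottom case $m=0$, genus $0$, equation~\eqref{eq:QLE-W} reads $(\oW^{(0)}_{1,0})^2=y^2\in\Xi^x$; since the leading term of the polar part of $y^2$ has even order under $\sigma_i$, anti-invariance forces $y$ to be holomorphic at $p_i$. Consequently $e^{\pm w y}$ is holomorphic at $p_i$ coefficientwise in $w$, and multiplying by it preserves the class of at-most-simple-pole functions, hence preserves $\Xi^x$.

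The main step is then the relation~\eqref{eq:WytoW} in the case $n=0$,
\[
\cW^{x,(g)}_{m+1,0}(w;z)=-\sum_{r\ge0}\partial_x^r\bigl(e^{-w\,y}\,[{\tilde w}^{\,r}]\,\tfrac{dy}{dx}\,e^{\tilde w\,x}\,\cW^{y,(g)}_{m,1}(\tilde w;z)\bigr).
\]
Since $\partial_x$ preserves $\Xi^x$ and the sum is finite in each $\hbar$- and $w$-order, it suffices that the argument of $\partial_x^r$ lie in $\Xi^x$. By the previous paragraph $e^{-wy}$ is harmless, and $\tfrac{dy}{dx}$ contributes only a simple pole at $p_i$ (here $dx$ has a simple zero and $dy|_{p_i}\neq0$); so the whole question reduces to the single claim that $[{\tilde w}^{\,r}]e^{\tilde w\,x}\cW^{y,(g)}_{m,1}(\tilde w;z)$ is \emph{holomorphic} at $p_i$, which in turn is the regularity of the dual mixed differential $\omega^{(g)}_{m,1}$ at zeros of $dx$ in its $y$-type argument --- precisely part~(1) of Theorem~\ref{th:loopeq}. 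Granting this, a simple pole times holomorphic factors stays in $\Xi^x$, and applying $\partial_x^r$ and summing yields $[w^{r-1}]\cW^{x,(g)}_{m+1,0}\in\Xi^x$ for all $r$.

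The main obstacle is proving this regularity without circularity. The inverse relation~\eqref{eq:WxtoW} expresses $\cW^{y,(g)}_{m,1}$ back through $\cW^{x,(g)}_{m+1,0}$, but the simple zero of $\tfrac{dx}{dy}$ only lowers pole orders by one, whereas membership in $\Xi^x$ by itself permits poles of arbitrarily high (anti-invariant) order --- for instance $\partial_x^k$ of a simple pole. Hence the higher loop equations cannot simply be fed back through~\eqref{eq:WxtoW}, and the regularity of the mixed differentials must be established by a genuinely independent local pole-order analysis, carried out through the explicit graph expansion of Proposition~\ref{prop:omega-g-n-graphsxy} together with the diagonal regularity of Proposition~\ref{prop:regular-at-diagonals} and the linear and quadratic loop equations. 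Once that regularity is in hand the argument above closes, giving all $r$-loop equations; specializing to differentials produced by topological recursion, for which~\eqref{eq:LLE-W} and~\eqref{eq:QLE-W} hold, then yields the stated corollary.
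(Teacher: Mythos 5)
Your reduction is the right one, and it coincides with the final step of the paper's own argument: once one knows that the mixed differentials are holomorphic in their $y$-type argument at the zeros of $dx$, the shape of~\eqref{eq:WytoW} for $n=0$ --- a sum, finite in each order of $\hbar$ and $w$, of $\partial_x^r$ applied to $e^{-wy}\cdot\tfrac{dy}{dx}\cdot(\text{holomorphic at }p_i)$, i.e.\ to functions with at most simple poles at $p_i$ --- immediately places every coefficient $[w^{r-1}]\cW^{x,(g)}_{m+1,0}$ in $\Xi^x$. Your preliminary observation that the $(g,m)=(0,0)$ quadratic loop equation forces $y$ to be regular at $p_i$ (an even-order leading pole of $y^2$ is incompatible with anti-invariance of the polar part) is also correct and is a nice touch, since it removes the need to assume regularity of $y$ separately. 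However, the proof has a genuine gap exactly at the point you flag yourself: the regularity of $\overline{W}^{(g)}_{m,1}$ (and of the higher mixed correlators) at zeros of $dx$ is never established. You correctly note that it cannot be obtained by feeding the loop equations back through~\eqref{eq:WxtoW}, and you then defer it to an unspecified ``local pole-order analysis'' of the graph expansion of Proposition~\ref{prop:omega-g-n-graphsxy}. That deferral is the whole problem: the individual graph contributions \emph{do} have poles at zeros of $dx$ in the $y$-type variables, their cancellation is precisely the statement to be proved, and it is false without the loop equations --- so no inspection of the graph formula by itself, even combined with Proposition~\ref{prop:regular-at-diagonals}, can produce it. What you have proved is only the implication ``regularity of the mixed correlators $\Rightarrow$ all $r$-loop equations'', not the proposition.

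The paper closes exactly this gap (in the proof of Proposition~\ref{prop:loopequations}) by a short argument you could substitute for your deferral: induct on $2g-2+m+n$ and extract the coefficients of $w^0$ and $w^1$ from~\eqref{eq:WytoW} with $n=0$. The $r=0$ summand on the right isolates $\overline{W}^{(g)}_{m,1}$; the left-hand sides are exactly the linear and quadratic loop combinations for $\omega^{(g)}_{m+1,0}$, hence lie in $\Xi^x$ by hypothesis; and the $r\ge1$ summands involve only correlators of smaller Euler characteristic, hence lie in $\Xi^x$ by the induction hypothesis. This yields $\tfrac{dy}{dx}\overline{W}^{(g)}_{m,1}\in\Xi^x$ and $y\,\tfrac{dy}{dx}\overline{W}^{(g)}_{m,1}\in\Xi^x$ (Equation~\eqref{eq:LLE-QLE-holo}), and then a parity argument finishes: an element of $\Xi^x$ has anti-invariant polar part, hence odd pole order, while multiplication by $y-y(p)$ lowers the pole order by exactly one and flips its parity, which is consistent only if $\tfrac{dy}{dx}\overline{W}^{(g)}_{m,1}$ has at most a simple pole, i.e.\ $\overline{W}^{(g)}_{m,1}$ is regular at $p_i$. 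Regularity of all $\omega^{(g)}_{m,n}$ in all $y$-type variables then follows from the recursion~\eqref{eq:Wxtoomega} together with the $\mathfrak{S}_m\times\mathfrak{S}_n$ symmetry of Remark~\ref{rem:Symmetry}, after which your concluding step applies verbatim and the corollary for topological recursion is immediate.
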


This proposition is proved in~\cite[Proposition 3.3]{DKPS-rspin} in a slightly different but equivalent form. In fact, it is a special case of Proposition~\ref{prop:loopequations} whose independent proof is presented below.

We extend Definition~\ref{def:higherloop} to the full set of mixed correlation differentials $\{\omega^{(g)}_{m,n}\}$.  To this end, let us define the space $\Xi^y$ first.

\begin{definition} (cf.~Definition~\ref{def:FirstDefXi-space}) We denote by $\Xi^y$ the space of meromorphic functions defined in a neighborhood of the zero locus of $dy$ on $S$ and having odd polar part with respect to the deck transformation of $y$. Equivalently, it is spanned by the functions of the form $\partial_y^kf$ where $k=0,1,2,\dots$, and $f$ is holomorphic (or has a %t most 
	simple pole) in a neighborhood of the corresponding zero of~$dy$.
\end{definition}

The structure of $\Xi^y$ and the equivalence of this definition and Definition~\ref{def:FirstDefXi-space} can be clarified in exactly the same way as it was done for $\Xi^x$ in Remark~\ref{rem:Xi-explained}.

\begin{definition}\label{def:all-loop-xy} We say that the differentials $\{\omega^{(g)}_{m,n}\}$ satisfy the $r$-loop equations (for given $r$ and given $(g,m,n)$) if the coefficient of $w^{r-1}$ in $\cW^{x,(g)}_{m+1,n}(w;z)$ belongs to $\Xi^x$ and the coefficient of $\tilde w^{r-1}$ in $\cW^{y,(g)}_{m,n+1}(\tilde w;z)$ belongs to $\Xi^y$.
\end{definition}

\begin{proposition}\label{prop:loopequations} Assume that all zeros of $dx$ and $dy$ are simple and pairwise distinct, the differentials $\{\omega^{(g)}_{m,0}\}$ satisfy the linear and quadratic loop equations~\eqref{eq:LLE-W} and~\eqref{eq:QLE-W} for all $(g,m)$ and have no poles at the zero locus of~$dy$. Then the differentials $\{\omega^{(g)}_{m,n}\}$ given by Definition~\ref{def:mixedcorrdiff} satisfy all loop equations in the sense of Definition~\ref{def:all-loop-xy}. Moreover, $\omega^{(g)}_{m,n}$ is holomorphic in $z_i$ at any zero of $dy$ for $i\in\set{m}$, and it is holomorphic in $z_j$ at any zero of $dx$ if $j\in \set{m+n}\setminus\set{m}$.
\end{proposition}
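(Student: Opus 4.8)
The plan is to prove the two families of loop equations of Definition~\ref{def:all-loop-xy} together with the two holomorphicity statements by a single simultaneous induction, exploiting the parametric swap relations \eqref{eq:WxtoW}--\eqref{eq:WytoW} (equivalently \eqref{eq:WXtoW}--\eqref{eq:WYtoW}), which are already available. The whole argument is local: since by assumption the zeros of $dx$ and of $dy$ are simple and \emph{pairwise distinct}, near a zero $p$ of $dx$ the function $y$ is regular with $dy|_p\ne0$ (so $y$ is a local coordinate, $\partial_y$ is regular, $dx/dy$ has a simple zero and $dy/dx$ a simple pole), and symmetrically near a zero $q$ of $dy$ one has $dx/dy$ with a simple pole. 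The two structural facts that drive everything are recorded in Remark~\ref{rem:Xi-explained} and its $\Xi^y$-analogue: the operator $\partial_x$ preserves $\Xi^x$, and any function with at most a simple pole lies in $\Xi^x$ (and likewise for $\partial_y$, $\Xi^y$).

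First I would dispose of the ``easy'' half. The holomorphicity of $\omega^{(g)}_{m,n}$ in its $x$-arguments $z_1,\dots,z_m$ at the zeros of $dy$ is inherited from the hypothesis on $\omega^{(g)}_{m,0}$: the recursion \eqref{eq:Wxtoomega} acts only on the newly created $y$-argument, so no pole in $z_{\set m}$ at a zero of $dy$ can arise, and an induction on $n$ closes this. In particular $\cW^{x,(g)}_{m+1,n}(w;z)$ is then holomorphic at every zero of $dy$ in its special (x-type) argument $z$ as well, being assembled from such $\omega$'s by the $\partial_x$-regular operators in \eqref{eq:cT}--\eqref{eq:cW}. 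Feeding this into \eqref{eq:WxtoW}, at a zero $q$ of $dy$ the expression $e^{-\tilde w x}[w^r]\frac{dx}{dy}e^{wy}\cW^{x,(g)}_{m+1,n}$ is (holomorphic)$\times$(the simple pole of $dx/dy$), hence has at most a simple pole and lies in $\Xi^y$; since $\partial_y^r$ preserves $\Xi^y$, every coefficient $[\tilde w^{r-1}]\cW^{y,(g)}_{m,n+1}$ lies in $\Xi^y$. This proves all $y$-loop equations. The mirror implication, the $x$-loop equations $[w^{r-1}]\cW^{x,(g)}_{m+1,n}\in\Xi^x$, follows in the same way from \eqref{eq:WytoW} (using that $dy/dx$ has a simple pole at a zero $p$ of $dx$), \emph{provided} one already knows that $\cW^{y,(g)}_{m,n+1}(\tilde w;z)$ is holomorphic at the zeros of $dx$.

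Thus the entire difficulty is concentrated in the remaining holomorphicity statement: that $\omega^{(g)}_{m,n}$ is holomorphic in its $y$-arguments at the zeros of $dx$ (equivalently, that $\cW^{y,(g)}_{m,n+1}$ is regular there in its special argument). This is the genuinely new content and the place where the loop equations for the input are consumed. Concretely, for the freshly created $y$-argument $z$ one must show via \eqref{eq:Wxtoomega} that $\sum_{r\ge0}\partial_y^r[w^r]\frac{dx}{dy}e^{wy}\cW^{x,(g)}_{m+1,n}(w;z)$ is holomorphic at a zero $p$ of $dx$. Here naive pole counting fails: $\cW^{x,(g)}_{m+1,n}$ has only an odd polar part at $p$ (its $x$-loop equation) and $dx/dy$ has a simple zero, but these do not cancel the poles after applying $\partial_y^r$. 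The point is rather that holomorphicity of the reconstructed dual differential at $p$ is \emph{equivalent} to the requirement that \emph{all} coefficients $[w^{r-1}]\cW^{x,(g)}_{m+1,n}$ lie in $\Xi^x$ simultaneously --- which is exactly why the higher $r$-loop equations were introduced. I would prove this equivalence by a local computation at $p$, resumming the operator $\sum_r\partial_y^r[w^r]\frac{dx}{dy}e^{wy}(\cdot)$ by the Lagrange--B\"urmann and Taylor identities already used in Sections~\ref{sec:ProofPropositionSimplificationformal} and~\ref{sec:Computations-vertex-new-form} (it is the same transformation as the map $\tilde U$ there): it realizes the change of local expansion from the $x$-chart to the $y$-chart, and it sends a function whose polar behaviour at $p$ is governed to all orders by the loop equations to one that is regular at~$p$.

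The base of the induction is the pure case $n=0$, where the $x$-loop equations for the input $\omega^{(g)}_{m,0}$ must be derived from the linear and quadratic ones \eqref{eq:LLE-W}--\eqref{eq:QLE-W}; this is precisely the content of Proposition~\ref{lem:higherloop} (that is, \cite[Proposition~3.3]{DKPS-rspin}), which the present argument reproves independently. I would obtain it by induction on the loop order $r$: the exponential-times-product shape $\cW^{x,(g)}_{m+1,0}=\frac1w e^{\cT^x_{1,0}}\sum\prod\cT^x$ propagates membership in $\Xi^x$ from the two lowest orders, since each application of $\cS(w\hbar\partial_{x})$ contributes a $\partial_x$ (which preserves $\Xi^x$) while the quadratic equation controls the only genuinely new product term produced at each step. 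The main obstacle, and the technical heart of the whole proposition, is this local equivalence between the all-order loop equations on the $x$-side and holomorphicity at the ramification points of $x$ on the reconstructed $y$-side; once it is in place, the clean swap implications above close the simultaneous induction and deliver both loop-equation families and both holomorphicity statements, the disjointness of the zero loci of $dx$ and $dy$ guaranteeing that the two local analyses never interfere.
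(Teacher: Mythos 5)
Your overall architecture --- the easy holomorphicity in the $x$-arguments at zeros of $dy$, the $y$-side loop equations read off directly from the shape of~\eqref{eq:WxtoW}, and the reduction of the $x$-side loop equations via~\eqref{eq:WytoW} to regularity of the mixed differentials in their $y$-arguments at zeros of $dx$ --- coincides with the paper's. But your treatment of the hard step rests on two claims you do not prove. Claim (A), that holomorphicity of the reconstructed $\omega^{(g)}_{m,n+1}$ at a zero $p$ of $dx$ is equivalent to \emph{all} coefficients $[w^{r-1}]\cW^{x,(g)}_{m+1,n}$ lying in $\Xi^x$, is only gestured at; moreover, the transformation you invoke (``the same map $\tilde U$'') is resummed in the paper by Lagrange--B\"urmann at the point where $x$ is a local coordinate and $y$ has a pole, not at a ramification point of $x$, so that computation does not transfer to $p$, and the direction you actually need (loop equations $\Rightarrow$ regularity) requires delicate cancellations between different powers of $w$ after applying $\partial_y^r$ --- it is, in effect, an all-order version of the paper's parity argument, not a shortcut around it. Claim (B), that the higher loop equations for $\omega^{(g)}_{m,0}$ follow from the linear and quadratic ones, is~\cite[Proposition~3.3]{DKPS-rspin}, and your one-line induction on $r$ (``each $\cS(w\hbar\partial_x)$ contributes a $\partial_x$, the quadratic equation controls the only new product term'') does not work as stated: $\Xi^x$ is a vector space, not a ring, so at order $w^2$ and beyond $\cW^{x,(g)}_{m+1,0}$ contains triple and higher products of the $\cT^x$'s, and no assumed equation controls those combinations. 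Note the paper's logical order is the opposite of yours: Proposition~\ref{lem:higherloop} is obtained as a \emph{corollary} of Proposition~\ref{prop:loopequations}, whose proof never uses higher loop equations for the input.

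Even granting (A) and (B), your induction does not close for $n\geq 1$. To obtain the $x$-loop equations for $\cW^{x,(g)}_{m+1,n}$ from~\eqref{eq:WytoW} you need regularity at zeros of $dx$ of the ingredients of $\cW^{y,(g)}_{m,n+1}$, and these include (as the $k=1$, $g'=g$ term, i.e.\ the coefficient of $\tilde w^0$) the form $\omega^{(g)}_{m,n+1}$ itself; but in your scheme the regularity of $\omega^{(g)}_{m,n+1}$ is to be deduced, via (A), from precisely those $x$-loop equations --- a circle. The paper breaks the circle as follows: it treats only the case $n=0$ ``by hand'', extracting the coefficients of $w^0$ and $w^1$ in~\eqref{eq:WytoW}, whose left-hand sides are literally the linear and quadratic combinations~\eqref{eq:LLE-W},~\eqref{eq:QLE-W}, while the $r\geq 1$ tails involve only terms of strictly smaller Euler characteristic and are handled by induction; writing $\omega^{(g)}_{m,1}=\overline{W}^{(g)}_{m,1}\prod_{i=1}^m dx_i\,dy$, this yields $\tfrac{dy}{dx}\overline{W}^{(g)}_{m,1}\in\Xi^x$ and $y\,\tfrac{dy}{dx}\overline{W}^{(g)}_{m,1}\in\Xi^x$, and a short parity argument (elements of $\Xi^x$ have only odd-order polar terms, while multiplication by $y-y(p)$ lowers the pole order by one and flips its parity) forces $\overline{W}^{(g)}_{m,1}$ to be regular. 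Regularity for $n\geq 2$ is then obtained without any loop equations: the recursion~\eqref{eq:Wxtoomega} acts only on the new argument and the diagonal $x$-arguments, so regularity in the \emph{old} $y$-arguments is inherited from the previous step, and the $\mathfrak{S}_n$-symmetry of $\omega^{(g)}_{m,n}$ (Remark~\ref{rem:Symmetry}) transfers it to the new argument. If you insert this inheritance-plus-symmetry step, your claim (A) becomes unnecessary for $n\geq 1$; replacing (A)$+$(B) at the $n=0\to 1$ step by the two-coefficient extraction and the parity argument removes both unproven claims and recovers the paper's proof.
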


\begin{proof} The fact that $\omega^{(g)}_{m,n}$ is holomorphic in $z_i$ at any zero of $dy$ for $i\in\set{m}$ is obvious: these poles are absent for the initial forms~$\omega^{(g)}_{m,0}$ and they do not appear in the course of recursion~\eqref{eq:Wxtoomega}. Then, the very form of Equation~\eqref{eq:WxtoW} implies that the loop equations 
\begin{equation}
	[\tilde w^{r-1}]\cW^{y,(g)}_{m,n+1}(\tilde w;z)\in\Xi^y
\end{equation}
hold for all $r$ and all $(g,m,n)$.

If we want to use~\eqref{eq:WytoW} in a similar way in order to prove the loop equations
\begin{equation}
[w^{r-1}]\cW^{x,(g)}_{m+1,n}(w;z)\in\Xi^x,
\end{equation}
we need to show that the functions $\oW^{(g')}_{m',n'}$ entering the right hand side are holomorphic in $z_j$ at zeros of~$dx$ for $j\in \set{m'+n'}\setminus\set{m'}$. This is not obvious at all: these poles are present for the initial functions~$\omega^{(g)}_{m,0}$ and we need to show that they cancel out in the course of recursion~\eqref{eq:Wxtoomega}.

Let us look more attentively at the identity~\eqref{eq:WytoW} for the case $n=0$. The summand with $r=0$ corresponds to the coefficient $[\tilde w^0]\cW^{y,(g)}_{m,1}(\tilde w;z)=\oW^{(g)}_{m,1}$. So we can represent the coefficients of $w^0$ and $w^1$ of this identity as follows
\begin{align}
		[w^0]\cW^{x,(g)}_{m+1,0}(w;z)&=-\frac{dy}{dx}\oW^{(g)}_{m,1}(z_{\set{m}};z)
		-\sum_{r\ge1} \partial_x^r\Bigl([{\tilde w}^r]\frac{dy}{dx} e^{\tilde w\,x}\cW^{y,(g)}_{m,1}(\tilde w;z)\Bigr),\\
		-[w^1]\cW^{x,(g)}_{m+1,0}(w;z)&=-y\,\frac{dy}{dx} \oW^{(g)}_{m,1}(z_{\set{m}};z)
		-\sum_{r\ge1} \partial_x^r\Bigl(y\,[{\tilde w}^r]\frac{dy}{dx} e^{\tilde w\,x}\cW^{y,(g)}_{m,1}(\tilde w;z)\Bigr).
\end{align}
The left hand sides are exactly the expressions~\eqref{eq:LLE-W} and~\eqref{eq:QLE-W} entering the linear and quadratic loop equations, so that by assumption they belong to~$\Xi^x$. The second summands on the right hand sides involve functions $\oW^{(g')}_{m',n'}$ with $2g'-2+m'+n'\leq2g-2+m$. So, arguing by induction, we may assume that the regularity of these functions at zeros of~$dx$ with respect to the variables $z_j$, $j\in \set{m'+n'}\setminus\set{m'}$, is already proved. It follows that these terms also belong to $\Xi^x$. We conclude that $\oW^{(g)}_{m,1}$ regarded as a function of the last argument $z_{m+1}=z$ satisfies
\begin{equation} \label{eq:LLE-QLE-holo}
\frac{dy}{dx}\oW^{(g)}_{m,1}\in\Xi^x,\qquad
y\frac{dy}{dx}\oW^{(g)}_{m,1}\in\Xi^x.
\end{equation}
Let as show that this implies that $\oW^{(g)}_{m,1}$ is holomorphic at the corresponding zero point~$p$ of $dx$. Indeed, the first condition implies that if $\frac{dy}{dx}\oW^{(g)}_{m,1}$ has a pole at~$p$ then this pole should be of odd order. On the other hand, multiplying it by $y-y(p)$ we again obtain an element of~$\Xi^x$, but the order of pole decreases by~$1$ changing the parity of its order. This is only possible if $\frac{dy}{dx}\oW^{(g)}_{m,1}$ has at most simple pole, that is, $\oW^{(g)}_{m,1}$ is regular.

 Thus, we established regularity of  $\oW^{(g)}_{m,1}$ in $z_{m+1}$ at any zero of~$dx$. Applying the recursion~\eqref{eq:Wxtoomega} we obtain that $\oW^{(g)}_{m+1-n,n}$ is regular in $z_{m+1}$  at zeros of $dx$ for any $n=1,2,\dots,m+1$. On the other hand, since $\oW^{(g)}_{m,n}$ is symmetric with respect to $z_{m+1},\dots,z_{m+n}$ (see Remark \ref{rem:Symmetry}), we thereby established  the regularity of $\oW^{(g)}_{m,n}$ with respect to each of the last~$n$ arguments at zeros of~$dx$ by induction on $2g-2+m+n$. 

As it was mentioned already, this implies the loop equations~$[w^{r-1}]\cW^{x,(g)}_{m+1,n}(w;z)\in\Xi^x$ for all $(g,m,n)$ and all $r$ by the very form of~\eqref{eq:WytoW}.
\end{proof}

\begin{remark} Strictly speaking, under the assumption that $dx$ and $dy$ have simple zeros it is not necessary to use the higher loop equations at all in this argument --- indeed, in the key step of the proof we show that $\oW^{(g)}_{m,1}$ is holomorphic using only linear and quadratic loop equations, cf.~Equation~\eqref{eq:LLE-QLE-holo}.
	However, we still included the discussion of higher loop equations for the following two reasons:

\begin{itemize}
	\item Firstly, we obtain a conceptually new proof that the higher loop equations follow from the linear and quadratic ones (thus reproving the core technical result of~\cite{DKPS-rspin}).
	\item Secondly, though we restricted ourselves in this paper to the classical case of topological recursion, there is a generalization developed in~\cite{BouchardEynard} that allows $dx$ (and, for the purpose of the study of $x-y$ swap, $dy$) to have zeros of higher order. Including the higher loop equations in the argument above we actually pave the way to extend it to this more general situation. In this case, it is enough to observe that if $p$ is a zero point of $dx$ of order $r$, then it is enough to check that
	\begin{equation} \label{eq:LLE-QLE-holo-r}
		\frac{dy}{dx}\oW^{(g)}_{m,1}\in\Xi^x, \quad
		y\frac{dy}{dx}\oW^{(g)}_{m,1}\in\Xi^x, \quad \ldots, \quad  {y^{r}}\frac{dy}{dx}\oW^{(g)}_{m,1}\in\Xi^x,
	\end{equation}
	which is reduced to the first $(r+1)$ loop equations for $\{\omega^{(g)}_{m,0}\}$. The full argument for the general Bouchard--Eynard situation would require to match the higher loop equations that we discuss here and the more traditional ones used in~\cite{BE-2},~\cite[Section 7.6]{KramerPhD}, and~\cite{BKS}.
\end{itemize}

\end{remark}

\subsection{Proofs of Theorems~\ref{thm:mainconjholds} and~\ref{thm:simpleformulaxyswap}}
\label{sec:proofofmainthms}

In this section we prove Theorems~\ref{thm:mainconjholds} and~\ref{thm:simpleformulaxyswap}. Firstly, we prove that under our assumptions the differentials $\omega^{(g)}_{0,n}$ given by~\eqref{eq:MainFormulaSimple} satisfy the topological recursion for the input data $(S,y,x,B)$. Then by Theorem~\ref{thm:newformula} we derive Theorem~\ref{thm:mainconjholds} as a direct corollary.

We assume that the zero loci of $dx$ and $dy$ are disjoint and that $\omega^{(g)}_{m,0}$'s satisfy the topological recursion for the input data $(S,x,y,B)$. Then we note that by construction we have $\omega^{(0)}_{0,1} = -xdy$ and $\omega^{(0)}_{0,2}=B$. Consider the differentials $\omega^{(g)}_{m,n}$ obtained recursively by~\eqref{eq:Wxtoomega} or, equivalently, by the explicit formula~\eqref{eq:omega-g-n-graphsxy}. For each $(g,m,n)$ it represents $\omega^{(g)}_{m,n}$ as a finite combination of meromorphic differentials $\tilde\omega^{(g')}_{m',0}$'s on $S$ with some differential operators applied to them.
	Let us list the possible poles of the ingredients of these expressions:
	\begin{enumerate}
		\item The poles of $\tilde\omega^{(g')}_{m',0}$ at the zero locus of $dx$ (in the decorations of multiedges, including the ones coming from the applications of $\partial_x$). These poles remain for the variables $z_{\set{m}}$ and they cancel for the variables $z_{\set{m+n}\setminus\set{m}}$ by Proposition~\ref{prop:loopequations} once we consider the full expression for $\omega^{(g)}_{m,n}$ rather than isolated terms.
		\item The diagonal poles in $\tilde\omega^{(0)}_{2,0}$'s on multiedges. In the exceptional case of $(g,m,n)=(0,2,0)$ this pole just remains, it is outside the stable range that we discuss in this proposition.
		Otherwise, the two variables $\tilde\omega^{(0)}_{2,0}$ on multiedges turn either into $z_i,z_i$ for $i\in \set{m+n}\setminus\set{m}$ --- then this term is regularized and has no pole at the diagonal (that is exactly the difference between $\tilde\omega^{(0)}_{2,0}$ and $\omega^{(0)}_{2,0}$); or into $z_i,z_j$ for $i\in\set{m}$ and $j\in \set{m+n}\setminus\set{m}$ --- these poles remain;
		or into $z_i,z_j$ for $i,j\in \set{m+n}\setminus\set{m}$ --- these poles are canceled in the full expression for $\omega^{(g)}_{m,n}$, $2g-2+m+n>0$, by Proposition~\ref{prop:regular-at-diagonals}.
		\item The poles at the zero locus of $dy$ coming from the factors $(dy_i)^{-1}$ and the application of the operators $\partial_{y_i}$. These poles remain as the poles in variables $z_i$, $i\in \set{m+n}\setminus \set{m}$.
		\item Poles at the poles of $y$ coming from the action of the derivatives $w_i\partial_{x_i}$ on the expression $\frac{1}{dx_i}\omega^{(0)}_{1,0} (z_i)$ in the exponent. Let us introduce a local coordinate $z$ near a pole of $y$ such that $y=z^{-s}$ and $x=z$ in the neighborhood of that pole, for some  $s\in \mathbb{Z}_{\geq 1}$ (the function $x$ might also be nonzero at that point or even have a pole of some degree, but it will only improve the cancellation of the poles, so for brevity we assume that it has a zero of degree $1$; it cannot have a zero of higher degree due to the condition that $y$ is regular at the critical points of $x$). Consider a term in the expansion of the exponential coming from the product of the $r_1$-th, \dots, $r_l$-th terms in the expansion of $\mathcal{S}$. For the $r_i$-th term $expr_{r_i}$ coming from the expansion of $\mathcal{S}$ in the exponent we have
		\begin{equation}
			expr_{r_i} \sim w^{2r_i+1} (\partial_z)^{2r_i} \dfrac{1}{z^s} \sim w^{2r_i+1} z^{-s-2r_i}.
		\end{equation}
		For the product of these terms in the expansion of the exponential (which is also divided by $w$) we have
		\begin{align}
			1/w \prod_{i=1}^l expr_{r_i} \sim w^{l-1+ 2\sum r_i } z^{-sl-2\sum r_i}.
		\end{align}
		For the factor $dx/dy$ we have $dx/dy\sim z^{s+1}$. After that note that $w$ gets replaced by $\partial_y\sim z^s z\partial_z$, which means that the whole expression is
		\begin{align}
			\sim z^{s(l-1+ 2\sum r_i )} z^{-sl-2\sum r_i} z^{s+1} = z^{1+2(s-1)\sum r_i},
		\end{align}
		which is regular for $s\geq 1$ (and $y$ cannot have logarithmic poles due to the condition that it is meromorphic).
	\end{enumerate}

It follows that the only poles of $\omega^{(g)}_{0,n}$, $2g-2+n>0$ are at the zero locus of $dy$. Moreover, by Proposition~\ref{prop:loopequations}, all loop equations for these poles are satisfied, in particular, the linear and the quadratic ones.
Thus, by Proposition~\ref{prop:trueTRformulation}, the only extra thing that we still have to prove is that all $\mathfrak{A}$-periods vanish. In fact, a more general statement holds:

\begin{lemma} Assume that $\mathfrak{A}$-periods of $\omega^{(g)}_{m,0}$ in each of its variables vanish for all $(g,m)$ such that $2g-2+m\geq 0$. Then the $\mathfrak{A}$-periods of $\omega^{(g)}_{m,n}$ in each of its variables vanish for $2g-2+m+n\geq 0$.
\end{lemma}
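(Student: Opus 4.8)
The plan is to prove the Lemma by induction on the number $n$ of $y$-type arguments, exploiting the recursion~\eqref{eq:Wxtoomega} together with the $\mathfrak{S}_m\times\mathfrak{S}_n$-symmetry of the mixed differentials (Remark~\ref{rem:Symmetry}). For $n=0$ all arguments are of $x$-type, so the statement is exactly the hypothesis of the Lemma; the unstable forms $\omega^{(0)}_{2,0}=-\omega^{(0)}_{1,1}=\omega^{(0)}_{0,2}=B$ of~\eqref{eq:unstable} have vanishing $\mathfrak{A}$-periods by the normalization of the Bergman kernel recalled in Section~\ref{sec:RecollectionTR}. In the inductive step I would assume that the $\mathfrak{A}$-periods of $\omega^{(g')}_{m',n'}$ vanish in each variable for all triples with $n'\le n$ and $2g'-2+m'+n'\ge 0$, fix a pair $(g,m)$ with $2g-2+m+(n+1)\ge0$, and check the periods of $\omega^{(g)}_{m,n+1}$. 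By the $\mathfrak{S}_m\times\mathfrak{S}_{n+1}$-symmetry it suffices to treat one $x$-type and one $y$-type argument, and for the latter I may, after relabelling, take it to be the distinguished argument $z=z_{m+1}$ singled out in~\eqref{eq:Wxtoomega}.

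The heart of the argument is the period in the new $y$-type variable $z$. Set $G_r:=[w^r]\tfrac{dx}{dy}e^{w y}\cW^{x,(g)}_{m+1,n}(w;z)$; since $e^{wy}\cW^{x,(g)}_{m+1,n}$ is polynomial in $w$, each $G_r$ is a globally meromorphic function of $z$ and $G_r=0$ for $r$ large, and~\eqref{eq:Wxtoomega} reads $\oW^{(g)}_{m,n+1}=-\sum_{r\ge0}\partial_y^r G_r$. Viewed as a $1$-form in $z$, the tail $r\ge1$ contributes $(\partial_y^r G_r)\,dy=d\bigl(\partial_y^{r-1}G_r\bigr)$, the differential of a single-valued meromorphic function, whose $\mathfrak{A}$-period therefore vanishes. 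The term $r=0$ contributes $G_0\,dy=\bigl([w^0]\cW^{x,(g)}_{m+1,n}\bigr)\,dx$, and inspection of~\eqref{eq:cT}--\eqref{eq:cW} gives $[w^0]\cW^{x,(g)}_{m+1,n}=\oW^{(g)}_{m+1,n}(z_M,z;z_N)$: the only block-partition contributing to the $w^0$-coefficient is the single block with $k=1$, on which the $\cS$-operators, the diagonal restriction, and the prefactor $e^{\cT^x_{1,0}}$ all reduce to $1$. Hence the $\mathfrak{A}$-period of $\omega^{(g)}_{m,n+1}$ in $z$ equals, up to sign, the $\mathfrak{A}$-period of $\omega^{(g)}_{m+1,n}$ in the same variable now occupying an $x$-slot; as $\omega^{(g)}_{m+1,n}$ has only $n$ arguments of $y$-type and satisfies $2g-2+(m+1)+n\ge0$, this vanishes by the induction hypothesis.

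For an $x$-type argument $z_j$ with $j\in M$ I would argue that the period simply passes through the recursion. Indeed, all operations on the right-hand side of~\eqref{eq:Wxtoomega} (the $\partial_y$, the extraction $[w^r]$, and multiplication by $dx/dy$ and $e^{wy}$) act solely on the distinguished variable $z$, and in~\eqref{eq:cT} the operators $\cS(w\hbar\partial_{x_{\bar i}})$ and the diagonal restrictions $z_{\bar i}\to z$ act only on the barred variables; none of them touches $z_j$. Consequently, term by term, the dependence of $\omega^{(g)}_{m,n+1}$ on $z_j$ is carried by a single factor $\omega^{(g')}_{m'+k,n}$ in which $z_j$ occupies an $x$-slot, so $\oint_{z_j\in\mathfrak{A}}$ commutes with everything and reduces to an $\mathfrak{A}$-period of that $\omega^{(g')}_{m'+k,n}$ in $z_j$. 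As this form has $n$ arguments of $y$-type, the period vanishes by the induction hypothesis, completing the induction.

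The hard part will be the new-$y$-variable case, and within it the two structural facts that drive the reduction: that the $r\ge1$ tail of~\eqref{eq:Wxtoomega} assembles into $\mathfrak{A}$-exact $1$-forms on $S$, and that the $r=0$ term reproduces precisely $\omega^{(g)}_{m+1,n}$ with $z$ reinterpreted as an $x$-argument. The first rests on the observation that $\partial_y$ is $d/dy$, so $(\partial_y^r G_r)\,dy$ is automatically exact once $G_r$ is single-valued, which in turn uses the polynomiality of $e^{wy}\cW^{x,(g)}_{m+1,n}$ in $w$; the second is a short but essential computation with~\eqref{eq:cT}--\eqref{eq:cW}. By contrast, the $x$-variable case and the handling of the base and unstable cases are routine once one notes that the recursion acts only on the distinguished variable.
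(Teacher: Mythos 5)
Your proposal is correct and follows essentially the same route as the paper: your identification of the $r=0$ term of~\eqref{eq:Wxtoomega} with $\oW^{(g)}_{m+1,n}$ together with the exactness of the $r\ge1$ tail is precisely the paper's one-line observation that $\omega^{(g)}_{m+1,n}+\omega^{(g)}_{m,n+1}$, viewed as a $1$-form in $z=z_{m+1}$, is the differential of a meromorphic function on $S$, so the two periods vanish simultaneously. The induction on $n$, the use of the $\mathfrak{S}_m\times\mathfrak{S}_n$-symmetry, and the remark that the recursion never touches the $x$-type variables are exactly the (implicit) details behind the paper's concluding ``Therefore''.
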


\begin{proof} Indeed, Equation~\eqref{eq:Wxtoomega} implies that the sum
	\begin{equation}
		\omega^{(g)}_{m+1,n}+\omega^{(g)}_{m,n+1}
	\end{equation}
regarded as a $1$-form in $z=z_{m+1}$	is the differential of a meromorphic function on $S$. Hence the $\mathfrak{A}$-periods of  $\omega^{(g)}_{m,n+1}$ in $z_{m+1}$ vanish if and only if  the $\mathfrak{A}$-periods of  $\omega^{(g)}_{m+1,n}$ in $z_{m+1}$ vanish. Therefore, the $\mathfrak{A}$-periods of $\omega^{(g)}_{m,n}$ in each of its variables vanish for $2g-2+m+n\geq 0$.
\end{proof}

Since we assume that $\omega^{(g)}_{m,0}$ satisfy the topological recursion for $(S,x,y,B)$, then all their $\mathfrak{A}$-periods vanish for $2g-2+m\geq 0$ (by the projection property), hence the $\mathfrak{A}$-periods of $\omega^{(g)}_{0,n}$ vanish for $2g-2+n\geq 0$. This completes the proof of Theorems~\ref{thm:mainconjholds} and~\ref{thm:simpleformulaxyswap}.

\subsection{Splitting of poles and recursive computation of \texorpdfstring{$\omega^{(g)}_{m,n}$}{mixed correlation differentials} }
\label{sec:splitingofpoles}
Recall the definition of the differentials $\omega^{(g)}_{m+1,n}$'s, in particular the two equivalent formulas defining them given in Equations~\eqref{eq:mainrecchange} (rewritten as Equation~\eqref{eq:WXtoomega}) and~\eqref{eq:Wxtoomega}.

The form $\cW^{x,(g)}_{m+1,n}$ participating in~\eqref{eq:Wxtoomega} can be written as
\begin{equation}\label{eq:omega+omega}
	\cW^{x,(g)}_{m+1,n}=\omega^{(g)}_{m+1,n}+O(u),
\end{equation}
where the terms entering the summand $O(u)$ are expressed in terms of the forms $\omega^{(g')}_{m',n'}$ with either $g'<g$ or $g'=g$ but $m'+n'<m+n$. The same remark is applied to Equation~\eqref{eq:Wytoomega}. Therefore, both~\eqref{eq:Wxtoomega} and~\eqref{eq:Wytoomega} can be represented as equations where on the left hand side we have $\omega^{(g)}_{m+1,n}+\omega^{(g)}_{m,n+1}$
%in the form
%\begin{equation} \label{eq:SUM-RightHandSide}
%	\omega^{(g)}_{m+1,n}+\omega^{(g)}_{m,n+1}=(\text{right hand side}),
%\end{equation}
%where we can assume 
and by induction hypothesis the terms entering the right hand side are already computed in the previous steps of computations. Note that Equations~\eqref{eq:Wxtoomega} and~\eqref{eq:Wytoomega} provide different right hand sides (as symbolic expressions in $\omega^{(g')}_{m',n'}$), so we get two different equations of this shape.

We observe now that this relation allows one not only to relate $\omega^{(g)}_{m+1,n}$ and $\omega^{(g)}_{m,n+1}$ but actually to compute both of them. Indeed, the right hand side is a meromorphic form in $z=z_{m+1}$ with the poles at the zeros of $dx$, $dy$, and the points $z_i$, $i\ne m+1$, and with the vanishing $\mathfrak{A}$-periods. Thus, the inductive relation determines the principal parts of all these poles, and we can identify the two summands on the left hand side as the contributions of the principal parts of the poles at zeros of $dx$ and $z_i$, $i>m+1$, for the summand $\omega^{(g)}_{m+1,n}$, and the principal parts of the poles at zeros of $dy$ and the points $z_i$, $i<m+1$, respectively.

 In fact, for the collection of forms $\omega^{(g)}_{m+n}$ with the fixed values of $g$ and $m+n$ it is sufficient to realize this procedure of splitting of poles just once for two arbitrary neighboring values of pairs of bottom indices, say, in order to compute the forms $\omega^{(g)}_{m+n,0}$ and $\omega^{(g)}_{m+n-1,1}$. Then, for all other pairs of bottom indices we can apply inductional relations~\eqref{eq:Wxtoomega} and \eqref{eq:Wytoomega} directly. 

Remark that even for the case $n=0$ the procedure of splitting of poles provides a new and eventually more efficient method of computation the differentials $\omega^{(g)}_{m,0}$ of the original topological recursion. Applying inductional form of%~\eqref{eq:SUM-RightHandSide} 
~\eqref{eq:Wxtoomega} and~\eqref{eq:Wytoomega} discussed above
for the case $n=0$ we find, in particular, $\omega^{(g)}_{m+1,0}$ inductively without finding explicitly positions of zeros of~$dx$ and the expansions for the deck transformations at these points.

This method of splitting the poles  is especially useful for explicit computations in the case of a rational spectral curve.

%%%%%%%%%%%
%%%%%%%%%%%
%%%%%%%%%%%

\section{Examples}\label{sec:examples}

In this section we provide examples of computation of differentials $\omega^{(g)}_{m,n}$ for small $g, m, n$.

\subsection{Notation}
In order to reduce the size of the formulas we will use (locally in this section) reduced notations of the following sort
\begin{equation}
	\omega^{(0)}_{(1,\bar *,\bar 3,4,5)}=\omega^{(0)}_{3,2}(z_1,z_4,z_5;z_*,z_3).
\end{equation}
Namely, we use symbols of the kind $\omega^{(g)}_{(k_1,\dots,k_{m+n})}$ where each of the indices from the tuple $K=(k_1,\dots,k_{m+n})$ is either an integer or an overlined integer. In order to denote $\omega^{(g)}_{m,n}(z_I;z_J)$ where $m+n$ is the cardinality of~$K$, $n$ is the number of overlined indices, $I\subset K$ is the subcollection of integer indices and $J$ is the collection of integers $j$ such that $\bar j\in K$. Besides, we denote by $*$ the index of the variable $z=z_*$ used in the operators involved in the formulas. Namely, we set

	\begin{align}
		dx=dx(z_*)&=\frac{\partial x(z_*)}{\partial z_*}dz_*,\qquad D_x(\omega)=d\frac{\omega}{dx}=\frac{\partial\frac{\omega}{dx(z_*)}}{\partial z_*}dz_*,\\ \notag
		dy=dy(z_*)&=\frac{\partial y(z_*)}{\partial z_*}dz_*,\qquad D_y(\omega)=d\frac{\omega}{dy}=\frac{\partial\frac{\omega}{dy(z_*)}}{\partial z_*}dz_*.
	\end{align}

Remark that the form $\omega^{(g)}_{(k_1,\dots,k_{m+n})}$ is invariant with respect to permutations of the indices~$k_i$. Using this notation we can represent the definition of unstable differentials \eqref{eq:unstable} as follows  
	\begin{align} \label{eq:unstable*}
		\omega^{(0)}_{(*)}=-y(z_*)\,dx(z_*),\quad \omega^{(0)}_{(\bar *)}=-x(z_*)\,dy(z_*), 
		\\ \notag 
		\omega^{(0)}_{(1,2)}=-\omega^{(0)}_{(1,\bar 2)}=\omega^{(0)}_{(\bar 1,\bar 2)}=\tfrac{dz_1dz_2}{(z_1-z_2)^2}.
	\end{align}
We also denote
\begin{align}\label{eq:02reg}
		\omega^{(0),{\mathrm{reg}}}_{(1,2)}&=\tfrac{dz_1dz_2}{(z_1-z_2)^2}-\tfrac{dx_1dx_2}{(x_1-x_2)^2},\qquad x_i=x(z_i),\\ \notag
		\omega^{(0),{\mathrm{reg}}}_{(\bar1,\bar2)}&=\tfrac{dz_1dz_2}{(z_1-z_2)^2}-\tfrac{dy_1dy_2}{(y_1-y_2)^2},\qquad y_i=y(z_i).
\end{align}

\subsection{Genus zero}% \texorpdfstring{$0$}{0}}
Applying~\eqref{eq:Wxtoomega} for $(g,m+n+1)=(0,3)$ we get
	\begin{align}
		\omega _{(1,2,\bar{\ast})}^{(0)}+\omega _{(1,2,\ast)}^{(0)}&+D_y\frac{\omega_{(1,\ast)}^{(0)}\omega_{(2,\ast)}^{(0)}}{dx}=0,\\ \notag
		\omega _{(1,\bar{\ast},\bar 3)}^{(0)}+\omega _{(1,\ast,\bar 3)}^{(0)}&+D_y\frac{\omega_{(1,\ast)}^{(0)}\omega_{(\ast,\bar 3)}^{(0)}}{dx}=0,\\ \notag
		\omega _{(\bar{\ast},\bar 1,\bar 3)}^{(0)}+\omega _{(\ast,\bar 1,\bar 3)}^{(0)}&+D_y\frac{\omega_{(\ast,\bar 1)}^{(0)}\omega_{(\ast,\bar 3)}^{(0)}}{dx}=0.
	\end{align}
Similarly, for $(g,m+n+1)=(0,4)$ we get
	\begin{align}
		\omega _{(1,2,3,\bar{\ast})}^{(0)}+\omega _{(1,2,3,\ast)}^{(0)}&
		+D_y\tfrac{\omega_{(1,\ast)}^{(0)}\omega_{(2,3,\ast)}^{(0)}+\omega_{(2,\ast)}^{(0)}\omega_{(1,3,\ast)}^{(0)}+\omega_{(3,\ast)}^{(0)}\omega_{(1,2,\ast)}^{(0)}}{dx}
		+D_y^2\tfrac{\omega_{(1,\ast)}^{(0)}\omega_{(2,\ast)}^{(0)}\omega_{(3,\ast)}^{(0)}}{dx^2}=0,\\ \notag
		\omega _{(1,2,\bar{\ast},\bar4)}^{(0)}+\omega _{(1,2,\ast,\bar4)}^{(0)}&
		+D_y\tfrac{\omega_{(1,\ast)}^{(0)}\omega_{(2,\ast,\bar4)}^{(0)}+\omega_{(2,\ast)}^{(0)}\omega_{(1,\ast,\bar4)}^{(0)}+\omega_{(\ast,\bar4)}^{(0)}\omega_{(1,2,\ast)}^{(0)}}{dx}
		+D_y^2\tfrac{\omega_{(1,\ast)}^{(0)}\omega_{(2,\ast)}^{(0)}\omega_{(\ast,\bar4)}^{(0)}}{dx^2}=0,\\ \notag
		\omega _{(1,\bar{\ast},\bar3,\bar4)}^{(0)}+\omega _{(1,\ast,\bar3,\bar4)}^{(0)}&
		+D_y\tfrac{\omega_{(1,\ast)}^{(0)}\omega_{(\ast,\bar3,\bar4)}^{(0)}+\omega_{(\ast,\bar3)}^{(0)}\omega_{(1,\ast,\bar4)}^{(0)}+\omega_{(\ast,\bar4)}^{(0)}\omega_{(1,\ast,\bar3)}^{(0)}}{dx}
		+D_y^2\tfrac{\omega_{(1,\ast)}^{(0)}\omega_{(\ast,\bar3)}^{(0)}\omega_{(\ast,\bar4)}^{(0)}}{dx^2}=0,\\ \notag
		\omega _{(\bar{\ast},\bar2,\bar3,\bar4)}^{(0)}+\omega_{(\ast,\bar 2,\bar3,\bar4)}^{(0)}&
		+D_y\tfrac{\omega_{(\ast,\bar 2)}^{(0)}\omega_{(\ast,\bar3,\bar4)}^{(0)}+\omega_{(\ast,\bar3)}^{(0)}\omega_{(\ast,\bar 2,\bar4)}^{(0)}+\omega_{(\ast,\bar4)}^{(0)}\omega_{(\ast,\bar 2,\bar3)}^{(0)}}{dx}
		+D_y^2\tfrac{\omega_{(\ast,\bar 2)}^{(0)}\omega_{(\ast,\bar3)}^{(0)}\omega_{(\ast,\bar4)}^{(0)}}{dx^2}=0.
	\end{align}

In general, for the case $g=0$ Equations~\eqref{eq:Wxtoomega} and~\eqref{eq:Wytoomega} specialize to
	\begin{align}
		\omega^{(0)}_{(\bar\ast,K)}&=-\sum_{K=\sqcup_{\alpha\in\cA}J_\alpha,~J_\alpha\ne\emptyset}
		D_y^{|\cA|-1}\frac{\prod_{\alpha\in\cA}\omega^{(0)}_{(\ast,J_\alpha)}}{dx^{|\cA|-1}},\\ \notag
		\omega^{(0)}_{(\ast,K)}&=-\sum_{K=\sqcup_{\alpha\in\cA}J_\alpha,~J_\alpha\ne\emptyset}
		D_x^{|\cA|-1}\frac{\prod_{\alpha\in\cA}\omega^{(0)}_{(\bar\ast,J_\alpha)}}{dy^{|\cA|-1}},
	\end{align}
respectively, where $K$ is any nonempty set of indices, overlined or not, the summation carries over all partitions of $K$ into an unordered collection of nonempty subsets, and $|\cA|$ is the number of parts. This relation is in fact equivalent to a formula suggested by Hock \cite{Hockx-x-y}.

\subsection{Higher genera}
In the case $(g,m+n+1)=(1,1)$ the two dual relations are
	\begin{align}
		\omega^{(1)}_{(\bar \ast)}+\omega^{(1)}_{(\ast)}+D_y\frac{\omega^{(0),{\mathrm{reg}}}_{(\ast,\ast)}}{2\,dx}
		+D_y^2\frac{D_x^2\omega^{(0)}_{(\ast)}}{24}=0,\\
		\omega^{(1)}_{(\bar\ast)}+\omega^{(1)}_{(\ast)}+D_x\frac{\omega^{(0),{\mathrm {reg}}}_{(\bar\ast,\bar\ast)}}{2\,dy}
		+D_x^2\frac{D_y^2\omega^{(0)}_{(\bar\ast)}}{24}=0
	\end{align}
(cf.~\cite{EynardOrantin-toporec},~\cite{BCGFLS-Free},~\cite{Hockx-x-y},~\cite{Hock-FullFormula}).
%This case of relation is proved in the case of enumeration of (fully simple?) maps and conjectured in the general situation in~[BCGLS,~Eq.~??].

For greater values of $(g,m,n)$ the equation becomes more complicated but it is still quite explicit. For example, for $(g,m+n+1)=(1,2)$ we get
	\begin{align}
		& \omega^{(1)}_{(1,\bar\ast)}+\omega^{(1)}_{(1,\ast)}
		+D_y\tfrac{\tfrac12\omega^{(0)}_{(1,\ast,\ast)}+\omega^{(0)}_{(1,\ast)}\omega^{(1)}_{(\ast)}}{dx} 
		\\ \notag &
		+D_y^2\Bigl(\tfrac{D_x^2\omega^{(0)}_{(\ast)}}{24}+
		\tfrac{\omega^{(0)}_{(1,\ast)}\omega^{(0),{\mathrm{reg}}}_{(\ast,\ast)}}{2\;dx^2}\Bigr)
		+D_y^3\tfrac{\omega^{(0)}_{(1,\ast)}\;D_x^2\omega^{(0)}_{(\ast)}}{24\;dx}
		=0,
		\\
		& 
		\omega^{(1)}_{(\bar\ast,\bar2)}+\omega^{(1)}_{(\ast,\bar2)}
		+D_y\tfrac{\tfrac12\omega^{(0)}_{(\ast,\ast,\bar2)}+\omega^{(0)}_{(\ast,\bar2)}\omega^{(1)}_{(\ast)}}{dx}
		\\ \notag &
		+D_y^2\Bigl(\tfrac{D_x^2\omega^{(0)}_{(\ast)}}{24}+
		\tfrac{\omega^{(0)}_{(\ast,\bar2)}\omega^{(0),{\mathrm{reg}}}_{(\ast,\ast)}}{2\;dx^2}\Bigr)
		+D_y^3\tfrac{\omega^{(0)}_{(\ast,\bar2)}\;D_x^2\omega^{(0)}_{(\ast)}}{24\;dx}
		=0.
	\end{align} 
To conclude, we present also an explicit relation for the case $(g,m+n+1)=(2,1)$.
	\begin{align}
		\omega^{(2)}_{(\bar\ast)}+\omega^{(2)}_{(\ast)}&
		=-\sum_{r\ge0}D_y^r\biggl([u^r]\biggl(
		u^2\tfrac{D_x^2\omega^{(1)}_{(\ast)}}{24}+u\,\tfrac{\omega^{(1)}_{(\ast,\ast)}}{2\;dx}
		+u^4\tfrac{D_x^4\omega^{(0)}_{(\ast)}}{1920}
		+u^2\tfrac{\omega^{(0)}_{(\ast,\ast,\ast)}}{6\;dx^2}
		\\ \notag &\hskip2cm
		+u^3\tfrac{\big\lfloor_{z_{1}\to z_*}\!\!\!D_x^2\omega^{(0),{\mathrm{reg}}}_{(\ast,1)}}{24\;dx}
		+\frac{u}{2\;dx}\Bigl(
		\omega^{(1)}_{(\ast)}+u^2\tfrac{D_x^2\omega^{(0)}_{(\ast)}}{24}+u\,\tfrac{\omega^{(0),{\mathrm{reg}}}_{(\ast,\ast)}}{2\;dx}
		\Bigr)^2\biggr)\biggr)
		\\ \notag &
		=-\sum_{r\ge0}D_x^r\biggl([u^r]\biggl(
		u^2\tfrac{D_y^2\omega^{(1)}_{(\bar\ast)}}{24}+u\,\tfrac{\omega^{(1)}_{(\bar\ast,\bar\ast)}}{2\;dy}
		+u^4\tfrac{D_y^4\omega^{(0)}_{(\bar\ast)}}{1920}
		+u^2\tfrac{\omega^{(0)}_{(\bar\ast,\bar\ast,\bar\ast)}}{6\;dy^2}
		\\ \notag &\hskip2cm
		+u^3\tfrac{\big\lfloor_{z_{1}\to z_*}\!\!\!D_y^2\omega^{(0),{\mathrm{reg}}}_{(\bar1,\bar\ast)}}{24\;dy}
		+\frac{u}{2\;dy}\Bigl(
		\omega^{(1)}_{(\bar\ast)}+u^2\tfrac{D_y^2\omega^{(0)}_{(\bar\ast)}}{24}+u\,\tfrac{\omega^{(0),{\mathrm{reg}}}_{(\bar\ast,\bar\ast)}}{2\;dy}
		\Bigr)^2\biggr)\biggr).
	\end{align}
All these examples are also worked out explicitly in~\cite[Section 2.3]{Hock-FullFormula}.

%%%%%%%%%%%
%%%%%%%%%%%
%%%%%%%%%%%

\section{Applications}\label{sec:applications}

The universal formulas for $x-y$ swap give \emph{explicit} closed formulas for the output of topological recursion once the correlation differentials on one of the sides are known explicitly. The simplest formulas are obtained in the case when the topological recursion on one of the sides of $x-y$ duality is trivial. This happens, for instance for $(S,y,x,B)$ for $S=\Cf P^1$, $y=z$, arbitrary $x$, and the unique possible $B=dz_1dz_2/(z_1-z_2)^2$. %Note that since on one side of the $x-y$ swap the correlation differentials are known explicitly, the statement of the $x-y$ swap in this case gives \emph{explicit} closed formulas for the correlation differentials on the other side (i.e. while in )

Examples of possible functions $x$ that are discussed in the literature and for which the topological recursion for the input $(\Cf P^1, x, y=z, B=dz_1dz_2/(z_1-z_2)^2) $ have interesting enumerative meaning are:
\begin{align}\label{eq:xWitten}
	x & =z^{r}- r\epsilon z & \text {(deformed Witten's $r$-spin class);} \\ \label{eq:xhyper}
	x & = z^{r-1}+ z^{-1} & \text {(enumeration of hypermaps);} \\ \label{eq:xTheta}
	x & = z^{-r} - r\lambda^{r-1} z^{-1} & \text {(deformed $\Theta$ class).}
\end{align}

The most basic and still the most important example, in a slightly different normalization, is:
\begin{align}
		x & =\frac{z^2}{2} & \text{(the Witten--Kontsevich case).} 	
\end{align}
(see~e.~g.~\cite{EynardOrantin-toporec,ZhouWK,Eynard-Book,DOPS,DNOPS,ShiftedWitten,ShiftedTheta} for the discussion of these spectral curves data and their enumerative meaning).

For all these cases, and in fact in general for all possible rational $x$ such that $dx$ has only simple zeros, we give an explicit closed differential-algebraic formula for the corresponding correlation differentials $\omega^{(g)}_{m,0}$.

Note that the theory of $m$-point functions is very well developed for these cases, especially for the Witten--Kontsevich case and the (deformed) Witten's $r$-spin case: there are many competing formulas, and their identification is usually quite non-trivial and gives new insights on the underlying enumerative geometry and related integrable systems. The survey of the existing literature on the $m$-point functions and the comparison of the existing formulas to the one that we obtain using the $x-y$ swap is out of scope of the present paper, we leave it for future research.

\subsection{General closed formula for \texorpdfstring{$\omega^{(g)}_{m,0}$}{mixed correlation differentials}  for the case \texorpdfstring{$y(z)=z$}{y(z)=z}} We assume that the spectral curve is rational and the function $y$ of the spectral curve data is given by $y(z)=z$. Then the differential $dy=dz$ has no zeros and the differentials $\omega^{(g)}_{0,n}$ are known explicitly: they are all trivial in the stable cases $2g-2+n>0$ while for the exceptional unstable cases they are
\begin{equation}
	\omega^{(0)}_{0,1}(z)=-x(z)\; dz,\qquad \omega^{(0)}_{0,2}(z_1,z_2)=\frac{dz_1dz_2}{(z_1-z_2)^2}.
\end{equation}

Using Equation~\eqref{eq:MainFormulaSimple} in the opposite direction (as the one expressing $\omega^{(g)}_{m,0}$'s in terms of $\omega^{(g)}_{0,n}$'s) we obtain the following explicit closed formula for the differentials $\omega^{(g)}_{m,0}$. Consider the set of connected simple (i.e.\ without loops and multiple edges) graphs on~$m$ numbered vertices. For a graph $\Gamma$ in this set, we denote by $V(\Gamma)$ and $E(\Gamma)$ the sets of its vertices and edges, respectively. For a vertex $i\in V(\Gamma)$, we define its weight as
\begin{equation}\label{eq:Wvert}
	W_{i}=-e^{-w_i(\cS(w_i\hbar\partial_{z_i})-1)x(z_i)}\;\frac{dz_i}{w_i}.
\end{equation}
For an edge $(i,j)\in E(\Gamma)$, we define its weight as
\begin{equation}
	\begin{aligned}
		W_{i,j}&=\tfrac{e^{\hbar^2 w_iw_j\cS(w_i\hbar\partial_{z_i})\cS(w_j\hbar\partial_{z_j}){(z_i-z_j)^{-2}}}-1}{\hbar^2}
		%\\&
		=\frac{w_iw_j}{(z_i-z_j)^2-\frac{\hbar^2}{4}(w_i+w_j)^2}.
	\end{aligned}
\end{equation}
By construction, $W_{i}$ is a meromorphic form in $z_i$ and $W_{i,j}$ is a function. Theorem~\ref{thm:simpleformulaxyswap} has the following direct corollary:

\begin{theorem}\label{cor:y=z}
	Let $x(z)$ be an arbitrary rational function. Then all differentials $\omega^{(g)}_{m,0}$ of topological recursion with the spectral curve $(x=x(z),~y=z)$ and the initial form of the recursion
	\begin{equation}
		\omega^{(g)}_{1,0}(z)=-y\,dx=-z\,x'(z)dz
	\end{equation}
	are given by the following explicit closed formula:
	\begin{equation}\label{eq:y=z}
		\omega^{(g)}_{m,0}=[\hbar^{2g}]\sum_{r_1,\dots,r_m\ge0}\prod_{i=1}^m D_{x_i}^{r_i}[w_i^{r_i}]
		\sum_{\Gamma}\hbar^{2g(\Gamma)}\;\prod_{(i,j)\in E(\Gamma)}W_{i,j}\prod_{i\in V(\Gamma)}W_{i}.
	\end{equation}
	The sum here is taken over the set of connected simple graphs on $m$ numbered vertices. The operator $D_{x}$ acts on the space of meromorphic differentials by $D_{x}\,\omega(z)=d\frac{\omega}{dx(z)}$.
\end{theorem}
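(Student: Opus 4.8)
The plan is to apply Theorem~\ref{thm:simpleformulaxyswap} in reverse. The swap formula~\eqref{eq:MainFormulaSimple} turns the $(m,0)$-tower of the recursion for $(S,x,y,B)$ into the tower solving the recursion for $(S,y,x,B)$, so I would use it with the data $(S, z, x(z), B)$, i.e.\ with $z$ in the first slot and $x(z)$ in the second. Its hypotheses hold: the zeros of $d(x(z))$ are simple by assumption and $z$ is everywhere regular. Since $dz$ has no zeros, the recursion driven by the first function $z$ is trivial, so the only nonvanishing members of its $(m,0)$-tower are the unstable ones $\omega^{(0)}_{1,0} = -x(z)\,dz$ and $\omega^{(0)}_{2,0} = B = dz_1dz_2/(z_1-z_2)^2$, cf.~\eqref{eq:unstable}. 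By Theorem~\ref{thm:simpleformulaxyswap}, feeding this trivial input into~\eqref{eq:MainFormulaSimple} yields differentials solving the recursion on the critical points of $x(z)$ for $(S, x(z), z, B)$; this recursion is identical to the one defining the sought-after $\omega^{(g)}_{m,0}$ for the spectral curve $(x=x(z),\,y=z)$, and since both share the same unstable data they coincide. It then remains to simplify the closed expression produced by~\eqref{eq:MainFormulaSimple}.

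For the simplification I would first collapse the multiedge decorations. In~\eqref{eq:MainFormulaSimple} a multiedge $e$ carries $\sum_{\tilde g}\hbar^{2\tilde g}\tilde\omega^{(\tilde g)}_{|e|,0}$, which in the reverse application is the trivial dual tower: everything vanishes except the index-two, genus-zero piece $\omega^{(0)}_{0,2}$. Multiedges of index $\ge 3$ and all positive-genus contributions drop out, while a self-loop (both legs on one vertex) is decorated by the regularized $\tilde\omega^{(0)}_{0,2}$, which here equals $\frac{dz_1dz_2}{(z_1-z_2)^2}-\frac{dz_1dz_2}{(z_1-z_2)^2}=0$ because $y=z$ makes $B$ coincide with its own diagonal correction. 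Hence only index-two multiedges between distinct vertices survive, decorated by $\frac{dz_idz_j}{(z_i-z_j)^2}$ with the operators $\cS(w_i\hbar\partial_{z_i})\cS(w_j\hbar\partial_{z_j})$ applied. Simultaneously, upon using $\frac{1}{dx_i}\omega^{(0)}_{1,0}(z_i) = -x(z_i)$ and $\partial_{x_i}=\partial_{z_i}$, the per-vertex exponential reduces to $e^{-w_i(\cS(w_i\hbar\partial_{z_i})-1)x(z_i)}$, matching the vertex weight $W_i$ of~\eqref{eq:Wvert}.

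Next I would carry out the combinatorial resummation converting the sum over multigraphs into the sum over simple graphs in~\eqref{eq:y=z}. Summing over the number $k\ge 1$ of parallel index-two edges joining a fixed pair $(i,j)$, the $1/k!$ coming from $|\mathrm{Aut}(\Gamma)|$ exponentiates the single-edge contribution $X_{ij}=\hbar^2 w_iw_j\cS(w_i\hbar\partial_{z_i})\cS(w_j\hbar\partial_{z_j})(z_i-z_j)^{-2}$ into $e^{X_{ij}}-1$; the accompanying power of $\hbar$ is bookkept by noting that the first Betti numbers of the multigraph and of the underlying simple graph differ by $\sum_e(k_e-1)$, which is exactly absorbed by a factor $\hbar^{-2}$ per simple edge. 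This produces the edge weight $W_{i,j}=(e^{X_{ij}}-1)/\hbar^2$, and since the vertices are labeled and the reduced graphs are simple their automorphism groups are trivial, so no symmetry factor remains. A direct evaluation of $X_{ij}$ on $(z_i-z_j)^{-2}$ (using $w\hbar\partial_z\,\cS(w\hbar\partial_z)=e^{w\hbar\partial_z/2}-e^{-w\hbar\partial_z/2}$ as a shift operator) followed by exponentiation then gives the claimed rational form $\frac{w_iw_j}{(z_i-z_j)^2-\frac{\hbar^2}{4}(w_i+w_j)^2}$. Finally, the outer operator $\sum_{k_i}\partial_{y_i}^{k_i}[w_i^{k_i}]\frac{dx_i}{dy_i}$ of~\eqref{eq:MainFormulaSimple}, together with the $\prod_i(dy_i)^{-1}$ on its left-hand side, must be rewritten as $\sum_{r_i}D_{x_i}^{r_i}[w_i^{r_i}]$ acting on forms, a change-of-variable identity of the same type as those established in Section~\ref{sec:proof-of-algebraic-equivalence-of-relations} (Lemmas~\ref{lem:xXyY8}--\ref{lem:phiidentity}).

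I expect the main obstacle to be the last two bookkeeping points taken together: matching the $\hbar$-grading through the edge resummation and converting the function-level prefactors $\partial_{y_i}^{k_i}[w_i^{k_i}]$ into the form-level operators $D_{x_i}^{r_i}$, all while tracking signs and the $(-1)^n$ prefactor of~\eqref{eq:MainFormulaSimple}. The conceptual input is immediate from Theorem~\ref{thm:simpleformulaxyswap} and the triviality of the $z$-side recursion; what requires genuine care is verifying that the generating-series manipulations reorganize exactly into the simple-graph weights $W_i$ and $W_{i,j}$ of~\eqref{eq:y=z}, with no residual operator acting across edges.
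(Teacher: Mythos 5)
Your proposal is correct and takes essentially the same route as the paper: Theorem~\ref{cor:y=z} is indeed obtained there as a direct corollary of Theorem~\ref{thm:simpleformulaxyswap} applied in reverse to the trivial correlation tower of $(S,\,z,\,x(z),\,B)$, with the weights $W_i$ and $W_{i,j}$ arising exactly through the steps you describe (vanishing of index~$\ge 3$ multiedges and positive-genus decorations, cancellation of self-loops by the regularization $B-\tfrac{dz_1dz_2}{(z_1-z_2)^2}=0$, resummation of parallel edges into $(e^{X_{ij}}-1)/\hbar^2$ with the Betti-number bookkeeping, and the rewriting $dx\,\partial_x^{k}\bigl(\omega/dx\bigr)=D_x^{k}\omega$). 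The only cosmetic remark is that this last operator conversion follows immediately from the definition of $D_x$ and does not need the machinery of Lemmas~\ref{lem:xXyY8}--\ref{lem:phiidentity}.
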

\begin{remark}
	Let us stress that, as opposed to formulas~\eqref{eq:MainFormula} and~\eqref{eq:MainFormulaSimple}, which represent one of the sets of $\omega^{(g)}_{m,0}$'s and $\omega^{(g)}_{0,n}$'s in terms of the other, the RHS of~\eqref{eq:y=z} is an explicit expression involving only $z_i$ and $x(z_i)$.
\end{remark}

\begin{remark} Note also that while it is typically problematic to describe the limit behavior of topological recursion in a family where several branching points of $x$ collide in the limit, and it often requires some extra conditions, cf.~\cite{limit}, in our approach the formulas that we obtain extend naturally to the discriminants of the Hurwitz spaces for $x$. This observation goes far beyond the case $y=z$ and naturally leads to a new treatment of taking limits in topological recursion that we plan to discuss in a subsequent publication. 
\end{remark}

For example, for small $m$ we have
\begin{align}
	\sum_{g=1}^\infty\hbar^{2g}\omega^{(g)}_{1,0}&=\sum_{r\ge0}D_{x_1}^r[w_1^r]W_{1},\\
	\sum_{g=0}^\infty\hbar^{2g}\omega^{(g)}_{2,0}&=\sum_{r_1,r_2\ge0}D_{x_1}^{r_1}D_{x_2}^{r_2}[w_1^{r_1}w_2^{r_2}]W_{1}W_{2}W_{1,2},\\
	\sum_{g=0}^\infty\hbar^{2g}\omega^{(g)}_{3,0}&=\sum_{r_1,r_2,r_3\ge0}D_{x_1}^{r_1}D_{x_2}^{r_2}D_{x_3}^{r_3}[w_1^{r_1}w_2^{r_2}w_3^{r_3}]\prod_{i=1}^3W_{i}\times\\
	&\qquad\Bigl( W_{1,2}W_{1,3}+W_{1,2}W_{2,3}+W_{1,3}W_{2,3}+\hbar^2W_{1,2}W_{2,3}W_{1,3}\Bigr),\notag
\end{align}
and so on. These formulas compute the differentials of topological recursion without applying the recursion itself. Remark that in spite of appearance of $z_i-z_j$ in the denominators, the differentials $\omega^{(g)}_{m,0}$ have no poles on the diagonals (except for the unstable form $\omega^{(0)}_{2,0}$) due to some miracle cancellation (as follows from Proposition~\ref{prop:regular-at-diagonals}).

\begin{remark}
	Instead of Equation~\eqref{eq:MainFormulaSimple} in the computation of the differentials of Theorem~\ref{cor:y=z} one could equally apply Equation~\eqref{eq:MainFormula}. This would provide a formally different but also a valid closed formula for the differentials $\omega^{(g)}_{m,0}$ of Theorem~\ref{cor:y=z}.
\end{remark}

\subsection{Examples of vertex contributions} The final expressions for all examples that we mentioned in the beginning of this section differ only by the explicit formulas for the vertex weight and $D_x$. Let us compute them explicitly:

\begin{itemize}
	\item Deformed Witten's $r$-spin class (where $x(z)$ is given by~\eqref{eq:xWitten}):
	\begin{align}
		W_i & =  - \frac{dz_i}{w_i} \exp\Big( \frac{(z_i-\frac{w_i\hbar}{2})^{r+1}-(z_i+\frac{w_i\hbar}{2})^{r+1}+(r+1)w_i\hbar z_i^r}{\hbar(r+1)} \Big);
		\\ \notag
		D_x & = d \; \frac{1}{(rz^{r-1}-r\epsilon)dz}.
	\end{align}
	Note that in this case  the vertex weight  does not depend on the deformation parameter.
	\item Enumeration of hypermaps (where $x(z)$ is given by~\eqref{eq:xhyper}):
		\begin{align}
		W_i & =  - \frac{dz_i}{w_i} \Bigg( \frac{1-\frac{w_i\hbar}{2z_i}}{1+\frac{w_i\hbar}{2z_i}}\, e^{\frac{w_i\hbar }{z_i}}\Bigg)^{\frac 1\hbar}
		 \exp\Big( \frac{(z_i-\frac{w_i\hbar}{2})^{r}-(z_i+\frac{w_i\hbar}{2})^{r}+rw_i\hbar z_i^{r-1}}{\hbar r} \Big) ;
		\\ \notag
		D_x & = d \; \frac{z^2}{((r-1)z^{r}-1)dz}.
	\end{align}
	\item Deformed $\Theta$-class (where $x(z)$ is given by~\eqref{eq:xTheta}):
	\begin{align}
		W_i & =  - \frac{dz_i}{w_i} \Bigg( \frac{1-\frac{w_i\hbar}{2z_i}}{1+\frac{w_i\hbar}{2z_i}}\, e^{\frac{w_i\hbar }{z_i}}\Bigg)^{-\frac {r\lambda^{r-1}}\hbar} \!\!
		\exp\Big( \frac{(1+\frac{w_i\hbar}{2z_i})^{1-r}-(1-\frac{w_i\hbar}{2z_i})^{1-r}+(r-1)w_i\hbar z_i^{-1}}{\hbar (r-1)z_i^{r-1}} \Big);
		\\ \notag
		D_x & = - d \; \frac{z^{r+1}}{r(1-(\lambda z)^{r-1})dz}.
	\end{align}
\end{itemize}

\subsection{Specialization to the Witten--Kontsevich case}

One can consider also a further specialization when $x(z)=z^2/2$.  The differentials of topological recursion with the spectral curve $(x=z^2/2, y=z)$ enumerate intersection numbers of $\psi$ classes on the moduli spaces of curves~\cite{EynardOrantin-toporec,ZhouWK}: they are given by
\begin{equation}
	\begin{gathered}
		\omega^{(g)}_{m,0}=\sum_{k_1+\dots+k_m=3g-3+m}\langle \tau_{k_1}\dots\tau_{k_m}\rangle_g\;
		\textstyle\prod_{i=1}^m\frac{(2k_i+1)!!}{z_i^{2k_i+2}}dz_i,\\
		\langle \tau_{k_1}\dots\tau_{k_m}\rangle_g=\int_{\overline{\mathcal M}_{g,m}}\psi_1^{k_1}\dots\psi_m^{k_m}\;.
	\end{gathered}
\end{equation}
On the other hand, substituting $x(z)=z^2/2$ to~\eqref{eq:Wvert} we obtain $W_i=-e^{-w_i^3\hbar^2/24}\frac{dz_i}{w_i}$. This provides a further simplification of~\eqref{eq:y=z}. For example, for $m=1$ we obtain (changing the sign of the parameter $w$)
\begin{equation}
	\omega^{(g)}_{1,0}=[\hbar^{2g}]\sum_{r\ge0}D_{-x_1}^r[w^r]e^{w^3\hbar^2/24}\frac{dz_1}{w}=\frac{1}{24^g g!}D_{-x_1}^{3g-1}dz_1.
\end{equation}
Since $D_{-x}=-d\frac{1}{z\, dz}$ and $\frac{(2k+1)!!}{z^{2k+2}}dz_i=D_{-x}^{k+1}dz$, we recover thereby the well-known equality
\begin{equation}
	\langle\tau_{3g-2}\rangle_g=\frac1{24^g g!}.
\end{equation}
In a similar way, for $m=2$ we obtain
\begin{equation}
	\begin{aligned}
		\omega^{(g)}_{2,0}&=
		\sum_{r_1,r_2\ge0}D_{-x_1}^{r_1}D_{-x_2}^{r_2}[w_1^{r_1}w_2^{r_2}]e^{(w_1^3+w_2^3)\hbar^2/24}
		\tfrac{dz_1dz_2}{(z_i-z_j)^2-\frac{\hbar^2}{4}(w_i+w_j)^2}\\
		&=\sum_{i+k=g}\frac{1}{24^{i} i!4^k}
		(D_{-x_1}^{3}+D_{-x_1}^3)^{i}(D_{-x_1}+D_{-x_2})^{2k}\frac{dz_1dz_2}{(z_1-z_2)^{2(k+1)}}.
	\end{aligned}
\end{equation}
Remark that the pole on the diagonal $z_1=z_2$ cancel on the right hand side for $g>0$. This can be seen explicitly using the identity
\begin{equation}\label{eq:identity-2k}
	\textstyle
	(D_{-x_1}+D_{-x_2})^{k+1}\frac{dz_1dz_2}{(z_1-z_2)^{2(k+1)}}=\frac{(2k+1)!!dx_1dx_2}{(x_1x_2)^{2k+2}}=\frac{1}{(2k+1)!!}(D_{-x_1}D_{-x_2})^{k+1}dx_1dx_2.
\end{equation}
With this identity, we get
\begin{equation}
	\omega^{(g)}_{2,0}=\sum_{i+k=g}\frac{1}{24^{i} i!4^k(2k+1)!!}
	(D_{-x_1}^{3}+D_{-x_1}^3)^{i}(D_{-x_1}+D_{-x_2})^{k-1}(D_{-x_1}D_{-x_2})^{k+1}dz_1dz_2,
\end{equation}
or, equivalently,
\begin{equation}
	\sum_{i+j=3g-1}u_1^iu_2^j\langle\tau_i\tau_j\rangle_g=
	\sum_{i+k=g}\frac{(u_1^{3}+u_2^3)^{i}(u_1+u_2)^{k-1}(u_1u_2)^{k}}{24^{i} i!4^k(2k+1)!!},
\end{equation}
which is equivalent to Dijkgraaf's 2-point function formula \cite{Dijk}.

Note that a similar simplification can be done for any $m\geq 1$.

 %%%%%%%%%%%
 %%%%%%%%%%%
 %%%%%%%%%%%
 
%%%%%%%%%%%
%%%%%%%%%%%
%%%%%%%%%%%

\printbibliography
%\printbibliography

\end{document}